\newif\ifreport\reporttrue
\newtheorem{theorem}{Theorem}
\newtheorem{corollary}{Corollary}
\newtheorem{definition}{Definition}
\newtheorem{lemma}{Lemma}
\newtheorem{example}{Example}
\def\blue{\color{blue}}
\def\red{\color{red}}
\def\violet{\color{violet}}
\colorlet{blue}{black}
\colorlet{red}{black}
\DeclareMathOperator*{\argmax}{arg\,max}
\DeclareMathOperator*{\argmin}{arg\,min}
\begin{document}

\title{Timely Communications for Remote Inference}
        
\author{Md Kamran Chowdhury Shisher,~\IEEEmembership{Member,~IEEE,}
        Yin~Sun,~\IEEEmembership{Senior~Member,~IEEE,} I-Hong~Hou,~\IEEEmembership{Senior~Member,~IEEE}\IEEEcompsocitemizethanks{\IEEEcompsocthanksitem This paper was presented in part at ACM MobiHoc, 2022 \cite{ShisherMobihoc}. 
        
        Md Kamran Chowdhury Shisher was with the Department of Electrical and Computer Engineering, Auburn University, Auburn, AL, 36849, USA. He is now with the Elmore Family School of Electrical and Computer Engineering, Purdue University, West Lafayette, IN, 47907, USA (e-mail: mzs0153@auburn.edu, mshisher@purdue.edu). 
        
        Yin Sun is with the Department of Electrical and Computer Engineering, Auburn University, Auburn, AL,
36849, USA (e-mail: yzs0078@auburn.edu). 

I-Hong Hou is with the Department of Electrical and Computer Engineering, Texas A\&M University, College Station, TX 77843 USA (e-mail: ihou@tamu.edu). 

The work of Md Kamran Chowdhury Shisher and Yin Sun was supported in part by the NSF grant CNS-2239677 and the ARO grant W911NF-21-1-0244. The work of I-Hong Hou was supported in part by NSF under Award Number ECCS-2127721 and in part by the U.S. Army Research Laboratory and the U.S. Army Research Office under Grant Number W911NF-22-1-0151.}
}

\newcommand{\ignore}[1]{{}}
\pagestyle{plain}
\def\blue{\color{blue}}
\maketitle

\begin{abstract}
In this paper, we analyze the impact of data freshness on remote inference systems, where a {\blue pre-trained} neural network {\blue infers} a time-varying target (e.g., the locations of vehicles and pedestrians) based on features (e.g., video frames) observed at a sensing node (e.g., a camera). 
One might expect that the performance of a remote inference system degrades monotonically as the feature becomes stale. Using an information-theoretic analysis, we show that this is true if the feature and target data sequence can be closely approximated as a Markov chain{\blue, whereas} it is not true if the data sequence is far from {\blue being} Markovian. Hence, the inference error is a function of Age of Information (AoI), where the function could be non-monotonic. 
To minimize the inference error in real-time, we propose a new ``selection-from-buffer'' model for sending the features, which is more general than the ``generate-at-will'' model used in earlier studies. In addition, we design low-complexity scheduling policies to improve inference performance. {\blue For single-source, single-channel systems, we provide an optimal scheduling policy.
In multi-source, multi-channel systems, the scheduling problem becomes a multi-action restless multi-armed bandit problem.} For this setting, we design a new scheduling policy by integrating Whittle index-based source selection and duality-based feature selection-from-buffer algorithms. This new scheduling policy is proven to be asymptotically optimal. 
These scheduling results hold for minimizing general AoI functions (monotonic or non-monotonic). {\blue Data-driven evaluations demonstrate the significant advantages of our proposed scheduling policies.}
\end{abstract}

\begin{IEEEkeywords}
Age of Information, remote inference, goal-oriented communications, scheduling, buffer management.
\end{IEEEkeywords}

\section{Introduction}\label{introduction}




{\blue Next-generation communications (Next-G), such as 6G, are expected to enable emerging networked intelligent systems, including autonomous driving, factory automation, digital twins, UAV navigation, and extended reality. These systems rely on timely inference, where a pre-trained neural network infers time-varying targets (e.g., vehicle and pedestrian locations) using features (e.g., video frames) transmitted from sensing nodes (e.g., cameras). Due to communication delay and transmission errors, the features delivered to the neural network may not be fresh. Traditionally, information freshness has not been a major concern in inference systems. However, in time-sensitive applications, it is critical to understand how information freshness impacts inference performance.}

{\blue  The concept of Age of Information (AoI), introduced in \cite{song1990performance, kaul2012real}, measures the freshness of information at the receiver. Consider packets sent from a source to a receiver. If $U(t)$ is the generation time of the freshest received packet by time $t$, then the AoI $\Delta(t)$ is defined as the time difference between the current time $t$ and the generation time $U(t)$ of the freshest received packet, expressed as
\begin{align}\label{AoIIntro}
    \Delta(t)=t-U(t).
\end{align} 
A smaller AoI indicates the presence of more recent information at the receiver. While existing AoI research extensively analyzes and optimizes linear and nonlinear functions of AoI \cite{yates2015lazy,sun2017update, YinUpdateInfocom, SunNonlinear2019, SunSPAWC2018, orneeTON2021, Tripathi2019, klugel2019aoi, bedewy2021optimal, kadota2018optimizing, hsu2018age, sun2019closed, Kadota2018}, there is a lack of clear understanding regarding the value of fresh information in real-time systems. This gap motivates us to seek a more suitable analysis for evaluating the significance of fresh information for systems using it.}

{\blue Previous studies \cite{yates2015lazy,sun2017update, YinUpdateInfocom, SunNonlinear2019, SunSPAWC2018, orneeTON2021, Tripathi2019, klugel2019aoi, bedewy2021optimal, kadota2018optimizing, hsu2018age, sun2019closed, Kadota2018} have assumed that the performance of real-time systems degrades monotonically as the AoI increases. However, our remote inference experiments (see Sec. \ref{Experimentation}) demonstrate that this assumption holds in some scenarios (e.g., video prediction), but not in others (e.g., temperature prediction or reaction prediction). 
For instance, when predicting the temperature for the next hour, the temperature recorded 24 hours ago is more relevant than the temperature recorded 12 hours ago due to daily weather patterns. Therefore, fresher temperature data is not always better for temperature prediction.
Similarly, in a reaction prediction system, if a vehicle A initiates braking, nearby vehicles do not react instantly due to the response delay of  drivers and  braking systems. Therefore, slightly outdated actions of vehicle A can be more relevant for predicting the reactions of nearby vehicles. These observations highlight that fresher data is not always better, emphasizing the need for a  theory to accurately assess the value of fresh information.}

 {\blue In this paper, we introduce a new information-theoretic framework for remote inference to analyze when fresher data is better and when it is not. We also study optimal communication system design for remote inference. Most existing studies in the AoI literature \cite{yates2015lazy,sun2017update, YinUpdateInfocom, SunNonlinear2019, SunSPAWC2018, orneeTON2021, Tripathi2019, klugel2019aoi, bedewy2021optimal, kadota2018optimizing, hsu2018age, sun2019closed, Kadota2018} have focused on designing transmission scheduling strategies to minimize monotonic functions of AoI. However, the design of efficient scheduling policies for optimizing general, potentially non-monotonic functions of AoI remains unexplored. To address this gap, we design the first transmission scheduling policies to minimize general functions of AoI, regardless of whether the functions are monotonic or non-monotonic. The contributions of this paper are summarized as follows:} 

\ignore{Timely communication of information sources is crucial for many networked intelligent systems, especially for tasks involving inference, estimation, and control. In \cite{song1990performance, kaul2012real}, the concept of \emph{Age of Information} was introduced to measure the  timeliness of communication. {\red what is the "timeliness of communication?"} Consider a sequence of status updating packets sent from an information source to a receiver through a communication channel. Let $U(t)$ be the generation time of the freshest packet delivered to the receiver by time $t$. The AoI $\Delta(t)$ is defined as the time difference between the current time $t$ and the generation time $U(t)$ of the freshest received packet, expressed as
\begin{align}\label{AoIIntro}
    \Delta(t)=t-U(t).
\end{align} 
A smaller AoI indicates the presence of more recent information at the receiver.

{\red the start of the following paragraph seems indirect or unclear.  } 

{The importance of fresh information depends on the purpose of communications and the nature of the information sources. Typically, there exists a correlation structure in information sources. {For instance, a strong correlation is evident between the current temperature and the temperature recorded $24$ hours ago. 
In predicting the temperature for the next hour, the temperature recorded 24 hours ago holds more significance than the temperature recorded 12 hours ago.} 
Moreover, consider an example of reaction prediction: when a vehicle initiates breaking, a nearby vehicle's braking system may not respond immediately due to inherent delay caused by the reaction time of the driver and the actuation delay of the brake system. Consequently, slightly outdated actions with AoI $>0$ can be more valuable than the current action with AoI = 0 in predicting the reactions of nearby vehicles. These examples underscore that the utility of fresh information is contingent on both the purpose of the information and the temporal correlation structure of the involved signals. 
However, the AoI $\Delta(t)$ does not specify the signal correlation structure or the goal of the communication system. Presently, there is a lack of comprehensive understanding regarding \emph{how to evaluate the value of fresh information in real-time systems}. This knowledge gap has prompted us to seek a more suitable metric for evaluating the significance of fresh information in goal-oriented  communications.} 


The evolution of artificial intelligence, control, and communications technologies has given rise to a new era of networked intelligent systems, which include autonomous driving, remote surgery, real-time surveillance, video analytics, and factory automation. Timely Inference is vital in these systems, where a trained neural network infers time-varying targets (e.g., the locations of vehicles and pedestrians) based on features (e.g., video frames) sent from a sensing node (e.g., camera). 
Due to communication delay and transmission errors, the features delivered to the neural network may not be fresh. In traditional inference systems where timeliness is not critical, information freshness is not a concern. However, in time-sensitive inference systems, it is crucial to understand the impact of information freshness on the inference performance. 

{In this paper, we first analyze the impact of fresh information  on remote inference. While one might assume that inference errors degrade monotonically as the data becomes stale, our remote inference experiments show that this assumption holds true in some scenarios (e.g., video prediction) and not in others (e.g., temperature prediction and reaction prediction). By developing a new information-theoretic analysis, we reveal {\blue when inference errors are monotonic in AoI, and when they are not.} inference errors are functions of the AoI, whereas the function is not necessarily monotonic. The second part of this paper focuses on optimal design of communication systems for remote inference. Most existing studies in AoI literature \cite{yates2015lazy,sun2017update, YinUpdateInfocom, SunNonlinear2019, SunSPAWC2018, orneeTON2021, Tripathi2019, klugel2019aoi, bedewy2021optimal, kadota2018optimizing, hsu2018age, sun2019closed, Kadota2018} have focused on designing transmission scheduling strategies to minimize monotonic functions of AoI. However, the design of efficient scheduling policies for optimizing general, potentially non-monotonic functions of AoI remains unexplored. To that end, we design the first transmission scheduling policies to minimize general functions of AoI, regardless of whether the functions are monotonic or non-monotonic. The contributions of this paper are summarized as follows:}} 

\begin{itemize}

\item We conduct five experiments to examine the impact of data freshness on remote inference. These experiments include (i) video prediction, (ii) robot state prediction in a leader-follower robotic system, (iii) actuator state prediction under mechanical response delay, (iv) temperature prediction, and (v) wireless channel state information (CSI) prediction. One might assume that the inference error degrades monotonically as the data becomes stale. Our experimental results show that this assumption is not always true. In some scenarios, even the fresh data with $\Delta(t)=0$ may generate a larger inference error than stale data with $\Delta(t)>0$ (see Figs \ref{fig:DelayedNetworkedControlled}-\ref{fig:TrainingCartVelocity}).



\item We develop two theoretical methods to interpret these counter-intuitive experimental results. 
First, by a local information geometric analysis, we show that the assumption ``fresh data is better than stale data'' is true when the time-sequence data used for remote inference can be closely approximated as a Markov chain; but it is not true when the data sequence is far from Markovian (Theorems \ref{theorem1}-\ref{theorem3}). Hence, the inference error is a function of the AoI, whereas the function is not necessarily monotonic. This analysis provides an information-theoretic interpretation of information freshness. Second, we construct two analytical models to analyze and explain when fresh data is better than stale data and when it is not (see Sec. \ref{ExperimentalObs}).

\item 
In the second part of the paper, we design transmission scheduling policies for minimizing the inference error. Because fresher data is not always better, we propose a new medium access model called the ``selection-from-buffer" model, where $B$ most recent features are stored in the source's buffer and the source can choose to send any of the $B$ most recent features. This model is more general than the ``generate-at-will'' model used in earlier studies, e.g., \cite{yates2015lazy, YinUpdateInfocom, sun2017update, SunNonlinear2019,  bedewy2021optimal, SunSPAWC2018, orneeTON2021}. 
If the inference error is an non-decreasing function of the AoI, 
the ``selection-from-buffer'' model achieves the same performance as the ``generate-at-will'' model; if the AoI function is non-monotonic, the ``selection-from-buffer'' model can potentially achieve better performance. 

\item {\blue For single-source and single-channel systems}, an optimal scheduling policy is devised to determine (i) when to submit features to the channel and (ii) which feature in the buffer to submit. This scheduling policy is capable of minimizing general functions of the AoI, regardless of whether the function is monotonic or not. By leveraging a new index function $\gamma(\Delta(t))$, the optimal scheduling policy can be expressed an \emph{index-based threshold policy}, where a new packet is sent out whenever $\gamma(\Delta(t))$ exceeds a pre-determined threshold (Theorems \ref{theorem5}-\ref{theorem6}). The threshold can be computed by using low complexity algorithms, e.g., bisection search. We note that the  function $\gamma(\cdot)$ is \emph{not necessarily monotonic} and hence its inverse function \emph{may not exist}. Consequently, this index-based threshold policy cannot be equivalently expressed as an AoI-based threshold policy,  i.e., a new packet is sent out whenever the AoI $\Delta(t)$ exceeds a pre-determined threshold. This is a key difference from prior studies on minimizing non-decreasing AoI functions
\cite{yates2015lazy,sun2017update, YinUpdateInfocom, SunNonlinear2019,SunSPAWC2018, orneeTON2021, Tripathi2019, klugel2019aoi, bedewy2021optimal, kadota2018optimizing, hsu2018age, sun2019closed, Kadota2018}, where the optimal scheduling policy is an AoI-based threshold policy. 

\item {{\blue In multi-source and multi-channel systems}, the scheduling problem is a restless multi-armed bandit (RMAB) problem with multiple actions. We propose a multi-source, multi-action scheduling policy that uses a Whittle index algorithm to determine which sources to  schedule and employs a duality-based selection-from-buffer algorithm to
decide which features to schedule from the buffers of these sources. By utilizing linear programming (LP)-based priority conditions \cite{verloop2016asymptotically, gast2021lp}, we establish the asymptotic optimality of this scheduling policy as the numbers of sources and channels tend to infinity, maintaining a constant ratio (see Theorem \ref{asymptotic_optimal}).}


\item The above results hold (i) for minimizing general AoI functions (monotonic or non-monotonic) and (ii) for random delay channels. 
Data-driven evaluations show that the optimal scheduler achieves up to $3$ times smaller inference error compared to ``generate-at-will” with optimal scheduling strategy and $8$ times smaller inference error compared to periodic feature updating (see Fig. \ref{fig:singlesourcedifferentsigma}). 
Numerical results further validate the asymptotic optimality of the proposed scheduling policy (see Figs. \ref{fig:multisourceN}-\ref{fig:multisourceweight1}). 
\item When the training and inference data have the same probabilistic distribution, remote inference reduces to signal-agnostic remote estimation. Hence, the results of the present work above also apply to signal-agnostic remote estimation.
\end{itemize}
\ignore{In this work, we analyze the impact of data freshness on real-time supervised learning. We also develop an optimal feature buffering and transmission strategy that minimizes the inference error and has low complexity.}

\ignore{\begin{figure}[t]
\centering
\includegraphics[width=0.20\textwidth]{./Model_Figure/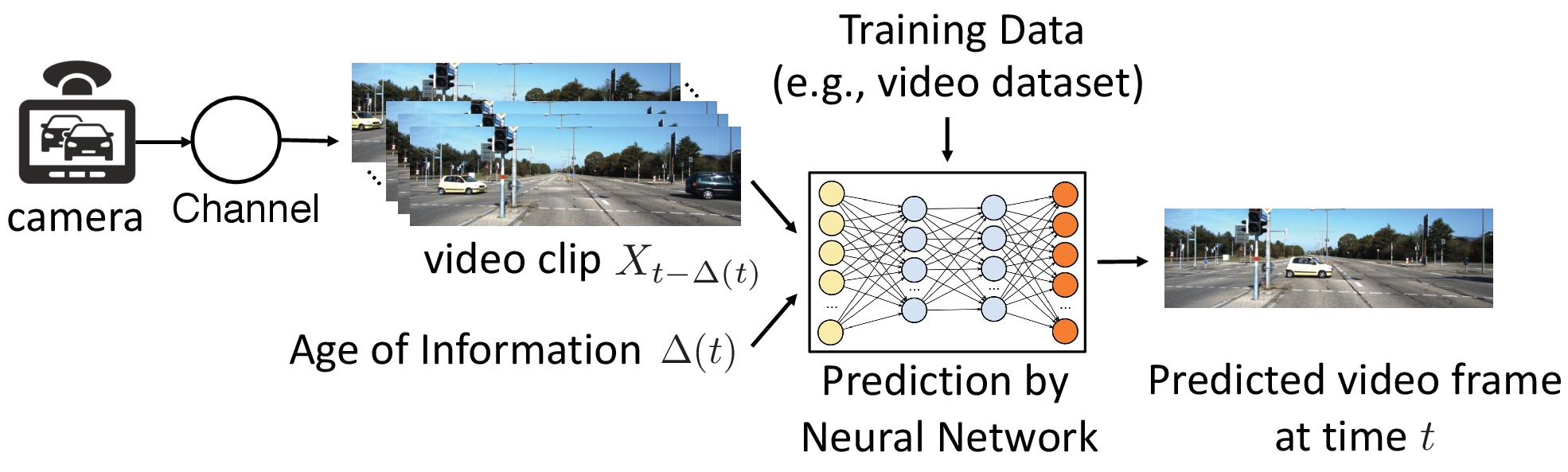}
\caption{\small A real-time supervised learning system. \label{fig:learning}
}
\end{figure}}

\subsection{Related Works}

\begin{figure*}[h]
  \centering
  
  \begin{subfigure}[t]{0.45\textwidth}
\includegraphics[width=\textwidth]{Learning_Model2.eps}
  \subcaption{Video prediction Task}
\end{subfigure}
%
\hspace{3mm} 
\begin{subfigure}[t]{0.20\textwidth}
\includegraphics[width=\textwidth]{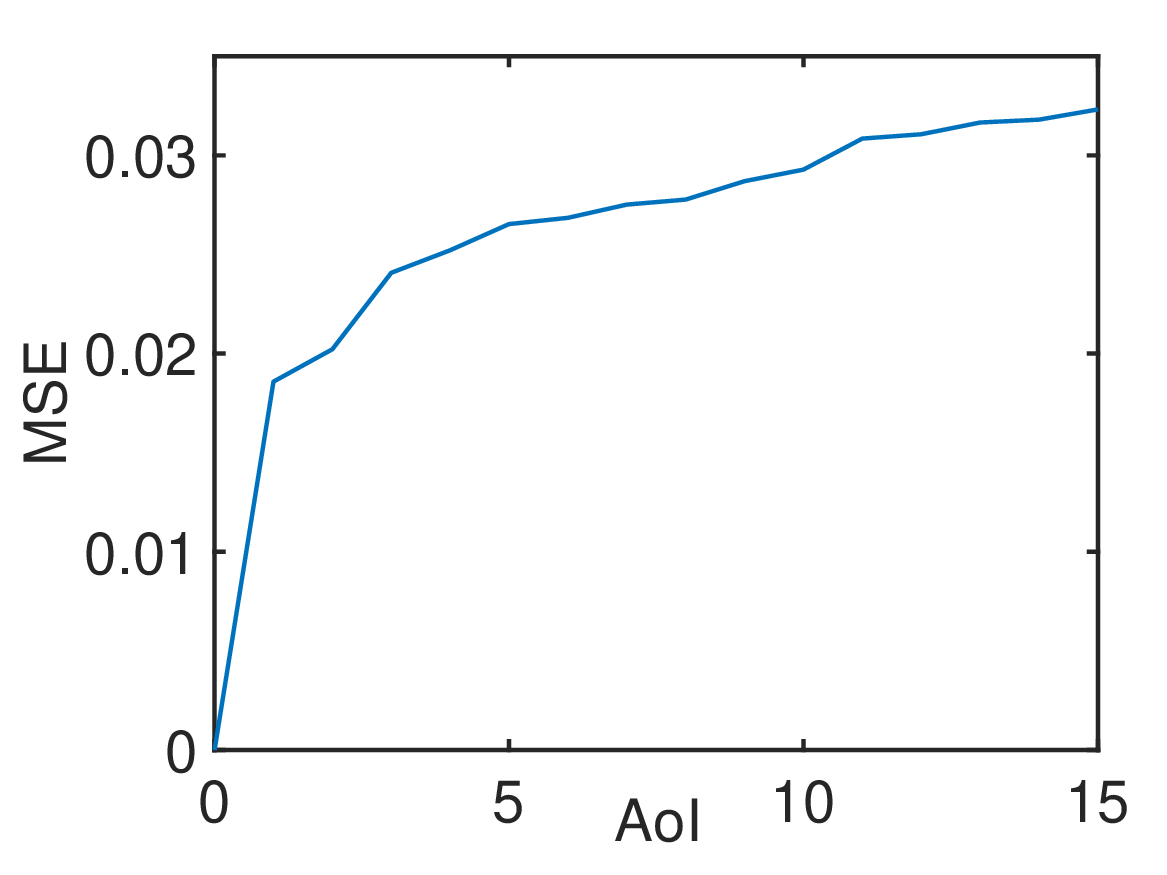}
\subcaption{Training Error vs. AoI}
\end{subfigure}
%
\hspace{3mm}
\begin{subfigure}[t]{0.20\textwidth}
\includegraphics[width=\textwidth]{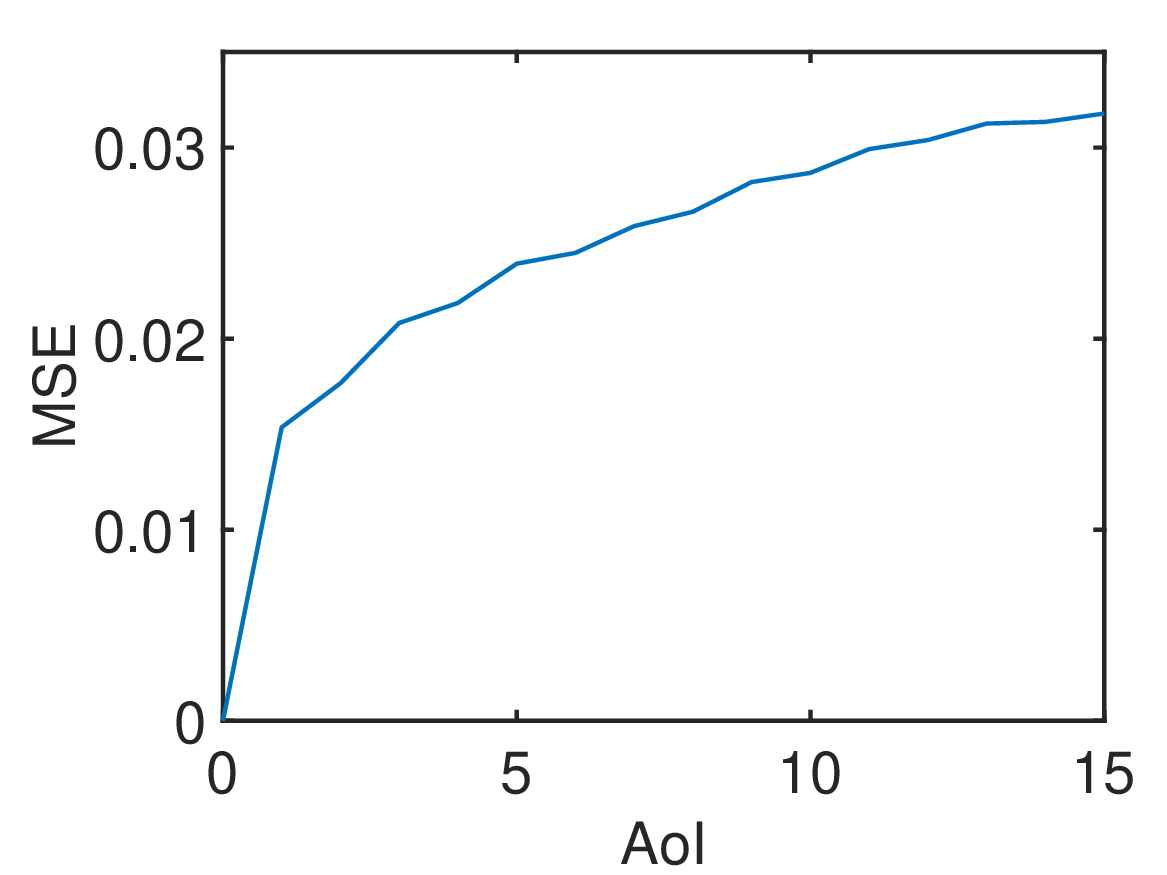}
\subcaption{Inference Error vs. AoI}
\end{subfigure}
\caption{\small Performance of video prediction experiment. The experimental results in (b) and (c) are regenerated from \cite{lee2018stochastic}. The training and inference errors are non-decreasing functions of the AoI. \label{fig:learning}}
\end{figure*}

{The concept of Age of Information (AoI) has attracted significant research interest; see, e.g., \cite{shisher2021age,kaul2012real,sun2017update, yates2015lazy, KamKompellaEphremidesTIT, kadota2018optimizing, soleymani2016optimal, SunSPAWC2018, SunNonlinear2019, chen2021uncertainty, wang2022framework, YinUpdateInfocom, orneeTON2021, Tripathi2019, klugel2019aoi, bedewy2021optimal,hsu2018age, sun2019closed, Kadota2018, ornee2023whittle, pan2023sampling, sun2023age, sun2022age, ornee2023context, ayan2023optimal} and a recent survey \cite{yates2021age}. Initially, research efforts were centered on analyzing and optimizing average AoI and peak AoI in communication networks \cite{kaul2012real, sun2017update, yates2015lazy, KamKompellaEphremidesTIT, kadota2018optimizing}. Recent research endeavors on AoI have shifted towards optimizing the performance of real-time applications, such as remote estimation \cite{orneeTON2021, ornee2023whittle, ornee2023context}, remote inference \cite{shisher2021age}, and control systems \cite{klugel2019aoi, ayan2023optimal}, by leveraging AoI as a tool. In \cite{soleymani2016optimal, SunSPAWC2018, SunNonlinear2019, wang2022framework}, information-theoretic metrics such as Shannon's mutual information (or Shannon's conditional entropy) has been used to quantify the amount of information carried by the received data  about the current source value (or the amount of uncertainty regarding the current source value) as the data ages. In addition, a Shannon’s conditional entropy term $H_{\mathrm{Shannon}}(Y_t|X_{t-\Delta(t)}=x, \Delta(t)=\delta)$ 
was used in \cite{chen2021uncertainty} to quantify information uncertainty given the most recent observation $X_{t-\Delta(t)}=x$ and AoI $\Delta(t)=\delta$. 
The information-theoretic metrics in these prior studies \cite{soleymani2016optimal, SunSPAWC2018, SunNonlinear2019, wang2022framework,chen2021uncertainty} cannot be directly used to evaluate real-world system performance.
To bridge the gap, in the present paper, we use an $L$-conditional entropy $H_L(Y_t|X_{t-\Delta(t)}, \Delta(t))$, to approximate and analyze the inference error in remote inference, as well as the estimation error in remote estimation. For example, when the loss function $L(y, \hat y)$ is chosen as a quadratic function $( y - \hat y)^2$, the $L$-conditional entropy $H_L(Y_t|X_{t-\Delta(t)}, \Delta(t))=\mathbb E[(Y_t-\mathbb E [ Y_t| X_{t-\Delta(t)}, \Delta(t)])^2]$ 
is exactly the minimum mean squared estimation error in signal-agnostic remote estimation. This approach takes a significant step to bridge the gap between AoI metrics and real-world applications, by directly mapping  AoI to the application performance metrics. 

In earlier AoI studies \cite{yates2015lazy,sun2017update, YinUpdateInfocom, SunNonlinear2019, SunSPAWC2018, orneeTON2021, Tripathi2019, klugel2019aoi, bedewy2021optimal, kadota2018optimizing, hsu2018age, sun2019closed, Kadota2018}, it was usually assumed that the observed data sequence is Markovian and the performance degradation caused by information aging was modeled as a monotonic AoI function.
Hence, the studies \cite{yates2015lazy, YinUpdateInfocom, sun2017update, SunNonlinear2019,  bedewy2021optimal, SunSPAWC2018, orneeTON2021} adopted  
the ``generate-at-will" status updating model, where the transmitter can only send the most recently generated signal value. However,  practical data sequence may not be Markovian \cite{SunNonlinear2019}. In the present paper, we propose a new local information geometric approach to analyze both Markovian and non-Markovian time-series data. For non-Markovian data, fresher data is not always better. To that end, we propose a new status updating model called the ``selection-from-buffer" model, where the transmitter has the option to send any of the $B$ most recent features stored in the source's buffer.

The optimization of linear and non-linear functions of AoI for multi-source scheduling is a restless multi-armed bandit (RMAB) problem. The multi-source problems in previous AoI studies \cite{Tripathi2019, kadota2018optimizing, hsu2018age, ornee2023whittle, sun2019closed, Kadota2018} are RMABs with binary actions and focused on monotonic AoI functions, where Whittle index policy \cite{whittle1988restless} is used to solve the problem. 
Our multi-source problem is an RMAB with multiple actions. 
Because of the multiple-action setup, the Whittle index alone can not be utilized to solve our problem. Consequently, we design a new asymptotically optimal policy for multi-action RMAB with general AoI functions (monotonic or non-monotonic). 

This paper is also related to the field of signal-agnostic remote estimation. The prior studies \cite{SunNonlinear2019, orneeTON2021, ornee2023whittle, Ornee2021performance, SunTIT2020, klugel2019aoi, pan2023sampling} in signal-agnostic remote estimation focused on Gaussian and Markovian processes. The results presented in the current paper are applicable to more general processes.}

\section{Information Freshness in Remote Inference: Model and Performance}\label{RemoteInference}

\subsection{Remote Inference Model}
Consider the remote inference system illustrated in Fig. \ref{fig:learning}. In this system, a time-varying target $Y_t\in \mathcal Y$ (e.g., the position of the car in front) is predicted at time $t$, using a feature $X_{t-\Delta(t)}\in \mathcal X$ (e.g., a video clip) that was generated $\Delta(t)$ seconds ago at a sensor (e.g., a camera). The time difference $\Delta(t)$ between $X_{t-\Delta(t)}$ and $Y_t$ is the AoI defined in \eqref{AoIIntro}. Each feature $X_t = (V_t, V_{t-1}, \ldots, V_{t-u+1})$ is a time series of length $u$, extracted from the sensor's output signal $V_t$. For example, if $V_t$ is the video frame at time $t$, then $X_t$ represents a video clip consisting of $u$ consecutive video frames.


We focus on a class of popular supervised learning algorithms known as \emph{Empirical Risk Minimization (ERM)} \cite{goodfellow2016deep}. {In freshness-aware ERM supervised learning algorithms, a neural network is trained to generate an action $a = \phi(X_{t-\Delta(t)},\Delta(t)) \in \mathcal A$, where $\phi: \mathcal X \times \mathbb Z^{+} \mapsto \mathcal A$ is a function that maps a feature $X_{t-\Delta(t)}\in\mathcal X$ and its AoI $\Delta(t) \in \mathbb Z^{+}$ to an action $a\in \mathcal A$.} The performance of learning is evaluated using a  loss function $L: \mathcal Y \times \mathcal A \mapsto \mathbb R$, where $L(y,a)$ represents the loss incurred if action $a$ is selected when $Y_t=y$. It is assumed that both $\mathcal Y$ and $\mathcal X$ are discrete and finite sets. 

\begin{figure*}[ht]
  \centering
  \begin{subfigure}[t]{0.25\textwidth}
\includegraphics[width=\textwidth]{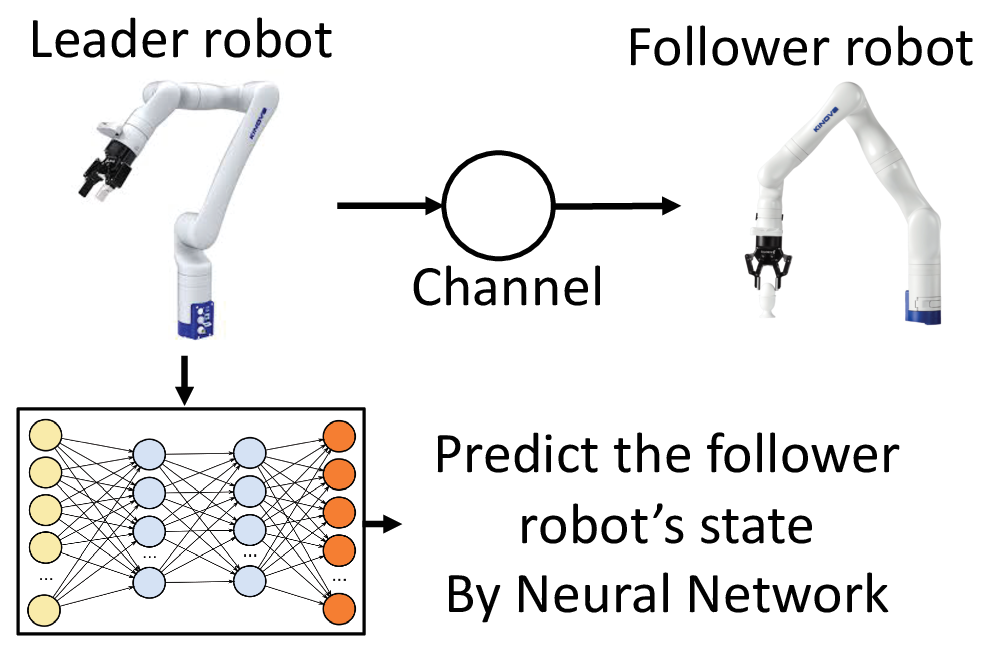}
  \subcaption{Prediction of Follower Robot}
\end{subfigure}
  \hspace{3mm}
\begin{subfigure}[t]{0.20\textwidth}
\includegraphics[width=\textwidth]{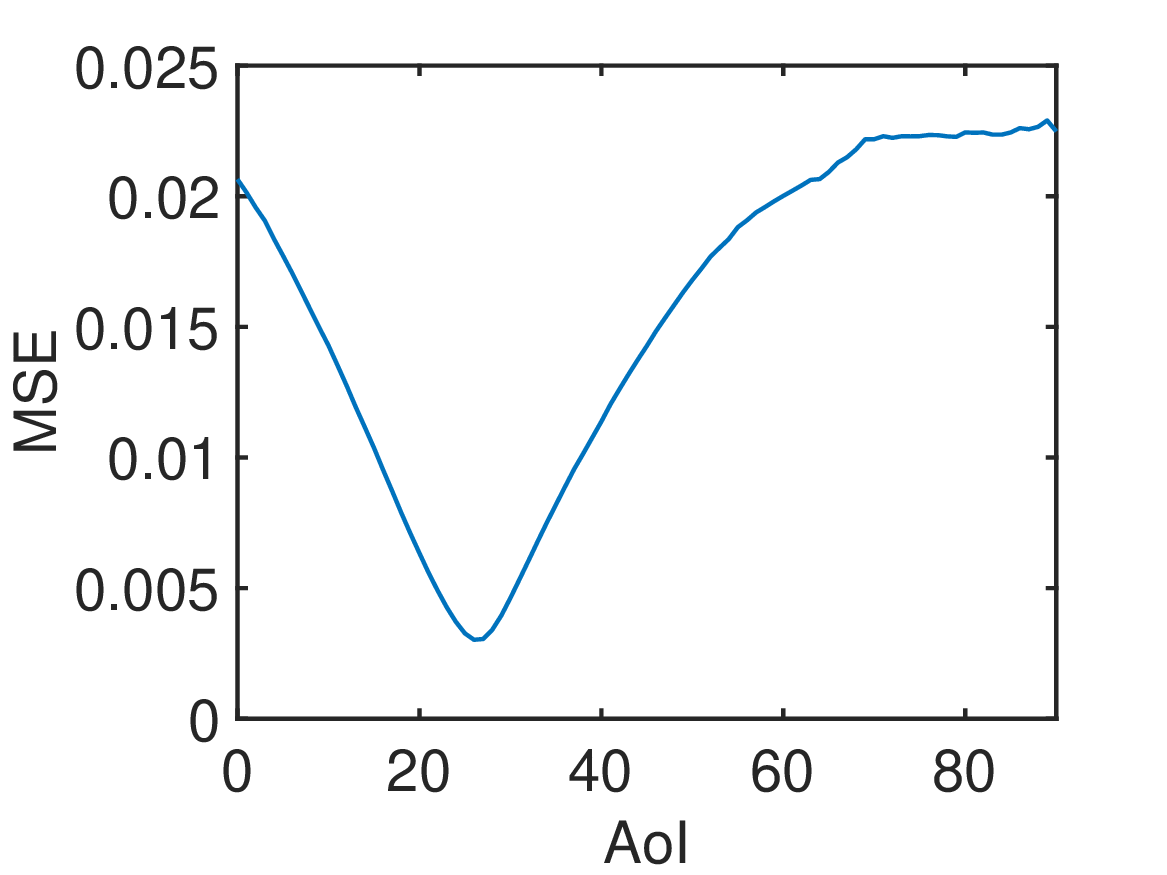}
  \subcaption{Training Error vs. AoI}
\end{subfigure}
  \hspace{3mm}
\begin{subfigure}[t]{0.20\textwidth}
\includegraphics[width=\textwidth]{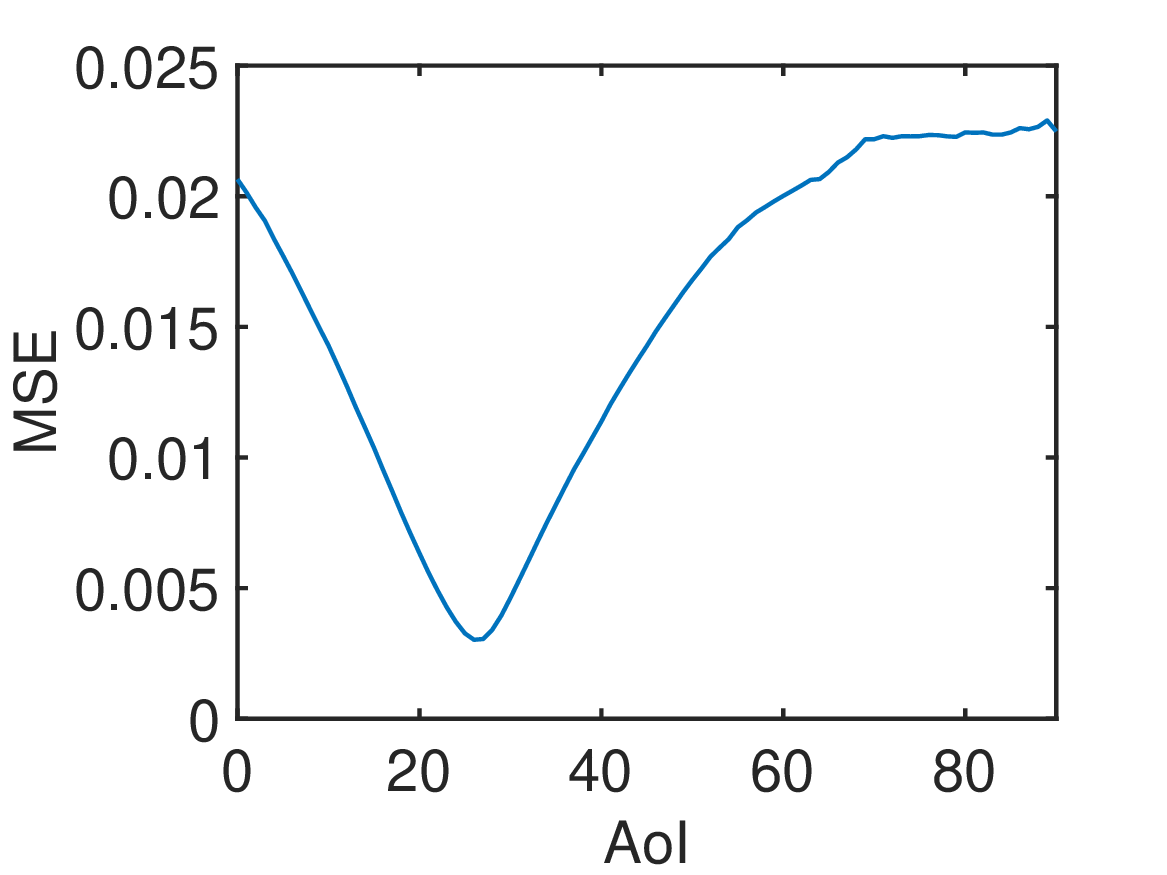}
  \subcaption{Inference Error vs. AoI}
\end{subfigure}
\caption{Robot state prediction in a leader-follower robotic system. The leader robot uses a neural network to predict the follower robot's state $Y_t$ by using the leader robot's state $X_{t-\delta}$ generated $\delta$ time slots ago $(u=1)$. The training and inference errors decrease in the AoI $\leq 25$ and increase when AoI $\geq 25$.}
\label{fig:DelayedNetworkedControlled}
\end{figure*}

The loss function $L$ is determined by the \emph{goal} of the remote inference system. For example, in neural network-based minimum mean-squared estimation, the loss function is $L_2(\mathbf y,\hat{\mathbf y}) =\|\mathbf y - \hat{\mathbf y}\|^2_2$, where the action $a = \hat {\mathbf y}$ is an estimate of the target $Y_t =\mathbf y$ and $\|\mathbf y\|^2_2$ is the Euclidean norm of the vector $\mathbf y$. In softmax regression (i.e., neural network-based maximum likelihood
classification), the action $a = Q_Y$ is a distribution of $Y_t$ and the loss function $L_{\text{log}}(y, Q_Y ) = - \text{log}~Q_Y (y)$ is the negative log-likelihood of the target value $Y_t = y$. 

\subsection{Offline Training and Online Inference}\label{learningerror}
A supervised learning algorithm consists of two phases: \emph{offline training} and \emph{online inference}. In the offline training phase, a neural network is trained using one of the following two approaches.  


In the first approach, multiple neural networks are trained independently, using distinct training datasets with different AoI values. The  neural network associated with an AoI value $\delta$ is trained by solving the following ERM problem:
\begin{align}\label{eq_trainingerror1}
\mathrm{err_{training,1}}(\delta) = \min_{\phi\in \Lambda} \mathbb{E}_{Y,X\sim P_{\tilde Y_0, \tilde X_{-\delta}}}[L(Y,\phi(X,\delta))],
\end{align} 
where $P_{\tilde Y_0, \tilde X_{-\delta}}$ is the empirical distribution of the label $\tilde Y_0$ and the feature $\tilde X_{-\delta}$ in the training dataset, the AoI value $\delta$ is the time difference between $\tilde Y_0$ and $\tilde X_{-\delta}$, and $\Lambda$ is the set of functions that can be constructed by the neural network. 

In the second approach, a single neural network is trained using a larger dataset that encompasses a variety of AoI values. The ERM training problem for this approach is formulated as 
\begin{align}\label{eq_trainingerror2}
\mathrm{err_{training,2}}= \min_{\phi\in \Lambda} \mathbb{E}_{Y,X,\Theta\sim P_{\tilde Y_0, \tilde X_{-\Theta},\Theta}}[L(Y,\phi(X,\Theta))],
\end{align}
where $P_{\tilde Y_0, \tilde X_{- \Theta}, \Theta}$ is the empirical distribution of the label $\tilde Y_0$, the feature $\tilde X_{-\Theta}$, and the AoI $\Theta$ within the training dataset, and the AoI $\Theta$ is the time difference between $\tilde Y_0$ and $\tilde X_{-\Theta}$.

{In the online inference phase, the pre-trained neural network is used to predict the target $Y_t$ in real-time. We assume that the process $\{(Y_t, X_t), t=0, 1, 2, \ldots\}$ is stationary and the processes $\{(Y_t, X_t), t=0, 1, 2, \ldots\}$ and $\{\Delta(t), t=0, 1, 2, \ldots\}$ are independent. Under these assumptions, if $\Delta(t)=\delta$, the inference error at time $t$ can be expressed as a function of the AoI value $\delta$, i.e.,  
\begin{align}\label{instantaneous_err1} 
\!\!\!\!\mathrm{err_{inference}}(\delta)\!=\!\mathbb E_{Y, X \sim P_{Y_{t}, X_{t-\delta}}}\!\!\left[L\!\left(Y,\!\phi^*(X,\delta)\right)\right],
\end{align}
where $P_{Y_t, X_{t-\delta}}$ is the distribution of the target $Y_{t}$ and the feature $X_{t-\delta}$, and $\phi^*$ is the trained neural predictor. The proof of \eqref{instantaneous_err1} is provided in Appendix \ref{pinferenceerror}. {\blue In Sections \ref{Scheduling}-\ref{Multi-scheduling}, to minimize inference error, we will develop signal-agnostic transmission scheduling policies in which scheduling decisions are determined without using the knowledge of the signal value of the observed process.} If the transmission schedule is signal-agnostic, then $\{(Y_t, X_t), t=0, 1, 2, \ldots\}$ is independent of the AoI process $\{\Delta(t), t=0, 1, 2, \ldots\}$.} 
\subsection{Experimental Results on Information Freshness}\label{Experimentation}  
We conduct five remote inference experiments to examine how the training error and the inference error vary as the AoI increases. These experiments include (i) video prediction, (ii) robot state prediction in a leader-follower robotic system, (iii) actuator state prediction under mechanical response delay, (iv) temperature prediction, and (v) wireless channel state information prediction. In these experiments, we consider the quadratic loss function $L(\mathbf y, \hat{\mathbf y}) = \|\mathbf y - \hat{\mathbf y}\|^2_2$. Detailed settings of these experiments can be found in Appendix \ref{Experiments}. We present the experimental results of the first training method  in Figs. \ref{fig:learning}-\ref{fig:Trainingcsi}. Related codes and datasets are accessible in our GitHub repository.\footnote{\url{https://github.com/Kamran0153/Impact-of-Data-Freshness-in-Learning}} To illustrate the training error of the second training method as a function of the AoI $\delta$, one can simply assess the training error using the training data samples with the AoI value $\delta$. {\red When the neural network is
sufficiently wide and deep, the results of the two training methods are similar. Hence, the experimental results of the second training method are omitted.}

Fig. \ref{fig:learning} presents the training error and inference error of a video prediction experiment, where a video frame $V_t$ at time $t$ is predicted using a feature $X_{t-\delta}=(V_{t-\delta}, V_{t-\delta-1})$ that is composed of two consecutive video frames. One can observe from Fig. \ref{fig:learning}(b)-(c) that both the training error and the inference error increase as the AoI $\delta$ increases.

Fig. \ref{fig:DelayedNetworkedControlled} plots the performance of robot state prediction in a leader-follower robotic system, where a leader robot uses a neural network to predict the follower robot's state $Y_t$ by using the leader robot's state $X_{t-\delta}$ generated $\delta$ time slots ago. As depicted in Fig. \ref{fig:DelayedNetworkedControlled}, the training and the inference errors decrease in AoI, when AoI $\leq 25$ and increase in AoI when AoI $\geq 25$. In this case, even a fresh feature with AoI $=0$ is not good for prediction. 

The performance of actuator state prediction under mechanical response delay is depicted in Fig. \ref{fig:TrainingCartVelocity}. We consider the OpenAI CartPole-v1 task \cite{brockman2016openai}, where the objective is to control the force on a cart and prevent the pole attached to the cart from falling over. The pole angle $\psi_t$ at time $t$ is predicted based on a feature $X_{t-\delta}=(v_{t-\delta}, \ldots, v_{t-\delta-u+1})$ that consists of a consecutive sequence of cart velocity with length $u$ generated $\delta$ milliseconds (ms) ago. As shown in Fig. \ref{fig:TrainingCartVelocity},  both the training error and the inference error exhibit non-monotonic variations as the AoI $\delta$ increases.

In Fig. \ref{fig:Training} and Fig. \ref{fig:Trainingcsi}, we plot the results of temperature prediction and wireless channel state information (CSI) prediction experiments, respectively. In both experiments, we observe non-monotonic trends in training error and inference error with respect to AoI, particularly when the length of the feature sequence $u$ is small.

In the AoI literature, it has been generally assumed that the performance of real-time systems degrades monotonically as the data becomes stale. However, Figs. \ref{fig:learning}-\ref{fig:Trainingcsi} reveal that this assumption is true in some scenarios, and not true in some other scenarios. Furthermore, Figs. \ref{fig:DelayedNetworkedControlled}-\ref{fig:TrainingCartVelocity} show that even the fresh data with AoI $= 0$ may generate a larger inference error than stale data with AoI $> 0$. These counter-intuitive experimental results motivated us to seek theoretical interpretations of information freshness in subsequent sections.




\section{An Information-theoretic Interpretation of Information Freshness in Remote Inference}\label{InformationAnalysis}
In this section, we develop an information-theoretic approach to interpret information freshness in remote inference.


\begin{figure*}
  \centering
  \begin{subfigure}[t]{0.20\textwidth}
\includegraphics[width=\textwidth]{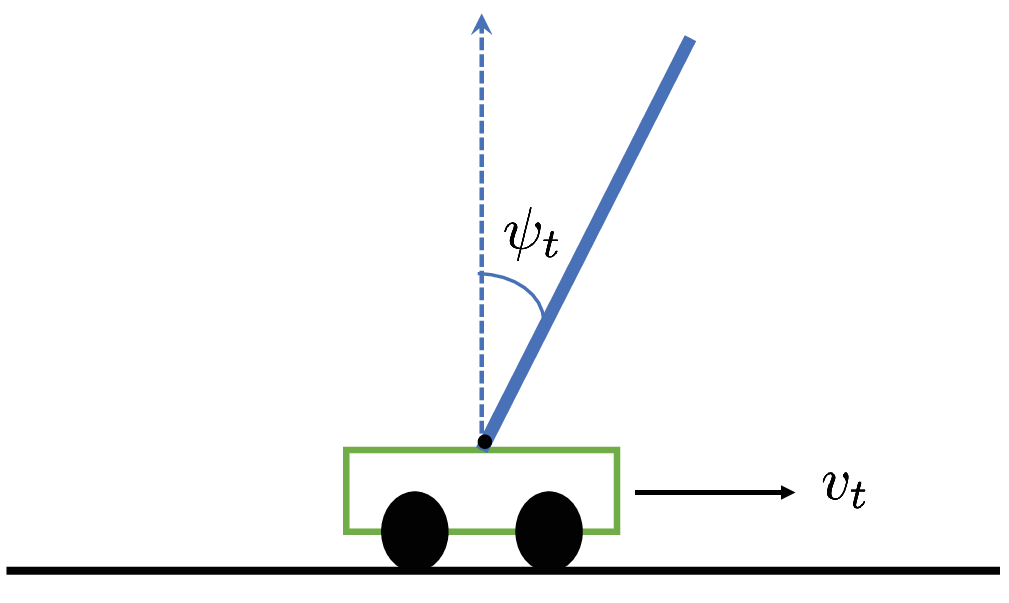}
  \subcaption{OpenAI Cart Pole  Task}
\end{subfigure}
%
\hspace{0mm}
\begin{subfigure}[t]{0.20\textwidth}
\includegraphics[width=\textwidth]{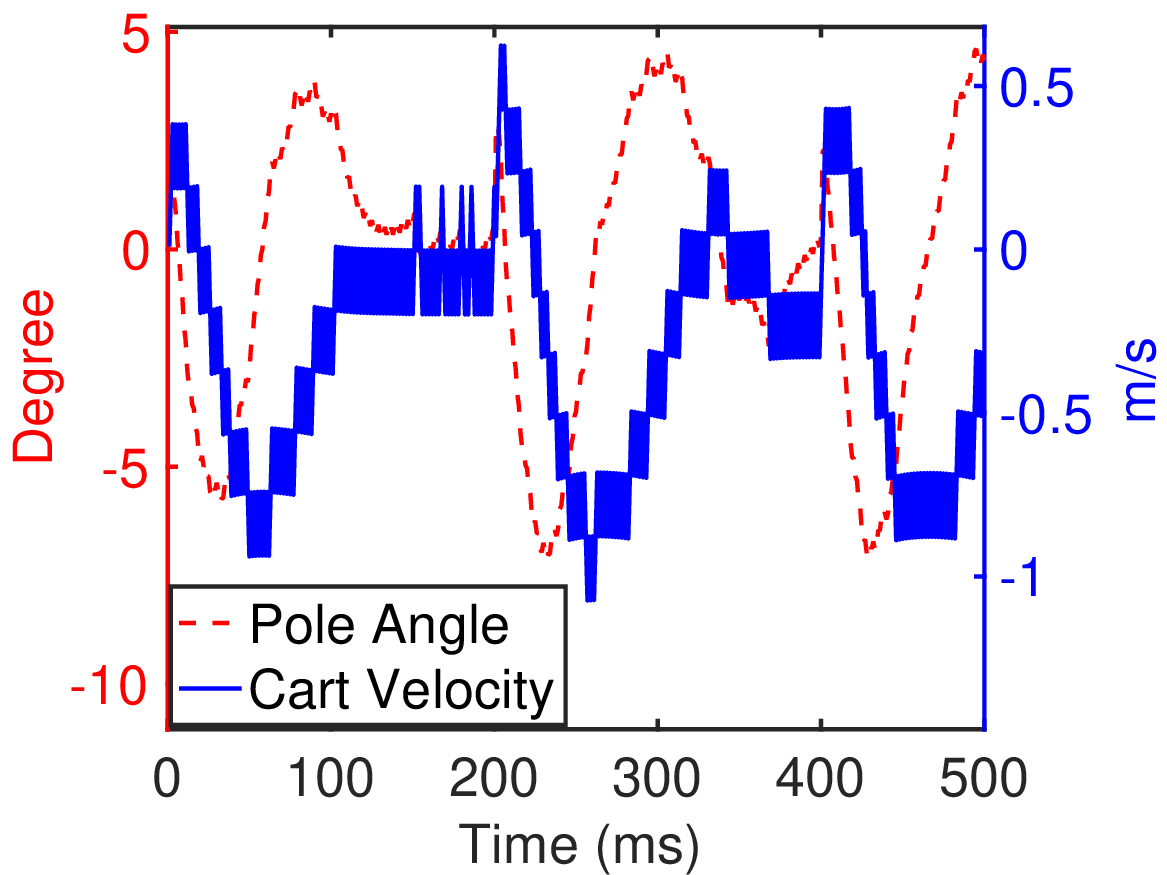}
  \subcaption{Data Traces}
\end{subfigure}
%
  \hspace*{0mm} 
\begin{subfigure}[t]{0.20\textwidth}
\includegraphics[width=\textwidth]{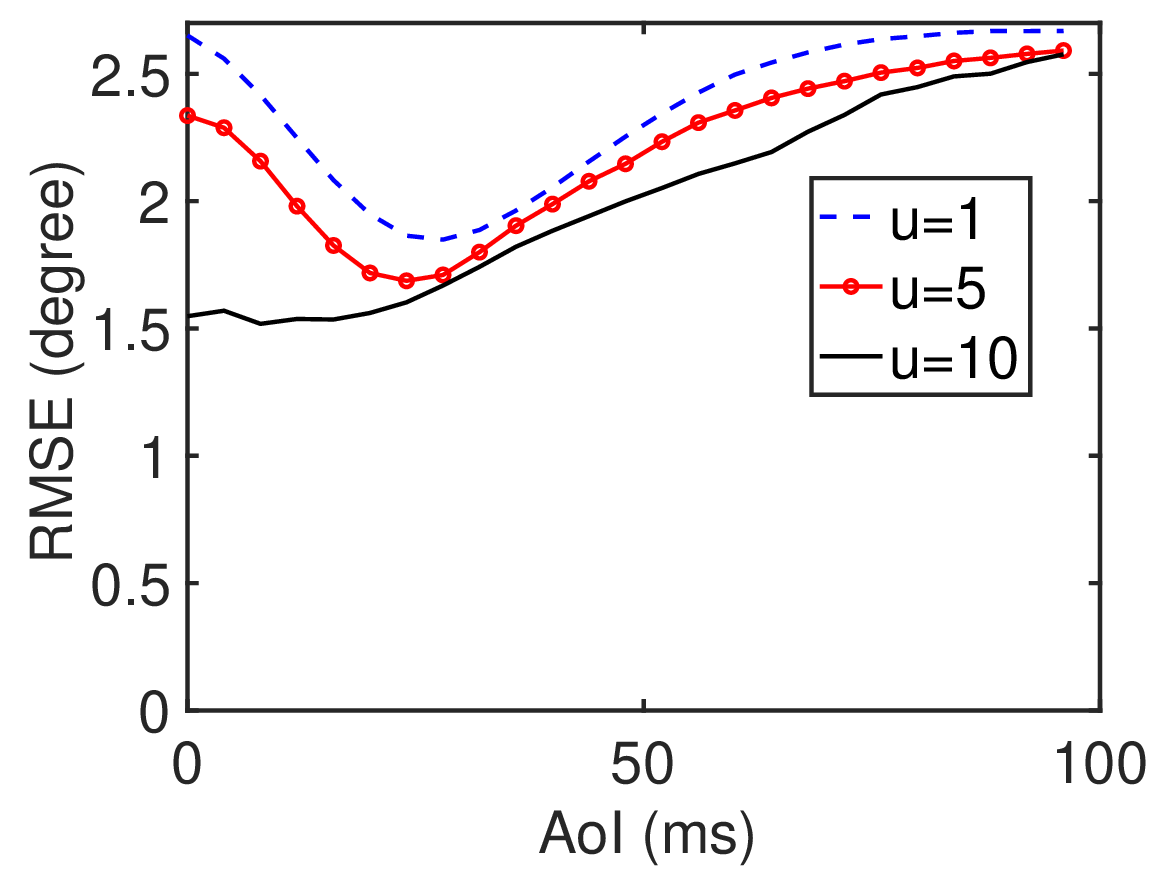}
  \subcaption{Training Error vs. AoI}
\end{subfigure}
%
\hspace{0mm}
\begin{subfigure}[t]{0.20\textwidth}
\includegraphics[width=\textwidth]{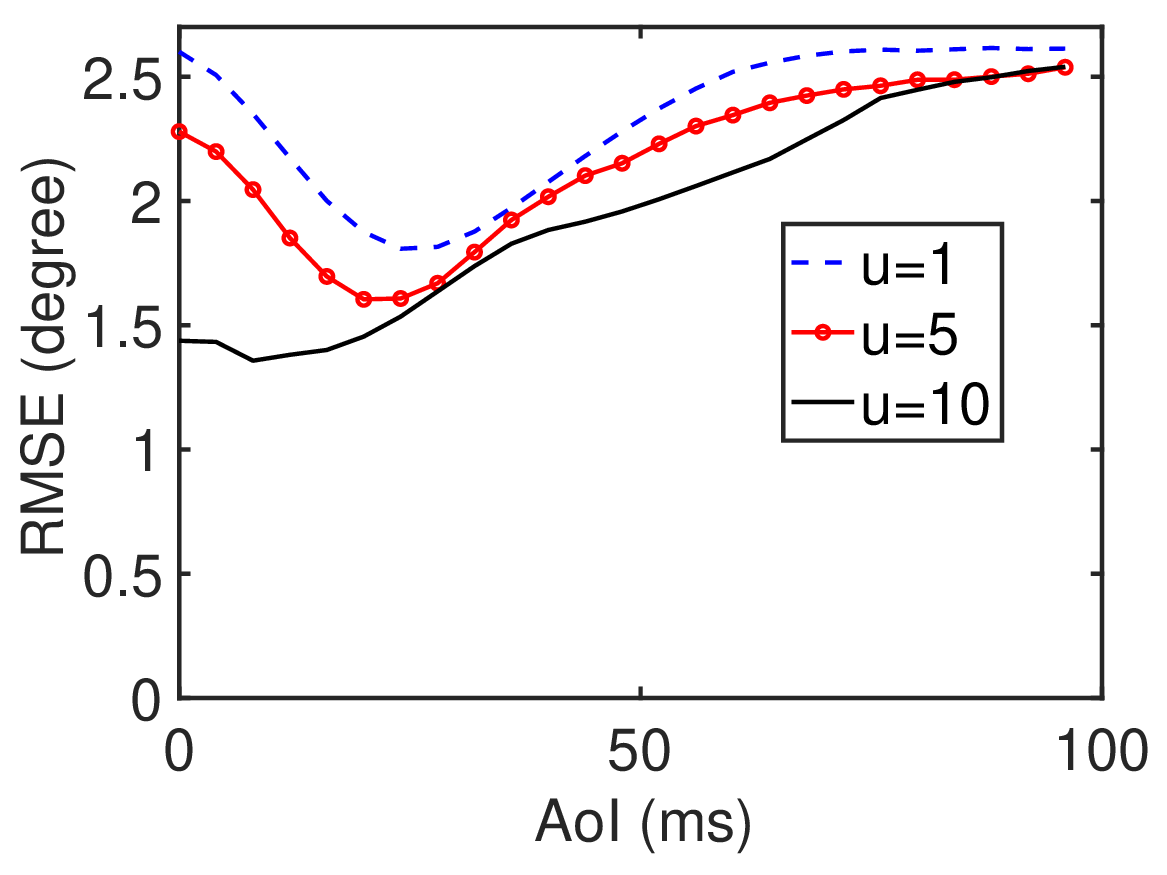}

  \subcaption{Inference Error vs. AoI}
\end{subfigure}
\caption{Performance of actuator state prediction under mechanical response delay. In the OpenAI CartPole-v1 task \cite{brockman2016openai}, the pole angle $\psi_t$ is predicted by using $X_{t-\delta}=(v_{t-\delta}, v_{t-\delta-1}, \ldots, v_{t-\delta-u-1})$, where $v_{t}$ is the cart velocity at time $t$. 
The training error and inference error are non-monotonic in the AoI.
}
\label{fig:TrainingCartVelocity}
\end{figure*}

\subsection{Information-theoretic Metrics for Training and Inference}\label{information-theoreticmetrics}
Because the set of functions $\Lambda$ constructed by the neural network  is complicated, it is difficult to directly analyze the training and inference errors by using \eqref{eq_trainingerror1}-\eqref{instantaneous_err1}. To overcome this challenge, we introduce information-theoretic metrics for the training and inference errors.
\subsubsection{Training Error of the First Training Approach}
Let $\Phi=\{f : \mathcal X \times \mathbb Z^{+} \mapsto \mathcal A\}$ be the set of all functions mapping
from $\mathcal X \times \mathbb Z^{+}$ to $\mathcal A$. Any function $\phi$ constructed by the neural network belongs to $\Phi$. 
Hence, $\Lambda \subseteq \Phi$. By relaxing the set $\Lambda$ in \eqref{eq_trainingerror1} as $\Phi$, we obtain the following lower bound of $\mathrm{err_{training,1}}(\delta)$:
\begin{align}\label{eq_TrainingErrorLBfixedAoi}
H_L(\tilde Y_0| \tilde X_{-\delta}) = \min_{\phi\in \Phi} \mathbb{E}_{Y,X\sim P_{\tilde Y_0, \tilde X_{-\delta}}}\![L(Y,\phi(X,\delta))], 
\end{align} 
where $H_L( Y| X)$ is a generalized conditional entropy of $ Y$ given $X$, defined by  \cite{Dawid2004, Dawid1998,farnia2016minimax}
\begin{align}\label{eq_TrainingErrorLB}
H_L( Y| X)= \min_{\phi(x, \delta) \in\mathcal A,~\! \forall x \in \mathcal X}  \mathbb E_{Y,X\sim P_{Y,X}} [L(Y, \phi(X, \delta))].
\end{align} 
Compared to $\mathrm{err_{training,1}}(\delta)$, its information-theoretic lower bound $H_L(\tilde Y_0| \tilde X_{-\delta})$ is mathematically more convenient to analyze. The gap between $\mathrm{err_{training,1}}(\delta)$ and $H_L(\tilde Y_0| \tilde X_{-\delta})$ was studied recently in \cite{shisher2022local}, where the gap is small if $\Lambda$ and $\Phi$ are close to each other, e.g., when the neural network is sufficiently wide and deep \cite{goodfellow2016deep}.

For notational convenience, we refer to $H_L( Y| X)$ as an \emph{L-conditional entropy}, because it is associated with a loss function $L$. The \emph{$L$-entropy} of a random variable $Y$ is defined as \cite{Dawid2004, farnia2016minimax}
\begin{align}\label{eq_Lentropy}
H_L(Y) = \min_{a\in\mathcal A} \mathbb{E}_{Y \sim P_{Y}}[L(Y,a)].
\end{align} 
The $L$-conditional entropy of $Y$ given $X=x$ is 
\begin{align}\label{given_L_condentropy}
H_L(Y| X=x)= \min_{a \in\mathcal A} \mathbb E_{Y\sim P_{Y| X=x}} [L(Y, a)].
\end{align}
{\blue Using \eqref{eq_TrainingErrorLB}, one can get 
\cite{Dawid2004, farnia2016minimax} 
\begin{align}\label{eq_cond_entropy1}
H_L(Y|X)&=\sum_{x \in \mathcal X} P_X(x) \min_{a \in\mathcal A} \mathbb E_{Y\sim P_{Y| X=x}} [L(Y, a)] \nonumber\\
&=\sum_{x \in \mathcal X} P_X(x) H_L(Y| X=x). 
\end{align}}
\subsubsection{Training Error of the Second Training Approach}
A lower bound of the training error $\mathrm{err_{training,2}}$ in \eqref{eq_trainingerror2} is 
\begin{align}\label{eq_TrainingErrorLB1}
&H_L(\tilde Y_0| \tilde X_{-\Theta},\Theta)\nonumber\\
=&\min_{\phi\in \Phi} \mathbb{E}_{Y,X,\Theta\sim P_{\tilde Y_0, \tilde X_{-\Theta},\Theta}}\![L(Y,\!\phi(X,\Theta))], 
\end{align} 
where $H_L(\tilde Y_0| \tilde X_{-\Theta},\Theta)$ is a $L$-conditional entropy of $\tilde Y_0$ given $(\tilde X_{-\Theta},\Theta)$. Using \eqref{eq_cond_entropy1}, $H_L(\tilde Y_0| \tilde X_{-\Theta},\Theta)$ can be decomposed as
\begin{align}\label{eq_TrainingErrorLB1}
\!\!\!&H_L(\tilde Y_0| \tilde X_{-\Theta},\Theta) \nonumber\\
=&\sum_{\substack{x \in \mathcal X, \delta \in \mathbb Z^{+}}} \!\!\!\! P_{\tilde X_{-\Theta}, \Theta}(x, \delta) H_L(\tilde Y_0|\tilde X_{-\delta}=x, \Theta=\delta).\!\!\!
\end{align}
{Similar to Sec. \ref{learningerror}, we assume that the label and feature $(\tilde Y_0,\tilde X_{-k})$ in the training dataset are independent of the training AoI $\Theta$ for every $k \geq 0$. Under this assumption, \eqref{eq_TrainingErrorLB1} can be simplified as (see Appendix \ref{pfreshness_aware_cond} for its proof)
\begin{align}\label{freshness_aware_cond}
\!\! H_L(\tilde Y_0| \tilde X_{-\Theta},\Theta)\!=\sum_{\delta \in \mathbb Z^{+}} P_{\Theta}(\delta)~H_L(\tilde Y_0| \tilde X_{-\delta}),
\end{align}
which connects the information-theoretic lower bounds of $\mathrm{err_{training,1}}(\delta)$ and $\mathrm{err_{training,2}}$.
}
\subsubsection{Inference Error}
Let $a_{P_Y}$ be an optimal solution to \eqref{eq_Lentropy}, called a \emph{Bayes action} \cite{Dawid2004}.
If the neural predictor in \eqref{instantaneous_err1} is replaced by the Bayes action $a_{P_{\tilde Y_0|\tilde X_{-\delta}=x}}$, 
then, for both training methods, $\mathrm{err_{inference}}(\delta)$ becomes the following $L$-conditional cross entropy
\begin{align}\label{L-CondCrossEntropy}
&H_L\left(P_{Y_{t}|X_{t- \delta}}; P_{\tilde Y_{0} |\tilde X_{- \delta}} \Big| P_{X_{t-\delta}}\right)\!\! \nonumber\\
=&\sum_{x \in \mathcal X} \!\!P_{X_{t-\delta}}(x)\mathbb{E}_{Y \sim P_{Y_t| X_{t- \delta}=x}}\!\!\left[ L\left(Y,a_{P_{\tilde Y_0|\tilde X_{-\delta}=x}}\right)\right]\!,\!\!\!\!
\end{align} 
where the \emph{$L$-cross entropy} is defined as 
\begin{align} \label{cross-entropy}
H_L(P_Y; P_{\tilde Y})= \mathbb{E}_{Y \sim P_{Y}}\left[L\left(Y, a_{P_{\tilde Y}}\right)\right],
\end{align}
and the \emph{$L$-conditional cross entropy} is defined as
\begin{align} \label{cond-cross-entropy}
&H_L(P_Y; P_{\tilde Y} | P_X)\nonumber\\
=& \sum_{x \in \mathcal X} P_X(x) \mathbb{E}_{Y \sim P_{Y|X=x}}\!\!\left[L\!\left(Y, a_{P_{\tilde Y|\tilde X=x}}\right)\right]\!.
\end{align}
If the function spaces $\Lambda$ and $\Phi$ are close to each other, the difference between $\mathrm{err_{inference}}(\delta)$ and the $L$-conditional cross entropy $H_L(P_{Y_{t}|X_{t- \delta}}; P_{\tilde Y_{0}|\tilde X_{- \delta}} | P_{X_{t-\delta}})$ is small.

\ignore{We assume that $\{(Y_t , X_t ), t \in \mathbb Z\}$ is a stationary process that is independent of $\{\Delta(t), t \in \mathbb Z\}$. If the neural predictor in \eqref{eq_inferenceerror} is replaced by the optimal solution to \eqref{eq_TrainingErrorLB1}, then $\mathrm{err}_{\mathrm{inference}}$ becomes an $L$-conditional cross-entropy
\begin{align}\label{L-CondCrossEntropy}
&H_L(Y_{t}; \tilde Y_{t} | X_{t-\Delta}, \Delta) \nonumber\\
=&\mathbb{E}_{Y,X,\Delta\sim P_{Y_t, X_{t- \Delta},\Delta}}\left[L\left(Y,\hat \phi_{P_{\tilde Y_t, \tilde X_{t-\Theta},\Theta}}(X,\Delta)\right)\right],
\end{align} 
where $\hat \phi_{P_{\tilde Y_t, \tilde X_{t-\Theta},\Theta}}$ is the optimal solution to \eqref{eq_TrainingErrorLB1} and $\Delta$ is a random variable that follows the empirical distribution of $\{\Delta(t), t \in \mathbb Z\}$.}
\ignore{If the function space $\Lambda$ is sufficiently large, the difference between $\mathrm{err}_{\mathrm{inference}}$ and $H_L(\tilde Y_{t}; Y_{t} | X_{t-\Delta}, \Delta)$ is small.} 

Examples of loss function $L$, $L$-entropy, and $L$-cross entropy are provided in Appendix \ref{InformationTheory1}. Additionally, the definitions of $L$-divergence $D_L(P_Y || Q_Y)$, $L$-mutual information $I_L(Y; X)$, and $L$-conditional mutual information $I_L(Y; X| Z)$ are provided in Appendix \ref{otherLmetrics}.  
The relationship among $L$-divergence, Bregman divergence \cite{dhillon2008matrix}, and $f$-divergence \cite{csiszar2004information} is discussed in Appendix \ref{InformationTheory2}.
\begin{figure}[t]
  \centering
\begin{subfigure}[b]{0.20\textwidth}
\includegraphics[width=1\linewidth]{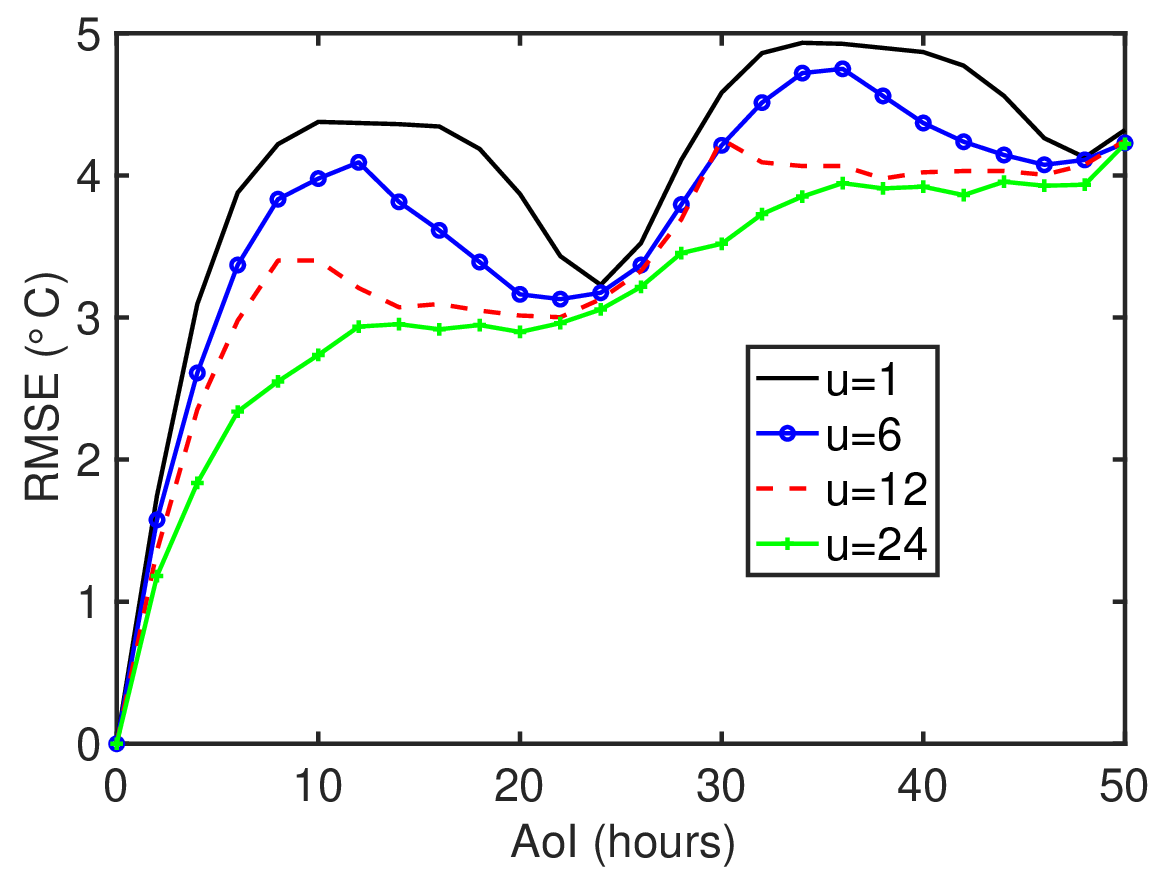}
  \subcaption{Training Error vs. AoI}
\end{subfigure}
\begin{subfigure}[b]{0.20\textwidth}
\includegraphics[width=1\linewidth]{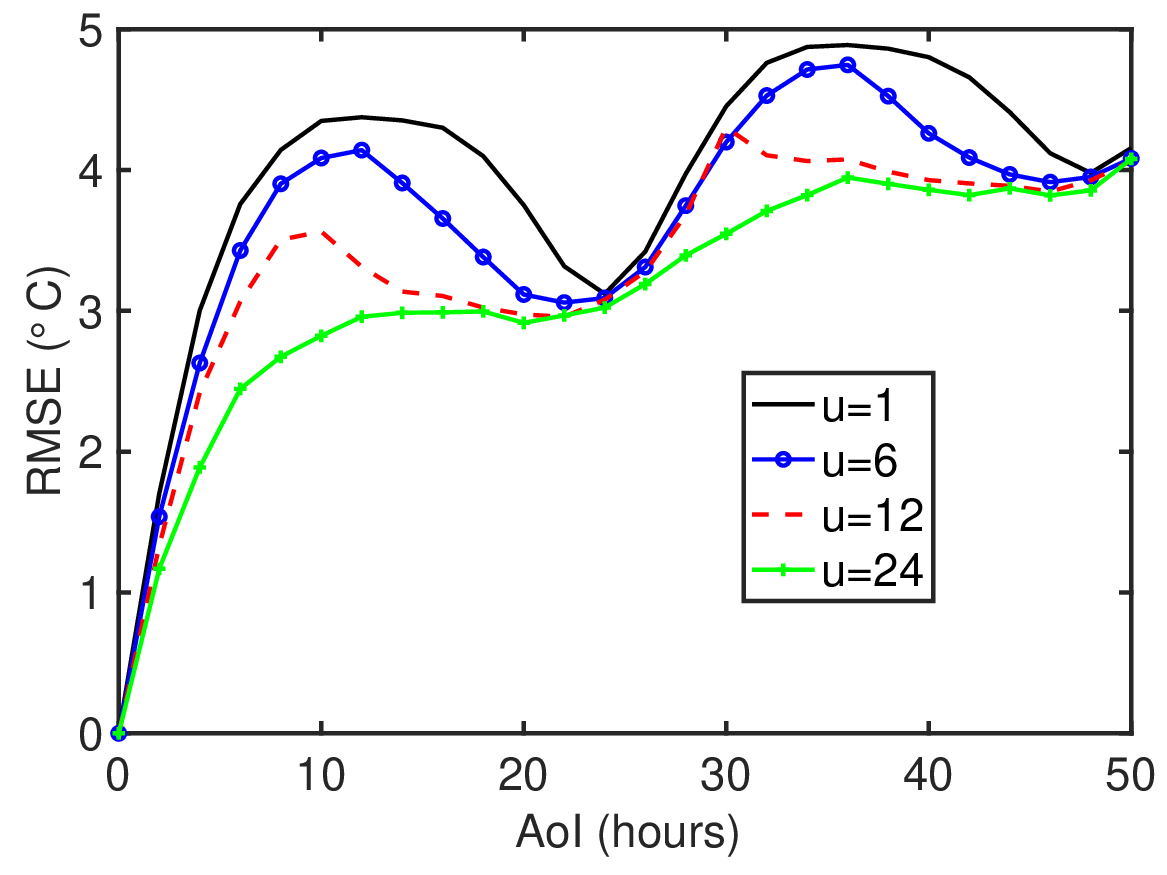}
  \subcaption{Inference Error vs. AoI}
\end{subfigure}
%
\caption{Performance of temperature prediction. The training error and inference error are non-monotonic in AoI. As the feature sequence length $u$ increases, the errors tend closer to non-decreasing functions of the AoI.
}
\label{fig:Training}
\end{figure}

\ignore{
\ifreport
The $L$-\emph{divergence} $D_L(P_{Y} || P_{\tilde Y})$ of $P_{Y}$ from $P_{\tilde Y}$ can be expressed as \cite{Dawid2004, farnia2016minimax}
\begin{align}\label{divergence}
&D_L(P_{Y} || P_{\tilde Y}) \nonumber\\
 &=\!\mathbb E_{Y \sim P_{Y}}\left[L\left(Y, a_{P_{Y}}\right)\right]-\mathbb E_{Y \sim P_{Y}}\left[L\left(Y, a_{P_{\tilde Y}}\right)\right].
\end{align}
Because $a_{P_{Y}}$ is an optimal to \eqref{eq_Lentropy}, from \eqref{divergence}, we have 
\begin{align}
D_L(P_{Y} || P_{\tilde Y}) \geq 0.
\end{align}
The \emph{$L$-mutual information} $I_L(Y;X)$ is defined as \cite{Dawid2004, farnia2016minimax}
\begin{align}\label{MI}
I_L(Y; X)=& \mathbb E_{X \sim P_{X}}\left[D_L\left(P_{Y|X}||P_{Y}\right)\right]\nonumber\\
=&H_L(Y)-H_L(Y|X) \geq 0,
\end{align}
which measures the performance gain in predicting $Y$ by observing $X$. In general, $I_L(Y;X)$ $\neq$ $I_L(X;Y)$. The $L$-conditional mutual information $I_L(Y; X | Z)$ is given by 
\begin{align}\label{CMI}
I_L(Y; X|Z)=& \mathbb E_{X, Z \sim P_{X, Z}}\left[D_L\left(P_{Y|X, Z}||P_{Y | Z}\right)\right]\nonumber\\
=&H_L(Y | Z)-H_L(Y|X, Z) \geq 0.
\end{align}
\fi}

\ignore{\ifreport
Since $(Y_0, X_{\Theta})$ is independent of $\Theta$ and $\{(\tilde Y_t, \tilde X_t), t \in \mathbb Z\}$ is independent of $\Delta$, the $L$-conditional cross entropy in \eqref{L-CondCrossEntropy} can be decomposed as 
(See Appendix \ref{pDecomposed_Cross_entropy} for its proof)
\begin{align}\label{Decomposed_Cross_entropy}
H_L(\tilde Y_t;Y_t| \tilde X_{t-\Delta}, \Delta) = \sum_{\delta \in \mathcal D} P_\Delta(\delta)~H_L(\tilde Y_t;Y_t| \tilde X_{t-\delta}).
\end{align}
{\violet Similar to \eqref{divergence}, we can get
\begin{align}\label{L-ConCrossEntropy1}
\!\!\!\!\!H_L(\tilde Y_{t}; Y_{t} | \tilde X_{t-\Delta}, \Delta) =&H_L(\tilde Y_{t} | \tilde X_{t-\Delta}, \Delta)+\sum_{x \in \mathcal X, \delta \in \mathcal D} P_{\tilde X_{t- \Delta},\Delta}(x, \delta)\nonumber\\
&\!\!\times D_L\left(P_{\tilde Y_t | \tilde X_{t-\Delta}=x, \Delta=\delta} || P_{Y_t | X_{t-\Theta}=x, \Theta=\delta}\right).\!\!\!
\end{align}
Because the $L$-divergence in \eqref{L-ConCrossEntropy1} is non-negative, we have}
\begin{align}\label{lowerbound_inference}
H_L(\tilde Y_{t}; Y_{t} | \tilde X_{t-\Delta}, \Delta) \geq H_L(\tilde Y_{t} | \tilde X_{t-\Delta}, \Delta).
\end{align}
\fi}

\ignore{Consider the real-time forecasting system illustrated in Fig. \ref{fig:learning}, where the goal is to predict a fresh label $Y_t$ (e.g., location of the car in front) of time $t$ based on {\blue an earlier}  observation $X_{t-\Delta(t)}$ that was generated $\Delta(t)$ seconds ago.~The observation, a.k.a., feature, $X_{t-\Delta(t)}$$=$$(s_{t-\Delta(t)}$, $\ldots, s_{t-\Delta(t)-u+1})$ is a {\blue time sequence of length $u$ (e.g., $u$ consecutive video frames). We consider} a class of supervised learning algorithms called Empirical Risk Minimization (ERM), which is a standard approach for supervised learning \cite{vapnik2013nature, goodfellow2016deep, mohri2018foundations}. In ERM, the decision-maker predicts {\blue the label} $Y_t\in\mathcal Y$ by taking an action $a$ $=$ $\phi(X_{t-\Delta(t)},\Delta(t))$$\in \mathcal A$ based on  the observation $X_{t-\Delta(t)}\in\mathcal X$ and {\blue its AoI $\Delta(t) \in \mathcal D$, where we assume that $\mathcal Y$, $\mathcal X$, and $\mathcal D$ are discrete sets.}~The learning performance is measured by a loss function $L$, where $L(y,a)$ is the incurred loss if action $a$ is chosen when $Y_t=y$. The loss function $L$ is specified by the ERM supervised learning algorithm.~For example, {\blue $L$ is a quadratic function $L_2(y,\hat y) = (y - \hat y)^2$ in linear regression and a logarithmic function $L_{\log}(y,P_Y) = - \log P_Y(y)$ in logistic regression, where $P_Y$ denotes the distribution of $Y$.}

Supervised learning based real-time forecasting consists of two phases: \emph{offline training}
and \emph{online inference}.~In the offline training phase, a training dataset is collected and is used to train a neural network. Let $P_{Y_t, X_{t- \Theta}, \Theta}$ denote the empirical distribution of the training data $(Y_t,X_{t-\Theta})$ and training AoI $\Theta$, {\blue and $(Y_t,X_{t-\Theta},\Theta)$ are random variables following this empirical distribution. Here, the training AoI $\Theta \geq 0$ is the time difference between the observation $X_{t-\Theta}$ and the label $Y_t$. We assume that the training label and feature $(Y_t ,X_t )$ are stationary over time and are independent of the training AoI $\Theta$.} The objective of training in ERM-based real-time forecasting is to solve the following problem:  
\begin{align}\label{eq_trainingerror}
\mathrm{err}_{\mathrm{training}} = \min_{\phi\in \Lambda} \mathbb{E}_{Y,X,\Theta\sim P_{Y_t,X_{t-\Theta},\Theta}}[L(Y,\phi(X,\Theta))],
\end{align} 
where $\phi: \mathcal X \times \mathcal D \mapsto \mathcal A$ is selected from a family of decision functions $\Lambda$ that can be implemented by a neural network, $$\mathbb{E}_{Y,X,\Theta\sim P_{Y_t,X_{t-\Theta},\Theta}}[L(Y,\phi(X,\Theta))]$$ is {\blue the expected loss over the empirical distribution of training data and training AoI, and} $\mathrm{err}_{\mathrm{training}}$ is {\blue called} the \emph{minimum training error} \footnote{In this paper, we focus on \emph{freshness-aware inference}, in which the AoI is fed as an input of the neural predictor (see Fig. \ref{fig:learning}) and is used to predict the label. {\violet The complimentary case of \emph{freshness-agnostic inference}, where the AoI is unknown at the neural predictor, is out of the scope of this paper. The difference between \emph{freshness-aware inference} and \emph{freshness-agnostic inference} was briefly discussed in \cite{shisher2021age} and will be further studied in our future work.}}.~The optimal solution to \eqref{eq_trainingerror} is denoted by $\phi^*_{P_{Y_t, X_{t-\Theta},\Theta}}$.

Within the {online inference} phase, {\blue the trained neural predictor $\phi^*_{P_{Y_t, X_{t-\Theta},\Theta}}$ is used to predict the target in real-time.} The \emph{inference error} is the expected loss on the inference data and inference AoI using the trained predictor, i.e.,
\begin{align}\label{eq_inferenceerror}
\mathrm{err}_{\mathrm{inference}} = \mathbb{E}_{Y,X,\Delta\sim P_{\tilde Y_t,\tilde X_{t- \Delta},\Delta}}[L(Y,\phi^*_{P_{Y_t, X_{t-\Theta},\Theta}}(X,\Delta))],
\end{align} 
where $P_{\tilde Y_t,\tilde X_{t- \Delta},\Delta}$ is the {\violet empirical distribution} of the inference data $(\tilde Y_t,\tilde X_{t- \Delta})$ and inference AoI $\Delta$, and $(\tilde Y_t,\tilde X_{t- \Delta},\Delta)$ are random variables following this distribution. 

During online inference, the observations (e.g., video frames for prediction) are sent to the trained neural predictor in real-time. As a result, the inference AoI is a random process $\{\Delta(t), t \in \mathbb Z \}$ governed by the communications from a data source (e.g., a sensor or camera) to the neural predictor, and $\Delta$ follows the empirical distribution of the AoI process $\{\Delta(t),t\in \mathbb Z\}$. {\violet We assume that the inference label and feature process $\{(\tilde Y_t , \tilde X_t), t\in \mathbb Z \}$ is stationary over time and is independent of the inference AoI $\Delta(t)$ and $\Delta$. On the other hand, the training AoI $\Theta$ (i.e., the time difference between  $X_{t-\Theta}$ and  $Y_t$) can be arbitrarily chosen ahead of time because the training dataset is prepared offline.} In Section \ref{InformationAnalysis}, we will discuss how to choose the training AoI $\Theta$. In Section \ref{Scheduling}, we will study how to optimally control the inference AoI process $\Delta(t)$ by scheduling the transmissions of {\violet observation features} to the neural predictor.}


\subsection{Training Error vs. Training AoI}\label{SecMinTrainingError}

We first analyze the monotonocity of the $L$-conditional entropy $H_L(\tilde Y_0| \tilde X_{-\delta})$ as $\delta$ increases. If $\tilde Y_0 \leftrightarrow \tilde X_{-\mu}  \leftrightarrow \tilde X_{-\mu-\nu}$ is a Markov chain for all $\mu,\nu\geq 0$, by the data processing inequality for $L$-conditional entropy \cite[Lemma 12.1] {Dawid1998}, $H_L(\tilde Y_0| \tilde X_{-\delta})$ is a non-decreasing function of $\delta$. 
Nevertheless, the experimental results in Figs. \ref{fig:learning}-\ref{fig:Trainingcsi} show that the training error is a growing function of the AoI $\delta$ in some systems (see Fig. \ref{fig:learning}), whereas it is a non-monotonic function of $\delta$ in other systems (see Figs. \ref{fig:DelayedNetworkedControlled}-\ref{fig:Trainingcsi}). As we will explain below, a fundamental reason behind these phenomena is that practical time-series data for remote inference could be either Markovian or non-Markovian. For non-Markovian $(\tilde Y_0, \tilde X_{-\mu}, \tilde X_{-\mu -\nu})$, $H_L(\tilde Y_0| \tilde X_{-\delta})$ is not necessarily monotonic in $\delta$.  

We propose a new relaxation of the data processing inequality to analyze information freshness for both Markovian and non-Markovian time-series data. To that end, the following generalization of the standard Markov chain model is needed, which is motivated by the $\epsilon$-dependence concept used in \cite{huang2019universal}.

\ignore{{\blue Let us first consider the case of deterministic training AoI $\Theta=\delta$. In this case, $H_L(Y_t| X_{t-\Theta},\Theta=\delta)$ can be simply written as $H_L(Y_t| X_{t-\delta})$. Because $\{(Y_t, X_t)\}_{t \in \mathbb Z}$ is stationary, $H_L(Y_t| X_{t-\delta})$ is a function of $\delta$.} If $Y_t \leftrightarrow X_{t-\mu}  \leftrightarrow X_{t-\mu-\nu}$ is a Markov chain for all $\mu,\nu\geq 0$, then the data processing inequality \cite[Lemma 12.1] {Dawid1998} implies that $H_L(Y_{t} | X_{t-\delta})$ is a {\blue non-decreasing function of $\delta$.} However, our experimental results in Fig. \ref{fig:Training} show that {\blue the training error $\mathrm{err}_{\mathrm{training}}$ is not always monotonic in the training AoI $\delta$. This implies that the training data may not satisfy the Markov property. In fact, practical time-series data is usually non-Markovian \cite{Kampen1998Non-Markov, hanggi1977time, guo2019credibility, wang2021framework}, which hinders the use of data processing inequality.} Hence, novel analytical tools for interpreting information freshness in non-Markovian models are in great need. {\blue To resolve this challenge, we propose a new $\epsilon$-\emph{Markov chain model} that generalizes the standard Markov chain, and develop an $\epsilon$-\emph{data processing inequality} to characterize the relationship between training/inference errors and AoI.}}

\ignore{\subsubsection{$\epsilon$-Markov Chain Model and $\epsilon$-Data Processing Inequality}\label{Def_eMarkov}
We develop a unified framework {\blue to analyze information freshness for both} Markovian and non-Markovian time-series data. Towards that end, {\blue we introduce the following relaxation of the standard Markov chain model: }}

\begin{definition}[\textbf{$\epsilon$-Markov Chain}]\label{epsilonMarkovChain}
Given $\epsilon \geq 0$, a sequence of three random variables $Z, X,$ and $Y$ is said to be an \emph{$\epsilon$-Markov chain}, denoted as $Z \overset{\epsilon} \rightarrow X \overset{\epsilon} \rightarrow Y$, if
\begin{align}\label{epsilon-Markov-def}
\!\!I_{\mathrm{log}}(Y;Z|X)=  D_{\mathrm{log}}\!\left(P_{Y,X,Z} || P_{Y|X} P_{Z|X} P_{X}\right)  \leq  \epsilon^2,
\end{align}
where\ifreport \footnote{In \eqref{epsilon-Markov-def}, if $P_{Y|X=x}(y) = 0$, then $P_{Y|X=x,Z=z}(y) = 0$ which leads to a term $0\mathrm{log}\frac{0}{0}$ in the KL-divergence $D_{\mathrm{log}} (P_{Y|X=x,Z=z} || P_{Y|X=x})$. We adopt the convention in information theory \cite{polyanskiy2014lecture} to define $0\mathrm{log}\frac{0}{0}=0$.}\else \fi
\begin{align}\label{chi-divergence-def}
D_{\mathrm{log}}(P_Y ||Q_Y)=\sum_{y \in \mathcal{Y}} P_Y(y) \mathrm{log} \frac{P_Y(y)}{Q_Y(y)}
\end{align}
is KL-divergence and $I_{\mathrm{log}}(Y;Z|X)$ is Shannon conditional mutual information.
\end{definition}


A Markov chain is an $\epsilon$-Markov chain with $\epsilon= 0$. If $Z \rightarrow X \rightarrow Y$ is a Markov chain, then $Y \rightarrow X \rightarrow Z$ is also a Markov chain \cite[p. 34]{cover1999elements}. A similar property holds for the $\epsilon$-Markov chain.
\begin{lemma}\label{Symmetric}
 If $Z \overset{\epsilon}  \rightarrow X \overset{\epsilon} \rightarrow Y$, then $Y \overset{\epsilon}  \rightarrow X \overset{\epsilon} \rightarrow Z$.
\end{lemma}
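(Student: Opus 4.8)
The plan is to show that the defining inequality \eqref{epsilon-Markov-def} is symmetric in $Y$ and $Z$, which boils down to showing that the Shannon conditional mutual information is symmetric in its first two arguments: $I_{\mathrm{log}}(Y;Z|X) = I_{\mathrm{log}}(Z;Y|X)$. Since the $\epsilon$-Markov condition $Z \overset{\epsilon}\rightarrow X \overset{\epsilon}\rightarrow Y$ is precisely $I_{\mathrm{log}}(Y;Z|X) \le \epsilon^2$ and the condition $Y \overset{\epsilon}\rightarrow X \overset{\epsilon}\rightarrow Z$ is precisely $I_{\mathrm{log}}(Z;Y|X) \le \epsilon^2$, the symmetry of $I_{\mathrm{log}}(\cdot;\cdot|X)$ immediately gives the equivalence of the two conditions, hence the lemma.

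First I would recall the standard identity for Shannon conditional mutual information in terms of conditional entropies, $I_{\mathrm{log}}(Y;Z|X) = H_{\mathrm{log}}(Y|X) - H_{\mathrm{log}}(Y|X,Z) = H_{\mathrm{log}}(Y|X) + H_{\mathrm{log}}(Z|X) - H_{\mathrm{log}}(Y,Z|X)$, where the last expression is manifestly symmetric under swapping $Y$ and $Z$. Equivalently, in the divergence form used in \eqref{epsilon-Markov-def}, one writes
\begin{align}
I_{\mathrm{log}}(Y;Z|X) = \sum_{x} P_X(x) \sum_{y,z} P_{Y,Z|X=x}(y,z)\, \mathrm{log}\frac{P_{Y,Z|X=x}(y,z)}{P_{Y|X=x}(y)\,P_{Z|X=x}(z)},
\end{align}
and the summand is symmetric in $(y,z) \leftrightarrow (z,y)$ together with the corresponding relabeling, so the whole sum is unchanged when the roles of $Y$ and $Z$ are exchanged. (One must also note that $D_{\mathrm{log}}(P_{Y,X,Z}\|P_{Y|X}P_{Z|X}P_X)$ equals this conditional-mutual-information expression — this is just the chain-rule rearrangement of the KL divergence, valid under the same $0\log\frac{0}{0}=0$ convention adopted in the footnote to Definition \ref{epsilonMarkovChain}.)

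Then the proof concludes in one line: if $Z \overset{\epsilon}\rightarrow X \overset{\epsilon}\rightarrow Y$, then $I_{\mathrm{log}}(Y;Z|X) \le \epsilon^2$; by the symmetry just established, $I_{\mathrm{log}}(Z;Y|X) = I_{\mathrm{log}}(Y;Z|X) \le \epsilon^2$, which is exactly the statement that $Y \overset{\epsilon}\rightarrow X \overset{\epsilon}\rightarrow Z$. I do not expect any real obstacle here — this mirrors the classical fact that an ordinary Markov chain $Z \rightarrow X \rightarrow Y$ can be reversed, and the only care needed is bookkeeping with the degenerate-probability convention so that all the logarithmic terms are well defined; everything else is the routine symmetry of mutual information.
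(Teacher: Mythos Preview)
Your proposal is correct and takes essentially the same approach as the paper: the paper's proof is the one-line observation $I_{\mathrm{log}}(Y;Z|X)=I_{\mathrm{log}}(Z;Y|X)$, and you simply spell out in more detail why that symmetry holds.
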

\ifreport
\begin{proof}
See Appendix \ref{PSymmetric}.
\end{proof}
\else
Due to space limitation, all the proofs are relegated to our technical report \cite{technical_report}.
\fi

\begin{figure}
  \centering
\begin{subfigure}[b]{0.20\textwidth}
\includegraphics[width=1\linewidth]{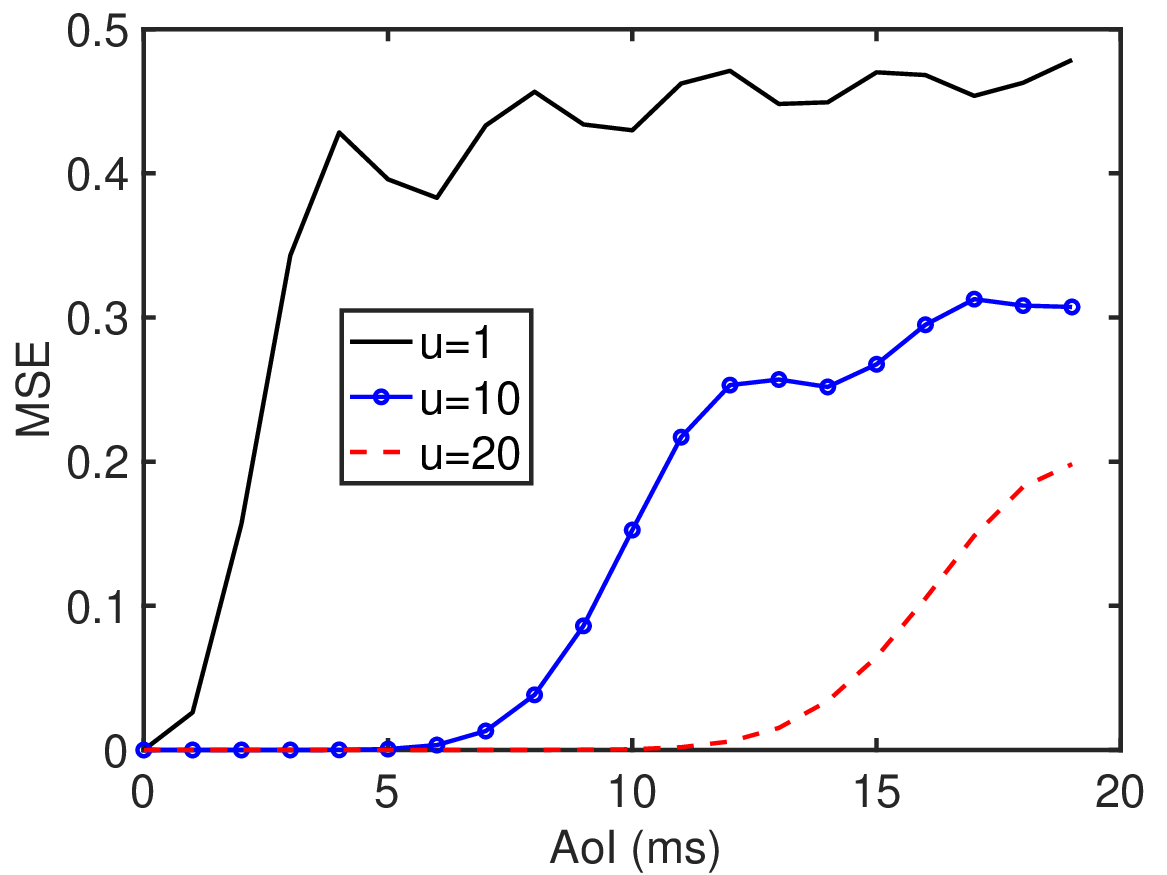}
  \subcaption{Training Error vs. AoI}
\end{subfigure}
\begin{subfigure}[b]{0.20\textwidth}
\includegraphics[width=1\linewidth]{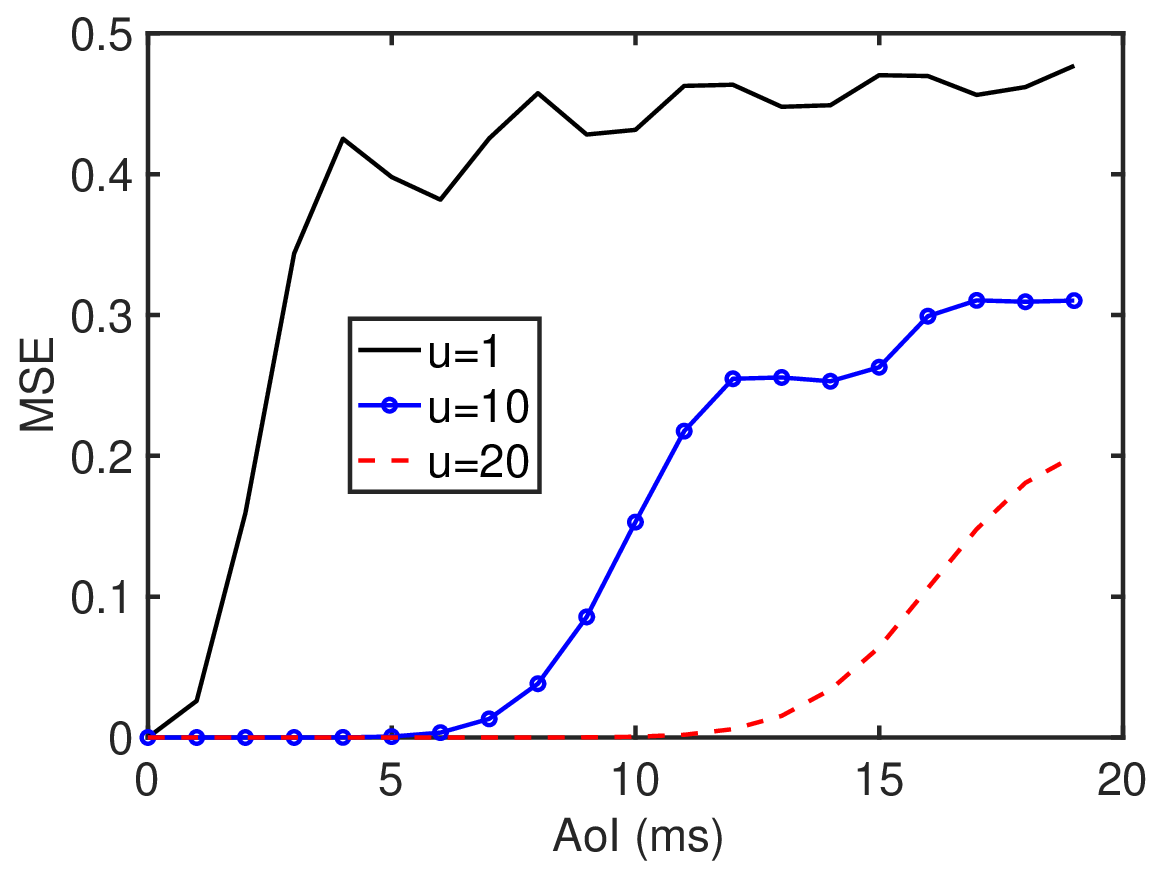}
  \subcaption{Inference Error vs. AoI}
\end{subfigure}
%
\caption{Performance of channel state information prediction. 
The training error and inference error are non-monotonic in AoI. As the feature sequence length $u$ increases, the errors tend closer to non-decreasing functions of the AoI.
}
\label{fig:Trainingcsi}
\end{figure}

By Lemma \ref{Symmetric}, the $\epsilon$-Markov chain can be denoted as $Y \overset{\epsilon} \leftrightarrow X \overset{\epsilon} \leftrightarrow Z$.
In the following lemma, we provide a relaxation of the data processing inequality, which is called an \emph{$\epsilon$-data processing inequality}.
\begin{lemma}[\textbf{$\epsilon$-data processing inequality}] \label{Lemma_CMI}
The following assertions are true:
\begin{itemize}
\item[(a)] If $Y \overset{\epsilon}\leftrightarrow X \overset{\epsilon}\leftrightarrow Z$ is an $\epsilon$-Markov chain, then 
\begin{align}
H_L(Y|X) \leq H_L(Y|Z)+O(\epsilon).
\end{align}
\item[(b)] If, in addition, $H_L(Y)$ is twice differentiable in $\mathcal{P^Y}$, then
\begin{align}
H_L(Y|X) \leq H_L(Y|Z)+O(\epsilon^2).
\end{align}
\end{itemize}
\end{lemma}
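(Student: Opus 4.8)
The plan is to relate the $L$-conditional entropy gap $H_L(Y|X) - H_L(Y|Z)$ to the discrepancy between the two posteriors $P_{Y|Z=z}$ and an appropriate ``mixed'' posterior, and then control that discrepancy by the Shannon conditional mutual information $I_{\mathrm{log}}(Y;Z|X) \le \epsilon^2$ that defines the $\epsilon$-Markov chain. The starting point is the elementary observation that, for any fixed action map, $H_L(Y|Z) = \sum_z P_Z(z) H_L(Y|Z=z)$ while $H_L(Y|X) \le \mathbb{E}_{X}[L(Y,\phi(X))]$ for \emph{every} $\phi$. I would choose a specific suboptimal predictor for the left-hand side: given $Z=z$, use the Bayes action of the conditional law $P_{Y|X}$ averaged appropriately, or more directly exploit that, since $Y \overset{\epsilon}{\leftrightarrow} X \overset{\epsilon}{\leftrightarrow} Z$ (using Lemma \ref{Symmetric}), $P_{Y|X=x,Z=z}$ is close to $P_{Y|X=x}$. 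Concretely, I would write
\begin{align}
H_L(Y|X) &= \sum_{x} P_X(x)\, H_L(Y|X=x) \nonumber\\
&= \sum_{x,z} P_{X,Z}(x,z)\, \mathbb{E}_{Y\sim P_{Y|X=x,Z=z}}\!\left[L(Y, a_{P_{Y|X=x}})\right] \nonumber\\
&\quad - \sum_{x,z} P_{X,Z}(x,z)\, D_L\!\left(P_{Y|X=x,Z=z} \,\|\, P_{Y|X=x}\right), \nonumber
\end{align}
which follows from the definition of $L$-divergence in the excerpt. The first term on the right, after summing over $x$, is an upper bound on $H_L(Y|Z)$ up to the $L$-divergence between $P_{Y|X,Z}$-mixtures and $P_{Y|Z}$; the key point is that both correction terms are $L$-divergences between conditional laws that the $\epsilon$-Markov hypothesis forces to be small.

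The next step is to convert the bound on Shannon conditional mutual information into a bound on these $L$-divergences. Since $I_{\mathrm{log}}(Y;Z|X) = \mathbb{E}_{X,Z}\big[D_{\mathrm{log}}(P_{Y|X,Z}\|P_{Y|X})\big] \le \epsilon^2$, Pinsker's inequality gives $\mathbb{E}_{X,Z}\big[\|P_{Y|X,Z} - P_{Y|X}\|_{TV}\big] = O(\epsilon)$ (by Jensen/Cauchy--Schwarz on the square root). For part (a), I would then bound the $L$-divergence $D_L(P \| Q)$ by a constant multiple of $\|P-Q\|_{TV}$ using that $\mathcal{Y}, \mathcal{A}$ are finite and $L$ is bounded on the relevant compact set, giving the $O(\epsilon)$ term directly. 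For part (b), the twice-differentiability of $H_L$ on the probability simplex $\mathcal{P}^{\mathcal Y}$ lets me Taylor-expand $D_L(P\|Q)$ around $Q$: the first-order term vanishes because $a_{P_{Y|X=x}}$ is the minimizer, so $D_L(P_{Y|X,Z}\|P_{Y|X}) = O(\|P_{Y|X,Z}-P_{Y|X}\|^2)$, and the expectation of this squared quantity is $O(\epsilon^2)$ again by Pinsker (this time without taking a square root, so no loss of a power). This is the standard local-information-geometry trick: a quadratic divergence composed with a linear perturbation of size $\epsilon$ in KL yields an $\epsilon^2$ bound.

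The main obstacle I anticipate is the careful bookkeeping of \emph{which} posterior plays the role of $Q$ in the Taylor expansion, and making sure the ``mixing'' over $x$ is handled so that the remaining gap is genuinely $H_L(Y|Z)$ and not some intermediate quantity. Specifically, after replacing $P_{Y|X=x,Z=z}$ by $P_{Y|X=x}$ I still need to pass from $\sum_x P_{X|Z=z}(x) H_L(Y|X=x)$-type expressions to $H_L(Y|Z=z)$, and this requires a second application of the data-processing-type bound, now using the \emph{reversed} $\epsilon$-Markov chain $Z \overset{\epsilon}{\to} X \overset{\epsilon}{\to} Y$ together with the symmetry from Lemma \ref{Symmetric}, plus the fact that $a_{P_{Y|Z=z}}$ is optimal for $H_L(Y|Z=z)$ so any other action (in particular a convex combination of Bayes actions $a_{P_{Y|X=x}}$, or the Bayes action of the mixture) only increases the loss. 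Ensuring the differentiability hypothesis in (b) is used exactly where the quadratic remainder is needed — and nowhere a uniform bound on the Hessian over the whole simplex is implicitly required without justification — will be the delicate part; I would state explicitly that the $O(\epsilon^2)$ constant depends on the operator norm of the Hessian of $H_L$ and on $|\mathcal Y|$.
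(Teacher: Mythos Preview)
Your displayed identity has a sign error: with the (correct) convention $D_L(P\|Q)=\mathbb E_{Y\sim P}[L(Y,a_Q)]-H_L(P)$, the first double sum already equals $H_L(Y|X)$ by the tower property, so subtracting the $D_L$ terms leaves $H_L(Y|X,Z)$, not $H_L(Y|X)$. In other words, what your computation actually proves is the tautology $H_L(Y|X)=H_L(Y|X,Z)+I_L(Y;Z|X)$, which is just the definition of $I_L(Y;Z|X)$ in \eqref{CMI}.

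This is also why the ``second application using the reversed chain'' you worry about is unnecessary. The paper closes the gap in one line via the chain rule: writing $H_L(Y|X,Z)$ two ways gives
\[
H_L(Y|X)-H_L(Y|Z)=I_L(Y;Z|X)-I_L(Y;X|Z)\le I_L(Y;Z|X),
\]
since $I_L(Y;X|Z)\ge 0$. No suboptimal-predictor construction, no mixing over $x$, no reversed-chain argument is needed to reach $H_L(Y|Z)$; the nonnegativity of the $L$-conditional mutual information does all the work. Once you adopt this decomposition, the only remaining task is to show $I_L(Y;Z|X)=\mathbb E_{X,Z}[D_L(P_{Y|X,Z}\|P_{Y|X})]$ is $O(\epsilon)$ (resp.\ $O(\epsilon^2)$), and here your plan is correct and matches the paper: Pinsker converts $I_{\log}(Y;Z|X)\le\epsilon^2$ into a bound on $\|P_{Y|X,Z}-P_{Y|X}\|$, and then $D_L(P\|Q)=O(\|P-Q\|)$ in general (by boundedness of $L$ on the finite simplex) and $D_L(P\|Q)=O(\|P-Q\|^2)$ under twice differentiability of $H_L$ (first-order term vanishes at the minimizer). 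Your Jensen/Cauchy--Schwarz handling of the expectation is in fact slightly cleaner than the paper's pointwise bound, but the substance is the same.
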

\ifreport
\begin{proof}
Lemma \ref{Lemma_CMI} is proven by using a local information geometric analysis. See Appendix \ref{PLemma_CMI} for the details.
\end{proof}
\else
\fi
Lemma \ref{Lemma_CMI}(b) was mentioned in \cite{shisher2021age} without proof. Lemma \ref{Lemma_CMI}(a) is a new result. Now, we are ready to characterize how $H_L(\tilde Y_0 | \tilde X_{-\delta})$ varies with the AoI $\delta$.

\begin{theorem}\label{theorem1}
The $L$-conditional entropy
\begin{align}\label{eMarkov}
H_L(\tilde Y_0|\tilde X_{-\delta})= g_1(\delta)-g_2(\delta)
\end{align}
is a function of $\delta$, where $g_1(\delta)$ and $g_2(\delta)$ are two non-decreasing functions of $\delta$, given by
\begin{align}\label{g12function}
\!\!g_1(\delta)=&H_L(\tilde Y_0 | \tilde X_0) + \sum_{k=0}^{\delta-1}~I_L(\tilde Y_0; \tilde X_{-k}  | \tilde X_{-k-1}),~\nonumber\\
g_2(\delta)=&\sum_{k=0}^{\delta-1} I_L(\tilde Y_0; \tilde X_{-k-1} | \tilde X_{-k}),\!\!\!
\end{align}
where the $L$-conditional mutual information $I_L(Y; X|Z)$ between two random variables $Y$ and $X$ given $Z$ is  
\begin{align}\label{CMI}
I_L(Y; X|Z)=H_L(Y | Z)-H_L(Y|X, Z).
\end{align}
If $\tilde Y_0 \overset{\epsilon}\leftrightarrow \tilde X_{-\mu} \overset{\epsilon}\leftrightarrow \tilde X_{-\mu-\nu}$ is an $\epsilon$-Markov chain for every $\mu, \nu \geq 0$, then $g_2(\delta) = O(\epsilon)$ and 
\begin{align}\label{eMarkov1}
H_L(\tilde Y_{0}|\tilde X_{-\delta})= g_1(\delta)+O(\epsilon).
\end{align}
\ignore{{\blue If, in addition,} $H_L(Y_0)$ is twice differentiable in $P_{Y_0}$, then $g_2(\delta) = O(\epsilon^2)$ and
\begin{align}\label{eMarkov2}
H_L(Y_0|X_{-\delta})= g_1(\delta)+O(\epsilon^2).
\end{align}}
\end{theorem}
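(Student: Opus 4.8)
The plan is to first establish the exact identity \eqref{eMarkov}--\eqref{g12function} by a purely algebraic telescoping argument built from the definition \eqref{CMI} of $L$-conditional mutual information, and then to control the ``bad'' term $g_2(\delta)$ using the $\epsilon$-data processing inequality of Lemma \ref{Lemma_CMI}(a).

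\emph{Step 1 (exact decomposition).} For each $k\ge 0$ I would evaluate $H_L(\tilde Y_0\mid \tilde X_{-k},\tilde X_{-k-1})$ in two ways using \eqref{CMI}: peeling $\tilde X_{-k-1}$ off the conditioning on $\tilde X_{-k}$ gives $H_L(\tilde Y_0\mid \tilde X_{-k},\tilde X_{-k-1})=H_L(\tilde Y_0\mid \tilde X_{-k})-I_L(\tilde Y_0;\tilde X_{-k-1}\mid \tilde X_{-k})$, while peeling $\tilde X_{-k}$ off the conditioning on $\tilde X_{-k-1}$ gives $H_L(\tilde Y_0\mid \tilde X_{-k},\tilde X_{-k-1})=H_L(\tilde Y_0\mid \tilde X_{-k-1})-I_L(\tilde Y_0;\tilde X_{-k}\mid \tilde X_{-k-1})$. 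Equating the two and rearranging yields the one-step recursion $H_L(\tilde Y_0\mid \tilde X_{-k-1})=H_L(\tilde Y_0\mid \tilde X_{-k})+I_L(\tilde Y_0;\tilde X_{-k}\mid \tilde X_{-k-1})-I_L(\tilde Y_0;\tilde X_{-k-1}\mid \tilde X_{-k})$; summing it over $k=0,\dots,\delta-1$ telescopes exactly to $H_L(\tilde Y_0\mid \tilde X_{-\delta})=g_1(\delta)-g_2(\delta)$ with $g_1,g_2$ as in \eqref{g12function}. Monotonicity is then immediate: each successive increment $g_1(\delta+1)-g_1(\delta)$ and $g_2(\delta+1)-g_2(\delta)$ is an $L$-conditional mutual information, which equals an expectation of an $L$-divergence and is therefore nonnegative (the Bayes action minimizes the $L$-risk, cf. \eqref{eq_Lentropy}).

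\emph{Step 2 ($g_2=O(\epsilon)$ under the $\epsilon$-Markov assumption).} Since $g_2(\delta)$ is a sum of $\delta$ terms of the form $I_L(\tilde Y_0;\tilde X_{-k-1}\mid \tilde X_{-k})$, it suffices to bound each by $O(\epsilon)$. By \eqref{CMI} this term equals $H_L(\tilde Y_0\mid \tilde X_{-k})-H_L(\tilde Y_0\mid \tilde X_{-k},\tilde X_{-k-1})$, which is $\ge 0$. For the matching upper bound I would apply Lemma \ref{Lemma_CMI}(a) to the triple $(Y,X,Z)=\bigl(\tilde Y_0,\ \tilde X_{-k},\ (\tilde X_{-k},\tilde X_{-k-1})\bigr)$: this is an $\epsilon$-Markov chain because $I_{\mathrm{log}}\bigl(\tilde Y_0;(\tilde X_{-k},\tilde X_{-k-1})\mid \tilde X_{-k}\bigr)=I_{\mathrm{log}}(\tilde Y_0;\tilde X_{-k-1}\mid \tilde X_{-k})\le\epsilon^2$, the last step being the hypothesis $\tilde Y_0\overset{\epsilon}\leftrightarrow \tilde X_{-\mu}\overset{\epsilon}\leftrightarrow \tilde X_{-\mu-\nu}$ with $\mu=k$, $\nu=1$. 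Lemma \ref{Lemma_CMI}(a) then yields $H_L(\tilde Y_0\mid \tilde X_{-k})\le H_L(\tilde Y_0\mid \tilde X_{-k},\tilde X_{-k-1})+O(\epsilon)$, i.e. $0\le I_L(\tilde Y_0;\tilde X_{-k-1}\mid \tilde X_{-k})\le O(\epsilon)$. Summing over $k=0,\dots,\delta-1$ gives $g_2(\delta)=O(\epsilon)$, and substituting this into the decomposition of Step 1 gives \eqref{eMarkov1}.

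\emph{Anticipated difficulty.} Step 1 is routine bookkeeping. The crux of Step 2 is the choice of $Z$: taking $Z=(\tilde X_{-k},\tilde X_{-k-1})$ rather than $\tilde X_{-k-1}$ alone is precisely what makes $H_L(Y\mid Z)$ coincide with the ``full'' conditional entropy $H_L(\tilde Y_0\mid \tilde X_{-k},\tilde X_{-k-1})$ while simultaneously collapsing the $\epsilon$-Markov condition onto the hypothesized conditional-mutual-information bound. A minor point is that the hidden constant in the $O(\epsilon)$ of Lemma \ref{Lemma_CMI}(a) depends on $L$ and on the joint law of $(\tilde Y_0,\tilde X_{-k},\tilde X_{-k-1})$; since $\mathcal X$ and $\mathcal Y$ are finite and $\delta$ is a fixed integer, these constants are uniformly bounded over $k$, so the $\delta$-term sum is still $O(\epsilon)$. (If in addition $H_L(\tilde Y_0)$ were twice differentiable, replacing Lemma \ref{Lemma_CMI}(a) by Lemma \ref{Lemma_CMI}(b) would sharpen this to $g_2(\delta)=O(\epsilon^2)$.)
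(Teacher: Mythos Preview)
Your proposal is correct and follows essentially the same route as the paper: the telescoping identity in Step~1 is exactly the paper's argument, and Step~2 likewise invokes Lemma~\ref{Lemma_CMI} to bound each summand $I_L(\tilde Y_0;\tilde X_{-k-1}\mid \tilde X_{-k})$ by $O(\epsilon)$. The only cosmetic difference is that the paper applies Lemma~\ref{Lemma_CMI} (more precisely, the intermediate estimate $I_L(Y;Z\mid X)=O(\epsilon)$ established in its proof) directly to the triple $(Y,X,Z)=(\tilde Y_0,\tilde X_{-k},\tilde X_{-k-1})$, whereas you enlarge $Z$ to the pair $(\tilde X_{-k},\tilde X_{-k-1})$ so as to invoke only the black-box statement of Lemma~\ref{Lemma_CMI}(a); both are valid and yield the same conclusion.
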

\ifreport
\begin{proof}
See Appendix \ref{Ptheorem1}.
\end{proof}
\else
\fi

According to Theorem \ref{theorem1}, the monotonicity of $H_L(\tilde Y_0|\tilde X_{-\delta})$ in $\delta$ is characterized by the parameter $\epsilon \geq 0$ in the $\epsilon$-Markov chain model. If $\epsilon$ is small, then $\tilde Y_0 \overset{\epsilon}\leftrightarrow \tilde X_{-\mu} \overset{\epsilon}\leftrightarrow \tilde X_{-\mu-\nu}$ is close to a Markov chain, and $H_L(\tilde Y_0|\tilde X_{-\delta})$ is nearly non-decreasing in $\delta$. If $\epsilon$ is large, then $\tilde Y_0 \overset{\epsilon}\leftrightarrow \tilde X_{-\mu} \overset{\epsilon}\leftrightarrow \tilde X_{-\mu-\nu}$ is far from a Markov chain, and $H_L(\tilde Y_0|\tilde X_{-\delta})$ could be non-monotonic in $\delta$. 
Theorem  \ref{theorem1} can be readily extended to the training error with random AoI $\Theta$ by using stochastic orders \cite{stochasticOrder}.

\ignore{Theorem \ref{theorem1} {\blue tells us that} $H_L(Y_{t}|X_{t-\delta})$ is a function of the training AoI $\delta$, which is not necessarily monotonic.~The monotonicity of  $H_L(Y_{t}|X_{t-\delta})$ {\blue versus $\delta$} is characterized by the parameter $\epsilon \geq 0$ in the $\epsilon$-Markov chain model.~If $\epsilon$ is small, then the time-series data is close to a Markov chain and  $H_L(Y_{t}|X_{t-\delta})$ is nearly non-decreasing in the training AoI $\delta$. {\blue If $\epsilon$ is large, then the time-series data is far from a Markov chain and $H_L(Y_{t}|X_{t-\delta})$ is non-monotonic in $\delta$. ~Theorem \ref{theorem1} provides an interpretation of the} experimental results in Figure \ref{fig:Training}:~Recall that $u$ is the length of the observation sequence $X_{t-\delta}=(s_{t-\delta}, s_{t-\delta-1},\ldots, s_{t-\delta-u+1})$.~According to Shannon's interpretation of Markov sources in his seminal work \cite{Shannon1948}, the larger $u$, {\blue the closer $(Y_t, X_{t-\mu}, X_{t-\mu-\nu})$ tends to a Markov chain. The training error may be non-monotonic in the AoI $\delta$ for small $u$, but it will progressively become a growing function of the AoI as $u$ increases, which agrees with the results in Figure \ref{fig:Training}.}

\subsubsection{Training Error under Random Training AoI}
{\blue Theorem \ref{theorem1} can be extended to cover random training AoI $\Theta$ by using the stochastic ordering techniques \cite{stochasticOrder}. }}
\begin{definition}[\textbf{Univariate Stochastic Ordering}]\cite{stochasticOrder} 
A random variable $X$ is said to be stochastically smaller than another random variable $Z$, denoted as $X \leq_{st} Z$, if
\begin{align}
    P(X>x) \leq P(Z>x), \ \ \forall x \in \mathbb R.
\end{align}
\end{definition}
\begin{theorem}\label{theorem2}
If $\tilde Y_0 \overset{\epsilon}\leftrightarrow \tilde X_{-\mu} \overset{\epsilon}\leftrightarrow \tilde X_{-\mu-\nu}$ is an $\epsilon$-Markov chain for all $\mu, \nu \geq 0$, and the training AoIs in two experiments $1$ and $2$ satisfy $\Theta_{1} \leq_{st} \Theta_{2}$, then  
\begin{align}\label{dynamicsoln}
    H_L(\tilde Y_0|\tilde X_{-\Theta_1}, \Theta_1) \leq H_L(\tilde Y_0|\tilde X_{-\Theta_2}, \Theta_2)+O(\epsilon).
\end{align}
\ignore{\violet provided that the $L$-conditional entropies in \eqref{dynamicsoln} are finite. 
\item[(b)] If, in addition, $H_L(Y_t)$ is twice differentiable in $P_{Y_t}$, then
\begin{align}\label{dynamic_eqn}
    H_L(Y_{t}|X_{t-\Theta_1}, \Theta_1) \leq H_L(Y_{t}|X_{t-\Theta_2}, \Theta_2)+O(\epsilon^2).
\end{align}}
\end{theorem}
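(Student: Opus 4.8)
The plan is to combine Theorem~\ref{theorem1} with a coupling-type argument that exploits the stochastic order $\Theta_1 \leq_{st} \Theta_2$. The starting point is the decomposition \eqref{freshness_aware_cond}, which allows us to write the two quantities in \eqref{dynamicsoln} as
\begin{align}
H_L(\tilde Y_0|\tilde X_{-\Theta_i}, \Theta_i) = \sum_{\delta \in \mathbb Z^{+}} P_{\Theta_i}(\delta)\, H_L(\tilde Y_0|\tilde X_{-\delta}), \qquad i=1,2.
\end{align}
So the claim reduces to comparing the expectations of the single function $\delta \mapsto H_L(\tilde Y_0|\tilde X_{-\delta})$ under the two AoI distributions. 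By the standard characterization of univariate stochastic ordering, $\Theta_1 \leq_{st} \Theta_2$ holds if and only if $\mathbb E[h(\Theta_1)] \leq \mathbb E[h(\Theta_2)]$ for every non-decreasing function $h$. The obstacle is that $H_L(\tilde Y_0|\tilde X_{-\delta})$ need not itself be non-decreasing in $\delta$; this is exactly the non-monotonic phenomenon the paper is about. This is where the $\epsilon$-Markov hypothesis enters.

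First I would invoke Theorem~\ref{theorem1}: under the $\epsilon$-Markov assumption we have the representation $H_L(\tilde Y_0|\tilde X_{-\delta}) = g_1(\delta) - g_2(\delta)$ with $g_1$ non-decreasing and $g_2(\delta) = O(\epsilon)$ uniformly in $\delta$ (here one should be a little careful and assume, as is implicit, that the $O(\epsilon)$ bound on $g_2$ is uniform over the relevant range of $\delta$, or restrict to AoI values in a bounded set — I would state this explicitly). Then
\begin{align}
H_L(\tilde Y_0|\tilde X_{-\Theta_i},\Theta_i) = \mathbb E[g_1(\Theta_i)] - \mathbb E[g_2(\Theta_i)] = \mathbb E[g_1(\Theta_i)] + O(\epsilon),
\end{align}
since $0 \le \mathbb E[g_2(\Theta_i)] \le \sup_\delta |g_2(\delta)| = O(\epsilon)$. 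Because $g_1$ is non-decreasing, the stochastic order $\Theta_1 \leq_{st} \Theta_2$ gives $\mathbb E[g_1(\Theta_1)] \leq \mathbb E[g_1(\Theta_2)]$. Chaining the three relations,
\begin{align}
H_L(\tilde Y_0|\tilde X_{-\Theta_1},\Theta_1) = \mathbb E[g_1(\Theta_1)] + O(\epsilon) \leq \mathbb E[g_1(\Theta_2)] + O(\epsilon) = H_L(\tilde Y_0|\tilde X_{-\Theta_2},\Theta_2) + O(\epsilon),
\end{align}
which is \eqref{dynamicsoln}.

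The main thing to be careful about — the real ``hard part'' such as it is — is the uniformity of the $O(\epsilon)$ terms: Theorem~\ref{theorem1}'s bound $g_2(\delta)=O(\epsilon)$ comes from summing the per-step $\epsilon$-data-processing slack $I_L(\tilde Y_0;\tilde X_{-k-1}|\tilde X_{-k})$ over $k = 0,\dots,\delta-1$, so a priori the constant grows with $\delta$. To make the inequality clean one either assumes the AoI takes values in a bounded set (so the number of steps is uniformly bounded), or assumes the per-step $\epsilon$-Markov slack is summable and absorbs the total into a single $O(\epsilon)$. I would handle this by stating the boundedness assumption at the top of the proof and pointing to Theorem~\ref{theorem1}, then the rest is the routine two-line stochastic-dominance argument above. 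No new information-geometric machinery is needed beyond what Theorem~\ref{theorem1} already supplies.
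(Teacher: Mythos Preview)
Your proposal is correct and essentially identical to the paper's own proof: the paper also starts from \eqref{freshness_aware_cond}, invokes the decomposition $H_L(\tilde Y_0|\tilde X_{-\delta})=g_1(\delta)-g_2(\delta)$ from Theorem~\ref{theorem1}, applies the stochastic-ordering characterization to the non-decreasing part $g_1$, and absorbs the $g_2$ contributions into $O(\epsilon)$. Your caveat about the uniformity of the $O(\epsilon)$ bound in $\delta$ is well taken---the paper's proof glosses over this point in exactly the same way.
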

\ifreport
\begin{proof}
See Appendix \ref{Ptheorem2}.
\end{proof}
\else
\fi
According to Theorem \ref{theorem2}, if $\Theta_{1}$ is stochastically smaller than $\Theta_{2}$, then the training error in Experiment 1 is approximately
smaller than that in Experiment 2. If, in addition to the conditions in Theorems \ref{theorem1}-\ref{theorem2}, $H_L(\tilde Y_0)$ is twice differentiable in $P_{\tilde Y_0}$, then the last term $O(\epsilon)$ in \eqref{eMarkov1} and \eqref{dynamicsoln} becomes $O(\epsilon^2)$. 

\subsection{Inference Error vs. Inference AoI}\label{SecInferenceError}
\ignore{{\blue Next, we analyze the relationship between inference error and inference AoI.}
\subsubsection{Inference Error under Deterministic Inference AoI}} 
Using \eqref{given_L_condentropy}, \eqref{eq_cond_entropy1}, and \eqref{cond-cross-entropy}, it is easy to show that the $L$-conditional cross entropy $H_L(P_{Y_{t}|X_{t-\delta}}; P_{\tilde Y_0|\tilde X_{-\delta}} | P_{X_{t-\delta}})$ is lower bounded by the $L$-conditional entropy $H_L(Y_{t} | X_{t-\delta})$. 
In addition, $H_L(P_{Y_{t}|X_{t-\delta}}; P_{\tilde Y_0|\tilde X_{-\delta}} | P_{X_{t-\delta}})$ is close to its lower bound $H_L(Y_{t} | X_{t-\delta})$, if the conditional distributions $P_{Y_t|X_{t-\delta}}$ and $P_{\tilde Y_0|\tilde X_{-\delta}}$ are close to each other, as stated in Lemma \ref{lemma_inference}. 


\begin{lemma}\label{lemma_inference}
Given $\beta \geq 0$, if 
    \begin{align}\label{T3condition2}
     \sum_{x \in \mathcal X} P_{X_{t-\delta}}(x)\sum_{y \in \mathcal Y} (P_{Y_{t}|X_{t-\delta}=x}(y)\!-\!P_{\tilde Y_0| \tilde X_{-\delta}=x}(y))^2 \!\!~\leq\! \beta^2,
    \end{align}
    then we have 
    \begin{align}\label{Eq_Theorem3a}
       \!\!\! H_L(P_{Y_{t}|X_{t-\delta}}; P_{\tilde Y_0|\tilde X_{-\delta}} | P_{X_{t-\delta}})\!=\!H_L(Y_{t} | X_{t-\delta})\!+O(\beta).
    \end{align}
\end{lemma}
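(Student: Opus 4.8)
The plan is to control the gap between the $L$-conditional cross entropy and the $L$-conditional entropy using a second-order (local) expansion of the $L$-entropy functional, in the same spirit as the local information-geometric analysis used to establish Lemma~\ref{Lemma_CMI}. First I would recall, from the definition of the $L$-conditional cross entropy in \eqref{cond-cross-entropy} and of the $L$-divergence, the pointwise identity
\begin{align}\label{plan_identity}
H_L(P_{Y_{t}|X_{t-\delta}}; P_{\tilde Y_0|\tilde X_{-\delta}} \mid P_{X_{t-\delta}})
= H_L(Y_t \mid X_{t-\delta}) + \sum_{x\in\mathcal X} P_{X_{t-\delta}}(x)\, D_L\!\left(P_{Y_t|X_{t-\delta}=x} \,\big\|\, P_{\tilde Y_0|\tilde X_{-\delta}=x}\right),
\end{align}
which follows because for each fixed $x$ the Bayes action $a_{P_{\tilde Y_0|\tilde X_{-\delta}=x}}$ is used in place of the minimizer $a_{P_{Y_t|X_{t-\delta}=x}}$, and the difference of the two expected losses is exactly the $L$-divergence $D_L(P_{Y_t|X_{t-\delta}=x}\|P_{\tilde Y_0|\tilde X_{-\delta}=x})$. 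So the entire task reduces to showing that the averaged $L$-divergence on the right-hand side of \eqref{plan_identity} is $O(\beta)$ whenever the hypothesis \eqref{T3condition2} holds.

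The key step is then a bound of the form $D_L(P\|Q) = O(\|P-Q\|_2^2)$ for distributions on the finite set $\mathcal Y$, valid locally. Here I would invoke the standard fact (used implicitly in the local geometry behind Lemma~\ref{Lemma_CMI}) that the $L$-divergence is a Bregman-type divergence generated by the concave functional $-H_L$, so it vanishes to first order at $P=Q$ and its leading behavior is a quadratic form in $P-Q$ governed by the Hessian of $H_L$; assuming this Hessian is bounded on the relevant part of the simplex $\mathcal P^{\mathcal Y}$ (finiteness/regularity of $H_L$, as is already assumed for the $O(\epsilon^2)$ refinements elsewhere in the paper), one gets $D_L(P_{Y_t|X_{t-\delta}=x}\|P_{\tilde Y_0|\tilde X_{-\delta}=x}) \le C \sum_{y\in\mathcal Y}(P_{Y_t|X_{t-\delta}=x}(y) - P_{\tilde Y_0|\tilde X_{-\delta}=x}(y))^2$ for a constant $C$ independent of $x$ and $\delta$. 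Summing this against $P_{X_{t-\delta}}(x)$ and applying the hypothesis \eqref{T3condition2} yields that the averaged $L$-divergence is at most $C\beta^2 = O(\beta^2) = O(\beta)$, which is exactly \eqref{Eq_Theorem3a}. (Stating it as $O(\beta)$ rather than $O(\beta^2)$ is the conservative route that avoids needing extra smoothness, mirroring the statement/proof split between parts (a) and (b) of Lemma~\ref{Lemma_CMI}.)

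The main obstacle I anticipate is making the quadratic bound on $D_L$ uniform: the constant $C$ comes from an upper bound on the Hessian of $-H_L$, and the $L$-entropy for some loss functions (e.g.\ the logarithmic loss) blows up near the boundary of the simplex, so one needs either a restriction to a compact interior region or a direct argument that the relevant conditional distributions stay bounded away from the boundary. I would handle this by noting that \eqref{T3condition2} already forces $P_{\tilde Y_0|\tilde X_{-\delta}=x}$ to be close to $P_{Y_t|X_{t-\delta}=x}$ in an $L^2$ sense for $P_{X_{t-\delta}}$-most $x$, so a Taylor expansion of $a\mapsto \mathbb E_{Y\sim P_{Y_t|X_{t-\delta}=x}}[L(Y,a)]$ around its minimizer $a_{P_{Y_t|X_{t-\delta}=x}}$, combined with a Lipschitz/continuity bound on the Bayes-action map $P\mapsto a_P$, converts the $O(\beta^2)$ perturbation in the distribution into an $O(\beta^2)$ perturbation in the incurred loss — and the second-order vanishing of the divergence at its minimum is what promotes the naive $O(\beta)$ from a first-order expansion to the stated bound. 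A secondary, more routine point is carefully tracking that all constants are independent of $\delta$, which holds because $\mathcal Y$ and $\mathcal X$ are finite and the bound is applied pointwise in $x$ before averaging.
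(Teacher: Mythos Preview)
Your decomposition \eqref{plan_identity} is exactly the one the paper uses, and reducing the problem to bounding the averaged $L$-divergence $\sum_x P_{X_{t-\delta}}(x)\,D_L(P_{Y_t|X_{t-\delta}=x}\|P_{\tilde Y_0|\tilde X_{-\delta}=x})$ is precisely the right move. The paper's proof differs from yours only in how it executes this bound: rather than going through a Hessian/Bregman quadratic expansion of $-H_L$, it invokes Lemma~\ref{divergenceL}(a), which says that $\sum_y (P_Y(y)-Q_Y(y))^2\le\beta^2$ implies $D_L(P_Y\|Q_Y)=O(\beta)$ via the \emph{subgradient} mean value theorem applied to the convex function $\mathbf z\mapsto \sum_i z_i L(y_i,a_{Q_Y})-\min_a \sum_i z_i L(y_i,a)$. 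This first-order argument needs no twice-differentiability of $H_L$ and no control of any Hessian near the boundary of the simplex, so the obstacle you anticipate in your last paragraph simply does not arise. From \eqref{T3condition2} one has, for each $x$ with $P_{X_{t-\delta}}(x)>0$, the pointwise bound $\sum_y(\cdot)^2\le \beta^2/P_{X_{t-\delta}}(x)$, hence $D_L=O(\beta/\sqrt{P_{X_{t-\delta}}(x)})$, and averaging over $x$ (with $\mathcal X$ finite) gives $O(\beta)$ directly.

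Your quadratic route is not wrong, but it buys you a stronger $O(\beta^2)$ conclusion only at the cost of the extra smoothness hypothesis on $H_L$; for the $O(\beta)$ statement of the lemma the paper's subgradient argument is both simpler and more general.
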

\ifreport
\begin{proof}
See Appendix \ref{Plemma_inference}.
\end{proof}
\ignore{If \eqref{T3condition2} is replaced by the condition
 \begin{align}\label{T3condition2_diff}
    \sum_{x \in \mathcal X} P_{X_{t-\delta}}(x)(\sum_{y \in \mathcal Y} \big|P_{Y_{t}|X_{t-\delta}}(y|x)\!-\!P_{\tilde Y_0| \tilde X_{-\delta}=x}(y|x)\big|)^2 \!\!~\leq\! \beta^2,
    \end{align}
then Lemma \ref{lemma_inference} still holds.}
\else
\fi 
Combining Theorem \ref{theorem1} and Lemma \ref{lemma_inference}, the monotonicity of $H_L(P_{Y_{t}|X_{t-\delta}}; P_{\tilde Y_0|\tilde X_{-\delta}} | P_{X_{t-\delta}})$ versus $\delta$ is characterized in the next theorem.
\begin{theorem}\label{theorem3}
\ignore{\item[(a)] If $\{(Y_t, X_t),t \in \mathbb Z\}$ is a stationary process, then $H_L(Y_{t}; \tilde Y_0 | X_{t-\delta})$ is a function of the inference AoI $\delta$.} 
If $Y_t \overset{\epsilon}\leftrightarrow X_{t-\mu} \overset{\epsilon}\leftrightarrow X_{t-\mu-\nu}$ is an $\epsilon$-Markov chain for all $\mu, \nu \geq 0$ and \eqref{T3condition2} holds for all $\delta \in \mathcal Z^{+}$, then for all $0 \leq \delta_1\leq \delta_2$
\begin{align}\label{eq_theorem3}
 &H_L(P_{Y_{t}|X_{t-\delta_1}}; P_{\tilde Y_0|\tilde X_{-\delta_1}} | P_{X_{t-\delta_1}}) \nonumber\\ \leq& H_L(P_{Y_{t}|X_{t-\delta_2}}; P_{\tilde Y_0|\tilde X_{-\delta_2}} | P_{X_{t-\delta_2}})\!+\!O\big(\max\{\epsilon, \beta\}\big).
\end{align}
\end{theorem}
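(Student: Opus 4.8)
\textbf{Proof proposal for Theorem \ref{theorem3}.}

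The plan is to chain together three estimates: Theorem \ref{theorem1} applied twice (at $\delta_1$ and $\delta_2$), the monotonicity of the bounding functions, and Lemma \ref{lemma_inference} applied at both $\delta_1$ and $\delta_2$. First I would invoke Lemma \ref{lemma_inference} at $\delta = \delta_1$ and $\delta = \delta_2$: since \eqref{T3condition2} holds for all $\delta \in \mathbb Z^{+}$, I get
\begin{align}\label{step1_theorem3}
H_L(P_{Y_{t}|X_{t-\delta_i}}; P_{\tilde Y_0|\tilde X_{-\delta_i}} \mid P_{X_{t-\delta_i}}) = H_L(Y_t \mid X_{t-\delta_i}) + O(\beta), \quad i \in \{1,2\}.
\end{align}
This reduces the claim to comparing $H_L(Y_t \mid X_{t-\delta_1})$ and $H_L(Y_t \mid X_{t-\delta_2})$ up to an $O(\beta)$ error, i.e., it suffices to show $H_L(Y_t \mid X_{t-\delta_1}) \le H_L(Y_t \mid X_{t-\delta_2}) + O(\epsilon)$.

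Next I would apply Theorem \ref{theorem1} to the \emph{inference} process $\{(Y_t, X_t)\}$ in place of the training process. Here I should note that Theorem \ref{theorem1} as stated concerns $H_L(\tilde Y_0 \mid \tilde X_{-\delta})$, but by stationarity of $\{(Y_t,X_t), t \in \mathbb Z\}$ the same decomposition \eqref{eMarkov}--\eqref{g12function} holds verbatim for $H_L(Y_t \mid X_{t-\delta})$, with the corresponding non-decreasing functions $g_1(\delta)$ and $g_2(\delta)$ built from $I_L(Y_t; X_{t-k} \mid X_{t-k-1})$ and $I_L(Y_t; X_{t-k-1} \mid X_{t-k})$. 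Under the $\epsilon$-Markov hypothesis $Y_t \overset{\epsilon}\leftrightarrow X_{t-\mu} \overset{\epsilon}\leftrightarrow X_{t-\mu-\nu}$, Theorem \ref{theorem1} gives $g_2(\delta) = O(\epsilon)$ uniformly, hence $H_L(Y_t \mid X_{t-\delta}) = g_1(\delta) + O(\epsilon)$. Since $g_1$ is non-decreasing and $\delta_1 \le \delta_2$, we have $g_1(\delta_1) \le g_1(\delta_2)$, so
\begin{align}\label{step2_theorem3}
H_L(Y_t \mid X_{t-\delta_1}) = g_1(\delta_1) + O(\epsilon) \le g_1(\delta_2) + O(\epsilon) = H_L(Y_t \mid X_{t-\delta_2}) + O(\epsilon).
\end{align}
Combining \eqref{step1_theorem3} and \eqref{step2_theorem3} and absorbing both error terms into a single $O(\max\{\epsilon,\beta\})$ yields \eqref{eq_theorem3}.

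The main thing to be careful about — and the only real obstacle — is the bookkeeping of the $O(\cdot)$ terms: one must check that the constants hidden in $g_2(\delta) = O(\epsilon)$ and in Lemma \ref{lemma_inference}'s $O(\beta)$ do not blow up as $\delta_1, \delta_2 \to \infty$. The $O(\epsilon)$ in $g_2$ comes from summing $\delta$ copies of $I_L(\cdot;\cdot\mid\cdot) = O(\epsilon^2)$-type terms in the local-geometry analysis of Lemma \ref{Lemma_CMI}; the paper's framing (e.g. the statement "$g_2(\delta) = O(\epsilon)$") already asserts this is uniform, so I would lean on that, but it is worth a sentence acknowledging that the bound in \eqref{eq_theorem3} is of the same uniform-in-$\delta$ character. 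A secondary, purely expository point: Theorem \ref{theorem1} is stated for the training process, so I would either restate it for the stationary inference process or simply remark that the proof is identical since only stationarity and the $\epsilon$-Markov property are used. Everything else is routine substitution.
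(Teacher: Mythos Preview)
Your proposal is correct and follows essentially the same approach as the paper's proof: apply Lemma~\ref{lemma_inference} at $\delta_1$ and $\delta_2$ to pass between the $L$-conditional cross entropy and the $L$-conditional entropy, then use Theorem~\ref{theorem1} (applied to the inference process via stationarity) to compare $H_L(Y_t\mid X_{t-\delta_1})$ and $H_L(Y_t\mid X_{t-\delta_2})$ up to $O(\epsilon)$, and finally absorb the errors into $O(\max\{\epsilon,\beta\})$. Your remarks about transferring Theorem~\ref{theorem1} from the training to the inference process and about uniformity of the $O(\cdot)$ constants are more explicit than the paper's treatment, but the underlying argument is identical.
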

 \ifreport
\begin{proof}
See Appendix \ref{Ptheorem3}.
\end{proof}
\else
\fi
According to Theorem \ref{theorem3}, if $\epsilon$ and $\beta$ are close to zero, $H_L(P_{Y_{t}|X_{t-\delta}}; P_{\tilde Y_0|\tilde X_{-\delta}} | P_{X_{t-\delta}})$ is nearly a non-decreasing function of $\delta$; otherwise, $H_L(P_{Y_{t}|X_{t-\delta}}; P_{\tilde Y_0|\tilde X_{-\delta}} | P_{X_{t-\delta}})$ as a function of $\delta$ can be non-monotonic. 


\subsection{Interpretation of the Experimental Results}\label{ExperimentalObs}

We use Theorems \ref{theorem1}-\ref{theorem3} to interpret the experimental results in Figs. \ref{fig:learning}-\ref{fig:Trainingcsi}.
In Fig. \ref{fig:learning}, the training and inference errors for video prediction are increasing functions of the AoI. This observation suggests that the target and feature time-series data $(Y_t, X_{t-\mu}, X_{t-\mu-\nu})$ for video prediction is close to a Markov chain.   
{\red As discussed in Appendix \ref{Experiments}, in the robot state prediction experiment, communication delays and periodic updates from leader robot introduce $20$ to $40$ time slots of AoI in the follower robot's perception of the leader robot's state. Consequently, the follower robot receives delayed state information from the leader robot to determine its movement. This implies that slightly outdated state of the leader robot $X_{t-d}$ are more relevant that the fresh observation $X_t$ for predicting the current state of the follower robot $Y_t$, as illustrated in Fig. \ref{fig:DelayedNetworkedControlled} and the target and feature data sequence $(Y_t, X_{t-\mu}, X_{t-\mu-\nu})$ may deviate significantly from a Markov chain.} In the experiment of actuator state prediction under mechanical response delay, pole angle at time $t$ is strongly correlated with the cart velocity generated $25$ ms ago, as observed from data traces in Fig. \ref{fig:TrainingCartVelocity}(b). Moreover, temperature and CSI signals have temporal dependence. For example, temperature at time $t$ depends on the temperature of $24$ hours ago. 
These observations imply that the target and feature data sequence $(Y_t, X_{t-\mu}, X_{t-\mu-\nu})$ for all $\mu, \nu \geq 0$ may not be close to a Markov chain in the experimental results depicted in Figs. \ref{fig:DelayedNetworkedControlled}-\ref{fig:Trainingcsi}. Because the target and feature time-series data involved is non-Markovian, Theorems \ref{theorem1}-\ref{theorem3} suggest that the training error and inference error could be non-monotonic with respect to AoI, as observed in Figs. \ref{fig:DelayedNetworkedControlled}-\ref{fig:Trainingcsi}.

Recall that $u$ is the sequence length of the feature $X_t = (V_t,V_{t-1},\ldots, V_{t-u+1})$. In Figs. \ref{fig:TrainingCartVelocity}-\ref{fig:Trainingcsi}, the training and inference errors tends to become non-decreasing functions of the AoI $\delta$ as the feature length $u$ grows. This phenomenon can be interpreted by Theorems \ref{theorem1}-\ref{theorem3}:
According to Shannon’s Markovian representation of discrete-time sources in his seminal work \cite{Shannon1948}, the larger $u$, the closer $(Y_t, X_{t-\mu}, X_{t-\mu-\nu})$ tends to a Markov chain. According to Theorems \ref{theorem1}-\ref{theorem3}, as $u$ increases, the training and inference errors tend to be non-decreasing with respect to the AoI $\delta$, which agrees with Figs. \ref{fig:TrainingCartVelocity}-\ref{fig:Trainingcsi}. One disadvantage of large feature length $u$ is that it increases the channel resources needed for transmitting the features. The optimal choice of the feature length $u$ is studied in \cite{shisher2023learning}. 
 
\section{A Model-based Interpretation of Information Freshness in Remote Inference}

{We construct two models to analyze the non-monotonocity of the $L$-conditional entropy $H_L(Y_t|X_{t-\delta})$ with respect to the AoI $\delta$ and interpret the reasons.

\subsection{Reaction Prediction with Delay}
{\red We present the following analytical example for understanding the experimental results illustrated in Figs. \ref{fig:DelayedNetworkedControlled}-\ref{fig:TrainingCartVelocity}.}

\begin{example}[Reaction Prediction]\label{Toy_Example2}
Consider a causal system represented by $Y_t = f(X_{t-d})\in \mathcal Y$, where $X_t \in \mathcal X$ and $Y_t \in \mathcal Y$ are the input and output of the system, respectively, $d\geq 0$ is the delay introduced by the system, and $f(\cdot)$ is a function. 


\begin{figure}[t]
\centering
\includegraphics[width=0.30\textwidth]{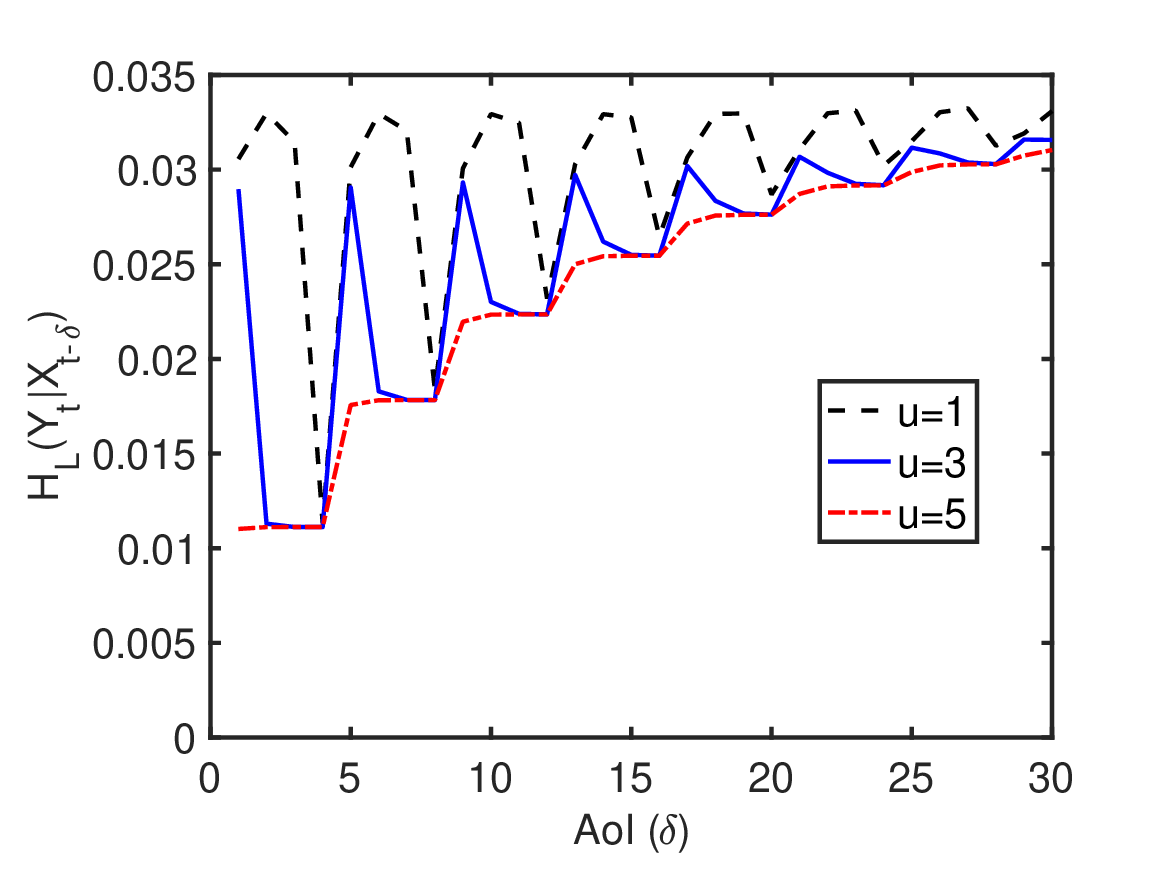}
\caption{\small $L$-conditional entropy vs. AoI for Example \ref{ARp}. \label{fig:ARp}
}
\end{figure}
 
\begin{lemma}\label{ToyExampleLemma1}
Let $X_t \in \mathcal X$ and $Y_t = f(X_{t-d}) \in \mathcal Y$, where $f(\cdot)$ is a function. If $X_t \leftrightarrow X_{t-\mu} \leftrightarrow X_{t-\mu-\nu}$ is a Markov chain for all $\mu, \nu \geq 0$, then $H_L(Y_t | X_{t-\delta})$ decreases with $\delta$ when $0 \leq \delta < d$ and increases with $\delta$ when $\delta \geq d$. In addition, for any random variable $Z \in \mathcal X$, 
\begin{align}
    H_L(Y_t|X_{t-d}) =  H_L(Y_t|Y_{t}) \leq \min_{Z\in \mathcal X} H_L(Y_t|Z).
\end{align}
\end{lemma}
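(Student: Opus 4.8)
The plan is to exploit the deterministic relationship $Y_t = f(X_{t-d})$ together with the Markov property of $\{X_t\}$ and the $\epsilon$-data processing inequality (Lemma~\ref{Lemma_CMI}) with $\epsilon = 0$, i.e.\ the ordinary data processing inequality for $L$-conditional entropy. First I would establish the claim for $\delta \geq d$. Since $Y_t$ is a function of $X_{t-d}$, and since for $\delta \geq d$ we can write $\delta = d + \nu$ with $\nu \geq 0$, the Markov chain $X_{t-d} \leftrightarrow X_{t-d} \leftrightarrow X_{t-d-\nu}$ combined with $Y_t = f(X_{t-d})$ yields the Markov chain $Y_t \leftrightarrow X_{t-d} \leftrightarrow X_{t-\delta}$. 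By the symmetry of Markov chains this is $Y_t \leftrightarrow X_{t-d} \leftrightarrow X_{t-\delta}$, and a further application of the Markov property of $\{X_t\}$ lets me chain these into $Y_t \leftrightarrow X_{t-\delta_1} \leftrightarrow X_{t-\delta_2}$ for any $d \leq \delta_1 \leq \delta_2$. The data processing inequality for $L$-conditional entropy (Lemma~\ref{Lemma_CMI}(a) with $\epsilon=0$) then gives $H_L(Y_t|X_{t-\delta_1}) \leq H_L(Y_t|X_{t-\delta_2})$, which is the claimed monotone increase on $\delta \geq d$.

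Next I would handle $0 \leq \delta < d$. The key observation is that because the whole process is stationary and $Y_t = f(X_{t-d})$, we have $H_L(Y_t | X_{t-\delta}) = H_L(f(X_{t-d}) | X_{t-\delta})$, and by stationarity this equals $H_L(f(X_{t-d+\delta}) | X_{t})$ after shifting time by $\delta$ — more usefully, I would rewrite $H_L(Y_t|X_{t-\delta})$ in terms of the "remaining lag" $d - \delta$. Concretely, for $\delta < d$, writing $d = \delta + (d-\delta)$, the chain $X_{t-\delta} \leftrightarrow X_{t-\delta-(d-\delta)} \leftrightarrow \cdots$ is Markov, but here the direction matters: now $X_{t-\delta}$ is a \emph{more recent} observation than $X_{t-d}$, so I should instead argue that for $\delta_1 \le \delta_2 < d$ the chain $Y_t \leftrightarrow X_{t-\delta_2} \leftrightarrow X_{t-\delta_1}$ holds. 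This is \emph{false} for a general function of a Markov chain, so the correct route is: $Y_t = f(X_{t-d})$, and $X_{t-d} \leftrightarrow X_{t-\delta_2} \leftrightarrow X_{t-\delta_1}$ is a Markov chain (since $\delta_1 \le \delta_2 \le d$ and $\{X_t\}$ is Markov read backwards in time), hence $Y_t \leftrightarrow X_{t-\delta_2} \leftrightarrow X_{t-\delta_1}$ is Markov, giving $H_L(Y_t|X_{t-\delta_2}) \leq H_L(Y_t|X_{t-\delta_1})$ — that is, the conditional entropy \emph{decreases} as $\delta$ moves from $\delta_1$ up toward $\delta_2$ (closer to $d$). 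This is exactly the claimed decrease on $[0,d)$.

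For the final inequality, at $\delta = d$ we have $X_{t-d}$ determines $Y_t$ exactly, so $H_L(Y_t|X_{t-d}) = H_L(Y_t | f(X_{t-d})) = H_L(Y_t|Y_t)$, which equals the $L$-entropy of a degenerate conditional distribution — for the natural loss functions this is $0$ (or in general the minimal possible value $\min_a L$ averaged trivially). To compare with $\min_{Z \in \mathcal X} H_L(Y_t|Z)$: for any $Z$ in the state space $\mathcal X$, $H_L(Y_t|Z) \geq H_L(Y_t | X_{t-d})$ would require a data-processing argument showing $Y_t \leftrightarrow X_{t-d} \leftrightarrow Z$; since $Z \in \mathcal X$ is an arbitrary $\mathcal X$-valued random variable the cleanest statement is that $H_L(Y_t|X_{t-d})$ achieves the floor value of the $L$-conditional entropy (because $Y_t$ is deterministic given $X_{t-d}$), hence it lower-bounds $H_L(Y_t|Z)$ for every $Z$. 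I would phrase it via $H_L(Y|W) \geq H_L(Y|X)$ whenever $Y$ is a deterministic function of $X$, which follows from $H_L(Y|X) = \mathbb{E}[\min_a L(Y,a) \text{ conditioned on } Y] \le H_L(Y|W)$ since conditioning on the full information $X$ (which pins down $Y$) cannot do worse than conditioning on anything else.

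The main obstacle I anticipate is getting the \emph{direction} of the data processing inequality right on $[0,d)$ versus $[d,\infty)$ — the subtlety is that $Y_t$ is a function of the single variable $X_{t-d}$, so the relevant Markov chains always have $X_{t-d}$ at one end, and whether a fresher or staler feature is "further from $Y_t$ along the chain" flips at $\delta = d$. I would want to state carefully that $\{X_t\}$ being Markov means $X_{t-a} \leftrightarrow X_{t-b} \leftrightarrow X_{t-c}$ is a Markov chain whenever $a \le b \le c$ \emph{or} $a \ge b \ge c$, and then in each regime pick the ordering that puts $X_{t-d}$ adjacent to $Y_t$. The rest is routine application of Lemma~\ref{Lemma_CMI}(a) with $\epsilon = 0$ and stationarity.
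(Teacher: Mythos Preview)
Your proposal is correct and follows essentially the same approach as the paper: in both regimes you identify the appropriate Markov chain with $X_{t-d}$ (and hence $Y_t=f(X_{t-d})$) at one end, then apply the data processing inequality for $L$-conditional entropy; for the final claim you use that conditioning on a variable that determines $Y_t$ yields the floor value $\mathbb{E}[\min_a L(Y_t,a)]$, which lower-bounds $H_L(Y_t|Z)$ for every $Z$. The paper's write-up is terser (it steps $\delta\to\delta\pm1$ rather than comparing arbitrary $\delta_1\le\delta_2$, and cites the data processing inequality from \cite{Dawid1998} directly rather than via Lemma~\ref{Lemma_CMI} with $\epsilon=0$), but the substance is identical.
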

\begin{proof}
    See Appendix \ref{ToyExample}.
\end{proof}

\end{example}
Lemma \ref{ToyExampleLemma1} implies that the feature $X_{t-d} \in \mathcal X$ achieves the minimum expected loss in predicting $Y_t$. Therefore, for predicting $Y_t$, $X_{t-d}$ is the optimal choice in $\mathcal X$, not the freshest feature $X_t \in \mathcal X$. Hence, fresh data is not always the best.

The robotic state prediction and the actuator state prediction experiments in Figs. \ref{fig:DelayedNetworkedControlled}-\ref{fig:TrainingCartVelocity} are also instances of reaction prediction. Similar to Example \ref{Toy_Example2}, the freshest feature with AoI$=$0 is not the best choice for predicting the reaction in Figs. \ref{fig:DelayedNetworkedControlled}-\ref{fig:TrainingCartVelocity}. However, the relationship between the leader and follower robots’ states in the robotic state prediction experiment and the relationship between cart velocity and pole angle in the actuator state prediction experiment are much more complicated than the input-output relationship in Example \ref{Toy_Example2}. 


\subsection{Autoregressive Model}

We evaluate the $L$-conditional entropy using an autoregressive linear system model.
\begin{example}[\textbf{Autoregressive Model}]\label{ARp}
Consider a discrete time autoregressive linear system of the order $4$, i.e., AR($4$):
\begin{align}
V_{t}=0.1 V_{t-1}+0.4 V_{t-4}+W_t,
\end{align}
where $W_t \in \mathbb R$ is a Gaussian noise with zero mean and variance $0.01$. Let $Y_t=V_t+N_t$ be the target variable, where $N_t$ is a Gaussian noise with zero mean and variance $0.01$. Both $W_t$ and $N_t$ are  {i.i.d.} over time. The goal is to predict $Y_t$ using a feature sequence $X_{t-\delta}=(V_{t-\delta}, V_{t-\delta-1}, \ldots, V_{t-\delta-u+1})$ with length $u$. The prediction error is characterized by a quadratic loss function $L(y, \hat y)=(y-\hat y)^2$. Because (i) $X_{t-\delta}$ and $Y_t$ are jointly Gaussian and (ii) the loss function is quadratic, the $L$-conditional entropy $H_L(Y_t|X_{t-\delta}) = \mathbb E [(Y_t - \mathbb E[Y_t|X_{t-\delta}])^2 ]$ is the linear minimum mean-square error in signal-agnostic remote estimation.
\end{example}

We can observe from Fig. \ref{fig:ARp} that
the $L$-conditional entropy is non-monotonic in the AoI $\delta$ when the feature length is $u = 1$ or  3. When $u$ is increased to $5$, $H_L(Y_t|X_{t-\delta})$ becomes a non-decreasing function of the AoI $\delta$. This is because $Y_t \overset{}\leftrightarrow X_{t-\mu} \overset{}\leftrightarrow X_{t-\mu-\nu}$ is a Markov chain when $u=5$. The model-based numerical evaluation results in Fig. \ref{fig:ARp} is similar to the experimental results in Figs. \ref{fig:TrainingCartVelocity}-\ref{fig:Trainingcsi}. In general, $H_L(Y_t|X_{t-\delta})$ for any AR($p$) model with $p\geq2$ can be non-monotonic in AoI $\delta$, particularly when the feature length $u$ is small. 
{\blue In a follow-up study \cite{shisher2024monotonicity}, we  provide analytical expressions of the $L$-conditional entropy and the parameter $\epsilon$ for the AR($p$) model. 
}}


\section{Scheduling for Inference Error Minimization: The Single-source, Single-channel Case}\label{Scheduling}

\begin{figure}[t]
\centering
\includegraphics[width=0.35\textwidth]{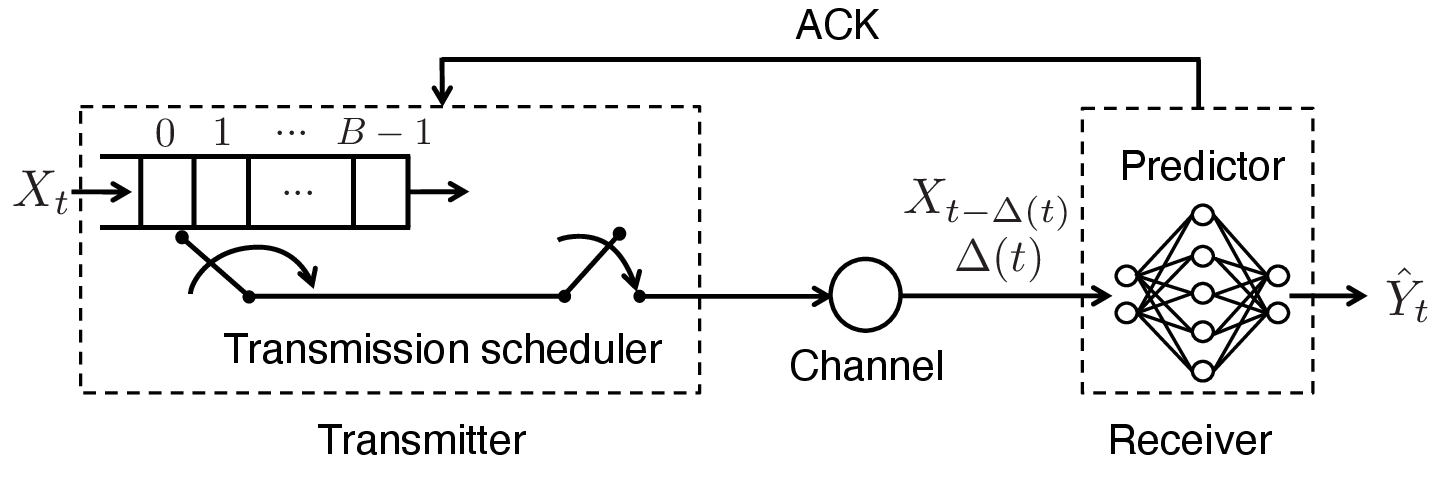}
\caption{\small  A remote inference system with ``selection-from-buffer.'' At each time slot $t$, the transmitter generates a feature $X_t$ and keeps it in a buffer that stores $B$ most recent features $(X_t, X_{t-1}, \ldots, X_{t-B+1}$). The scheduler decides (i) when to submit features to the channel and (ii) which feature in the buffer to submit. \label{fig:scheduling}
}
\vspace{-3mm}
\end{figure}

In this section, we will introduce a new ``selection-from-buffer" medium access model and develop a novel transmission scheduling policy to minimize the time-average inference error in single-source remote inference systems. This scheduling policy can effectively minimize general functions of the AoI, regardless of whether the function is monotonic or not.

\ignore{In this section, we will design a new scheduling algorithm for minimizing the long-term average inference error in supervised learning based real-time forecasting. Following the discussions in Section \ref{InformationAnalysis}, we can show that the inference error is a function of the inference AoI, which is not necessarily monotonic. Hence, existing scheduling algorithms for optimizing monotonic AoI metrics \cite{bedewy2020optimizing,SunTIT2020,bedewy2021optimal,SunNonlinear2019, OrneeTON2021,klugel2019aoi,  pan2020minimizing, jiang2018can, Maatouk2021, abd2020aoi,sun2017update, yates2021age, li2021age} may not be appropriate for inference error minimization. We will provide a semi-analytical optimal solution for minimizing the time-average of non-monotonic AoI functions, where the solution is expressed by using the index of an AoI bandit process. To the extent of our knowledge, the connection between AoI-based scheduling and index that we discover has not been reported before.}

\subsection{Selection-from-Buffer: A New Status Updating Model}
{Consider the remote inference system depicted in Fig. \ref{fig:scheduling}, where a source progressively generates features and sends them through a channel to a receiver. The system operates in discrete time-slots. In each time slot $t$, a pre-trained neural network at the receiver employs the freshest received feature $X_{t-\Delta(t)}$ to infer the current label $Y_t$. 

As discussed in Sections \ref{RemoteInference}-\ref{InformationAnalysis}, the inference error $\mathrm{err}_{\mathrm{inference}}(\Delta(t))$ is a function of the AoI $\Delta(t)$, whereas the function is not necessarily monotonically increasing. In certain scenarios, a stale feature with $\Delta(t)>0$ can outperform a freshly generated feature with $\Delta(t)=0$. Inspired by these observations, we introduce a novel medium access model for status updating, which is termed the ``selection-from-buffer" model. In this model, the source maintains a buffer that stores the $B$ most recent features $(X_t, X_{t-1}, \ldots, X_{t-B+1})$ in each time slot $t$. Specifically, at the beginning of time slot $t$, the source appends a newly generated feature $X_t$ to the buffer, while concurrently evicting the oldest feature $X_{t-B}$. 
If the channel is available at time slot $t$, the transmitter can send one of the $B$ most recent features or remain silent, where the transmission may last for one or multiple time-slots. Notably, the ``selection-from-buffer'' model generalizes the ``generate-at-will'' model \cite{yates2015lazy,YinUpdateInfocom}, with the latter is a special case of the former with $B=1$.\footnote{In comparison to the ``generate-at-will'' model, our ``selection-from-buffer'' model presents a critical advantage: it enables the systematic investigation of \emph{optimal feature design for remote inference.} As an evidence, our subsequent study \cite{shisher2023learning} demonstrates that substantial performance improvements can be attained through the joint optimization of transmission scheduling and the feature sequence length $u$.}

%


The system starts to operate at time slot $t = 0$. We assume that the buffer is initially populated with $B$ features $(X_0, X_{-1}, \ldots, X_{-B+1})$ at time slot $t=0$. By this, the buffer is kept full at all time slot $t\geq 0$. The channel is modeled as a non-preemptive server with feature transmission times $T_i\geq 1$, which can be random due to factors like time-varying channel conditions, collisions, random packet sizes, etc. We assume that the $T_i$'s are
\emph{i.i.d.} with 
$1 \leq \mathbb E[T_i ] < \infty$. 
The $i$-th feature is generated in time slot $G_i$, submitted to the channel in time slot $S_i$, and delivered to the receiver in time slot $D_i = S_i +T_i$, where $G_i \leq S_i<D_i$, and $D_i \leq S_{i+1}< D_{i+1}$. Once a feature is delivered, an acknowledgment (ACK) is fed back to the transmitter in the same time slot. Thus, the idle/busy state of the channel is known at the transmitter.}

\subsection{Scheduling Policies and Problem Formulation}

A transmission scheduler determines (i) when to submit features to the channel and (ii) which feature in the buffer to submit. In time slot $S_i$, let $X_{G_i}=X_{S_i-b_i}$ be the feature submitted to the channel, which is the $(b_i + 1)$-th freshest feature in the buffer, with $b_i \in \{0, 1, \ldots, B-1\}$. By this, $G_i=S_i-b_i$. A scheduling policy is denoted by a 2-tuple $(f, g)$, where $g = (S_1, S_2, \ldots)$ determines when to submit the features and $f= (b_1, b_2,$ $ \ldots)$ specifies which feature in the buffer to submit.


Let $U(t)= \max_i\{G_i : D_i \leq t \}$ represent the generation time of the freshest feature delivered to the receiver up to time slot $t$. Because $G_i=S_i-b_i$, $U(t)= \max_i\{S_i-b_i : D_i \leq t \}$. The age of information (AoI) at time $t$ is  \cite{kaul2012real}
\begin{align}\label{age}
\Delta(t) = t-U(t)=t-\max_i \{S_i-b_i: D_i \leq t\}.
\end{align}
Because $D_i < D_{i+1}$, $\Delta(t)$ can be re-written as
\begin{align}
\Delta(t) = t-S_i+b_i,~~ \mathrm{if}~~D_i \leq t < D_{i+1}.
\end{align}
The initial state of the system is assumed to be $S_0 = 0, D_0 = T_0$, and $\Delta(0)$ is a finite constant.


{\blue We focus on the class of \emph{signal-agnostic} scheduling policies in which each decision is determined without using the knowledge of the signal value of the observed process. A scheduling policy $(f, g)$ is said to be signal-agnostic, if the policy is independent of $\{(Y_t, X_t), t =0,1,2,\ldots\}$.}
Let $\Pi$ represent the set of all causal and signal-agnostic scheduling policies that satisfy three conditions: (i) the transmission time schedule $S_i$ and the buffer position $b_i$ are determined based on the current and the historical information available at the scheduler; (ii) the scheduler does not have access to the realization of the process $\{(Y_t, X_t), t =0,1,2,\ldots\}$; and (iii) the scheduler can access the inference error function $\mathrm{err}_{\mathrm{inference}}(\cdot)$ and the distribution of $T_i$.

Our goal is to find an optimal scheduling policy that minimizes the time-average expected inference error among all causal scheduling policies in $\Pi$:
\begin{align}\label{scheduling_problem}
\bar p_{opt}=\inf_{(f, g) \in \Pi}  \limsup_{T\rightarrow \infty}\frac{1}{T} \mathbb{E}_{(f, g)} \left[ \sum_{t=0}^{T-1} p(\Delta(t))\right].
\end{align}
where we use a simpler notation $p(\Delta(t))=\mathrm{err}_{\mathrm{inference}}(\Delta(t))$ to represent the inference error in time-slot $t$, and $\bar p_{opt}$ is the optimum value of \eqref{scheduling_problem}. Because $p(\cdot)$ is not necessarily monotonic and the scheduler needs to determine which feature in the buffer to send, \eqref{scheduling_problem} is more challenging than the scheduling problems for minimizing non-decreasing age functions in 
\cite{yates2015lazy,sun2017update, YinUpdateInfocom, SunNonlinear2019,SunSPAWC2018, orneeTON2021, Tripathi2019, klugel2019aoi, bedewy2021optimal, kadota2018optimizing, hsu2018age, sun2019closed, Kadota2018}. {Note that $p(\Delta(t))=\mathrm{err}_{\mathrm{inference}}(\Delta(t))$ is the inference error in time-slot $t$, instead of its information-theoretic approximations.} 


\ifreport

\ignore{{\blue In order to evaluate the average inference error during time slots $0, 1, \ldots, T-1,$ let the random variable $\Delta$ in \eqref{eq_inferenceerror} follow the empirical distribution of the inference AoI process $\{\Delta(t), t = 0,1,\ldots, T-1\}$ during the first $T$ time slots. Then, the cumulative distribution function of $\Delta$ is given by \cite[Example 2.4.8]{durrett2019probability} 
\begin{align}\label{distribution_function_AoI}
F_{\Delta}(x) =  \frac{1}{T} \sum_{t=0}^{T-1} \mathbf 1(\Delta(t) \leq x), 
\end{align}
where $\mathbf 1(\cdot)$ is the indicator function. Because $\{(\tilde Y_t , \tilde X_t), t\in \mathbb Z \}$ is a stationary process that is independent of $\Delta(t)$, by substituting \eqref{distribution_function_AoI} into \eqref{eq_inferenceerror}, the average inference error during the first $T$ time slots can be expressed as 
\begin{align}\label{eq_inferenceerror1}
\mathrm{err}_{\mathrm{inference}}(T) = \frac{1}{T} \mathbb E_{(f,g)} \left [ \sum_{t=0}^{T-1} p(\Delta(t))\right],
\end{align}
where $E_{(f,g)}[\cdot]$ represents a conditional expectation for given scheduling policy $(f,g)$, $p(\Delta(t))$ is the inference error in time slot $t$, with the AoI function $p(\cdot)$ defined by 
\begin{align}\label{instantaneous_err} 
p(\delta)=\mathbb E_{Y, X \sim P_{\tilde Y_t, \tilde X_{t-\delta}}}\left[L\left(Y,\phi^*_{P_{Y_t, X_{t-\Theta},\Theta}}(X,\delta)\right)\right].
\end{align}
Let $T\rightarrow \infty$, we obtain the long-term average inference error over an infinite time horizon, which is determined by 
\begin{align}\label{eq_inferenceerror2}
\mathrm{err}_{\mathrm{inference}} =  \limsup_{T\rightarrow \infty}\frac{1}{T} \mathbb{E}_{(f, g)} \left[ \sum_{t=0}^{T-1} p(\Delta(t))\right].
\end{align}
}}


\ignore{Our goal is to find an optimal scheduling policy that minimizes the long-term average inference error among all causal scheduling policies in $\mathcal F \times \mathcal G$:
\begin{align}\label{scheduling_problem}
\bar p_{opt}=\inf_{f \in \mathcal F, g \in \mathcal G}  \limsup_{T\rightarrow \infty}\frac{1}{T} \mathbb{E}_{(f, g)} \left[ \sum_{t=0}^{T-1} p(\Delta(t))\right].
\end{align}
where $p(\Delta(t))$ is the inference error at time slot $t$ and $\bar p_{opt}$ is the optimum value of \eqref{scheduling_problem}, i.e., the minimum time-average expected inference error. 

{\violet The new transmission model in Fig. \ref{fig:scheduling} is different from the widely used ``generate-at-will'' model \cite{yates2015lazy}, due to the additional feature buffer. As we will see in the next subsection, if $p(\cdot)$ is a non-decreasing function, then it is optimal to choose $b_i=0$ for all $i$ and the feature buffer is not needed. However, if $p(\cdot)$ is highly non-monotonic or even has a periodic pattern, it is better to send an old feature that was generated some time ago than sending a new feature that has just been created.}}

\subsection{An Optimal Scheduling Solution}
To elucidate the optimal solution to the scheduling problem \eqref{scheduling_problem}, we first fix the buffer position at $b_i = b$ for each feature $i$ submitted to the channel and focus on the optimization of the transmission time schedule $g=(S_1, S_2,\ldots)$. This simplified transmission scheduling problem is expressed as
\begin{align}\label{sub_scheduling_problem}
\bar p_{b, opt}=\inf_{(f_b, g)\in\Pi}  \limsup_{T\rightarrow \infty}\frac{1}{T} \mathbb{E}_{(f_b, g)}\left [ \sum_{t=0}^{T-1} p(\Delta(t))\right],
\end{align}
where $f_b = (b, b, \ldots)$ represents an invariant buffer position assignment policy and $\bar p_{b, opt}$ is the optimal objective value in \eqref{sub_scheduling_problem}. The insights gained from solving this simplified problem \eqref{sub_scheduling_problem} will subsequently guide us in deriving the optimal solution to the original scheduling problem \eqref{scheduling_problem}.



\begin{theorem}\label{theorem5}
If $|p(\delta)| \leq M$ for all $\delta$ and the $T_i$'s are i.i.d. with $1\leq \mathbb E[T_i]<\infty$, then $g=(S_1(\beta_b), S_2(\beta_b), \ldots)$ is an optimal solution to \eqref{sub_scheduling_problem}, where 
\begin{align}\label{OptimalPolicy_Sub}
S_{i+1}(\beta_b) = \min_{t \in \mathbb Z}\big\{ t \geq D_i(\beta_b): \gamma(\Delta(t)) \geq \beta_b\big\},
\end{align}
$D_i(\beta_b)=S_{i}(\beta_b)+T_i$ is the delivery time of the $i$-th feature submitted to the channel, $\Delta(t)=t-S_i(\beta_b)+b$ is the AoI at time $t$,  $\gamma(\delta)$ is an index function, defined by
\begin{align}\label{gittins}
\gamma(\delta)=\inf_{\tau \in \{1, 2, \ldots\}} \frac{1}{\tau} \sum_{k=0}^{\tau-1} \mathbb E \left [p(\delta+k+T_{1}) \right],
\end{align}
and the threshold $\beta_b$ is the unique root of 
\begin{align}\label{bisection}
\mathbb{E}\left[\sum_{t=D_i(\beta_b)}^{D_{i+1}(\beta_b)-1}  p\big(\Delta(t)\big)\right] - \beta_b~ \mathbb{E}\left[D_{i+1}(\beta_b)-D_i(\beta_b)\right]=0.
\end{align}
The optimal objective value to \eqref{sub_scheduling_problem} is given by 
\begin{align}\label{optimum}
\bar p_{b, opt}=\frac{\mathbb{E}\left[\sum_{t=D_i(\beta_b)}^{D_{i+1}(\beta_b)-1}  p\big(\Delta(t)\big)\right]}{\mathbb{E}\left[D_{i+1}(\beta_b)-D_i(\beta_b)\right]}.
\end{align}
Furthermore, $\beta_b$ is equal to the optimal objective value to \eqref{sub_scheduling_problem}, i.e., $\beta_b= \bar p_{b, opt}$.
\end{theorem}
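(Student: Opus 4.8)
The plan is to recognize \eqref{sub_scheduling_problem} as a renewal-reward problem and apply the classical theory of optimal stopping for semi-Markov decision processes. First I would observe that, since the buffer position is fixed at $b_i = b$, the only decision is the sequence of submission times $g=(S_1,S_2,\ldots)$, and the channel is a non-preemptive server, so the $(i{+}1)$-th feature can only be submitted at or after the delivery time $D_i$. Between two consecutive delivery epochs $D_i$ and $D_{i+1}$, the AoI evolves deterministically as $\Delta(t)=t-S_i+b$; and at the decision epoch, the scheduler waits some idle time $Z_{i+1}\ge 0$ before submitting, so $S_{i+1}=D_i+Z_{i+1}$ and $D_{i+1}=S_{i+1}+T_{i+1}$. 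Thus the process is a renewal process whose renewal epochs are the delivery times, the per-renewal ``length'' is $D_{i+1}-D_i = Z_{i+1}+T_{i+1}$, and the per-renewal ``cost'' is $\sum_{t=D_i}^{D_{i+1}-1} p(\Delta(t))$. By the renewal-reward theorem (and a uniform-integrability / bounded-cost argument using $|p|\le M$ and $\mathbb E[T_i]<\infty$), the long-run time-average cost of any policy of this renewal type equals the ratio of expected per-renewal cost to expected per-renewal length.

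Next I would invoke the standard Dinkelbach-type equivalence for fractional (ratio) optimization: minimizing the ratio $\mathbb E[\text{cost}]/\mathbb E[\text{length}]$ over stopping rules is equivalent to finding the value $\beta$ such that $\inf \{\mathbb E[\text{cost}] - \beta\,\mathbb E[\text{length}]\} = 0$, and the minimizing $\beta$ equals the optimal ratio $\bar p_{b,opt}$; this is exactly \eqref{bisection}, \eqref{optimum}, and the claim $\beta_b=\bar p_{b,opt}$. So the problem reduces to: for fixed $\beta$, choose the idle time $Z_{i+1}$ (equivalently the submission time $S_{i+1}\ge D_i$) to minimize the expected $\beta$-penalized per-renewal cost
\[
\mathbb E\!\left[\sum_{t=D_i}^{D_{i+1}-1}\big(p(\Delta(t))-\beta\big)\right].
\]
Because the renewals are i.i.d. and the per-renewal problem decouples, this is a single optimal-stopping problem over the waiting time. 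Writing $\delta_0=\Delta(D_i)=D_i-S_i+b$ for the AoI at the start of the renewal, waiting $\tau$ extra slots before submission and then incurring a random service time $T$ contributes $\sum_{k=0}^{\tau-1}(p(\delta_0+k)-\beta)$ during the idle period plus $\mathbb E\big[\sum_{k=0}^{T-1}(p(\delta_0+\tau+k)-\beta)\big]$ during the ensuing service. One then shows that it is optimal to keep waiting (increment $\tau$) precisely while the "forward-looking average cost starting from the current AoI" is below $\beta$, and to submit once it first exceeds $\beta$; formalizing the per-step comparison yields the index function $\gamma(\delta)$ in \eqref{gittins} as the infimum over future averaging horizons $\tau$ of the expected cost of the block $(\delta+T_1,\ldots,\delta+T_1+\tau-1)$, and the stopping rule becomes "submit at the first $t\ge D_i$ with $\gamma(\Delta(t))\ge\beta$," which is \eqref{OptimalPolicy_Sub}.

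The main obstacle — and the step needing the most care — is rigorously justifying that the optimal stopping rule has this index-threshold form even though $p(\cdot)$ (and hence the instantaneous cost) is \emph{not monotonic}. In the monotone case one argues directly that the "keep waiting" region is an up-set in $\delta$; here that fails, so I would instead appeal to the Gittins-index / optimal-stopping characterization: define $\gamma(\delta)$ as in \eqref{gittins} as the smallest "prevailing charge'' for which stopping immediately is at least as good as the best block-continuation starting from AoI $\delta$, then use a standard interchange/exchange argument (or the "play the arm while its index exceeds the retirement reward" theorem for a one-armed bandit against a known constant reward $\beta$) to conclude optimality of the threshold rule $\gamma(\Delta(t))\ge\beta$. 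I also need to verify the auxiliary facts: existence and uniqueness of the root $\beta_b$ of \eqref{bisection} — this follows because the left-hand side of \eqref{bisection}, as a function of $\beta$, is continuous, strictly decreasing in $\beta$ (the expected renewal length is positive and bounded below by $\mathbb E[T_1]\ge 1$), tends to $+\infty$ as $\beta\to-\infty$ and to $-\infty$ as $\beta\to+\infty$ — together with boundedness of $\gamma(\cdot)$ by $M$ (so the stopping time is a.s. finite and the renewals have finite mean length), and the renewal-reward limit exchange with the $\limsup$ in \eqref{sub_scheduling_problem}. A final point is to confirm that allowing randomized or history-dependent policies in $\Pi$ cannot beat the best stationary renewal-type policy; this is standard for average-cost SMDPs with a single recurrent class and bounded costs, via the associated Bellman/ACOE argument or a direct lower-bound via Jensen on the empirical renewal ratios.
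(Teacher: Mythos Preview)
Your proposal is correct and arrives at the same conclusions, but the route differs from the paper's in a meaningful way. The paper casts \eqref{sub_scheduling_problem} directly as an average-cost SMDP, writes the Bellman optimality equation \eqref{relativeValuederived} for the relative value function $h_b(\cdot)$, and then solves it by a step-by-step case analysis: it checks when $\tau=0$ is optimal (comparing against all $\tau\ge 1$), shows via a one-line fractional-programming lemma that this reduces to $\gamma(\delta)\ge \bar p_{b,opt}$, then iterates to $\tau=1,2,\ldots$ to obtain the first-passage form \eqref{OptimalPolicy_Sub}; the fixed-point equation \eqref{bisection} then falls out by evaluating $\mathbb E[h_b(T_1+b)]$ on both sides of the Bellman equation. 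Your route instead layers renewal-reward on top, invokes Dinkelbach explicitly to linearize the ratio, and then appeals to a Gittins/one-armed-bandit interchange argument for the threshold structure. Both are sound; the paper's Bellman computation is more self-contained and sidesteps the need to justify a Gittins-type theorem in the non-monotone, undiscounted setting (which, as you note, is exactly the delicate step), while your framing makes the index interpretation and the Dinkelbach structure more transparent and would port more readily to variants. The uniqueness argument for $\beta_b$ is essentially identical in both: the paper packages it as Lemma~\ref{jB}, showing the left side of \eqref{bisection} is the infimum of affine decreasing functions of $\beta$, hence concave and strictly decreasing with the claimed limits.
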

\ifreport
\begin{proof}[Proof sketch]
{The scheduling problem \eqref{sub_scheduling_problem} is an infinite-horizon average-cost semi-Markov decision process (SMDP) \cite[Chapter 5.6]{bertsekasdynamic1}.  Define $\tau=S_{i+1}-D_i$ as the waiting time for sending the $(i+1)$-th feature after the $i$-th feature is delivered. The Bellman optimality equation of the  SMDP \eqref{sub_scheduling_problem} is 
\begin{align}\label{relativeValuederived}
h_b(\delta)=\inf_{\tau \in \{0, 1, 2, \ldots\}}~&\mathbb E \left [ \sum_{k=0}^{\tau+T_{i}-1} (p(\delta+k) -\bar p_{b, opt}) \right]\nonumber\\
&+\mathbb E[h_b(T_{i}+b)], ~\delta = 1, 2,\ldots,
\end{align}
where $h_b(\cdot)$ is the relative value function of the SMDP \eqref{sub_scheduling_problem}.
Theorem \ref{theorem5} is proven by directly solving \eqref{relativeValuederived}. The details are provided in Appendix \ref{MainResult}.}
\end{proof}
\else
\fi

In supervised learning algorithms, features are shifted, rescaled, and clipped during data pre-processing. Because of these pre-processing techniques, the inference error $p(\delta)$ is bounded, as illustrated in Figs. \ifreport \ref{fig:learning}-\ref{fig:Trainingcsi}\else \ref{fig:learning}-\ref{fig:DelayedNetworkedControlled}\fi. Therefore, the assumption $|p(\delta)| \leq M$ for all $\delta$ in Theorem \ref{theorem5} is not restrictive in practice. 

The optimal scheduling policy  in Theorem \ref{theorem5} is a threshold policy described by the index function $\gamma(\delta)$: According to \eqref{OptimalPolicy_Sub}, a feature is transmitted in time-slot $t$ if and only if two conditions are satisfied: (i) The channel is available for scheduling in time-slot $t$ and (ii) the index $\gamma(\Delta(t))$ exceeds a threshold $\beta_b$, which is precisely equal to the optimal objective value $\bar p_{b, opt}$ of \eqref{sub_scheduling_problem}. {The expression of $\gamma(\delta)$ in \eqref{gittins} is obtained by solving the Bellman optimality equation \eqref{relativeValuederived}, as explained in Appendix \ref{MainResult}.} The threshold $\beta_b$ is calculated by solving the unique root of \eqref{bisection}. Three low-complexity algorithms for this purpose were given by \cite[Algorithms 1-3]{orneeTON2021}. 

It is crucial to note that a non-monotonic AoI function $p(\delta)$ often leads to a non-monotonic index function $\gamma(\delta)$. Consequently, the inverse function of $\gamma(\delta)$ may not exist and the inequality $\gamma(\Delta(t))\geq \beta_b$ in the threshold policy \eqref{OptimalPolicy_Sub}
cannot be equivalently rewritten as an inequality of the form $\Delta(t) \geq \alpha$. This distinction represents a significant departure from previous studies for minimizing either the AoI $\Delta(t)$ or its non-decreasing functions, e.g., \cite{yates2015lazy,sun2017update, YinUpdateInfocom, SunNonlinear2019,SunSPAWC2018, orneeTON2021, Tripathi2019, klugel2019aoi, bedewy2021optimal, kadota2018optimizing, hsu2018age, sun2019closed, Kadota2018}. In these earlier works, the solutions were usually expressed as threshold policies in the form $\Delta(t) \geq \alpha$. Our pursuit of a simple threshold policy for minimizing general and potentially non-monotonic AoI functions was 
inspired by the restart-in-state formulation of the Gittins index \cite[Chapter 2.6.4]{gittins2011multi}, \cite{katehakis1987multi}. 




{\blue 

}

Now we present an optimal solution to \eqref{scheduling_problem}.

\begin{theorem}\label{theorem6}
If the conditions of Theorem \ref{theorem5} hold, then an optimal solution $(f^*, g^*)$ to \eqref{scheduling_problem} is determined by
\begin{itemize}
\item[(a)] $f^*=(b^*, b^*, \ldots)$, where 
\begin{align}\label{feature_Selects}
b^*= \arg \min_{b \in \{0, 1, \ldots, B-1\}} \bar p_{b, opt},
\end{align}
and $\bar p_{b, opt}$ is the optimal objective value to \eqref{sub_scheduling_problem}. 
\item[(b)] $g^* = (S_1^*,S_2^*,\ldots)$ , where 
\begin{align}\label{Optimal_Scheduler}
S_{i+1}^*=\min_{t \in \mathbb Z} \big\{ t \geq D_i^* : \gamma(\Delta(t)) \geq \bar p_{opt}\big\},
\end{align}
$D_i^* = S_i^* + T_i$,  $\gamma(\delta)$ is defined in \eqref{gittins}, and $\bar p_{opt}$ is the optimal objective value of \eqref{scheduling_problem}, given by 
\begin{align}\label{optimammain}
\bar p_{opt}= \min_{b \in \{0, 1, \ldots, B-1\}} \bar p_{b, opt}.
\end{align}
\end{itemize}
\end{theorem}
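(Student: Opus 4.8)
The plan is to reduce Theorem~\ref{theorem6} to Theorem~\ref{theorem5} by establishing the single structural fact that a \emph{constant} buffer position is optimal: although a policy in $\Pi$ may vary $b_i$ over time and adapt it to the observed history, nothing is lost by committing to one fixed buffer position for all features. Granting this, parts~(a)--(b) follow quickly. The policies $(f_b,g)$ constructed in Theorem~\ref{theorem5} (with $f_b=(b,b,\ldots)$ and $g$ the threshold schedule \eqref{OptimalPolicy_Sub}) lie in $\Pi$ and attain the value $\bar p_{b,opt}$, so $\bar p_{opt}\le \min_{b\in\{0,1,\ldots,B-1\}}\bar p_{b,opt}$; the constant-buffer optimality supplies the matching lower bound $\bar p_{opt}\ge \min_b \bar p_{b,opt}$, hence $\bar p_{opt}=\min_b \bar p_{b,opt}=\bar p_{b^*,opt}$ with $b^*$ as in \eqref{feature_Selects}, which is \eqref{optimammain}. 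Applying Theorem~\ref{theorem5} with $b=b^*$ then produces exactly the index-based threshold schedule \eqref{Optimal_Scheduler}, since the threshold $\beta_{b^*}$ there equals $\bar p_{b^*,opt}=\bar p_{opt}$.

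The core of the argument is the lower bound, which I would obtain by casting \eqref{scheduling_problem} as an infinite-horizon average-cost semi-Markov decision process of the same type used in the proof of Theorem~\ref{theorem5}, but with an enlarged action set. Take the decision epochs to be the delivery instants $D_0<D_1<\cdots$; at epoch $i$ the state is the post-delivery age $\delta=\Delta(D_i)$, and the action is the pair $(\tau,b)$, where $\tau=S_{i+1}-D_i\in\{0,1,2,\ldots\}$ is the waiting time before the next transmission and $b=b_{i+1}\in\{0,1,\ldots,B-1\}$ is the buffer position chosen for the next feature. Since the scheduler is signal-agnostic and no informative event occurs between $D_i$ and $S_{i+1}$, committing to $b_{i+1}$ already at $D_i$ incurs no loss. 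On $[D_i,D_{i+1})$ the age evolves as $\Delta(t)=\delta+(t-D_i)$, so the epoch length is $\tau+T_{i+1}$, the epoch cost is $\sum_{k=0}^{\tau+T_{i+1}-1}p(\delta+k)$, and the next state is $\Delta(D_{i+1})=T_{i+1}+b$. The decisive observation is that the chosen $b$ affects neither the running cost nor the length of the current epoch; it enters the dynamics only through the next state.

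Consequently, the Bellman optimality equation of \eqref{scheduling_problem} reads
\begin{align*}
h^*(\delta)=\inf_{\tau\ge 0,\; b\in\{0,1,\ldots,B-1\}}\ \mathbb E\!\left[\sum_{k=0}^{\tau+T-1}\big(p(\delta+k)-\bar p_{opt}\big)+h^*(T+b)\right],
\end{align*}
where $T$ denotes a generic copy of the i.i.d.\ transmission time. After taking the expectation, the bracketed quantity splits into a term depending only on $(\tau,\delta)$ and the term $\mathbb E[h^*(T+b)]$ depending only on $b$, so the joint infimum decouples as $\inf_{\tau\ge0}\mathbb E\big[\sum_{k=0}^{\tau+T-1}(p(\delta+k)-\bar p_{opt})\big]+\inf_{b}\mathbb E[h^*(T+b)]$. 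Hence the optimal buffer position is a constant $b^\sharp\in\argmin_{b}\mathbb E[h^*(T+b)]$, independent of the age $\delta$, and there is a stationary optimal policy of \eqref{scheduling_problem} that uses buffer position $b^\sharp$ at every transmission. Restricting \eqref{scheduling_problem} to such policies is precisely problem \eqref{sub_scheduling_problem} with $b=b^\sharp$, whose optimal value is $\bar p_{b^\sharp,opt}$ by Theorem~\ref{theorem5}; since this restricted class still contains an optimal policy, $\bar p_{opt}=\bar p_{b^\sharp,opt}\ge\min_b\bar p_{b,opt}$, the desired lower bound. This forces $b^\sharp\in\argmin_b\bar p_{b,opt}$, so for any minimizer $b^*$ the policy $(f_{b^*},g)$ of Theorem~\ref{theorem5} is optimal and its threshold equals $\beta_{b^*}=\bar p_{b^*,opt}=\bar p_{opt}$, giving \eqref{feature_Selects}--\eqref{Optimal_Scheduler}.

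I expect the main obstacle to be the semi-Markov bookkeeping needed to write down and exploit the Bellman optimality equation for the full problem: the state space $\mathbb Z^{+}$ is countably infinite, the waiting-time action $\tau$ is unbounded, and the transmission times satisfy only $1\le\mathbb E[T_i]<\infty$, so one must verify the standard regularity conditions (the bound $|p(\delta)|\le M$, a finite expected return time to a reference age, or a vanishing-discount argument) that guarantee a bounded relative value function $h^*$ and an optimal stationary policy. These are exactly the conditions already established in the proof of Theorem~\ref{theorem5} for the fixed-$b$ SMDP \eqref{relativeValuederived}, and I would note that they carry over once the finite action coordinate $b$ is appended; the only genuinely new ingredient is the decoupling above, which must be stated carefully so that it is clear $b_{i+1}$ enters the transition kernel alone and not the per-slot cost.
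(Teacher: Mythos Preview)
Your proposal is correct and follows essentially the same route as the paper: cast \eqref{scheduling_problem} as an average-cost SMDP with decision epochs $D_i$, state $\Delta(D_i)$, and action $(\tau,b)$; observe in the Bellman equation that $b$ enters only through $\mathbb E[h^*(T+b)]$ and hence the optimal buffer position is a state-independent constant (using that the $T_i$ are i.i.d.); then reduce to \eqref{sub_scheduling_problem} and invoke Theorem~\ref{theorem5}. The paper carries this out in a single appendix that treats the more general cost-$\lambda$ problem \eqref{decoupled_problem} and then specializes, but the structural argument is identical to yours.
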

\ifreport
\begin{proof}
See Appendix \ref{MainResult}.
\end{proof}
\else
\fi

Theorem \ref{theorem6} suggests that, in the optimal solution to \eqref{scheduling_problem}, one should select features from a fixed buffer position $b_i=b^*$. In addition, a feature is  transmitted in time-slot $t$ if and only if two conditions are satisfied: (i) The channel is available for transmission in time-slot $t$, (ii) the index $\gamma(\Delta(t))$ exceeds a threshold $\bar p_{opt}$ (i.e., $\gamma(\Delta(t)) \geq \bar p_{opt}$), where the threshold $\bar p_{opt}$ is exactly the optimal objective value of \eqref{scheduling_problem}.

In the special case of a non-decreasing AoI function $p(\delta)$, it can be shown that the index function $\gamma(\delta) = \mathbb E[p(\delta+T_{1})]$ is non-decreasing and $b^*=0$ is the optimal buffer position in \eqref{feature_Selects}. The optimal strategy in such cases is to consistently select the freshest feature from the buffer such that $b_i = 0$. Hence, both the ``generate-at-will'' and ``selection-from-buffer'' models achieve the same minimum inference error. Furthermore, Theorem 3 in \cite{SunNonlinear2019} can be directly derived from Theorem \ref{theorem6}.

\ignore{{\violet According to Theorem \ref{theorem6}, in an optimal solution to \eqref{scheduling_problem}, the $(i+1)$-th feature should be sent in the earliest time slot $t$ that satisfies (i) the previously sent feature has already been delivered, and (ii) the index $\gamma(\Delta(t))$ is no less than the optimal objective value \emph{$\bar p_{opt}$}. Moreover, the scheduler should always send the $(b^*+1)$-th freshest feature in the buffer, where $b^*$ is the minimizer of \eqref{feature_Selects}.}}



\ignore{\subsubsection{Special Case: Non-decreasing $p(\cdot)$} 
{\violet In the special case of non-decreasing age function $p(\cdot)$, the index $\gamma(\delta)$ in \eqref{gittins} reduces to 
\begin{align}\label{gittins_monotonic}
\gamma(\delta) = \mathbb E[p(\Delta(t+T_{1}))].
\end{align}
In this case, the following result follows immediately from Theorem \ref{theorem6} and \eqref{gittins_monotonic}, and it coincides with some earlier results in \cite{SunNonlinear2019, SunTIT2020, OrneeTON2021}.}

\begin{corollary}\label{corollary1}
{\violet If $p(\cdot)$ is a non-decreasing function and the random transmission times $T_i$ is i.i.d. with a finite mean $\mathbb E[T_i]$, then there exists an optimal policy $\pi^*=(f^*, g^*)$ to \eqref{scheduling_problem} that satisfies
\begin{itemize}
\item[(a)] $f^*=(0, 0, \ldots)$, i.e., $b_i=0$ for all $i$.
\item[(b)]  $g^* = (S_1^*,S_2^*,\ldots)$ , where 
\begin{align}
S_{i+1}^*=\min_{t \in \mathbb Z} \big\{ t \geq S_i^* + T_i: \mathbb E[p(\Delta(t+T_{1}))] \geq \bar p_{opt}\big\},
\end{align}
$S_i^*+T_i$ is the delivery time of the $i$-th feature, and the optimal objective value $\bar p_{opt}$ is equal to $\beta_0$, i.e., the unique root to \eqref{bisection} with $b= 0$.   
\end{itemize}}
\end{corollary}
{\violet In \cite{SunNonlinear2019, SunTIT2020, OrneeTON2021}, the term $\mathbb E[p(\Delta(t+T_{1}))]$ occurred in the optimal scheduling policy, but it did not have a good interpretation. In this section, we have shown that $\mathbb E[p(\Delta(t+T_{1}))]$ is the index of an AoI bandit process $\Delta(t)$ with cost $p(\Delta(t))$. In addition, the results in \cite{SunNonlinear2019, SunTIT2020, OrneeTON2021} were obtained under an additional condition that inter-sending times $\{S_{i+1}-S_i, i=1,2, \ldots\}$ form a regenerative process. This condition is no longer required in our new proof techniques for Theorems \ref{theorem5} and \ref{theorem6}, which directly solve the Bellman optimality equations of \eqref{scheduling_problem} and \eqref{sub_scheduling_problem}. Finally, Corollary \ref{corollary1} implies that, if $p(\cdot)$ is non-decreasing, it is optimal to keep $b_i=0$ for all $i$ and hence no feature buffer is needed (as in the "generate-at-will" model). However, if $p(\cdot)$ is non-monotonic, it could be beneficial to add a feature buffer and send an old feature that has been stored in the buffer for some time, as indicated by Theorem \ref{theorem6}.}}

\section{Scheduling for Inference Error Minimization: The Multi-source, Multi-channel Case}\label{Multi-scheduling}

In this section, we develop a novel transmission
scheduling policy to minimize 
the weighted summation inference error in multi-source, multi-channel remote inference systems.  
\subsection{Multi-source, Multi-channel Status Updating Model}
\begin{figure}[t]
\centering
\includegraphics[width=0.40\textwidth]{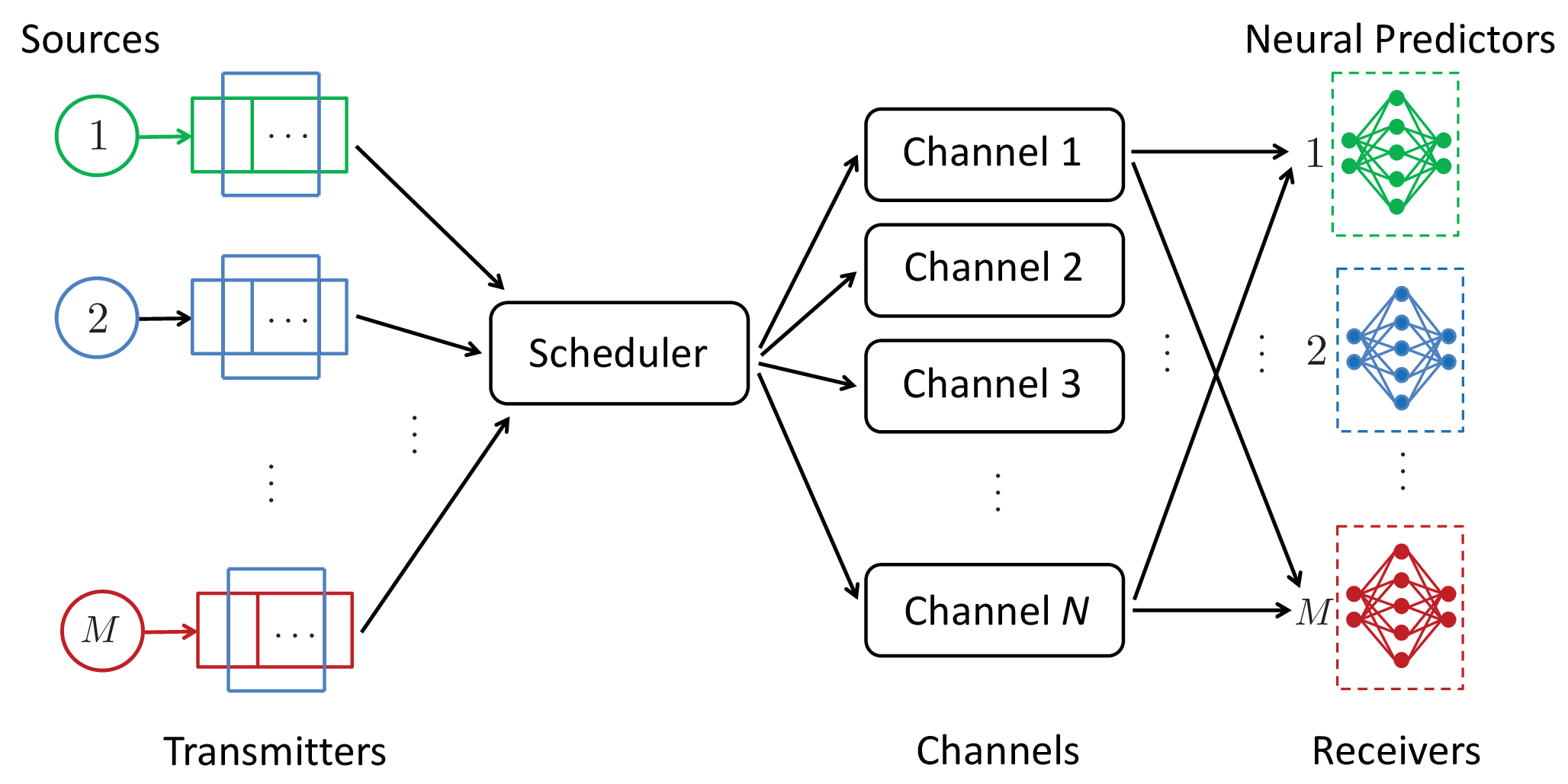}
\caption{\small A multi-source, multi-channel remote inference system. 
\label{fig:multi-scheduling}
}
\vspace{-3mm}
\end{figure}
Consider the remote inference system depicted in Fig. \ref{fig:multi-scheduling}, which consists of $M$ source-predictor pairs and $N$ channels. Each source adopts a ``selection-from-buffer'' model:  
At the beginning of time slot $t$, each source $m$ generates a feature $X_{m, t}$ and adds it into the buffer that stores $B_m$ most recent features $(X_{m,t},\ldots,$ $X_{m,t-B_m+1})$, meanwhile the oldest feature $X_{m, t-B_m}$ is removed from the buffer. 
At each time slot $t$, a central scheduler decides: (i) which
sources to select and (ii) which features from the buffer of selected sources to send. 
Each feature transmission lasts for one or multiple time slots. We consider non-preemptive transmissions, i.e., once a channel starts to send a feature, the channel must finish serving that feature before switching to serve another feature. At any given time slot, each source can be served by no more than one channel. 
We use an indicator variable $c_m(t) \in \{0, 1\}$ to represent whether a feature from source $m$ occupies a channel at time slot $t$, where $c_m(t)=1$ if source $m$ is being served by a channel at time slot $t$; otherwise, $c_m(t)=0$. Once a feature is delivered, an acknowledgment is fed back to the scheduler within the same time-slot.
By this, the channel occupation status $c_m(t)$ is known to the scheduler at every time slot $t$. Due to limited channel resources, the system must satisfy the constraint $\sum_{m=1}^M c_m(t) \leq N$ for all time slot $t=0, 1, \ldots$. 

The system starts to operate at time slot $t=0$. The $i$-th feature sent by source $m$ is generated in time slot $G_{m,i}$, submitted to a channel in time slot $S_{m, i}$, and delivered to the receiver in time slot 
$D_{m, i}=S_{m, i}+T_{m, i}$, where $G_{m, i} \leq S_{m,i} < D_{m,i}, D_{m,i} \leq S_{m,i+1} < D_{m,i+1}$, and  $T_{m, i}\geq 1$ is the feature transmission time. We assume that the $T_{m, i}$'s are independent across the sources and are i.i.d. for features originating from the same source with $1 \leq \mathbb E[T_{m,i}]<\infty$. 



\subsection{Scheduling Policies and Problem Formulation}
In time slot $S_{m, i}$, let $X_{G_{m,i}}=X_{S_{m, i}-b_{m,i}}$ be the feature submitted to a channel from source $m$, which is the $(b_{m,i} + 1)$-th freshest feature in source $m$'s buffer, with $b_{m, i} \in \{0, 1, \ldots, B_m-1\}$. By this,
a scheduling policy for source $m$ is denoted by $(f_m, g_m)$, where $g_m=(S_{m,1}, S_{m, 2}, \ldots)$ determines when to schedule source $m$, and $f_m=(b_{m,1}, b_{m, 2}, \ldots)$ specifies which feature to send from source $m$'s buffer. 

Let $U_m(t)= \max_i\{G_{m,i} : D_{m, i} \leq t \}$ represent the generation time of the freshest feature delivered from source $m$ to the receiver up to time slot $t$. Because $G_{m,i}=S_{m,i}-b_{m,i}$, $U_m(t)= \max_i\{S_{m, i}-b_{m,i} : D_{m,i} \leq t \}$. The age of information (AoI) of source $m$ at time slot $t$ is 
\begin{align}\label{multi-source-Age}
\!\!\!\!\Delta_m(t)\!=\!t\!-\!U_m(t)=\!t\!-\!\max_i\{S_{m, i}-b_{m, i}: D_{m, i} \leq t\}.
\end{align}
The initial state of the system is assumed to be $S_{m,0} = 0, D_{m,0} = T_{m,0}$, and $\Delta_m(0)$ is a finite constant.

Let $\Pi_m$ denote the set of all causal and signal-agnostic scheduling policies $(f_m, g_m)$ that satisfy the following conditions: (i) the transmission time schedule $S_{m, i}$ and the buffer position $b_{m, i}$ are determined based on the current and the historical information available at the scheduler; (ii) source $m$ can be served by at most one channel at a time and feature transmissions are non-preemptive; (iii) the scheduler does not
have access to the realization  of the feature and the target processes; and 
(iv) the scheduler can access the inference error function $\mathrm{err}_{\mathrm{inference},m}(\Delta_m(t))$ and the distribution of $T_{m, i}$ for source $m$. We define $\Pi$ as the set of all causal and signal-agnostic scheduling policies $\pi=(f_m, g_m)_{m=1}^M$. 

Our goal is to find a scheduling policy that minimizes the weighted summation of the time-averaged expected inference errors of the $M$ sources:
\begin{align}\label{Multi-scheduling_problem}
&\!\!\inf_{ \pi \in \Pi}  \limsup_{T\rightarrow \infty}\sum_{m=1}^M w_m~\mathbb{E} \left[ \frac{1}{T} \sum_{t=0}^{T-1} p_m(\Delta_m(t))\right], \\\label{Sceduling_constraint}
&\text{s.t.} \ \sum_{m=1}^M c_{m}(t) \leq N, t =0, 1, \ldots,\\\label{Sconstarint1}
&~~~~~c_m(t)\in \{0, 1\}, m=1,2,\ldots,M, t=0, 1, \ldots,
\end{align}
where $p_m(\Delta_m(t))=\mathrm{err}_{\mathrm{inference},m}(\Delta_m(t))$ is the inference error of source $m$ at time slot $t$ and $w_m>0$ is the weight of source $m$. 


Let $d_m(t) \in \{0, 1, \ldots\}$ denote the amount of time that has been spent to send the current feature of source $m$ by time slot $t$. Hence, $d_m(t) = 0$ if no feature of source $m$ is in service at time slot $t$, and $d_m(t) > 0$ if a feature of source $m$ is currently in service at time slot $t$. Problem \eqref{Multi-scheduling_problem}-\eqref{Sconstarint1} is a multi-action Restless Multi-armed Bandit (RMAB) problem, in which $(\Delta_m(t), d_m(t))$ is the state of the $m$-th bandit. 
At time slot $t$, if a feature from source $m$ is submitted to a channel, bandit $m$ is said to be \emph{active}; otherwise, if source $m$ is not under service or if one feature of source $m$ is under service whereas the service started before time slot $t$, then bandit $m$ is said to be \emph{passive}. 
The bandits are ``restless" because the state $(\Delta_m(t), d_m(t))$ undergoes changes even when the $m$-th bandit is passive \cite{whittle1988restless, Tripathi2019}.
When a bandit $m$ is activated, the scheduler can select any of the $B_m$ features from the buffer of source $m$. Thus, this problem is a multi-action RMAB.

{It is well-known that RMAB with binary actions is PSPACE-hard \cite{papadimitriou1994complexity}. RMABs with multiple actions, like \eqref{Multi-scheduling_problem}-\eqref{Sconstarint1}, would be even more challenging to solve. In the sequence, we will generalize the conventional Whittle index theoretical framework \cite{whittle1988restless} for binary-action RMABs, by developing a new index-based scheduling policy and proving this policy is asymptotically optimal for solving the multi-action RMAB problem \eqref{Multi-scheduling_problem}-\eqref{Sconstarint1}. This new theoretical framework contains four steps: (a) We first reformulate \eqref{Multi-scheduling_problem}-\eqref{Sconstarint1} as an equivalent multi-action RMAB problem with an equality constraint by using dummy bandits \cite{verloop2016asymptotically, ornee2023whittle}. The usage of dummy bandits is necessary for establishing the asymptotic optimality result in subsequent steps. (b) We relax the per-time-slot channel constraint as a time-average expected channel constraint, solve the relaxed problem by using Lagrangian dual optimization, and compute the optimal dual variable $\lambda^*$. 
(c) Problem \eqref{Multi-scheduling_problem}-\eqref{Sconstarint1} requires to determine (i) which source to schedule and (ii) which feature from the buffer of the scheduled source to send. In the proposed scheduling policy, the former is decided by a Whittle index policy, for which we establish indexability and derive an analytical expression of the Whittle index. The latter is determined by a $\lambda^*$-based selection-from-buffer policy. (d) Finally, we employ LP priority-based sufficient condition \cite{verloop2016asymptotically, gast2021lp} to prove that the proposed policy is asymptotically optimal as the numbers of users and channels increase to infinite with a fixed ratio.}

\subsection{Dummy Bandits and Constraint Relaxation}
Besides the original $M$ bandits, we introduce $N$ additional 
\emph{dummy bandits} that satisfy two conditions: (i) each dummy bandit has a zero age penalty function $p_0(\Delta_0(t)) = 0$; (ii) when activated, each dummy bandit occupies a channel. Let $c_0(t) \in \{0, 1, \ldots, N\}$ be the number of dummy bandits that are activated in time slot $t$. Let $\pi_0=\{c_{0}(t), t =0, 1, \ldots\}$ be a scheduling policy for the dummy bandits and $\Pi_0$ be the set of all policies $\pi_0$. Using these dummy bandits, \eqref{Multi-scheduling_problem}-\eqref{Sconstarint1} is reformulated as an  RMAB with  equality constraints \eqref{Sceduling_constraint1}, i.e.,
\begin{align}\label{Multi-scheduling_problem1}
&\inf_{ \pi \in \Pi, \pi_0 \in \Pi_0}  \limsup_{T\rightarrow \infty}\sum_{m=1}^M w_m~\mathbb{E} \left[\frac{1}{T} \sum_{t=0}^{T-1} p_m(\Delta_m(t))\right], \\ \label{Sceduling_constraint1}
&~~~~~\text{s.t.} \ \sum_{m=0}^M c_{m}(t) = N, t=0,1, \ldots, \\
&~~~~~~~\quad c_0(t) \in \{0, 1, \ldots, N\}, t=0,1,\ldots, \\\label{Sconstraint2}
&~~~~~~~\quad c_m(t)\in \{0, 1\}, m=1,2,\ldots,M, t=0,1, \ldots. 
\end{align}


Now, we replace the per-slot channel constraints \eqref{Sceduling_constraint1} by a time-average expected channel constraint \eqref{Changed_constraint}, which produces the following relaxed problem:
\begin{align}\label{relaxed_multiple}
&\inf_{ \pi \in \Pi, \pi_0 \in \Pi_0}  \limsup_{T\rightarrow \infty} \sum_{m=1}^M w_m~\mathbb{E} \left[  \sum_{t=0}^{T-1} p_m(\Delta_m(t))\right], \\
\label{Changed_constraint}
&~~~~~\text{s.t.}~ \limsup_{T \to \infty}   \sum_{m=0}^M \mathbb E\left[\frac{1}{T}\sum_{t=0}^{T-1}c_{m}(t)\right] = N, \\
&~~~~~~~ \quad c_0(t) \in \{0, 1, \ldots, N\}, t=0,1,\ldots,
\\\label{Sconstraint3}
&~~~~~~~ \quad c_m(t)\in \{0, 1\}, m=1,2,\ldots,M, t=0,1,\ldots. 
\end{align} 
The optimal objective value of \eqref{relaxed_multiple}-\eqref{Sconstraint3} provides a lower bound of the optimal objective value of \eqref{Multi-scheduling_problem}-\eqref{Sconstarint1}.

\subsection{Lagrangian Dual Optimization for Solving \eqref{relaxed_multiple}-\eqref{Sconstraint3}}
We solve the relaxed problem \eqref{relaxed_multiple}-\eqref{Sconstraint3} by Lagrangian dual optimization \cite{whittle1988restless, palomar2006tutorial}. To that end, we associate a Lagrangian multiplier $\lambda \in \mathbb R$ to the constraint \eqref{Changed_constraint} and get the following dual function  
\begin{align}\label{lagrangianproblem}
q(\lambda)=\inf_{\pi \in \Pi, \pi_0 \in \Pi_0} \limsup_{T\rightarrow \infty}~&\mathbb{E}\bigg[ \frac{1}{T} \sum_{t=0}^{T-1} \bigg(\sum_{m=1}^M w_m p_m(\Delta_m(t)) \nonumber\\ 
&+ \lambda \bigg(\sum_{m=0}^M c_{m}(t)-N\bigg)\bigg) \bigg],
\end{align}
where $\lambda \in \mathbb{R}$ is also referred to as the transmission cost. The dual optimization problem is given by 
\begin{align}\label{dualProblem}
\lambda^* = \arg\max_{\lambda \in \mathbb R} q(\lambda),
\end{align}
where $\lambda^*$ is the optimal dual solution.

\subsubsection{Solution to \eqref{lagrangianproblem}} The problem \eqref{lagrangianproblem} can be decomposed into $(M+1)$ sub-problems. For $m=0$, the sub-problem associated to the dummy bandits is given by
\begin{align}\label{dummyproblem}
\inf_{\pi_0 \in \Pi_0} \limsup_{T\rightarrow \infty} \mathbb E_{\pi_0}\left[\frac{1}{T}\sum_{t=0}^{T-1} \lambda c_0(t)\right].
\end{align} 
\begin{theorem}
If $\lambda > 0$, the optimal solution to \eqref{dummyproblem} is $c_0^*(t) = 0$ for all $t$; if $\lambda \leq 0$, the optimal solution to \eqref{dummyproblem} is $c_0^*(t) = N$ for all $t$.  
\end{theorem}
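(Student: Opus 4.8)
The plan is to exploit that the sub‑problem \eqref{dummyproblem} is fully decoupled from the other $M$ bandits and carries no constraint beyond the per‑slot range $c_0(t)\in\{0,1,\ldots,N\}$. By construction of the dummy bandits, $c_0(t)$ is simply the number of dummy bandits activated in slot $t$, and it may be chosen freely in $\{0,1,\ldots,N\}$ at each slot, independently of past choices and of the other sources. Hence the objective $\limsup_{T\to\infty}\mathbb E_{\pi_0}[\frac1T\sum_{t=0}^{T-1}\lambda c_0(t)]$ can be minimized pointwise along every sample path, slot by slot, and it suffices to exhibit a matching lower bound and a policy that attains it.

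For the case $\lambda>0$: since $c_0(t)\ge 0$ deterministically, we have $\lambda c_0(t)\ge 0$ for every $t$ and every realization, so $\mathbb E_{\pi_0}[\frac1T\sum_{t=0}^{T-1}\lambda c_0(t)]\ge 0$ for all $T$ and all $\pi_0\in\Pi_0$, whence $q$-value of \eqref{dummyproblem} is bounded below by $0$. The constant policy $c_0^*(t)\equiv 0$ is feasible and achieves objective value exactly $0$, so it is optimal. For the case $\lambda\le 0$: since $c_0(t)\le N$ and $\lambda\le 0$, we have $\lambda c_0(t)\ge \lambda N$ for every $t$ and every realization, so the objective of every $\pi_0\in\Pi_0$ is at least $\lambda N$; the constant policy $c_0^*(t)\equiv N$ is feasible and achieves $\lambda N$, hence is optimal.

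The only point that needs care is verifying that $\Pi_0$ indeed contains the two constant policies and imposes no hidden coupling across slots — this follows directly from how the dummy bandits were defined (each activation occupies a channel for the slot, with zero age penalty and no carry‑over state), so there is no genuine obstacle; the argument is just a slotwise comparison of $\lambda c_0(t)$ against its extreme value. I would also remark that when $\lambda=0$ every $\pi_0$ is optimal, so the stated constant policies remain valid optimizers in the boundary case, and selecting $c_0^*(t)\equiv N$ for $\lambda\le 0$ (rather than only $\lambda<0$) is the convention that feeds cleanly into the subsequent indexability and Whittle‑index derivation.
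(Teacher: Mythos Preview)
Your proof is correct; the paper states this theorem without proof, treating it as immediate, and your slotwise lower-bound-plus-attainment argument is exactly the natural justification the paper is implicitly relying on. There is nothing to add.
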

For each $m=1,2, \ldots, M$, the sub-problem associated with bandit $m$ is given by
\begin{align}\label{decoupled_problem}
\!\!\!\!\!&\bar p_{m, opt}(\lambda) \nonumber\\
\!\!\!\!\!=&\!\!\!\inf_{(f_{m}, g_m) \in \Pi_{m}} \!\!\!\limsup_{T\rightarrow \infty} \mathbb{E}\!\left[ \frac{1}{T}\!\!\sum_{t=0}^{T-1} \!w_m p_m(\Delta_m(t))\!+\! \lambda c_{m}(t) \right],\!\!\!\!
\end{align}
where $\bar p_{m, opt}(\lambda)$ is the optimal objective value to \eqref{decoupled_problem}. 

To explain the optimal solution to \eqref{decoupled_problem}, we first fix the buffer position at $b_{m,i} = b$ for all $i$ and optimize the transmission time schedule $g_m=(S_{m,1}, S_{m,2},\ldots)$. This simplified problem is formulated as
\begin{align}\label{sub_decoupled_problem}
&\bar p_{m, b, opt}(\lambda)\nonumber\\
&=\inf_{(f_{m, b}, g_m)\in\Pi_m}  \!\!\!\limsup_{T\rightarrow \infty} \mathbb{E}\!\left[ \frac{1}{T} \sum_{t=0}^{T-1} \!w_m p_m(\Delta_m(t))\!+\! \lambda c_{m}(t) \right],
\end{align}
where $f_{m,b} = (b, b, \ldots)$ and $\bar p_{m, b, opt}(\lambda)$ is the optimal objective value in \eqref{sub_decoupled_problem}.
\begin{theorem}\label{decoupled}
If $T_{m,i}$'s are i.i.d. with $1\leq \mathbb E[T_{m,i}]<\infty$, then $g_m(\lambda)=(S_{m,1}(\beta_{m,b}(\lambda)), S_{m,2}(\beta_{m,b}(\lambda)), \ldots)$ is an optimal solution to \eqref{sub_decoupled_problem}, where 
\begin{align}\label{OptimalPolicy_fixed_length_fixed_buffer}
&S_{m, i+1}(\beta_{m,b}(\lambda))\nonumber\\ =& \min_{t \in \mathbb Z}\big\{ t \geq D_{m,i}(\beta_{m,b}(\lambda)): \gamma_m(\Delta_m(t)) \geq \beta_{m,b}(\lambda)\big\},
\end{align}
$D_{m,i}(\beta_{m,b}(\lambda))=S_{m,i}(\beta_{m,b}(\lambda))+T_{m,i}$ is the delivery time of the $i$-th feature submitted to the channel, $\Delta_m(t)=t-S_{m,i}(\beta_{m,b}(\lambda))+b$ is the AoI at time $t$,  $\gamma_m(\delta)$ is an index function, defined by
\begin{align}\label{gittinsM}
\gamma_m(\delta)=\inf_{\tau \in \{1, 2, \ldots\}} \frac{1}{\tau} \sum_{k=0}^{\tau-1} \mathbb E \left [w_m p_m(\delta+k+T_{m, 1}) \right],
\end{align}
and the threshold $\beta_{m,b}(\lambda)$ is the unique root of 
\begin{align}\label{bisectionM}
&\!\!\!\mathbb{E}\left[\sum_{t=D_{m,i}(\beta_{m,b}(\lambda))}^{D_{m, i+1}(\beta_{m,b}(\lambda))-1}  w_m p_m\big(\Delta_m(t)\big)\right]+\lambda~\mathbb E[T_{m, i}]\nonumber\\ 
&\!\!\!- \beta_{m,b}(\lambda)~ \mathbb{E}\left[D_{m, i+1}(\beta_{m,b}(\lambda))-D_{m,i}(\beta_{m,b}(\lambda))\right]=0.
\end{align}
Furthermore, $\beta_{m,b}(\lambda)$ is equal to the optimal objective value to \eqref{sub_decoupled_problem}, i.e., $\beta_{m,b}(\lambda)= \bar p_{m,b, opt}(\lambda)$.
\end{theorem}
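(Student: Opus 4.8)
The plan is to adapt the proof of Theorem~\ref{theorem5} (Appendix~\ref{MainResult}) to the present setting; the only structural change is the extra per-slot transmission cost $\lambda c_m(t)$ and the source-specific weight $w_m$, both of which turn out to affect only the threshold level and not the index function. First I would recognize \eqref{sub_decoupled_problem} as an infinite-horizon average-cost semi-Markov decision process (SMDP) in which the decision epochs are the delivery times $D_{m,i}$, the state is the AoI value $\delta=\Delta_m(D_{m,i})\geq 1$, and the action is the waiting time $\tau=S_{m,i+1}-D_{m,i}\in\{0,1,2,\ldots\}$ before the next feature is submitted. A stage has sojourn time $\tau+T_{m,i}$ and accumulated cost $\sum_{k=0}^{\tau+T_{m,i}-1}w_m p_m(\delta+k)+\lambda T_{m,i}$, because $c_m(t)=0$ during the $\tau$ waiting slots and $c_m(t)=1$ during the $T_{m,i}$ transmission slots. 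Since the data pre-processing discussed after Theorem~\ref{theorem5} makes $p_m$ bounded and $1\leq\mathbb E[T_{m,i}]<\infty$, the SMDP has bounded per-stage cost rate and a well-defined optimal average cost $\bar p_{m,b,opt}(\lambda)$.

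Next I would write the Bellman optimality equation for the relative value function $h_{m,b}(\cdot)$,
\begin{align}\label{plan_bellman}
h_{m,b}(\delta)=\inf_{\tau\in\{0,1,2,\ldots\}}\mathbb E\!\left[\sum_{k=0}^{\tau+T_{m,i}-1}\!\big(w_m p_m(\delta+k)-\bar p_{m,b,opt}(\lambda)\big)\right]+\lambda\,\mathbb E[T_{m,i}]+\mathbb E[h_{m,b}(T_{m,i}+b)],
\end{align}
and solve it directly. The crucial observation is that the terms $\lambda\,\mathbb E[T_{m,i}]$ and $\mathbb E[h_{m,b}(T_{m,i}+b)]$ do not depend on $\tau$, so the inner optimization over $\tau$ is governed solely by $\sum_{k=0}^{\tau+T_{m,i}-1}(w_m p_m(\delta+k)-\beta)$ with $\beta=\bar p_{m,b,opt}(\lambda)$. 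Using the restart-in-state (Gittins-type) argument of Appendix~\ref{MainResult}, I would show that it is optimal to keep waiting while $\gamma_m(\Delta_m(t))<\beta$ and to transmit as soon as $\gamma_m(\Delta_m(t))\geq\beta$, where $\gamma_m$ is exactly the index \eqref{gittinsM}; this gives the threshold rule \eqref{OptimalPolicy_fixed_length_fixed_buffer}. Note that $\gamma_m$ itself contains no $\lambda$, consistent with \eqref{gittinsM}: the dual price enters only through the threshold value.

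Finally I would close the loop on the threshold. Under the proposed policy the delivery times $\{D_{m,i}(\beta)\}$ form a renewal process, so the renewal-reward theorem expresses the policy's average cost as
\begin{align}\label{plan_rr}
\frac{\mathbb E\!\left[\sum_{t=D_{m,i}(\beta)}^{D_{m,i+1}(\beta)-1} w_m p_m(\Delta_m(t))\right]+\lambda\,\mathbb E[T_{m,i}]}{\mathbb E\!\left[D_{m,i+1}(\beta)-D_{m,i}(\beta)\right]},
\end{align}
and setting this ratio equal to $\beta$ is precisely \eqref{bisectionM}. Showing that \eqref{bisectionM} has a \emph{unique} root reduces, as in \cite{orneeTON2021}, to verifying that its left-hand side is continuous and strictly decreasing in $\beta$ (raising $\beta$ only lengthens the inter-delivery intervals), whence $\beta_{m,b}(\lambda)=\bar p_{m,b,opt}(\lambda)$; a standard verification argument for average-cost SMDPs then confirms optimality over all of $\Pi_m$. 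I expect the main obstacle to be the same one that is nontrivial in Theorem~\ref{theorem5}: rigorously justifying that the possibly non-monotonic index $\gamma_m$ characterizes the optimal stopping rule inside \eqref{plan_bellman} and that the candidate $h_{m,b}$ indeed satisfies the Bellman equation; by contrast, the $\lambda$-dependent bookkeeping is routine.
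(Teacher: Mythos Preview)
Your proposal is correct and follows essentially the same route as the paper: cast \eqref{sub_decoupled_problem} as an average-cost SMDP with decision epochs at delivery times, write the Bellman equation \eqref{plan_bellman} (which is exactly the paper's equation for $h_b(\delta)$ with the additive $\lambda\,\mathbb E[T_{m,i}]$ term), observe that this extra term is independent of $\tau$ so the optimal-stopping part and hence the index $\gamma_m$ are unchanged from Theorem~\ref{theorem5}, and then identify the threshold with the optimal average cost via the fixed-point equation \eqref{bisectionM}. The only cosmetic differences are that the paper derives the fixed-point equation by substituting $\delta=T_{m,i}+b$ back into the Bellman equation rather than invoking renewal--reward, and proves uniqueness by writing $J(\beta)$ as a pointwise infimum of affine strictly decreasing functions of $\beta$ (Lemma~\ref{jB}) rather than via your heuristic about inter-delivery intervals lengthening; you may want to adopt that cleaner concavity argument, since ``raising $\beta$ lengthens the cycle'' alone does not immediately give strict monotonicity of the full expression.
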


\begin{proof}
See Appendix \ref{MainResult}.
\end{proof}

Now we present an optimal solution to \eqref{decoupled_problem}.

\begin{theorem}\label{decoupled1}
If the conditions of Theorem \ref{decoupled} hold, then an optimal solution $(f^*_m(\lambda), g^*_m(\lambda))$ to \eqref{decoupled_problem} is determined by
\begin{itemize}
\item[(a)] $f^*_m(\lambda)=(b^*_m(\lambda), b^*_m(\lambda), \ldots)$, where 
\begin{align}\label{optimal_buffer_length_1}
b^*_m(\lambda)= \arg \min_{b \in \{0, 1, \ldots, B_m-1\}} \bar p_{m, b, opt}(\lambda),
\end{align}
and $\bar p_{m, b, opt}(\lambda)$ is the optimal objective value to \eqref{sub_decoupled_problem}. 
\item[(b)] $g^*_m(\lambda) = (S_{m,1}^*(\lambda),S_{m,2}^*(\lambda),\ldots)$, where 
\begin{align}\label{Optimal_Scheduler}
&S_{m, i+1}^*(\lambda)\nonumber\\
=&\min_{t \in \mathbb Z} \big\{ t \geq D_{m,i}^* : \gamma_m(\Delta_m(t)) \geq \bar p_{m, opt}(\lambda)\big\},
\end{align}
$D_{m,i}^*(\lambda) = S_{m,i}^*(\lambda) + T_{m,i}$,  $\gamma_m(\delta)$ is defined in \eqref{gittinsM}, and $\bar p_{m, opt}(\lambda)$ is the optimal objective value of \eqref{decoupled_problem}, given by 
\begin{align}\label{optimammain}
\bar p_{m,opt}(\lambda)= \min_{b \in \{0, 1, \ldots, B_m-1\}} \bar p_{m, b, opt}(\lambda).
\end{align}
\end{itemize}
\end{theorem}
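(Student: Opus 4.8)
The plan is to prove Theorem~\ref{decoupled1} by the same argument that takes Theorem~\ref{theorem5} to Theorem~\ref{theorem6}, now carrying the extra per-slot channel cost $\lambda c_m(t)$. Problem~\eqref{decoupled_problem} is an infinite-horizon average-cost semi-Markov decision process (SMDP) whose decision epochs are the delivery times $D_{m,i}$; the state at such an epoch is the AoI $\delta=\Delta_m(D_{m,i})$, and an action is a pair $(\tau,b')$ with $\tau=S_{m,i+1}-D_{m,i}\ge 0$ the waiting time and $b'=b_{m,i+1}\in\{0,\dots,B_m-1\}$ the buffer position used for the next feature. Over the ensuing cycle of length $\tau+T_{m,i+1}$, the AoI runs through $\delta,\delta+1,\dots,\delta+\tau+T_{m,i+1}-1$, the channel is occupied for exactly $T_{m,i+1}$ slots, and the next state is $T_{m,i+1}+b'$. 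Hence the Bellman optimality equation is
\begin{align*}
h_m(\delta)=\min_{b'\in\{0,\dots,B_m-1\}}\ \inf_{\tau\ge 0}\ \mathbb E\Big[\sum_{k=0}^{\tau+T_{m,1}-1}\big(w_m p_m(\delta+k)-\theta_m\big)+\lambda T_{m,1}+h_m(T_{m,1}+b')\Big],
\end{align*}
where $h_m(\cdot)$ is the relative value function and $\theta_m=\bar p_{m,opt}(\lambda)$ is the optimal average cost of \eqref{decoupled_problem}.

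The structural observation that yields both parts (a) and (b) at once is that the minimizations over $b'$ and over $\tau$ decouple: the only $b'$-dependent term $\mathbb E[h_m(T_{m,1}+b')]$, together with $\lambda\,\mathbb E[T_{m,1}]$, is independent of $\tau$ and of $\delta$. Therefore $h_m(\delta)$ equals $\Psi_m(\delta;\theta_m):=\inf_{\tau\ge 0}\mathbb E\big[\sum_{k=0}^{\tau+T_{m,1}-1}(w_m p_m(\delta+k)-\theta_m)\big]$ plus a $\delta$-independent constant, so the optimal buffer position is the \emph{state-independent} minimizer $b^*_m(\lambda)=\arg\min_{b'}\mathbb E[\Psi_m(T_{m,1}+b';\theta_m)]$, which makes a constant-buffer policy $f^*_m(\lambda)=(b^*_m(\lambda),b^*_m(\lambda),\dots)$ optimal. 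With $b'$ frozen at $b^*_m(\lambda)$, the inner $\inf_\tau$ is exactly the problem solved in the proof of Theorem~\ref{decoupled}; applying the restart-in-state / Gittins-index argument used there shows that it is optimal to transmit as soon as $\gamma_m(\Delta_m(t))\ge\theta_m$, with $\gamma_m$ as in~\eqref{gittinsM}, which is the policy in part~(b).

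It remains to identify the constants. Substituting $h_m=\Psi_m(\cdot;\theta_m)+C$ back into the Bellman equation collapses it to the scalar fixed-point equation $0=\lambda\,\mathbb E[T_{m,1}]+\min_{b'}\mathbb E[\Psi_m(T_{m,1}+b';\theta_m)]$, which has a unique root in $\theta_m$; comparing this equation term by term with the per-$b$ root condition~\eqref{bisectionM} of Theorem~\ref{decoupled} gives $\theta_m=\min_b\beta_{m,b}(\lambda)=\min_b\bar p_{m,b,opt}(\lambda)$, proving \eqref{optimammain}, and simultaneously identifies $b^*_m(\lambda)$ with the minimizer in~\eqref{optimal_buffer_length_1}. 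A verification argument for average-cost SMDPs with countable state space then confirms that the stationary policy $(f^*_m(\lambda),g^*_m(\lambda))$ so constructed attains $\bar p_{m,opt}(\lambda)$ and is optimal over all of $\Pi_m$.

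The main obstacle is not the decoupling itself but the SMDP technicalities underneath it: showing that the Bellman optimality equation admits a solution $(h_m,\theta_m)$ with $h_m$ of at most linear growth, and running the verification theorem to exclude better non-stationary or randomized policies. This is precisely where one must invoke the boundedness $|p_m(\delta)|\le M$ and the moment bound $1\le\mathbb E[T_{m,1}]<\infty$, and where the argument can largely reuse the single-source analysis in Appendix~\ref{MainResult} underlying Theorems~\ref{theorem5}--\ref{theorem6}, with the only new ingredient being the observation that the additive $\lambda T_{m,1}$ term shifts $\theta_m$ but leaves the index-threshold structure intact.
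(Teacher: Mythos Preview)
Your proposal is correct and follows essentially the same route as the paper: formulate \eqref{decoupled_problem} as an average-cost SMDP with decision epochs at delivery times, write the Bellman equation, observe that the $b'$-dependent term $\mathbb E[h_m(T_{m,1}+b')]$ is independent of both $\tau$ and $\delta$ so that the optimal buffer position is state-independent, reduce to the fixed-$b$ problem of Theorem~\ref{decoupled}, and then identify $\theta_m=\min_b\bar p_{m,b,opt}(\lambda)$ via the collapsed scalar fixed-point equation together with the strict monotonicity of $\beta\mapsto\mathbb E[\Psi_m(T_{m,1}+b;\beta)]$ established in Lemma~\ref{jB}. The only cosmetic difference is that the paper in Appendix~\ref{MainResult} actually proves the $\lambda$-augmented problem first and obtains Theorems~\ref{theorem5}--\ref{theorem6} as the special case $\lambda=0$, whereas you frame it as carrying the extra $\lambda c_m(t)$ term through the single-source proof; the underlying computation is identical.
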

\ifreport
\begin{proof}
See Appendix \ref{MainResult}.
\end{proof}
\else
\fi

\subsubsection{Solution to \eqref{dualProblem}} Next, we solve the dual problem \eqref{dualProblem}. Let $c_{0,\lambda}^*(t)\in \{0,1,\ldots, N\}$ be the number of dummy bandits activated in time slot $t$ in the optimal solution to \eqref{dummyproblem} and let $c_{m,\lambda}^*(t) \in \{0,1\}$ denote whether source $m$ is under service in time slot $t$ in the optimal solution to \eqref{decoupled_problem}. The dual problem \eqref{dualProblem} is solved by the following stochastic sub-gradient algorithm: 
\begin{align}\label{eq_dual}
\lambda_{k+1} =  \lambda_{k} + \frac{\alpha}{k}
\left\{\frac{1}{T}\sum_{m=0}^M \sum_{t=0}^{T-1} c_{m,\lambda_k}^*(t) - N\right\},
\end{align}
where $\alpha/k > 0$ is the step size and $T > 1$ is a sufficient large integer. In the $k$-th iteration, let $\lambda=\lambda_k$ and run the optimal solution to \eqref{lagrangianproblem} for $T$ time slots, then execute the dual update \eqref{eq_dual}. 

\subsection{A Scheduling Policy for the Original Problem \eqref{Multi-scheduling_problem}-\eqref{Sconstarint1}}
Now, we develop a scheduling policy for the original multi-action RMAB problem \eqref{Multi-scheduling_problem}-\eqref{Sconstarint1}. The proposed policy contains two parts: (i) a Whittle index policy is used to determine which sources to schedule, and (ii) a $\lambda^*$-based selection-from-buffer policy is employed to determine which features to choose from the buffers of the scheduled sources.  

\subsubsection{Whittle Index-based Source Scheduling Policy}
The Whittle index theory only applies to RMAB problems that are indexable \cite{whittle1988restless}.
Hence, we first establish the indexibility of problem \eqref{Multi-scheduling_problem1}-\eqref{Sconstraint2}. Define $\Omega_m(\lambda)$ as the set of all states $(\delta, d)$ such that if $\Delta_m(t)=\delta$ and $d_m(t)=d$, then the optimal solution for \eqref{decoupled_problem} is to take the passive action at time $t$.
 
\begin{definition}[\textbf{Indexability}]\label{indexability}\cite{verloop2016asymptotically}
Bandit $m$ is said to be
indexable if, as the cost $\lambda$ increases from $-\infty$ to $\infty$, the set $\Omega_m(\lambda)$ increases monotonically, i.e., $\lambda_1 \leq \lambda_2$
implies $\Omega_m(\lambda_1) \subseteq \Omega_m(\lambda_2)$.
The RMAB problem \eqref{Multi-scheduling_problem1}-\eqref{Sconstraint2} is said to be indexable if all $(M+1)$ bandits are indexable.
\end{definition}


\begin{theorem}\label{theorem7}
The RMAB problem \eqref{Multi-scheduling_problem1}-\eqref{Sconstraint2} is indexable.
\end{theorem}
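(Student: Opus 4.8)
The plan is to verify indexability separately for the $N$ dummy bandits and the $M$ original bandits, since by Definition~\ref{indexability} the RMAB \eqref{Multi-scheduling_problem1}-\eqref{Sconstraint2} is indexable once every one of the $M+1$ bandits is. For a dummy bandit the decoupled problem is \eqref{dummyproblem}, and the theorem stated immediately after it shows that the optimal action is passive in every state when $\lambda>0$ and active in every state when $\lambda\le 0$. Hence $\Omega_0(\lambda)$ is the empty set for $\lambda\le 0$ and the entire state space for $\lambda>0$; in particular $\lambda_1\le\lambda_2$ implies $\Omega_0(\lambda_1)\subseteq\Omega_0(\lambda_2)$, so each dummy bandit is indexable.

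For an original bandit $m$, I would partition its state space $\{(\delta,d)\}$ into in-service states $\{(\delta,d):d\ge 1\}$ and idle states $\{(\delta,0)\}$. Because transmissions are non-preemptive, the only feasible action in an in-service state is the passive one, so every such state lies in $\Omega_m(\lambda)$ for all $\lambda$ and this part of the passive set does not depend on $\lambda$. For the idle states, Theorem~\ref{decoupled1} (together with the Bellman/relative-value equation behind Theorems~\ref{decoupled}-\ref{decoupled1}) shows that an optimal stationary policy for the decoupled problem \eqref{decoupled_problem} activates an idle state of AoI $\delta$ exactly when $\gamma_m(\delta)\ge\bar p_{m,opt}(\lambda)$, where the index $\gamma_m(\cdot)$ from \eqref{gittinsM} is a fixed function that does not depend on $\lambda$ and where the buffer-position optimization in \eqref{optimal_buffer_length_1} affects only which feature is sent, not the activation rule. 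Placing the boundary case $\gamma_m(\delta)=\bar p_{m,opt}(\lambda)$ in the passive set (both actions are optimal there, and either convention leaves monotonicity unaffected), this yields
\begin{align}\label{eq_passiveset_thm7}
\Omega_m(\lambda)=\{(\delta,d): d\ge 1\}\ \cup\ \{(\delta,0): \gamma_m(\delta)\le \bar p_{m,opt}(\lambda)\}.
\end{align}

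It then remains to show that the threshold $\bar p_{m,opt}(\lambda)=\min_{b\in\{0,\dots,B_m-1\}}\bar p_{m,b,opt}(\lambda)$ is non-decreasing in $\lambda$. For each fixed $b$ and each policy $(f_{m,b},g_m)\in\Pi_m$, the objective in \eqref{sub_decoupled_problem} has the form $J_{\pi}+\lambda\,\bar c_{\pi}$ with $\bar c_{\pi}=\limsup_T\mathbb{E}[\tfrac1T\sum_{t=0}^{T-1}c_m(t)]\in[0,1]$, which is nonnegative; hence $\lambda\mapsto\bar p_{m,b,opt}(\lambda)$ is a pointwise infimum of affine non-decreasing functions of $\lambda$, so it is non-decreasing (and concave), and the minimum over the finitely many $b$ is again non-decreasing. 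Consequently, as $\lambda$ increases the sublevel set $\{\delta:\gamma_m(\delta)\le\bar p_{m,opt}(\lambda)\}$ of the fixed function $\gamma_m$ can only enlarge — crucially this holds even when $\gamma_m$ is non-monotone, since a sublevel set of a fixed function grows monotonically in its level — so by \eqref{eq_passiveset_thm7} $\Omega_m(\lambda_1)\subseteq\Omega_m(\lambda_2)$ whenever $\lambda_1\le\lambda_2$. Thus every original bandit is indexable, and with the dummy bandits this proves the claim.

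I expect the main obstacle to be making the characterization \eqref{eq_passiveset_thm7} fully rigorous: Theorem~\ref{decoupled1} exhibits \emph{one} optimal policy as a threshold rule on $\gamma_m$, but to identify $\Omega_m(\lambda)$ one must argue that passive is optimal at an idle state $(\delta,0)$ \emph{if and only if} $\gamma_m(\delta)\le\bar p_{m,opt}(\lambda)$, i.e., that no competing optimal policy activates a different set of idle states, and that the joint optimization over the buffer position $b$ does not feed back into the idle-state activation decision. This is settled by returning to the relative-value function of the SMDP used to prove Theorems~\ref{decoupled}-\ref{decoupled1} and showing the $Q$-value gap between the active and passive actions at $(\delta,0)$ is a monotone function of $\gamma_m(\delta)-\bar p_{m,opt}(\lambda)$ — the same computation that produces the threshold form there. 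The monotonicity of $\bar p_{m,opt}(\lambda)$ in $\lambda$ and the monotonicity of sublevel sets are then routine.
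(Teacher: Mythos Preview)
Your proposal is correct and follows essentially the same route as the paper: characterize $\Omega_m(\lambda)$ via the threshold rule $\gamma_m(\delta)$ versus $\bar p_{m,opt}(\lambda)$ on idle states (with in-service states always passive), then show the threshold $\bar p_{m,opt}(\lambda)$ is monotone in $\lambda$, and handle the dummy bandits trivially. The only notable difference is in the monotonicity step: the paper establishes that $\bar p_{m,opt}(\lambda)$ is \emph{strictly} increasing by writing it as the root of $J_{m,1}(\beta)+\lambda\,\mathbb{E}[T_{m,1}]=0$ (from Theorems~\ref{decoupled}--\ref{decoupled1} and Lemma~\ref{jB}) and inverting the strictly decreasing $J_{m,1}$, whereas you use the more elementary observation that the per-policy objective is non-decreasing in $\lambda$ (since $c_m(t)\ge 0$) and take an infimum; your route yields only non-strict monotonicity but that is all indexability requires, and it sidesteps the root-and-inverse machinery. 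One small caution: your phrasing ``$J_\pi+\lambda\bar c_\pi$'' presumes the $\limsup$ splits additively, which need not hold in general; the monotonicity you need follows directly from $c_m(t)\ge 0$ without that decomposition, so just argue it that way.
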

\ifreport
\begin{proof}
See Appendix \ref{ptheorem7}.
\end{proof}
\else
\fi
\begin{definition}[\textbf{Whittle index}]\cite{verloop2016asymptotically}
Given indexability, the Whittle index $W_{m}(\delta, d)$ of bandit $m$ at state $(\delta, d)$ is 
\begin{align}\label{defWhittle}
W_{m}(\delta, d) = \inf \{\lambda \in\mathbb R: (\delta, d) \in \Omega_{m}(\lambda)\}. 
\end{align}
\end{definition}

\begin{lemma}\label{dummyWhittle}
The Whittle index of the dummy bandits is 0, i.e., $W_0(\delta, 0)=0$ for any $\delta$.
\end{lemma}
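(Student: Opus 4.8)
The plan is to show directly from the definitions that $W_0(\delta,0)=0$, which amounts to verifying that for the dummy bandit the optimal action is passive whenever $\lambda\ge 0$ and active whenever $\lambda<0$. First I would recall that the dummy bandit has zero age penalty $p_0(\cdot)\equiv 0$, so the decoupled problem \eqref{dummyproblem} for the dummy bandits is simply $\inf_{\pi_0\in\Pi_0}\limsup_{T\to\infty}\mathbb E_{\pi_0}[\frac1T\sum_{t=0}^{T-1}\lambda c_0(t)]$. By the theorem preceding this lemma, the optimal solution to \eqref{dummyproblem} is $c_0^*(t)=0$ for all $t$ when $\lambda>0$ and $c_0^*(t)=N$ for all $t$ when $\lambda\le0$. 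In other words, the passive action (not activating the dummy bandit) is optimal precisely when $\lambda>0$, and the active action is optimal when $\lambda\le 0$; at $\lambda=0$ both actions are optimal.

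Next I would translate this into a statement about the passive set $\Omega_0(\lambda)$. Since the only state of a dummy bandit is $(\delta,0)$ (a dummy bandit is never "in service" in the sense of $d_0(t)>0$; it simply occupies a channel for a single slot when activated, so its state is trivial), we have $(\delta,0)\in\Omega_0(\lambda)$ if and only if the passive action is optimal at cost $\lambda$, i.e. if and only if $\lambda\ge 0$. (Here I include the boundary $\lambda=0$ in the passive set, which is consistent with the tie-breaking convention used to establish indexability in Theorem \ref{theorem7}; this makes $\Omega_0(\lambda)$ monotone nondecreasing in $\lambda$, as required by indexability.) Then by the definition of the Whittle index,
\begin{align}
W_0(\delta,0)=\inf\{\lambda\in\mathbb R:(\delta,0)\in\Omega_0(\lambda)\}=\inf\{\lambda\in\mathbb R:\lambda\ge 0\}=0,
\end{align}
which is the claimed identity.

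The only subtlety — and the place I would be most careful — is the tie-breaking at $\lambda=0$ and confirming that the state space of the dummy bandit really is just $\{(\delta,0)\}$, so that no "in-service" states with $d>0$ need to be handled. Since each dummy-bandit activation lasts exactly one slot (it occupies a channel only at the slot it is activated), $d_0(t)=0$ always, and there is no ambiguity. With that observation, the argument is essentially a one-line consequence of the preceding theorem on \eqref{dummyproblem} together with the definition of the Whittle index, so there is no real obstacle; the proof is short and I would present it as above, referring back to the theorem on the dummy-bandit subproblem and to Definition \ref{indexability} and the Whittle index definition.
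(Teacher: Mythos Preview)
Your proposal is correct and is precisely the natural argument implied by the paper, which states Lemma~\ref{dummyWhittle} without an explicit proof; the reasoning follows immediately from the preceding theorem on \eqref{dummyproblem} and the definition of the Whittle index, exactly as you outline.
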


\begin{theorem}\label{theorem8}
The following assertions are true for the Whittle index $W_m(\delta, d)$ of bandit $m$ for $m=1,2,\ldots,M$: 

(a) If $d=0$, then for $m=1, \ldots, M$,
\begin{align}\label{WhittleIndex}
W_m(\delta, 0)=\max_{b \in \mathbb Z: 0 \leq b \leq B_m-1} W_{m, b}(\delta, 0),
\end{align}
where 
\begin{align}\label{Whittle_Index}
&W_{m, b}(\delta, 0)\nonumber\\
=&\frac{1}{\mathbb E[T_{l,1}]}~\mathbb{E}\bigg[D_{m, i+1}(\gamma_m(\delta))-D_{m, i}(\gamma_m(\delta))\bigg]~\gamma_m(\delta) \nonumber\\
&-\frac{1}{\mathbb{E}[T_{l,1}]}\mathbb E\left[\sum_{t=D_{m,i}(\gamma_m(\delta))}^{D_{m, i+1}(\gamma_m(\delta))-1}  w_m p_m(\Delta_{m}(t))\right],
\end{align}
$\Delta_m(t)=t-S_{m, i}(\gamma_m(\delta))+b$, $\gamma_m(\delta)$ is defined in \eqref{gittinsM}, $D_{m,i+1}(\gamma_m(\delta))= S_{m, i+1}(\gamma_m(\delta))+T_{m,i}$, and $S_{m, i+1}(\gamma_m(\delta))$ is given by
\begin{align}\label{OptimalPolicyWhittle}
&S_{m, i+1}(\gamma_m(\delta))\nonumber\\ =& \min_{t \in \mathbb Z}\big\{ t \geq D_{m,i}(\gamma_m(\delta)): \gamma_m(\Delta_m(t)) \geq \gamma_m(\delta)\big\}.
\end{align}
(b) If $d>0$, then $W_{m}(\delta, d)=-\infty$ for $m=1, \ldots, M$.
\end{theorem}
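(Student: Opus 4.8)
The plan is to compute the Whittle index directly from its definition \eqref{defWhittle}, namely $W_m(\delta,d) = \inf\{\lambda : (\delta,d)\in\Omega_m(\lambda)\}$, by first characterizing the passive set $\Omega_m(\lambda)$ using the structure of the optimal policy for the decoupled problem \eqref{decoupled_problem} established in Theorem~\ref{decoupled1}. Recall that Theorem~\ref{decoupled1} tells us the optimal policy for bandit $m$ at cost $\lambda$ fixes a buffer position $b^*_m(\lambda)$ and transmits whenever $\gamma_m(\Delta_m(t))\ge\bar p_{m,\mathrm{opt}}(\lambda)$, and moreover that for each fixed buffer position $b$, Theorem~\ref{decoupled} gives the threshold $\beta_{m,b}(\lambda)=\bar p_{m,b,\mathrm{opt}}(\lambda)$ characterized by the root equation \eqref{bisectionM}. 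The key first step is therefore to argue that a state $(\delta,0)$ lies in $\Omega_m(\lambda)$ — i.e., the optimal action is to stay passive — if and only if $\gamma_m(\delta) < \bar p_{m,\mathrm{opt}}(\lambda) = \min_b \beta_{m,b}(\lambda)$. Equivalently, $(\delta,0)\notin\Omega_m(\lambda)$ iff $\gamma_m(\delta)\ge\beta_{m,b}(\lambda)$ for some $b$.

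Second, I would pin down the monotone dependence of $\beta_{m,b}(\lambda)$ on $\lambda$. From the root equation \eqref{bisectionM}, since the term $\lambda\,\mathbb E[T_{m,i}]$ enters additively and $\mathbb E[T_{m,i}]\ge 1>0$, a standard implicit-function / monotonicity argument (the left side of \eqref{bisectionM} is strictly increasing in $\lambda$ for fixed $\beta$, and strictly decreasing in $\beta$) shows that $\beta_{m,b}(\lambda)$ is continuous and strictly increasing in $\lambda$, hence invertible. Therefore for each $b$, the condition "$\gamma_m(\delta)\ge\beta_{m,b}(\lambda)$'' is equivalent to "$\lambda \le \beta_{m,b}^{-1}(\gamma_m(\delta))$'', and the infimum over $\lambda$ for which $(\delta,0)$ becomes passive — i.e.\ the Whittle index — is exactly $\max_b \beta_{m,b}^{-1}(\gamma_m(\delta))$. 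Substituting $\lambda = \beta_{m,b}^{-1}(\gamma_m(\delta))$, i.e.\ the value of $\lambda$ making $\beta_{m,b}(\lambda)=\gamma_m(\delta)$, back into the root equation \eqref{bisectionM} and solving for $\lambda$ yields precisely the closed form \eqref{Whittle_Index}: one reads off that $W_{m,b}(\delta,0)$ equals $\gamma_m(\delta)$ times the expected renewal length minus the expected accumulated weighted penalty over that renewal cycle, all divided by $\mathbb E[T_{l,1}]$, where the renewal cycle is generated by the threshold policy \eqref{OptimalPolicyWhittle} that sends again as soon as $\gamma_m(\Delta_m(t))\ge\gamma_m(\delta)$. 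Then part~(a), $W_m(\delta,0)=\max_b W_{m,b}(\delta,0)$, follows from the $\max_b$ appearing via $\bar p_{m,\mathrm{opt}}(\lambda)=\min_b \beta_{m,b}(\lambda)$ and the order-reversal when inverting. For part~(b), when $d>0$ the transmission is non-preemptive, so the channel is already committed and the "active'' action is unavailable; hence $(\delta,d)\in\Omega_m(\lambda)$ for every $\lambda\in\mathbb R$ (the passive action is forced), and the infimum defining $W_m(\delta,d)$ is over all of $\mathbb R$, giving $-\infty$.

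The main obstacle I anticipate is rigorously justifying the "if and only if'' characterization of $\Omega_m(\lambda)$ in terms of the index $\gamma_m$ — in particular showing that the threshold structure of Theorem~\ref{decoupled1} is not merely \emph{an} optimal policy but that \emph{every} optimal policy agrees on whether state $(\delta,0)$ is active, so that $\Omega_m(\lambda)$ is well-defined as used in the indexability definition. This requires care because $\gamma_m(\cdot)$ need not be monotone, so the passive set is not simply an interval of AoI values; one must instead work at the level of the index values and invoke uniqueness of the root $\beta_{m,b}(\lambda)$ in \eqref{bisectionM} together with the SMDP Bellman-equation argument underlying Theorems~\ref{decoupled}--\ref{decoupled1}. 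A secondary technical point is handling the edge cases where $\gamma_m(\delta)$ is attained by a degenerate $\tau$ or where multiple buffer positions tie, but these do not affect the stated formula. I would also double-check the indexing in \eqref{Whittle_Index}: the notation $\mathbb E[T_{l,1}]$ and the role of $b$ inside $\Delta_m(t)=t-S_{m,i}(\gamma_m(\delta))+b$ suggest that \eqref{Whittle_Index} is really the per-$b$ quantity whose maximum over $b$ gives \eqref{WhittleIndex}, consistent with the plan above.
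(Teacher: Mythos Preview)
Your proposal is correct and follows essentially the same route as the paper: characterize $\Omega_m(\lambda)$ via the threshold condition $\gamma_m(\delta)<\bar p_{m,\mathrm{opt}}(\lambda)=\min_b\bar p_{m,b,\mathrm{opt}}(\lambda)$ (this is exactly \eqref{passive-set}, established in the indexability proof), use strict monotonicity of $\bar p_{m,b,\mathrm{opt}}(\lambda)$ in $\lambda$ to invert, and then substitute $\bar p_{m,b,\mathrm{opt}}(W_{m,b}(\delta,0))=\gamma_m(\delta)$ into the renewal-ratio expression \eqref{pbm} to read off \eqref{Whittle_Index}. Your concern about uniqueness of the optimal action at each state is reasonable but the paper sidesteps it by taking \eqref{passive-set} as the working definition of $\Omega_m(\lambda)$; your remaining observations (the $\min_b\leftrightarrow\max_b$ order reversal, and the $d>0$ case via non-preemption) match the paper exactly.
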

\ifreport
\begin{proof}
See Appendix \ref{ptheorem8}.
\end{proof}
\else
\fi

\begin{algorithm}[t]
\caption{Scheduling Policy for Multi-source, Multi-channel Inference Error Minimization \eqref{Multi-scheduling_problem}-\eqref{Sconstarint1}}\label{alg:Whittle}
\begin{algorithmic}[1]
{\State Initialize $t=0$
\State Input the optimal dual variable $\lambda^*$ of the problem \eqref{dualProblem}.
\For{$t=0, 1, \ldots$}
\State Update $\Delta_m(t)$ and $d_m(t)$ for all $m$.
\State  $W_m \gets W_{m}(\Delta_m(t), d_m(t))$ for all $m$.
      \For{all channel $n = 1, 2, \ldots, N$}
      \If{channel $n$ is idle and $\max_l W_l>0$} 
      \State Schedule source $m = \arg\max_l W_l$.
       \State Send the feature from the position   $b^*_{m}(\lambda^*)$ in \State source $m$'s buffer.
       \State $W_m \gets - \infty$. 
         
      \EndIf
      \EndFor
      \State $t \gets t+1$. 
      \EndFor}
\end{algorithmic}
\end{algorithm}

Theorem \ref{theorem8} presents an analytical expression of the Whittle index of bandit $m$ for $m=1, 2, \ldots, M$. If no feature of source $m$ is being served by a channel at time slot $t$ such that $d_m(t)=0$, then the Whittle index of bandit $m$ at time slot $t$ is determined by \eqref{WhittleIndex}. Otherwise, if source $m$ is being served by a channel at time slot $t$ such that $d_m(t) > 0$, then the Whittle index of bandit $m$ at time $t$ is $-\infty$.

In the special case that (i) the AoI function $p_m(\cdot)$ is non-decreasing and (ii) $T_{m,i} = 1$, it holds that $\gamma_m(\delta) = w_mp_m(\delta+1)$ and for  $\delta=1, 2, \ldots,$
\begin{align}
W_{m}(\delta, 0) = w_m\left[\delta~p_m(\delta+1)-\sum_{k=1}^{\delta} p_m(k)\right].
\end{align} By this, the Whittle index in Section IV of \cite[Equation (7)]{Tripathi2019} is recovered from Theorem \ref{theorem8}. 

Let $A(t)$ denote the number of available channel at the beginning of time slot $t$, where $A(t) \leq N$. Then, $A(t)$ bandits with the highest Whittle index are activated at any time slot $t$. As stated in Lemma \ref{dummyWhittle}, all $N$ dummy bandits have Whittle index of $W_0(\Delta_0(t), d_0(t))=0$. Consequently, if a bandit $m$ (for $m = 1,2,...,M$) possesses a negative Whittle index, denoted as $W_m(\Delta_m(t), d_m(t))<0$, it will remain inactive. 

Now, we return to the original RMAB \eqref{Multi-scheduling_problem}-\eqref{Sconstarint1} with no dummy bandits. As illustrated in Algorithm \ref{alg:Whittle}, 
if channel $n$ is idle, then source $m$ with highest non-negative Whittle index is activated.

\subsubsection{$\lambda^*$-based Selection-from-Buffer Policy}\label{DualityAction}
{When the $m$-th bandit is activated, our policy in Algorithm \ref{alg:Whittle} sends the feature from the buffer position $b^*_m(\lambda^*)$, determined by  
\begin{align}\label{optimal_buffer_length_2}
b^*_m(\lambda^*)= \arg \min_{b \in \{0, 1, \ldots, B_m-1\}} \bar p_{m, b, opt}(\lambda^*),
\end{align}
where $\lambda^*$ is the optimal solution to \eqref{dualProblem} and $\bar p_{m, b, opt}(\lambda)$ is the optimal objective value of \eqref{sub_decoupled_problem}.} 


{\subsection{Asymptotic Optimality of the Proposed Scheduling Policy}


Let $\pi_{\mathrm{our}}$ denote the scheduling policy outlined in Algorithm \ref{alg:Whittle}. Now, we demonstrate that $\pi_{\mathrm{our}}$ is asymptotically optimal. 


\begin{definition}[\textbf{Asymptotically optimal}]\label{defAsymptotically optimal}\cite{verloop2016asymptotically, gast2021lp}
Initially, we have $N$ channels and $M$ bandits. Let $\bar{p}_{\pi}^{r}$ represent the expected long-term average inference error under policy $\pi$, where both the number of channels $rN$ and the number of bandits $rM$ are scaled by $r$.
The policy $\pi_{\mathrm{our}}$ will be asymptotically optimal if $\bar p_{\pi_{\mathrm{our}}}^{r} \leq \bar p_{\pi}^{r}$ for all $\pi \in \Pi$ as $r$ approaches $\infty$, while maintaining a constant ratio $\alpha=\frac{rM}{rN}$. 
\end{definition}

We denote by $V^m_{\delta, d}(t)$ the fraction of class $m$ bandits with $\Delta_m(t)=\delta$ and $d_m(t)=d$. Also, define variables $v^m_{\delta, d}$ for all $\delta, d$, and $m$ as follows:
\begin{align}
 v^m_{\delta, d}=\limsup_{T\rightarrow \infty}\sum_{t=0}^{T-1} \frac{1}{T} \mathbb E [ V^m_{\delta, d}(t)].
\end{align}

If the transmission times $T_{m, i}$ are bounded for all $m$ and $i$, then we can find a $d_{\text{bound}} \in \mathbb N$ such that $0< T_{m, i} \leq d_{\text{bound}}$ for all $m$ and $i$. Then, the amount of time $d_m(t)$ that has been spent to send the current feature of source $m$ by time slot $t$ is also bounded, i.e., $d_m(t)\in \{0, 1, \ldots, d_{\text{bound}}\}.$ Also, we observe from Figs. \ref{fig:learning}-\ref{fig:Trainingcsi} that the inference error function $p_m(\delta)$ converges after a large AoI value. We can find an AoI $\delta_{\text{bound}}$ such that $p_m(\delta)=p_m(\delta_{\text{bound}})$ for all $\delta \geq \delta_{\text{bound}}$ and for all $m$. Therefore, we can analyze using a truncated state space $(\delta, d) \in \{1, 2, \ldots, \delta_{\text{bound}}\} \times \{0, 1, \ldots, d_{\text{bound}}\}$.

We further define the vectors $\mathbf{V}^m(t)$ and  $\mathbf{v}^m$ to contain $V^m_{\delta, d}$
and $v^m_{\delta, d}$, respectively, for all $\delta=1, 2, \ldots, \delta_{\mathrm{bound}}$, and $d=0, 1, \ldots, d_{\mathrm{bound}}$. Here, we consider a truncated state space. 

Now, we provide a uniform global attractor condition. For a policy $\pi$, we can have a mapping $\Psi_{\pi}$ of the state transitions, given by
\begin{align}
&\!\!\!\!\Psi_{\pi}((\mathbf{v}^m)_{m=1}^M)=\nonumber\\
&\!\!\!\!\mathbb{E}_{\pi}[(\mathbf{V}^m(t))_{m=1}^M(t+1)|(\mathbf{V}^m(t))_{m=1}^M(t)=(\mathbf{v}^m)_{m=1}^M].
\end{align}
We define the $t$-th iteration of the maps $\Psi_{\pi, t\geq 0}(\cdot)$ as $\Psi_{\pi, 0}((\mathbf{v}^m)_{m=1}^M)=(\mathbf{v}^m)_{m=1}^M$ and $\Psi_{\pi, t+1}((\mathbf{v}^m)_{m=1}^M)=\Psi_{\pi}(\Psi_{\pi, t}((\mathbf{v}^m)_{m=1}^M))$.

\begin{definition}\cite{gast2021lp} \label{def2}
\textbf{Uniform Global attractor.} An equilibrium point $(\mathbf v^{m*})_{m=1}^M$ associated with the optimal solution of \eqref{relaxed_multiple}-\eqref{Sconstraint3} is a uniform global attractor of $\Psi_{\pi, t\geq 0}(\cdot)$, i.e., for all $\epsilon>0$, there exists $T(\epsilon)>0$ such that for all $t \geq T(\epsilon)$ and for all $(\mathbf v^{m})_{m=1}^M$, one has $\|\Psi_{\pi,t}((\mathbf v^{m})_{m=1}^M)-(\mathbf v^{m*})_{m=1}^M)\|_{1}\leq \epsilon$.
\end{definition}

\begin{theorem}\label{asymptotic_optimal}
If the feature transmission times $T_{m, i}$ are bounded for all $m$ and $i$, then under the uniform global attractor condition provided in Definition \ref{def2}, $\pi_{\mathrm{our}}$ is asymptotically optimal.
\end{theorem}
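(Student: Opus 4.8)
The plan is to recognize $\pi_{\mathrm{our}}$ as an \emph{LP-priority policy} in the sense of \cite{verloop2016asymptotically, gast2021lp} and then invoke their asymptotic-optimality theorem, whose only non-automatic hypothesis — the uniform global attractor — is exactly what Theorem \ref{asymptotic_optimal} assumes. First I would recast the relaxed problem \eqref{relaxed_multiple}–\eqref{Sconstraint3} as a linear program over the occupation measures $\mathbf v^m=(v^m_{\delta,d})$. Since the transmission times are bounded and each $p_m(\cdot)$ saturates beyond $\delta_{\mathrm{bound}}$, the per-class state space may be truncated to $\{1,\dots,\delta_{\mathrm{bound}}\}\times\{0,\dots,d_{\mathrm{bound}}\}$ with no loss, so this LP is finite-dimensional; its constraints are the balance equations of the per-class SMDP of Theorem \ref{decoupled} together with the single budget constraint coming from \eqref{Changed_constraint}, and its objective is $\sum_m w_m\sum_{\delta,d}p_m(\delta)\,v^m_{\delta,d}$. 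The $N$ dummy bandits guarantee feasibility and turn the budget into an \emph{equality}, which is what makes its optimal occupation measure an interior equilibrium amenable to the attractor argument. By weak duality the LP value equals $q(\lambda^*)$ with $\lambda^*$ the maximizer in \eqref{dualProblem}, and it lower-bounds the optimal value of \eqref{Multi-scheduling_problem}–\eqref{Sconstarint1} at every scaling $r$.

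The step I expect to be the main obstacle is handling the \emph{multi-action} structure: each bandit has $B_m$ activation actions (which buffer position to send), so the classical binary-action index machinery does not apply verbatim. Here I would use Theorems \ref{decoupled}–\ref{decoupled1}: at the fixed dual price $\lambda=\lambda^*$, the per-bandit problem \eqref{decoupled_problem} is solved by committing to a \emph{single} buffer position $b^*_m(\lambda^*)$ and then following the index-threshold rule of \eqref{Optimal_Scheduler}. Hence the relaxed LP admits an optimal vertex supported only on the passive action and the single "active-with-$b^*_m(\lambda^*)$" action per class, i.e., once the buffer coordinate is frozen to $b^*_m(\lambda^*)$ the multi-action LP collapses to a binary-action LP. This is precisely the action set $\pi_{\mathrm{our}}$ uses when it activates a source, so it suffices to prove asymptotic optimality of the induced binary-action priority rule. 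The delicate point is verifying that the myopic buffer choice $b^*_m(\lambda^*)$ plugged into Algorithm \ref{alg:Whittle} is consistent with the LP-optimal vertex and does not conflict with the index-based \emph{source} ranking; this consistency is exactly what Theorem \ref{decoupled1} provides at $\lambda=\lambda^*$.

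Next I would check that the Whittle-index ranking used in Algorithm \ref{alg:Whittle} coincides with the priority order induced by the LP of the first step. By indexability (Theorem \ref{theorem7}) and the closed form of Theorem \ref{theorem8}, $W_m(\delta,d)$ is the critical subsidy at which state $(\delta,d)$ switches from active to passive in \eqref{decoupled_problem}; therefore states with $W_m(\delta,d)>\lambda^*$ are active in the $\lambda^*$-priced single-bandit optimum, states with $W_m(\delta,d)<\lambda^*$ are passive, and the boundary states with $W_m(\delta,d)=\lambda^*$ carry the LP slack. Dummy bandits have index $0=W_0(\delta,0)$ by Lemma \ref{dummyWhittle}, and a source under service has index $-\infty$, which is consistent with non-preemption. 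Thus assigning the available channels to the highest-index bandits (and leaving a source idle once its index falls below the dummy index $0$) is exactly the LP-priority policy of \cite{verloop2016asymptotically, gast2021lp} associated with the optimal occupation measure.

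Finally I would invoke the LP-priority sufficient condition of \cite{gast2021lp} (see also \cite{verloop2016asymptotically}): if (i) the relaxed LP has an optimal solution, (ii) the policy under study is the corresponding LP-priority policy, and (iii) the equilibrium point of the relaxed LP is a uniform global attractor of the fluid map $\Psi_{\pi_{\mathrm{our}},t}$ in the sense of Definition \ref{def2}, then $\bar p^{r}_{\pi_{\mathrm{our}}}$ converges to the relaxed optimum as $r\to\infty$ with $rM/(rN)$ held constant. Items (i) and (ii) are established in the previous steps, and (iii) is the hypothesis of the theorem. Combining with the lower bound from the first step yields $\limsup_{r\to\infty}\big(\bar p^{r}_{\pi_{\mathrm{our}}}-\inf_{\pi\in\Pi}\bar p^{r}_{\pi}\big)\le 0$, i.e., $\pi_{\mathrm{our}}$ is asymptotically optimal. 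Two remaining routine checks are that the state-space truncation is exact — which uses the boundedness of $T_{m,i}$ and the saturation of $p_m(\cdot)$ beyond $\delta_{\mathrm{bound}}$ — and that each per-class SMDP is unichain so the occupation-measure LP is tight, both of which follow from the renewal structure exploited in the proof of Theorem \ref{decoupled}.
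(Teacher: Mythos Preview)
Your proposal is correct and follows essentially the same route as the paper: truncate the state space using the boundedness of $T_{m,i}$ and the saturation of $p_m(\cdot)$, use Theorems \ref{decoupled}--\ref{decoupled1} at $\lambda=\lambda^*$ to collapse the multi-action structure to a binary action set $\{0,\,b^*_m(\lambda^*)+1\}$ per class, and then invoke the LP-priority asymptotic-optimality result of \cite{gast2021lp} (their Theorem~13 and Proposition~14) under the uniform global attractor hypothesis. The paper additionally notes that the multi-class setting is handled by absorbing the class label into the state (as in \cite{gast2023exponential}), which you leave implicit but is consistent with your argument.
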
 
\begin{proof}[Proof sketch] {\blue We first establish that if $m$-th source is selected, then there exists an optimal feature selection policy that always selects features from the buffer position $b^*_m(\lambda^*)$. Hence, the multiple action RMAB problem \eqref{Multi-scheduling_problem}-\eqref{Sconstarint1} reduces to a binary action RMAB problem. Then, we use \cite[Theorem 13]{gast2021lp} to prove Theorem \ref{asymptotic_optimal}. See Appendix \ref{pasymptotic_optimal} for details.} 
\end{proof}


\section{Data Driven Evaluations}
{\blue In this section, we illustrate the performance of our scheduling policies, where we plug in the inference
error versus AoI cost functions from the data driven experiments in Section \ref{Experimentation}. Then, we simulate the performance of different scheduling
policies.}
\subsection{Single-source Scheduling Policies}
We evaluate the following four  scheduling policies:
\begin{itemize}
\item[1.] Generate-at-will, zero wait: The $(i+1)$-th feature sending time $S_{i+1}$ is given by 
$S_{i+1}=D_i = S_i+T_i$ and the feature selection policy is $f=(0, 0, \ldots)$, i.e., $b_i=0$ for all $i$. 

\item[2.] Generate-at-will, optimal scheduling: The policy is given by Theorem \ref{theorem5} with $b_i=0$ for all $i$.

\item[3.] Selection-from-buffer, optimal scheduling: The policy is given by Theorem \ref{theorem6}.

\item[4.] Periodic feature updating: Features are generated periodically with a period $T_p$ and appended to a queue with buffer size $B$. When the buffer is full, no new feature is admitted to the buffer. Features in the buffer are sent over the channel in a first-come, first-served order.  
\end{itemize}

\begin{figure}[t]
\centering
\includegraphics[width=0.35\textwidth]{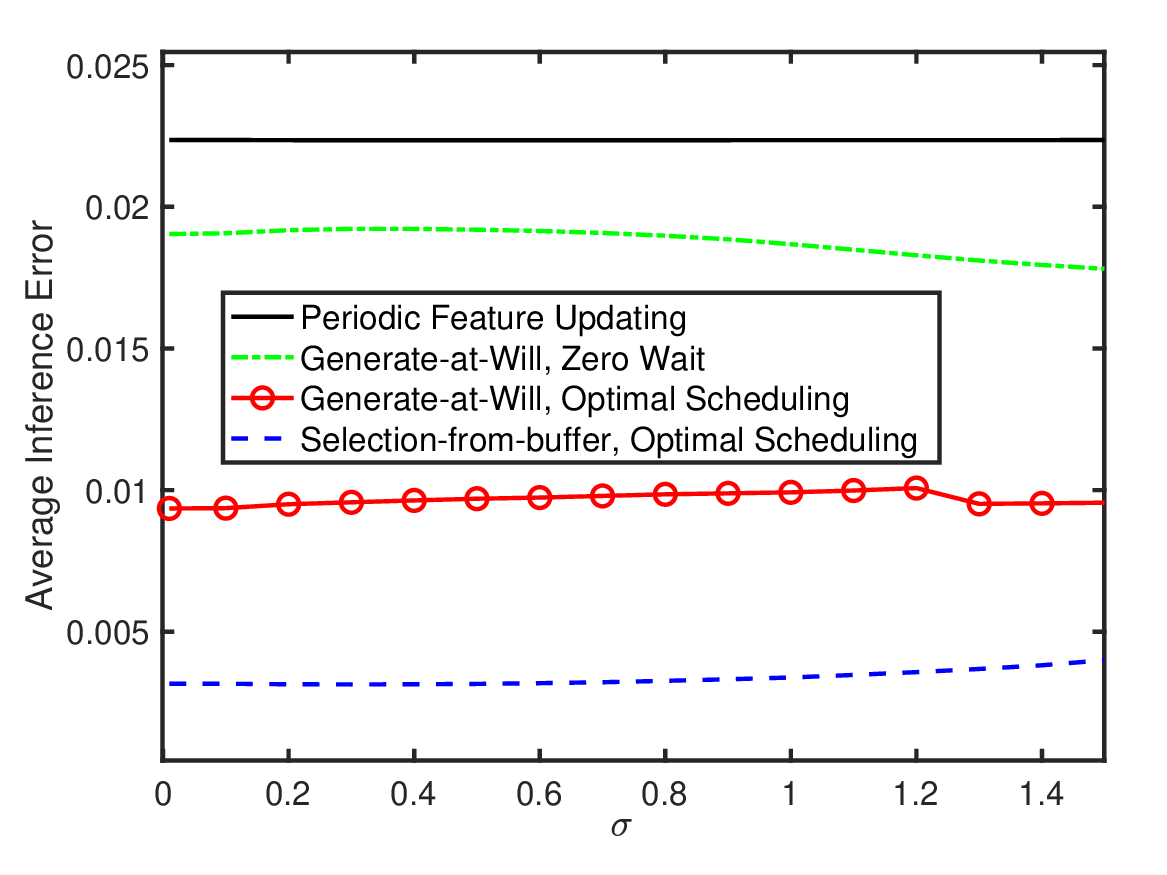}
\caption{\small Time average inference error vs. the scale parameter $\sigma$ of discretized i.i.d. log-normal transmission time distribution for single-source scheduling (in robot state prediction task). \label{fig:singlesourcedifferentsigma}
}
\end{figure}

\ifreport

\begin{figure}[t]
\centering
\includegraphics[width=0.35\textwidth]{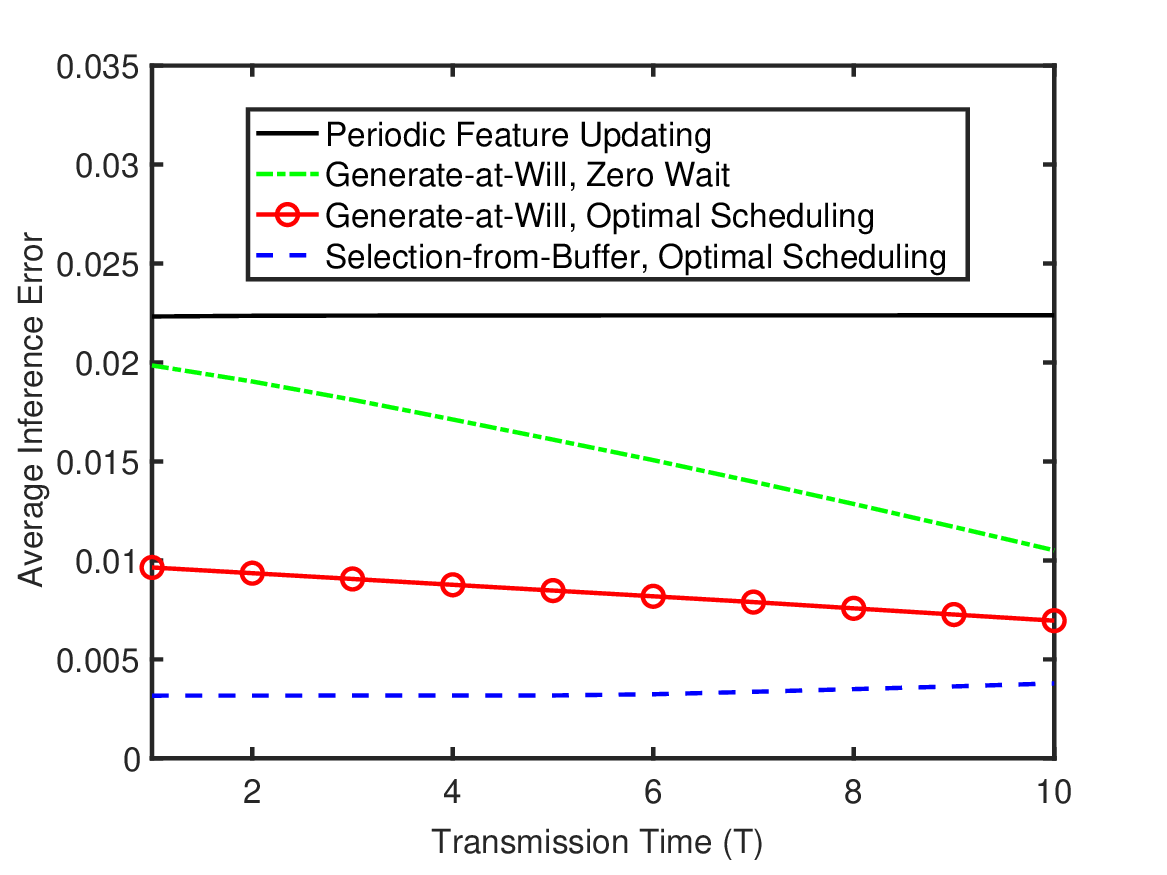}
\caption{\small Time-average inference error vs. constant transmission time (T) (in robot state prediction task). \label{fig:singlesourcedifferenttime}
}
\vspace{-3mm}
\end{figure}
\else
\fi

\begin{figure}[t]
\centering
\includegraphics[width=0.35\textwidth]{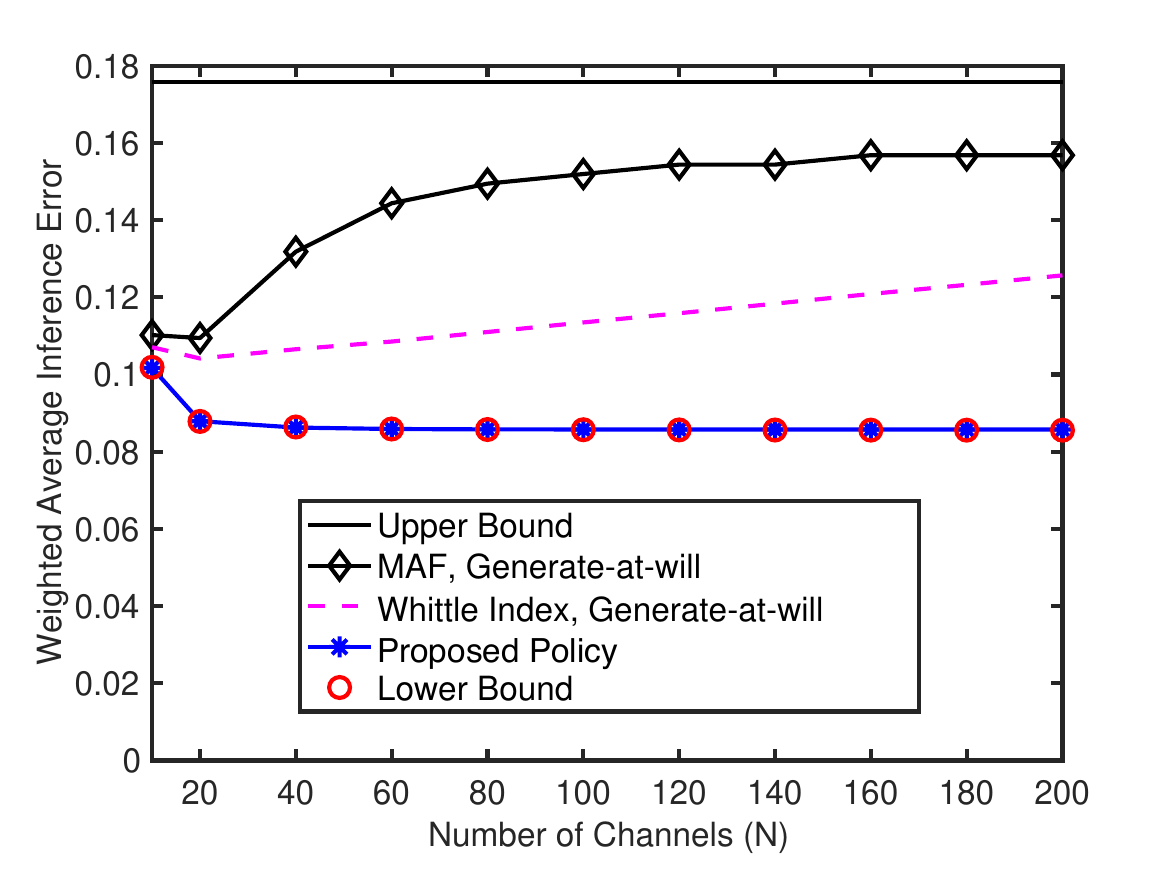}
\caption{\small Time-average weighted inference error vs. number of channels (N).  \label{fig:multisourceN}
}
\vspace{-3mm}
\end{figure}
\ifreport
\begin{figure}[t]
\centering
\includegraphics[width=0.35\textwidth]{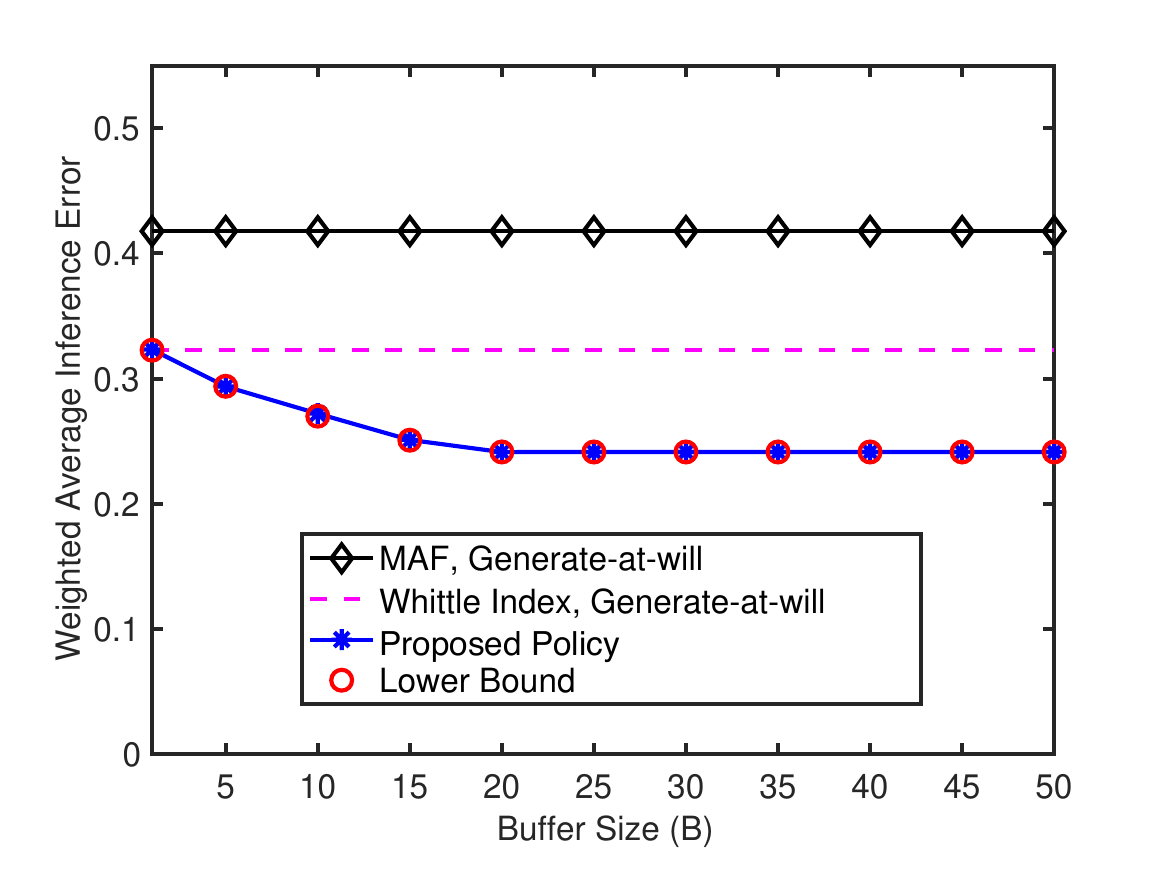}
\caption{\small Weighted time-average inference error vs. buffer size (B). \label{fig:multisourceweight1}
}
\vspace{-3mm}
\end{figure}


\else
\fi

{\blue Figs. \ref{fig:singlesourcedifferentsigma}-\ref{fig:singlesourcedifferenttime} compare the time-averaged inference error of the four single-source scheduling policies defined earlier. These policies are evaluated using the inference error function, $p(\delta)$, obtained from the robot state prediction experiment of the leader-follower robotic system presented in Section \ref{Experimentation} and illustrated in Fig. \ref{fig:DelayedNetworkedControlled}(c). The feature sequence length for this experiment is $u = 1$.} 

In Fig. \ref{fig:singlesourcedifferentsigma}, the $i$-th feature transmission time $T_i$ is assumed to follow a discrete valued i.i.d. log-normal distribution. In particular, $T_i$ can be expressed as $T_i=\lceil \alpha e^{\sigma Z_i}/ \mathbb E[e^{\sigma Z_i}] \rceil$, where $Z_i$'s are i.i.d. Gaussian random variables with zero mean and unit variance. In Fig. \ref{fig:singlesourcedifferentsigma}, we plot the time average inference error versus the scale parameter $\sigma$ of discretized i.i.d. log-normal distribution, where $\alpha=1.2$, the buffer size is $B=30$, and the period of uniform sampling is $T_p=3$. The randomness of the transmission time increases with the growth of $\sigma$. Data-driven evaluations in Fig. \ref{fig:singlesourcedifferentsigma} show that ``selection-from-buffer” with optimal scheduler achieves $3$ times performance gain compared to ``generate-at-will,” and $8$ times performance gain compared to periodic feature updating.

\ifreport

In Fig. \ref{fig:singlesourcedifferenttime}, the time average inference error versus constant transmission time $T$. This figure also shows that ``selection-from-buffer” with optimal scheduler can achieve $8$ time performance gain compared to periodic feature updating.

\else
\fi

\subsection{Multiple-source Scheduling Policies}

Now, we evaluate the following three multiple-source scheduling policies:
\begin{itemize}
 \item[1.] Maximum age first (MAF), Generate-at-will: At time slot $t$, if a channel is free, this policy schedules the freshest generated feature from source $\arg\max_{l \in A(t)} \Delta_l(t)$, where $A(t)$ is the set of available sources in time slot $t$.
  \item[2.] Whittle index, Generate-at-will: Denote 
  \begin{align}
  l_0^*&=\argmax_{l \in A(t)} \!W_{l, 0}(\Delta_l(t)). 
  \end{align}
If a channel is free and $\max_{l \in A(t)} \!W_{l, 0}(\Delta_l(t)) \geq 0$, 
the freshest feature from the source $l_0^*$ is scheduled; otherwise, no source is scheduled.  
 \item[3.] Proposed Policy: The policy is described in Algorithm \ref{alg:Whittle}. 
 \item[4.] Lower bound: Given the optimal dual variable $\lambda=\lambda^*$, the lower bound is obtained by implementing policy $(f^*_m(\lambda^*), g^*_m(\lambda^*))_{m=0}^M$, which is defined in Theorem \ref{decoupled1}.
 \item[5.] Upper bound: The upper bound is obtained if none of the sources is scheduled at every time slot $t$.
\end{itemize}

{\blue Figs. \ref{fig:multisourceN}-\ref{fig:multisourceweight1} compare our proposed policy with others under a multi-source scenario with $m=500$ sources. The inference error for half the sources (weight $w_1 = 1$, feature sequence length $u = 5$) originates from the pole angle prediction of the CartPole-v1 task in Sec. \ref{Experimentation} (Fig. \ref{fig:TrainingCartVelocity}(c)). The remaining sources (weight $w_2 = 5$, feature sequence length $u = 1$) use inference error from the robot state prediction experiment of the leader-follower robotic system presented in Sec.~\ref{Experimentation} (Fig.~\ref{fig:DelayedNetworkedControlled}(c)). Notably, the transmission time for all features is considered constant at 1.}

In Fig. \ref{fig:multisourceN}, we plot the weighted time-average inference error versus the number of channels, where the buffer size of all sources is set to $40$ (i.e., $B_l=B=40$ for all $l$). From Fig. \ref{fig:multisourceN}, it is evident that our proposed policy outperforms the ``Whittle index, Generate-at-will" and ``MAF, Generate-at-will" policies. Specifically, our policy achieves a weighted average inference error that is twice as low as that of the ``MAF, Generate-at-will" policy. Furthermore, as shown in Fig. \ref{fig:multisourceN}, the performance of our policy matches the lower bound of the multi-source, multi-action scheduling problem, thereby validating its asymptotic optimality.

Fig. \ref{fig:multisourceweight1} illustrates the weighted time-average inference error versus the buffer size $B$, with the number of channels set to $N=50$. The results presented in Fig. \ref{fig:multisourceweight1} underscore the effectiveness of the ``selection-from-buffer" model. The weighted time-average inference error achieved by our policy decreases as the buffer size $B$ increases, eventually reaching a plateau at a buffer size of $20$.


\section{Conclusions}
In this paper, we explored the impact of data freshness on the performance of remote inference systems. Our analysis revealed that the inference error in a remote inference system is a function of AoI, but not necessarily a monotonic function of AoI. If the target and feature data sequence satisfy a Markov chain, then the inference error is a monotonic function of AoI. Otherwise, if the target and feature data sequence is far from a Markov chain, then the inference error can be a non-monotonic function of AoI.  
To reduce the time-average inference error, we introduced a new feature transmission model called ``selection-from-buffer'' and designed an optimal single-source scheduling policy. The optimal single-source scheduling policy is found to be a threshold policy.  Additionally, we developed a new asymptotically optimal policy for multi-source scheduling. Our numerical results validated the efficacy of the proposed scheduling policies. 

\section*{Acknowledgement}
The authors are grateful to Vijay Subramanian for one suggestion, to John Hung for useful discussions on this work, and to Shaoyi Li for his help on Fig. \ref{fig:learning}(b)-(c).

\bibliographystyle{IEEEtran}
\bibliography{refshisher}
\begin{IEEEbiography}
[{\includegraphics[width=1in,height=1.25in,clip,keepaspectratio]{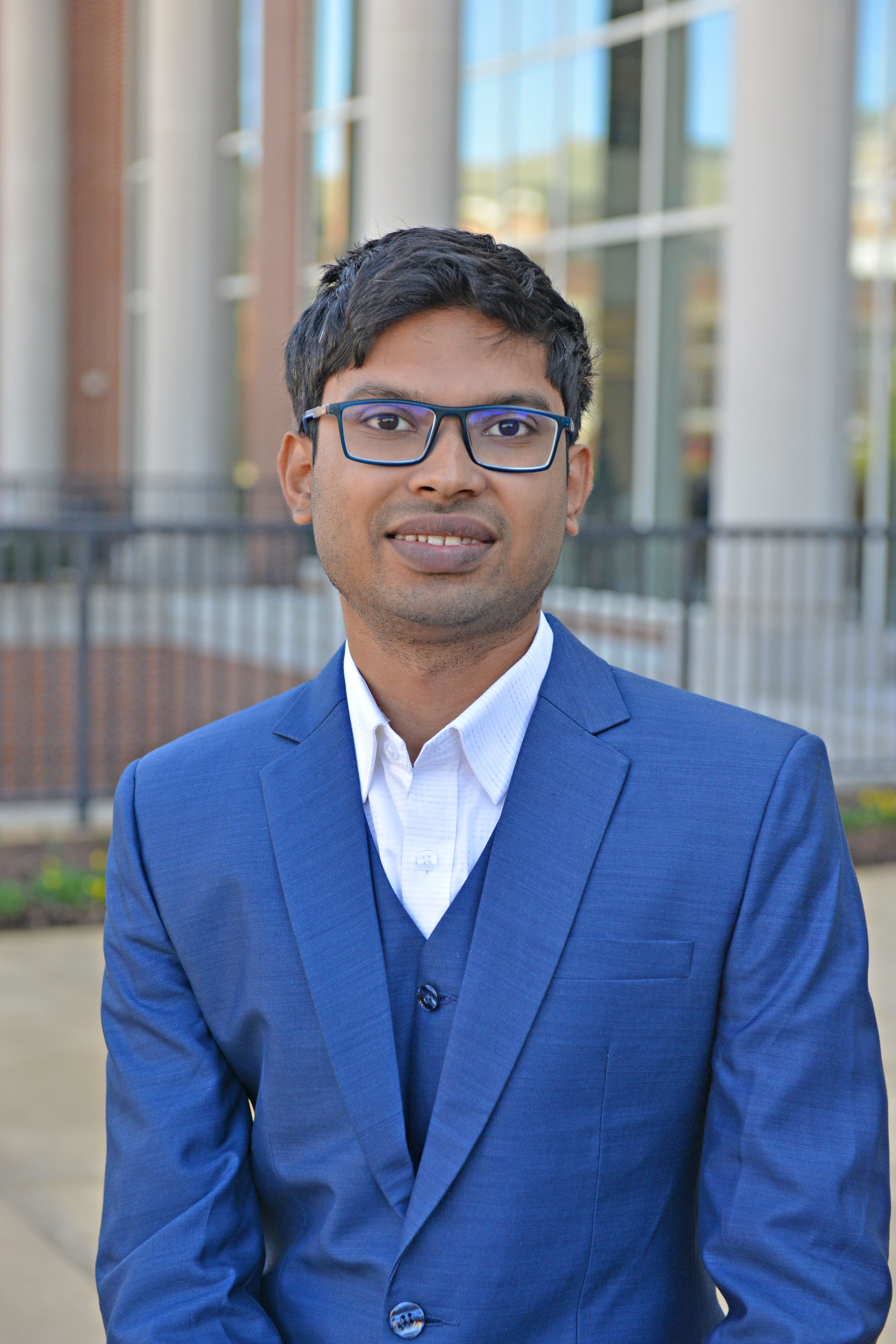}}]{Md Kamran Chowdhury Shisher}(Member, IEEE)
    received the B.Sc. degree in Electrical and Electronic Engineering from the Bangladesh University of Engineering and Technology, Dhaka, Bangladesh, in 2017. He received his M.Sc. and Ph.D. degrees in Electrical Engineering from Auburn University, AL, USA, in 2022 and 2024, respectively. He is currently a Postdoctoral Researcher with the Department of Electrical and Computer Engineering, Purdue University, West Lafayette, IN, USA. His research interests include Networking, Machine Learning, Information Freshness, and Information Theory. 
\end{IEEEbiography}
\vspace{5cm}
\begin{IEEEbiography}[{\includegraphics[width=1in,height=1.25in,clip,keepaspectratio]{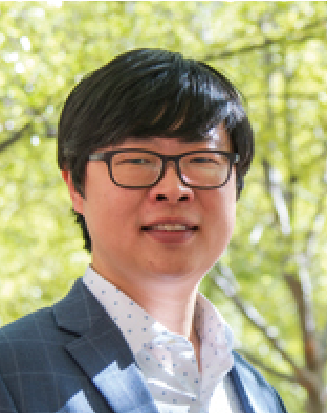}}]{Yin Sun}(Senior Member, IEEE) is the Bryghte D. and Patricia M. Godbold Endowed Associate Professor in the Department of Electrical and Computer Engineering at Auburn University, Alabama. He received his B.Eng. and Ph.D. degrees in Electronic Engineering from Tsinghua University, in 2006 and 2011, respectively. He was an Assistant Professor in the Department of Electrical and Computer Engineering at Auburn University from 2017 to 2023 and a Postdoctoral Scholar and Research Associate at the Ohio State University from 2011 to 2017. His research interests include Networking, Machine Learning, Semantic Communications, Age of Information, and Information Theory. His articles received the Best Student Paper Award of the IEEE/IFIP WiOpt 2013, the Best Paper Award of the IEEE/IFIP WiOpt 2019, runner-up for the Best Paper Award of ACM MobiHoc 2020, and the Journal of Communications and Networks (JCN) Best Paper Award in 2021. He received the Auburn Author Award in 2020, the National Science Foundation (NSF) CAREER Award in 2023, the Bryghte D. and Patricia M. Godbold Endowed Professorship in 2023, the Ginn Faculty Achievement Fellowship in 2023, and the College of Engineering's Research Award for Excellence (Senior Faculty) in 2024. 
\end{IEEEbiography}
\vspace{5cm}
\begin{IEEEbiography}[{\includegraphics[width=1in,height=1.25in,clip,keepaspectratio]{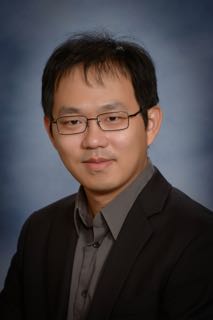}}]{I-Hong Hou}(Senior Member, IEEE) is an Associate Professor in the ECE Department of the Texas A\&M University. He received his Ph.D. from the Computer Science Department of the University of Illinois at Urbana-Champaign. His research interests include wireless networks, edge/cloud computing, and reinforcement learning. His work has received the Best Paper Award from ACM MobiHoc 2017 and ACM MobiHoc 2020, and Best Student Paper Award from WiOpt 2017. He has also received the C.W. Gear Outstanding Graduate Student Award from the University of Illinois at Urbana-Champaign, and the Silver Prize in the Asian Pacific Mathematics Olympiad.
\end{IEEEbiography}
\newpage
\begin{appendices}
\ignore{
\begin{figure}
\centering
\includegraphics[width=0.30\textwidth]{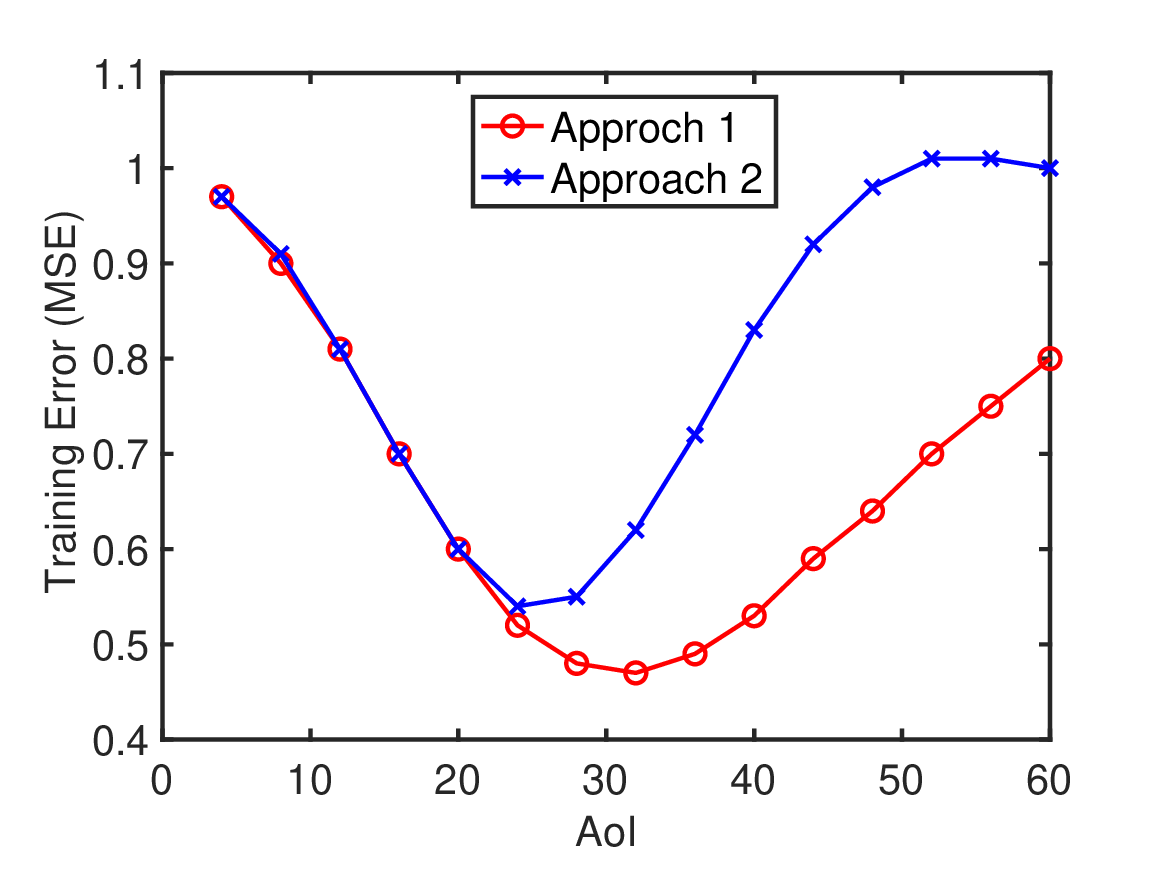}
\caption{\small Training Error vs. AoI. In the first approach, multiple neural networks are trained independently and in parallel, using distinct training datasets with different AoI values. The second approach involves training a single neural network using a larger dataset that encompasses a variety of AoI values.\label{fig:twoapproach}
}
\vspace{-3mm}
\end{figure}}
\section{Detailed Settings of Experiments in Sec. \ref{Experimentation}}\label{Experiments}
{\blue In all five experiments, we employed the first training method described in Section \ref{learningerror}. This approach involves training multiple neural networks independently and in parallel, each using a distinct dataset with a different AoI value. In contrast, the second approach trains a single neural network on a larger, combined dataset encompassing various AoI values. Due to the smaller dataset sizes for each network, the first approach can potentially have a shorter training time than the second approach. The experimental settings of the five experiments are provided below:} 
\subsection{Video Prediction} 

In video prediction experiment, a pre-trained neural network model called ``SAVP" \cite{lee2018stochastic} is used to evaluate on $256$ samples of ``BAIR" dataset \cite{ebert17sna}, which contains video frames of a randomly moving robotic arm. The pre-trained neural network model can be downloaded from the GitHub repository of \cite{lee2018stochastic}. 
\subsection{Robot State Prediction} 
In this experiment, we consider a leader-follower robotic system illustrated in a YouTube video \footnote{\url{https://youtu.be/_z4FHuu3-ag}}, {\blue where we used two Kinova JACO robotic arms with $7$ degrees of freedom and $3$ fingers to accomplish a pick and place task.} 
The leader robot sends its state (7 joint angles and positions of 3 fingers) $X_t$ to the follower robot through a channel. One packet for updating the leader robot's state is sent periodically to the follower robot every $20$ time-slots. The transmission time of each updating packet is $20$ time-slots. The follower robot moves towards the leader's most recent state and locally controls its robotic fingers to grab an object. We constructed a robot simulation environment using the Robotics System Toolbox in MATLAB. In each episode, a can is randomly generated on a table in front of the follower robot. The leader robot observes the position of the can and illustrates to the follower robot how to grab the can and place it on another table, without colliding with other objects in the environment. {\blue The rapidly-exploring random tree (RRT) algorithm is used to control the leader robot \cite{berenson2009manipulation}. For the local control of the follower robot, an interpolation method is used to generate a trajectory between two points sent from the leader robot while also avoiding collisions with other obstacles.}
The leader robot uses a neural network to predict the follower robot's state $Y_t$. The neural network consists of one input layer, one hidden layer with $256$ ReLU activation nodes, and one fully connected (dense) output layer. The dataset contains the leader and follower robots' states in 300 episodes of continue operation. The first $80\%$ of the dataset is used for the training and the other $20\%$ of the dataset is used for the inference. 


\subsection{Actuator State Prediction}\label{actuatorexp} 
We consider the OpenAI CartPole-v1 task \cite{brockman2016openai}, where a DQN reinforcement learning algorithm \cite{mnih2015human} is used to control the force on a cart and keep the pole attached to the cart from falling over. By simulating $10^4$ episodes of the OpenAI CartPole-v1 environment, a time-series dataset is collected that contains the pole angle $\psi_t$ and the velocity $v_{t}$ of the cart. The pole angle $\psi_t$ at time $t$ is predicted based on a feature $X_{t-\delta}=(v_{t-\delta}, \ldots, v_{t-\delta-u+1})$, i.e., a vector of cart velocity with length $u$, where $v_t$ is the cart velocity at time $t$ and $\Delta(t)=\delta$ is the AoI. The predictor in this experiment is an LSTM neural network that consists of one input layer, one hidden layer with 64 LSTM cells, and a fully connected output layer. First $72\%$ of the dataset is used for training and the rest of the dataset is used for inference. 

\subsection{Temperature prediction} 
{\blue The temperature $Y_t$ at time $t$ is predicted based on a feature $X_{t-\delta}=\{s_{t-\delta}, \ldots, s_{t-\delta-u+1}\}$, where $s_t$ is a $7$-dimensional vector consisting of temperature, pressure, saturation vapor pressure, vapor pressure deficit, specific humidity, airtightness, and wind speed at time $t$. We used the Jena climate dataset recorded by the Max Planck Institute for Biogeochemistry \cite{kerasexample}. The dataset comprises $14$ features, including temperature, pressure, humidity, etc., recorded once every $10$ minutes from $10$ January $2009$ to $31$ December $2016$. The first $75\%$ of the dataset is used for training and the later $25\%$ is used for inference. Temperature is predicted every hour using an LSTM neural network composed of one input layer, one hidden layer with $32$ LSTM units, and one output layer.}


\subsection{CSI Prediction} The CSI $h_t$ at time $t$ is predicted based on a feature $X_{t-\delta}=\{h_{t-\delta}, \ldots, h_{t-\delta-u+1}\}$. The dataset for CSI is generated by using Jakes model \cite{baddour2005autoregressive}.

\section{Examples of Loss function $L$, $L$-entropy, and $L$-cross entropy}\label{InformationTheory1}
Several examples of loss function $L$, $L$-entropy, and $L$-cross entropy are listed below. Additional examples can be found in \cite{Dawid2004, Dawid1998,farnia2016minimax}.
\subsubsection{Logarithmic~Loss (log-loss)} The log-loss function is given by $L_{\mathrm{log}}(y, Q_Y) = -\log Q_Y(y)$, where the action $a=Q_Y$ is a distribution  in $\mathcal{P}^{\mathcal{Y}}$. The corresponding $L$-entropy is the well-known Shannon's entropy \cite{cover1999elements}, defined as
\begin{align}
H_{\mathrm{log}}(Y) =& -\sum_{y \in \mathcal{Y}} P_Y(y)\ \mathrm{log}\ P_Y(y), 
\end{align}
where $P_Y$ is the distribution of $Y$. The corresponding $L$-cross entropy is given by
\begin{align}
    H_{\mathrm{log}}(Y; \tilde Y)=&-\sum_{y \in \mathcal{Y}} P_{Y}(y)\ \mathrm{log}\ P_{\tilde Y}(y).
\end{align}
The $L$-mutual information and $L$-divergence associated with the log-loss are Shannon's mutual information and the K-L divergence defined in \eqref{chi-divergence-def}, respectively. 
\subsubsection{Brier~Loss} The Brier loss function is defined as $L_B(y, Q_Y)=\sum_{y' \in \mathcal{Y}} Q_Y(y')^2-2 \ Q_Y(y)+1$ \cite{Dawid2004}. The associated $L$-entropy is given by
\begin{align}
H_B(Y)=&1-\sum_{y \in \mathcal{Y}} P_Y(y)^2,
\end{align}
and the associated $L$-cross entropy is 
\begin{align}
    H_B(Y; \tilde Y)=&\sum_{y \in \mathcal Y} P_{\tilde Y}(y)^2-2 \sum_{y \in \mathcal Y}P_{\tilde Y}(y)P_{Y}(y)+1.
\end{align}
\subsubsection{0-1~Loss} The 0-1 loss function is given by $L_{\text{0-1}}(y, \hat{y}) = \mathbf{1}(y \neq \hat{y})$, where $\mathbf{1}(A)$ is the indicator function of event $A$. For this case, we have
\begin{align}
H_{\text{0-1}}(Y)~=&~ 1- \max_{y \in \mathcal{Y}} P_Y(y), \\
 H_{\text{0-1}}(Y; \tilde Y)~=&~ 1-P_{Y}\left(\arg\max_{y \in \mathcal{Y}} P_{\tilde Y}(y)\right).
\end{align}

\subsubsection{$\alpha$-Loss} The $\alpha$-loss function is defined by $L_{\alpha}(y, Q_Y)=\frac{\alpha}{\alpha-1} \left[1 - Q_Y(y)^{\frac{\alpha-1}{\alpha}}\right]$ for $\alpha>0$ and $\alpha \neq 1$  \cite[Eq. 14]{alpha}. It becomes the log-loss function in the limit $\alpha \rightarrow 1$ and the 0-1 loss function in the limit $\alpha \rightarrow \infty$. The $L$-entropy and $L$-cross entropy associated with the $\alpha$-loss function are given by 
\begin{align}
H_{\alpha \text{-loss}}(Y)= \frac{\alpha}{\alpha-1} \left[1 - \left(\sum_{y \in \mathcal{Y}} P_Y(y)^{\alpha}\right)^{\frac{1}{\alpha}}\right],
\end{align}
\begin{align}
   H_{\alpha \text{-loss}}(Y; \tilde Y)=\frac{\alpha}{\alpha-1} \left[1 - \left(\sum_{y \in \mathcal{Y}} P_{\tilde Y}(y)^{\alpha}\right)^{\frac{1}{\alpha}}\lambda\right],
\end{align}
where 
\begin{align}
    \lambda=\frac{\sum_{y \in \mathcal Y} \frac{P_{Y}(y)}{P_{\tilde Y}(y)}P_{\tilde Y}(y)^{\alpha}}{\sum_{y \in \mathcal Y} P_{\tilde Y}(y)^{\alpha}}.
\end{align}

\subsubsection{Quadratic~Loss} The quadratic loss function is $L_2(y, \hat{y})= (y - \hat{y})^2$. The $L$-entropy function associated with the quadratic loss is the variance of $Y$, given by 
\begin{align}
H_2(Y)= \mathbb{E}[Y^2] - \mathbb{E} [Y]^2.
\end{align}
The corresponding $L$-cross entropy is  
\begin{align}
     H_{2}(Y;\tilde Y)=\mathbb E[ Y^2]-2\mathbb E[\tilde Y] \mathbb E[Y]+\mathbb E[\tilde Y]^2.
\end{align}

\section{Definitions of $L$-divergence, $L$-mutual information, and $L$-conditional mutual information}\label{otherLmetrics}
The $L$-\emph{divergence} $D_L(P_{Y} || P_{\tilde Y})$ of $P_{Y}$ from $P_{\tilde Y}$ can be expressed as \cite{Dawid2004, farnia2016minimax}
\begin{align}\label{divergence}
&D_L(P_{Y} || P_{\tilde Y}) \nonumber\\
 &=\!\mathbb E_{Y \sim P_{Y}}\left[L\left(Y, a_{P_{Y}}\right)\right]-\mathbb E_{Y \sim P_{Y}}\left[L\left(Y, a_{P_{\tilde Y}}\right)\right].
\end{align}
Because $a_{P_{Y}}$ is an optimal to \eqref{eq_Lentropy}, from \eqref{divergence}, we have 
\begin{align}
D_L(P_{Y} || P_{\tilde Y}) \geq 0.
\end{align}
The \emph{$L$-mutual information} $I_L(Y;X)$ is defined as \cite{Dawid2004, farnia2016minimax}
\begin{align}\label{MI}
I_L(Y; X)=& \mathbb E_{X \sim P_{X}}\left[D_L\left(P_{Y|X}||P_{Y}\right)\right]\nonumber\\
=&H_L(Y)-H_L(Y|X) \geq 0,
\end{align}
which measures the performance gain in predicting $Y$ by observing $X$. In general, $I_L( X ; Y)  \neq I_L(Y ; X)$.  The $L$-conditional mutual information $I_L(Y; X | Z)$ is given by 
\begin{align}\label{CMI}
I_L(Y; X|Z)=& \mathbb E_{X, Z \sim P_{X, Z}}\left[D_L\left(P_{Y|X, Z}||P_{Y | Z}\right)\right]\nonumber\\
=&H_L(Y | Z)-H_L(Y|X, Z) \geq 0.
\end{align}

\section{Relationship among $L$-divergence, Bregman divergence, and $f$-divergence} \label{InformationTheory2}
We explain the relationship among the $L$-divergence defined in \eqref{divergence}, the Bregman divergence \cite{zhang2017matrix}, and the $f$-divergence \cite{csiszar2004information}. All these three classes of divergence have been widely used in the machine learning literature. Their differences are explained below. 

Let $\mathcal{P^Y}$ denote the set of all probability distributions on the discrete set $\mathcal{Y}$. Define $\mathcal Z \subset \mathbb R^{|\mathcal Y|}$ as the set of all probability vectors $\mathbf z=(z_1, \ldots, z_{|\mathcal Y|})^T$ that satisfy $\sum_{i=1}^{|\mathcal Y|}z_i = 1$ and $z_i \geq 0$ for all $i=1, 2, \ldots, |\mathcal Y|$. Any distribution $P_Y \in \mathcal{P^Y}$ can be represented by a probability vector $\mathbf{p}_Y=(P_Y(y_1), \ldots, P_Y(y_{|\mathcal Y|}))^T \in \mathcal Z$. 
\begin{definition}\cite{zhang2017matrix}
Let $F : \mathcal Z \mapsto \mathbb R$  be a continuously differentiable and strictly convex function defined on the convex set $\mathcal Z$. The Bregman divergence associated with $F$ between two distributions $P_Y, Q_Y \in \mathcal P^{\mathcal Y}$ is defined as
\begin{align}\label{Bregaman}
\!\!\! \!\!B_{F}(P_Y||Q_Y)\!=\! F(\mathbf p_Y)\!-F(\mathbf q_Y)\!-\nabla F(\mathbf q_Y)^T \!(\mathbf p_Y\!-\mathbf q_Y),
\end{align}
where $\mathbf p_Y \in \mathcal Z$ and $\mathbf q_Y \in \mathcal Z$ are two probability vectors associated with the distributions $P_Y$ and $Q_Y$, respectively, and $\nabla F=\left(\frac{\partial F}{\partial z_1}, \ldots, \frac{\partial F}{\partial z_{|\mathcal Y|}}\right)^T$. 
\end{definition}
We establish the following lemma: 
\begin{lemma}\label{B-L}
For any continuously differentiable and strictly convex function $F : \mathcal Z \mapsto \mathbb R$, the Bregman divergence $B_{F}(P_Y||Q_Y)$ associated with F is an $L_F$-divergence $D_{L_F}(P_Y || Q_Y)$ associated with the 
loss function
\begin{align}\label{eq-L_F}
L_F(y, Q_Y)= - F(\mathbf q_Y)-\frac{\partial F(\mathbf q_Y)}{\partial z_y}+\nabla F(\mathbf q_Y)^T \mathbf q_Y.
\end{align}  
\end{lemma}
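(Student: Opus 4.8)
The claim is that the Bregman divergence $B_F(P_Y\|Q_Y)$ equals the $L_F$-divergence $D_{L_F}(P_Y\|Q_Y)$ for the loss $L_F$ defined in \eqref{eq-L_F}. My plan is to compute both sides explicitly using the definitions \eqref{divergence} and \eqref{Bregaman} and to verify they coincide. The first step is to understand the loss $L_F(y,Q_Y)$ as a function of the action $Q_Y$ (equivalently, the probability vector $\mathbf q_Y$) for a fixed outcome $y$: it is an affine-looking expression in the data $y$ in the sense that $y$ only enters through the single partial derivative $\partial F(\mathbf q_Y)/\partial z_y$. The key computation is the expected loss $\mathbb E_{Y\sim P_Y}[L_F(Y,Q_Y)]$. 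Taking the expectation over $Y\sim P_Y$, the terms $-F(\mathbf q_Y)$ and $\nabla F(\mathbf q_Y)^T\mathbf q_Y$ do not depend on $Y$ and pass through the expectation unchanged, while $\mathbb E_{Y\sim P_Y}[\partial F(\mathbf q_Y)/\partial z_Y]=\sum_{y}P_Y(y)\,\partial F(\mathbf q_Y)/\partial z_y=\nabla F(\mathbf q_Y)^T\mathbf p_Y$. Hence
\begin{align}\label{explossF}
\mathbb E_{Y\sim P_Y}[L_F(Y,Q_Y)] = -F(\mathbf q_Y)-\nabla F(\mathbf q_Y)^T\mathbf p_Y+\nabla F(\mathbf q_Y)^T\mathbf q_Y.
\end{align}

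Next I would identify the Bayes action $a_{P_Y}$ for the loss $L_F$, i.e.\ the minimizer of \eqref{explossF} over $Q_Y$. Substituting $\mathbf q_Y = \mathbf p_Y$ into the right-hand side of \eqref{explossF} gives $-F(\mathbf p_Y)$, so it suffices to show that $\mathbf q_Y\mapsto -F(\mathbf q_Y)+\nabla F(\mathbf q_Y)^T(\mathbf q_Y-\mathbf p_Y)$ is minimized at $\mathbf q_Y=\mathbf p_Y$ with value $-F(\mathbf p_Y)$; equivalently, that the difference between \eqref{explossF} and $-F(\mathbf p_Y)$ is nonnegative for all $\mathbf q_Y$. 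But that difference is exactly $F(\mathbf p_Y)-F(\mathbf q_Y)-\nabla F(\mathbf q_Y)^T(\mathbf p_Y-\mathbf q_Y)=B_F(P_Y\|Q_Y)$, which is nonnegative by the strict convexity of $F$. This simultaneously establishes that $a_{P_Y}=P_Y$ is the Bayes action and computes $\mathbb E_{Y\sim P_Y}[L_F(Y,a_{P_Y})]=-F(\mathbf p_Y)$. Finally, plugging into the definition \eqref{divergence}, $D_{L_F}(P_Y\|Q_Y)=\mathbb E_{Y\sim P_Y}[L_F(Y,a_{P_Y})]-\mathbb E_{Y\sim P_Y}[L_F(Y,a_{Q_Y})]$; since $a_{Q_Y}=Q_Y$ by the same argument applied with $Q_Y$ in place of $P_Y$, the second term equals \eqref{explossF}, and the difference is precisely $F(\mathbf p_Y)-F(\mathbf q_Y)-\nabla F(\mathbf q_Y)^T(\mathbf p_Y-\mathbf q_Y)=B_F(P_Y\|Q_Y)$.

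**Main obstacle.** The only delicate points are bookkeeping issues rather than conceptual ones. First, one must be careful that $L_F$ as written is a legitimate loss function, i.e.\ that the Bayes action is well-defined and interior — here strict convexity of $F$ on the (relatively open) simplex guarantees uniqueness of the minimizer, but one should note that the minimization defining the $L$-entropy is over $\mathcal A=\mathcal P^{\mathcal Y}$, matching the domain $\mathcal Z$ of $F$. Second, the constraint $\sum_i z_i=1$ means $\nabla F$ is only determined up to adding a multiple of the all-ones vector; I would remark that $B_F$ (and hence $D_{L_F}$) is invariant under such a shift, since $\mathbf p_Y-\mathbf q_Y$ is orthogonal to the all-ones vector, so the identity is unambiguous. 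With these remarks in place the proof is just the short chain of equalities above, so I expect no substantive obstacle — the ``hard part'' is merely stating the convention for $\nabla F$ on the constrained domain cleanly.
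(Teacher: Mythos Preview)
Your approach is essentially identical to the paper's: both compute the expected loss $\mathbb E_{Y\sim P_Y}[L_F(Y,Q_Y)]=-F(\mathbf q_Y)-\nabla F(\mathbf q_Y)^T(\mathbf p_Y-\mathbf q_Y)$, use strict convexity of $F$ to identify the Bayes action as $a_{P_Y}=P_Y$ (so $H_{L_F}(Y)=-F(\mathbf p_Y)$), and then read off $D_{L_F}(P_Y\|Q_Y)=B_F(P_Y\|Q_Y)$ directly. Your extra remarks about the gauge freedom of $\nabla F$ on the affine simplex are a nice clarification the paper omits; just be careful with the sign convention in \eqref{divergence} when you write it up, since the $L$-divergence should read $\mathbb E_{Y\sim P_Y}[L(Y,a_{Q_Y})]-\mathbb E_{Y\sim P_Y}[L(Y,a_{P_Y})]$ to be nonnegative.
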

\begin{proof}
According to \eqref{eq_Lentropy}, the $L_F$-entropy associated with the loss function $L_F(y, Q_Y)$ in \eqref{eq-L_F} is defined as
\begin{align}\label{entropy_F}
H_{L_F}(Y)= \min_{Q_Y \in \mathcal{P^Y}} E_{Y \sim P_Y} \left [ L_F\left (Y, Q_Y\right)\right],
\end{align}
where $P_Y$ is the distribution of $Y$. Using \eqref{eq-L_F}, we can get
\begin{align}\label{expected_loss_F}
&E_{Y \sim P_Y} \left [ L_F\left (Y, Q_Y\right)\right] \nonumber\\
&=-\sum_{y \in \mathcal Y} P_Y(y) F(\mathbf q_Y)-\sum_{y \in \mathcal Y} P_Y(y) \frac{\partial F(\mathbf q_Y)}{\partial z_y} \nonumber\\
&~~+\sum_{y \in \mathcal Y} P_Y(y) \nabla F(\mathbf q_Y)^T \mathbf q_Y \nonumber\\
&=-F(\mathbf q_Y)-\nabla F(\mathbf q_Y)^T (\mathbf p_Y-\mathbf q_Y),
\end{align}
where the last equality holds because $F(\mathbf q_Y)$ and $\nabla F(\mathbf q_Y)^T \mathbf q_Y$ are constants that remain unchanged regardless of the variable $y$.
Because the function $F$ is continuously differentiable and strictly convex on the convex set $\mathcal Z$, we have for all $\mathbf p_Y, \mathbf q_Y \in \mathcal Z $
\begin{align}\label{lowerboundachieved}
-F(\mathbf q_Y)-\nabla F(\mathbf q_Y)^T (\mathbf p_Y-\mathbf q_Y) \geq -F(\mathbf p_Y).
\end{align}
Equality holds in \eqref{lowerboundachieved} if and only if $\mathbf q_Y=\mathbf p_Y$. 
From \eqref{entropy_F}, \eqref{expected_loss_F}, and \eqref{lowerboundachieved}, we obtain
\begin{align}\label{entropyLf}
H_{L_F}(Y)&=\min_{\mathbf q_Y \in \mathcal Z} -F(\mathbf q_Y)-\nabla F(\mathbf q_Y)^T (\mathbf p_Y-\mathbf q_Y) \nonumber\\
&=-F(\mathbf p_Y).
\end{align}
Substituting \eqref{expected_loss_F} and \eqref{entropyLf} into \eqref{Bregaman}, yields 
\begin{align}
&B_{F}(P_Y || Q_Y) \nonumber\\
=&F(\mathbf p_Y)-F(\mathbf q_Y)-\nabla F(\mathbf q_Y)^T (\mathbf p_Y-\mathbf q_Y)\nonumber\\
=&E_{Y \sim P_Y} \left [ L_F\left (Y, Q_Y\right)\right]-H_{L_F}(Y) \nonumber\\
=&D_{L_F}(P_Y || Q_Y).
\end{align}
This completes the proof.
\end{proof}
 
Let us rewrite the $L$-entropy $H_L(Y )$ as $H_L(\mathbf p_Y )$ to emphasize
that it is a function of the probability vector $\mathbf p_Y$. If $H_{L}(\mathbf p_Y)$ is continuously differentiable and strictly concave in $\mathbf p_Y$, then the $L$-divergence $D_L(P_Y||Q_Y)$ can be expressed as \cite[Section 3.5.4]{Dawid2004}
\begin{align}\label{L-divergencetoBregman}
    &D_L(P_Y||Q_Y) \nonumber\\
    =&H_L(\mathbf q_Y)+\nabla H_L(\mathbf q_Y)^T (\mathbf p_Y-\mathbf q_Y)-H_L(\mathbf p_Y) \nonumber\\
    =& B_{-H_L}(P_Y||Q_Y),
\end{align}
which is the Bregman divergence $B_{-H_L}(P_Y||Q_Y)$ associated with the continuously differentiable and strictly convex $-H_L$. However, if $H_{L}(\mathbf p_Y)$ is not continuously differentiable or not strictly concave in $\mathbf p_Y$, $D_L(P_Y||Q_Y)$ is not necessarily a Bregman divergence.

\begin{definition} \cite{csiszar2004information}
Let $f: (0, \infty) \mapsto \mathbb R$ be a convex function with
$f(1) = 0$. The $f$-divergence between two probability distributions $P_Y, Q_Y \in \mathcal{P^Y}$ is defined as 
\begin{align}
   D_f(P_Y||Q_Y)=\sum_{y \in \mathcal Y} Q_Y(y) f\left(\frac{P_Y(y)}{Q_Y(y)}\right).
\end{align}
\end{definition}
An $f$-divergence may not be $L$-divergence, and vice versa. In fact, the KL divergence $D_{\mathrm{log}}(P_{Y}||Q_Y)$ defined in \eqref{chi-divergence-def} and its dual $D_{\mathrm{log}}(Q_{Y}||P_Y)$ are the unique divergences belonging to both the classes of $f$-divergence and Bregman divergence \cite{Amari}. Because KL divergence is also an $L$-divergence, $D_{\mathrm{log}}(P_{Y}||Q_Y)$ and $D_{\mathrm{log}}(Q_{Y}||P_Y)$ are the only divergences belonging to all the three classes of divergences. 

The $f$-mutual information can be expressed using the $f$-divergence as
\begin{align}
    I_f(Y; X)=\mathbb E_{X \sim P_X}[D_f(P_{Y|X}||P_Y)].
\end{align}
{\blue The $f$-mutual information is symmetric, i.e., $I_f(Y; X)=I_f(X;Y),$ which can be shown as follows:
\begin{align}
I_f(Y; X)=&\sum_{x \in \mathcal X} P_X(x) \sum_{y \in \mathcal Y} P_Y(y) f\left(\frac{P_{Y|X}(y|x)}{P_Y(y)}\right)\nonumber\\
=&\sum_{\substack{x \in \mathcal X \\ y \in \mathcal Y}} P_X(x) P_Y(y) f\left(\frac{P_{Y|X}(y|x)P_X(x)}{P_Y(y)P_X(x)}\right)\nonumber\\
=&\sum_{\substack{x \in \mathcal X \\ y \in \mathcal Y}}P_X(x) P_Y(y) f\left(\frac{P_{X|Y}(x|y)P_Y(y)}{P_Y(y)P_X(x)}\right)\nonumber\\
=&\sum_{y \in \mathcal Y} P_Y(y) \sum_{x \in \mathcal X} P_X(x) f\left(\frac{P_{X|Y}(x|y)}{P_X(x)}\right)\nonumber\\
=&I_f(X; Y).
\end{align}}
On the other hand, the $L$-mutual information is generally non-symmetric, i.e., $I_L(Y ;X)\neq I_L(X; Y )$, except for some special cases.  For example, Shannon's mutual information is defined by 
\begin{align}
 I_{\mathrm{log}}(Y; X)=\mathbb E_{X \sim P_X}[D_{\mathrm{log}}(P_{Y|X}||P_Y)],
\end{align}
which is both an $L$-mutual information and a $f$-mutual information. It is well-known that $I_{\log} (Y;X) = I_{\log} (X;Y)$.

\section{Proof of Equation \eqref{instantaneous_err1}}\label{pinferenceerror}
Because we assume that $Y_t$ and $X_{t-\delta}$ are independent of $\Delta(t)$, for all $y \in \mathcal{Y}$, $x \in \mathcal{X}$, and $\delta \in \mathbb{Z}^{+}$, we have
\begin{align}\label{probability1}
P_{Y_t, X_{t-\delta} | \Delta(t)=\delta}(y, x) = P_{Y_t, X_{t-\delta}}(y, x).
\end{align}
By using $\phi^*$ on the inference dataset, the inference error given $\Delta(t)=\delta$ is determined by
\begin{align}\label{e11-2}
p(\delta)= \mathbb{E}_{Y, X \sim P_{Y_t, X_{t-\Delta(t)} | \Delta(t)=\delta}} \left[ L\left(Y, \phi^*(X, \delta)\right)\right],
\end{align}
where $P_{Y_t, X_{t-\Delta(t)} | \Delta(t)=\delta}$ is the distribution of target $Y_t$ and feature $X_{t-\Delta(t)}$ given $\Delta(t)=\delta$.

By substituting \eqref{probability1} into \eqref{e11-2}, we obtain
\begin{align}
p(\delta)\nonumber&= \mathbb{E}_{Y, X \sim P_{Y_t, X_{t-\Delta(t)} | \Delta(t)=\delta}} \left[ L\left(Y, \phi^*(X, \delta)\right)\right]\nonumber\\
&= \mathbb{E}_{Y, X \sim P_{Y_t, X_{t-\delta} | \Delta(t)=\delta}} \left[ L\left(Y, \phi^*(X, \delta)\right)\right]\nonumber\\
&= \mathbb{E}_{Y, X \sim P_{Y_t, X_{t-\delta}}} \left[ L\left(Y, \phi^*(X, \delta)\right)\right]\nonumber\\
&= \mathbb{E}_{Y, X \sim P_{Y_{\delta}, X_{0}}} \left[ L\left(Y, \phi^*(X, \delta)\right)\right].
\end{align}
The last equality holds due to the stationarity of $\{(Y_t, X_t), t =0, 1, 2, \ldots\}$. This completes the proof. $\qed$

\section{Proof of Equation \eqref{freshness_aware_cond}}\label{pfreshness_aware_cond}
Let $\mathcal D=\{\delta: P_{\Theta}(\delta)>0\}$ be support set of $P_{\Theta}$. From \eqref{eq_TrainingErrorLB1}, we have
\begin{align}\label{L_condEntropy1}
&H_L(\tilde Y_0| \tilde X_{-\Theta},\Theta)\nonumber\\
=&\sum_{\substack{x \in \mathcal X, \delta \in \mathcal D}} \!\!\!\! P_{\tilde X_{-\Theta}, \Theta}(x, \delta) H_L(\tilde Y_0|\tilde X_{-\Theta}=x, \Theta=\delta)\nonumber\\
=& \!\sum_{\delta \in \mathcal D} P_{\Theta}(\delta) \sum_{x \in \mathcal X} P_{\tilde X_{-\Theta}|\Theta=\delta}(x) H_L(\tilde Y_0|\tilde X_{-\Theta}=x, \Theta=\delta)\!\!\! \nonumber\\
=&\sum_{\delta \in \mathcal D} P_{\Theta}(\delta) \sum_{x \in \mathcal X} P_{\tilde X_{-\delta}|\Theta=\delta}(x) H_L(\tilde Y_0|\tilde X_{-\Theta}=x, \Theta=\delta).\!\!\!\!
\end{align}

Next, from \eqref{given_L_condentropy}, we obtain that for all $x \in \mathcal X$ and $\delta \in \mathcal D$,
\begin{align}\label{given_L_condentropy2}
&H_L(\tilde Y_0| \tilde X_{-\Theta}=x,\Theta=\delta) \nonumber\\
=& \min_{a\in\mathcal A}  \mathbb E_{Y\sim P_{\tilde Y_0|\tilde X_{-\Theta}=x, \Theta=\delta}} [L(Y, a)] \nonumber\\
=& \min_{a\in\mathcal A} \mathbb E_{Y\sim P_{\tilde Y_0|\tilde X_{-\delta}=x, \Theta=\delta}} [L(Y, a)].
\end{align}
Because we assume that $\tilde Y_0$ and $\tilde X_{-k}$ are independent of $\Theta$ for every $k \geq 0$, for all $x \in \mathcal X, y \in \mathcal Y$ and $\delta \in \mathcal D$
\begin{align}\label{ind_XY}
P_{\tilde Y_0|\tilde X_{-\delta}=x, \Theta=\delta}(y)&=P_{\tilde Y_0| \tilde X_{-\delta}=x}(y), \\\label{ind_X}
P_{\tilde X_{-\delta}|\Theta=\delta}(x)&=P_{\tilde X_{-\delta}}(x).
\end{align}
Substituting \eqref{ind_XY} into \eqref{given_L_condentropy2}, we get
\begin{align}\label{given_L_condentropy1}
&H_L(\tilde Y_0| \tilde X_{-\Theta}=x,\Theta=\delta)\nonumber\\
=& \min_{a\in\mathcal A} \mathbb E_{Y\sim P_{\tilde Y_0|\tilde X_{-\delta}=x}} [L(Y, a)]\nonumber\\
=&H_L(\tilde Y_0| \tilde X_{-\delta}=x).
\end{align}
Substituting \eqref{given_L_condentropy1} and \eqref{ind_X} into \eqref{L_condEntropy1}, we observe that 
\begin{align}
H_L(\tilde Y_0| \tilde X_{-\Theta},\Theta)=&\sum_{\delta \in \mathcal D} P_{\Theta}(\delta) \sum_{x \in \mathcal X} P_{\tilde X_{-\delta}}(x) H_L(\tilde Y_0|\tilde X_{-\delta}=x)\nonumber\\
=&\sum_{\delta \in \mathbb Z^{+}} P_{\Theta}(\delta)~H_L(\tilde Y_0| \tilde X_{-\delta}).
\end{align}

This completes the proof.


\ignore{\section{Proof of Equation \eqref{Decomposed_Cross_entropy}}\label{pDecomposed_Cross_entropy}
Because $Y_t$ and $X_{t-\delta}$ are independent of $\Theta$, from \eqref{L_condEntropy} we get
\begin{align}
\hat \phi_{P_{Y_t, X_{t-\Theta},\Theta}}(x,\delta)=&\arg\min_{\phi(x, \delta)\in\mathcal A}\!\! E_{Y\sim P_{Y_t|X_{t-\Theta}=x, \Theta=\delta}}[L(Y,\phi(x, \delta))] \nonumber\\
=&\arg\min_{a\in\mathcal A}\!\!~E_{Y\sim P_{Y_t|X_{t-\Theta}=x, \Theta=\delta}}[L(Y,a)] \nonumber\\
=&\arg\min_{a\in\mathcal A}\!\!~E_{Y\sim P_{Y_t|X_{t-\delta}=x}}[L(Y,a)].
\end{align}
From the above equation, we can say that $\hat \phi_{P_{Y_t, X_{t-\Theta},\Theta}}(x,\delta)$ is a Bayes action $a_{P_{Y_t|X_{t-\delta}=x}}$. 

Next, since $\tilde Y_t$ and $\tilde X_{t-\delta}$ are independent of $\Delta$, we obtain 
\begin{align}
&H_L(\tilde Y_{t}; Y_{t} | \tilde X_{t-\Delta}, \Delta)\nonumber\\
=&\sum_{y \in \mathcal Y, x \in \mathcal X, \delta \in \mathcal D} P_{\tilde Y_t, \tilde X_{t-\Delta}, \Delta}(y, x, \delta)~L\left(y, a_{P_{Y_t|X_{t-\delta}=x}}\right)\nonumber\\
=&\sum_{\delta \in \mathcal D} P_{\Delta}(\delta) \sum_{\substack{y \in \mathcal Y, x \in \mathcal X}}P_{\tilde Y_t, \tilde X_{t-\Delta}| \Delta=\delta}(y, x)~L\left(y, a_{P_{Y_t|X_{t-\delta}=x}}\right)\nonumber\\
=&\sum_{\delta \in \mathcal D} P_{\Delta}(\delta)\sum_{\substack{y \in \mathcal Y, x \in \mathcal X}}P_{\tilde Y_t, \tilde X_{t-\delta}}(y, x)~L\left(y, a_{P_{Y_t|X_{t-\delta}=x}}\right)\nonumber\\
=&\sum_{\delta \in \mathcal D} P_{\Delta}(\delta)~H_L(\tilde Y_{t}; Y_{t} | \tilde X_{t-\delta}).
\end{align}}
\section{Proof of Lemma \ref{Symmetric}}\label{PSymmetric}
This is due to the following symmetry property: 
\begin{align}
I_{\mathrm{log}}(Y;Z|X)=I_{\mathrm{log}}(Z;Y|X).
\end{align}

\section{Proof of Lemma \ref{Lemma_CMI}}\label{PLemma_CMI}
By using the definition of $L$-conditional mutual information in \eqref{CMI}, we obtain
\begin{align}\label{cond_entropy_two}
    H_L(Y|X, Z)=&H_L(Y|X)-I_L(Y;Z|X) \nonumber\\
    =&H_L(Y|Z)-I_L(Y;X|Z).
\end{align}
From \eqref{CMI} and \eqref{cond_entropy_two}, we get
\begin{align}
    H_L(Y|X)=&H_L(Y|Z)+I_L(Y;Z|X)-I_L(Y;X|Z) \nonumber\\
    \leq& H_L(Y|Z)+I_L(Y;Z|X),
\end{align}
where the last inequality is due to $I_L(Y;X|Z) \geq 0$. Now, we need to show that if $Y \overset{\epsilon}\leftrightarrow X \overset{\epsilon}\leftrightarrow  Z$, then
\begin{align}
  I_L(Y;Z|X)=O(\epsilon).
\end{align}
and in addition, if $H_L(Y)$ is twice differentiable, then
\begin{align}
  I_L(Y;Z|X)=O(\epsilon^2).
\end{align}

From \eqref{CMI}, we see that
\begin{align}\label{CMI1}
I_L(Y ; Z | X) = \mathbb{E}_{X, Z} [D_L(P_{Y|X,Z} || P_{Y|X})].
\end{align}
We know from Pinsker's inequality \cite[Lemma 11.6.1]{cover1999elements} that
\begin{align}\label{inequalityinformationtheory} 
\sum_{y \in \mathcal Y} (P_Y(y)-Q_Y(y))^2 \leq 2 \mathrm{ln}2~D_{\mathrm{log}}(P_Y||Q_Y).
\end{align}
If $Y \overset{\epsilon}\leftrightarrow X \overset{\epsilon}\leftrightarrow  Z$ is an $\epsilon$-Markov chain, then Definition \ref{epsilonMarkovChain} yields:
\begin{align}\label{eMarkovcondition1}
  \sum_{(x,z) \in \mathcal X \times \mathcal Z} P_{X,Z}(x,z) D_{\mathrm{log}}(P_{Y|X=x,Z=z} || P_{Y|X=x}) \leq \epsilon^2.
\end{align}
Let $\mathcal X' \times \mathcal Z'=\{ (x, z): P_{X, Z}(x, z)>0\}$ be the support set of $P_{X, Z}$. Then, \eqref{eMarkovcondition} reduces to 
\begin{align}\label{eMarkovcondition}
  \sum_{(x,z) \in \mathcal X' \times \mathcal Z'} P_{X,Z}(x,z) D_{\mathrm{log}}(P_{Y|X=x,Z=z} || P_{Y|X=x}) \leq \epsilon^2.
\end{align}
Because the left side of the above inequality is the summation of non-negative terms, the following holds:
\begin{align}\label{inequalityepsilon}
  P_{X,Z}(x,z) D_{\mathrm{log}}(P_{Y|X=x,Z=z} || P_{Y|X=x}) \leq \epsilon^2,
\end{align}
for all $(x, z) \in \mathcal X' \times \mathcal Z'$. 
Because $P_{X,Z}(x,z)>0$ for all $(x, z) \in \mathcal X' \times \mathcal Z'$, from \eqref{inequalityinformationtheory} and \eqref{inequalityepsilon}, we can write
\begin{align}\label{Oepsilon}
  \sum_{y \in \mathcal Y} (P_{Y|X=x, Z=z}(y)-P_{Y|X=x}(y))^2 \leq \dfrac{2 \text{ln}2 \epsilon^2}{P_{X,Z}(x,z)},
\end{align}
for all $(x, z) \in \mathcal X' \times \mathcal Z'$.
Next, we need the following lemma.
\begin{lemma}\label{divergenceL}
The following assertions are true:
\begin{itemize}
\item[(a)] If two distributions $Q_Y\in \mathcal{P^Y}$ and $P_Y \in \mathcal{P^Y}$ satisfy 
\begin{align}\label{condition_divergenceL}
\sum_{y \in \mathcal Y} (P_Y(y)-Q_Y(y))^2 \leq \beta^2,
\end{align}
then 
\begin{align}
D_L(P_Y || Q_Y)=O(\beta).
\end{align}
\item[(b)] If, in addition, $H_L(Y)$ is twice differentiable in $P_Y$, then
\begin{align}
D_L(P_Y || Q_Y)=O(\beta^2).
\end{align}
\end{itemize}
\end{lemma}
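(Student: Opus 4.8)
\emph{Proof plan.} The plan is to reduce both statements to elementary convex analysis on the probability simplex. Identify $P_Y$ and $Q_Y$ with the probability vectors $\mathbf p_Y=(P_Y(y))_{y\in\mathcal Y}$ and $\mathbf q_Y=(Q_Y(y))_{y\in\mathcal Y}$, and for each action $a$ write $\ell_a=(L(y,a))_{y\in\mathcal Y}$ for its loss vector, so that $\mathbb E_{Y\sim P_Y}[L(Y,a)]=\langle\mathbf p_Y,\ell_a\rangle$ and $H_L(Y)=\min_a\langle\mathbf p_Y,\ell_a\rangle=\langle\mathbf p_Y,\ell_{a_{P_Y}}\rangle$; in particular $H_L$ is a pointwise minimum of linear functionals, hence concave in $\mathbf p_Y$. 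By the definition \eqref{divergence},
\begin{align}\label{DLvec}
D_L(P_Y\|Q_Y)=\langle\mathbf p_Y,\ell_{a_{Q_Y}}\rangle-\langle\mathbf p_Y,\ell_{a_{P_Y}}\rangle=\langle\mathbf p_Y,\,\ell_{a_{Q_Y}}-\ell_{a_{P_Y}}\rangle\,\ge\,0 .
\end{align}
Adding the non-negative quantity $D_L(Q_Y\|P_Y)=\langle\mathbf q_Y,\ell_{a_{P_Y}}-\ell_{a_{Q_Y}}\rangle$ and collecting terms gives the symmetrized bound
\begin{align}\label{DLsym}
0\,\le\,D_L(P_Y\|Q_Y)\,\le\,D_L(P_Y\|Q_Y)+D_L(Q_Y\|P_Y)=\langle\mathbf p_Y-\mathbf q_Y,\,\ell_{a_{Q_Y}}-\ell_{a_{P_Y}}\rangle .
\end{align}

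For part (a), I would apply the Cauchy--Schwarz inequality to the right-hand side of \eqref{DLsym} and invoke \eqref{condition_divergenceL}: $D_L(P_Y\|Q_Y)\le\|\mathbf p_Y-\mathbf q_Y\|_2\,\|\ell_{a_{Q_Y}}-\ell_{a_{P_Y}}\|_2\le\beta\big(\|\ell_{a_{Q_Y}}\|_2+\|\ell_{a_{P_Y}}\|_2\big)$. Since $\mathcal Y$ is finite and the loss $L$ is bounded on the relevant domain (consistent with the boundedness of the pre-processed losses used throughout the paper), every loss vector satisfies $\|\ell_a\|_2\le M\sqrt{|\mathcal Y|}$, so $D_L(P_Y\|Q_Y)\le 2M\sqrt{|\mathcal Y|}\,\beta$; together with $D_L\ge0$ this is $D_L(P_Y\|Q_Y)=O(\beta)$.

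For part (b), I would upgrade \eqref{DLvec} to an exact second-order identity using the extra hypothesis that $H_L$ is twice differentiable on $\mathcal P^{\mathcal Y}$. Because $a_{Q_Y}$ attains the minimum in $H_L(\mathbf q_Y)$, the map $\mathbf z\mapsto\langle\mathbf z,\ell_{a_{Q_Y}}\rangle-H_L(\mathbf z)$ is non-negative on the simplex and vanishes at $\mathbf q_Y$; hence $\mathbf q_Y$ minimizes it, and differentiability forces $\ell_{a_{Q_Y}}-\nabla H_L(\mathbf q_Y)$ to be orthogonal to the affine hull of the simplex, which yields the envelope identity $\langle\mathbf p_Y-\mathbf q_Y,\ell_{a_{Q_Y}}\rangle=\langle\mathbf p_Y-\mathbf q_Y,\nabla H_L(\mathbf q_Y)\rangle$ since $\mathbf p_Y-\mathbf q_Y$ sums to zero. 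Substituting into \eqref{DLvec},
\begin{align}\label{DLBregman}
D_L(P_Y\|Q_Y)=\langle\mathbf p_Y-\mathbf q_Y,\nabla H_L(\mathbf q_Y)\rangle+H_L(\mathbf q_Y)-H_L(\mathbf p_Y)=-\big[H_L(\mathbf p_Y)-H_L(\mathbf q_Y)-\langle\nabla H_L(\mathbf q_Y),\mathbf p_Y-\mathbf q_Y\rangle\big],
\end{align}
which is exactly the Bregman-divergence expression \eqref{L-divergencetoBregman}. A Taylor expansion of $H_L$ with second-order Lagrange remainder then gives $D_L(P_Y\|Q_Y)=-\tfrac12(\mathbf p_Y-\mathbf q_Y)^{\top}\nabla^2 H_L(\xi)(\mathbf p_Y-\mathbf q_Y)$ for some $\xi$ on the segment $[\mathbf q_Y,\mathbf p_Y]\subset\mathcal P^{\mathcal Y}$; bounding $\|\nabla^2 H_L(\xi)\|_{\mathrm{op}}$ by its maximum $C_2$ over the compact set $\mathcal P^{\mathcal Y}$ and using \eqref{condition_divergenceL} gives $|D_L(P_Y\|Q_Y)|\le\tfrac{C_2}{2}\|\mathbf p_Y-\mathbf q_Y\|_2^2\le\tfrac{C_2}{2}\beta^2=O(\beta^2)$.

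The step I expect to need the most care is not any inequality but the regularity bookkeeping: in (a) one must use boundedness of $L$ to keep $\|\ell_a\|_2$ uniformly bounded near $P_Y$ (for unbounded losses such as log-loss this genuinely requires the distributions to stay away from the simplex boundary), and in (b) one must justify the envelope identification of $\nabla H_L(\mathbf q_Y)$ with the Bayes-action loss vector $\ell_{a_{Q_Y}}$ on the tangent space of the simplex --- transparent when $\mathbf q_Y$ lies in the relative interior, and otherwise absorbed into the standing twice-differentiability hypothesis. Everything else is Cauchy--Schwarz and a one-line Taylor expansion.
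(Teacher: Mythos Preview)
Your proposal is correct and essentially matches the paper's proof. For part (b), both you and the paper observe that $D_L(P_Y\|Q_Y)$ equals the convex function $g(\mathbf z)=\langle\mathbf z,\ell_{a_{Q_Y}}\rangle-H_L(\mathbf z)$ evaluated at $\mathbf p_Y$, note that $g(\mathbf q_Y)=0$ and $\nabla g(\mathbf q_Y)=0$ by optimality of the Bayes action (your envelope argument), and then Taylor-expand to second order. For part (a), the paper invokes the subgradient mean-value theorem on the same $g$ (using that its subgradients are bounded), whereas your symmetrization $D_L(P_Y\|Q_Y)\le\langle\mathbf p_Y-\mathbf q_Y,\ell_{a_{Q_Y}}-\ell_{a_{P_Y}}\rangle$ followed by Cauchy--Schwarz is a slightly more elementary variant of the same step; both reduce to uniform boundedness of the loss vectors over finite $\mathcal Y$, which you correctly flag as the one place where care is needed.
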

\begin{proof}
See in Appendix \ref{PdivergenceL}.
\end{proof}

Using \eqref{Oepsilon} and Lemma \ref{divergenceL}(a) in \eqref{CMI1}, we obtain
\begin{align}\label{lemma1result1}
   &I_L(Y;Z|X) \nonumber\\
   =&\sum_{(x,z) \in \mathcal X \times \mathcal Z} P_{X,Z}(x,z)~D_{L}(P_{Y|X=x,Z=z} || P_{Y|X=x}) \nonumber\\
   =&\sum_{(x,z) \in \mathcal X' \times \mathcal Z'} P_{X,Z}(x,z)~D_{L}(P_{Y|X=x,Z=z} || P_{Y|X=x}) \nonumber\\
   =&\sum_{(x,z) \in \mathcal X' \times \mathcal Z'} P_{X, Z}(x,z)~O\left(\frac{\sqrt{2\text{ln}2}\epsilon}{\sqrt{P_{X, Z}(x,z)}}\right) \nonumber\\
   =&~O(\epsilon).
 \end{align}
Similarly, when $H_L(Y)$ is differentiable in $P_Y$, by using Lemma \ref{divergenceL}(b) we obtain 
\begin{align}\label{lemma1result2}
    I_L(Y;Z|X)=O(\epsilon^2).
\end{align}
This completes the proof of Lemma \ref{Lemma_CMI}.

\section{Proof of Theorem \ref{theorem1}}\label{Ptheorem1}
By using the definition of $L$-conditional mutual information in \eqref{CMI}, we can show that
\begin{align}
&H_L(\tilde Y_0 | \tilde X_{-k}, \tilde X_{-k-1}) \nonumber\\
=&H_L(\tilde Y_0 | \tilde X_{-k-1})-I_L( \tilde Y_0; \tilde X_{-k} | \tilde X_{-k-1}) \nonumber\\
=&H_L(\tilde Y_0 | \tilde X_{-k})-I_L( \tilde Y_0; \tilde X_{-k-1} | \tilde X_{-k}).
\end{align}
Expanding $H_L(\tilde Y_0 | \tilde X_{-k})$, we have
\begin{align}
H_L(\tilde Y_0 | \tilde X_{-k})=& H_L(\tilde Y_0 | \tilde X_{-k-1})+ I_L(\tilde Y_0; \tilde X_{-k-1} | \tilde X_{-k})\nonumber\\
&-I_L(\tilde Y_0; \tilde X_{-k} | \tilde X_{-k-1}).
\end{align}
Since the above equation is valid for all values of $k \geq 0$, taking the summation of $H_L(\tilde Y_0 | \tilde X_{-k})$ from $k=0$ to $\delta-1$ yields:
\begin{align} \label{PMarkovC1}
H_L(\tilde Y_0 | \tilde X_{-\delta})=& H_L(\tilde Y_0 | \tilde X_0) + \sum_{k=0}^{\delta-1} I_L(\tilde Y_0; \tilde X_{-k} | \tilde X_{-k-1})\nonumber\\
&- \sum_{k=0}^{\delta-1} I_L(\tilde Y_0; \tilde X_{-k-1} | \tilde X_{-k}).
\end{align}
Thus, we can express $H_L(\tilde Y_0 | \tilde X_{-\delta})$ as a function of $\delta$ as in \eqref{eMarkov} and \eqref{g12function}.
Moreover, the functions $g_1(\delta)$ and $g_2(\delta)$ defined in \eqref{g12function} are non-decreasing in $\delta$ as $I_L(\tilde Y_0; \tilde X_{-k} | \tilde X_{-k-1}) \geq 0$ and $I_L(\tilde Y_0; \tilde X_{-k-1} | \tilde X_{-k}) \geq 0$ for all values of $k$. 

To prove the next part, we use Lemma \ref{Lemma_CMI}. Because for every $\mu, \nu \geq 0$, $\tilde Y_{0} \overset{\epsilon} \leftrightarrow \tilde X_{-\mu} \overset{\epsilon} \leftrightarrow \tilde X_{-\mu-\nu}$ is an $\epsilon$-Markov chain, we can write
\begin{align}
I_L(\tilde Y_0; \tilde X_{-k-1} | \tilde X_{-k})=O(\epsilon).
\end{align}
This implies
\begin{align}
g_2(\delta)=\sum_{k=0}^{\delta-1} O(\epsilon)= O(\epsilon).
\end{align}
The last equality follows from the summation property of big-O-notation. This completes the proof.

\section{Proof of Theorem \ref{theorem2}}\label{Ptheorem2}
Using \eqref{freshness_aware_cond} and Theorem \ref{theorem1}, we obtain
\begin{align}\label{Th2_1}
\!\!\!\!\!\!&H_L(\tilde Y_0 | \tilde X_{-\Theta}, \Theta) \nonumber\\
=& \sum_{\delta \in \mathbb Z^{+}} \!P_{\Theta}(\delta) \big(H_L(\tilde Y_0 | \tilde X_0)+\!\hat g_1(\delta)-g_2(\delta)\big)\nonumber\\
=&H_L(\tilde Y_0 | \tilde X_0)\!+\mathbb E_{\Theta \sim P_{\Theta}}\! [\hat g_1(\Theta)]\!-\mathbb E_{\Theta \sim P_{\Theta}}\! [g_2(\Theta)],
\end{align}
where 
\begin{align}
\hat g_1(\delta)&=g_1(\delta)-H_L(\tilde Y_0 | \tilde X_0)\nonumber\\
&=\sum_{k=0}^{\delta-1} I_L(\tilde Y_0; \tilde X_{-k} | \tilde X_{-k-1}). 
\end{align}
Because mutual information $I_L(\tilde Y_0; \tilde X_{-k} | \tilde X_{-k-1})$ is non-negative, we have
\begin{align} 
\hat g_1(\delta)=\sum_{k=0}^{\delta-1} I_L(\tilde Y_0; \tilde X_{-k} | \tilde X_{-k-1})\geq 0.
\end{align}
Because $\hat g_1(\delta)$ is non-negative for all $\delta$, the function $\hat g_1(\cdot)$ is Lebesgue integrable with respect to all probability measure $P_\Theta$ \cite{durrett2019probability}. Hence, the expectation $\mathbb E_{\Theta \sim P_{\Theta}} [\hat g_1(\Theta)]$ exists. Note that $E_{\Theta \sim P_{\Theta}} [\hat g_1(\Theta)]$ can be infinite ($+\infty$). By using the same argument, we obtain that $E_{\Theta \sim P_{\Theta}} [g_2(\Theta)]$ exists but can also be infinite. Moreover, the functions $\hat g_1(\delta)$ and $g_2(\delta)$ are non-decreasing in $\delta$.

Because (i) the function $\hat g_1(\delta)$ is non-decreasing in $\delta$, (ii) the expectation $\mathbb E_{\Theta \sim P_{\Theta}} [\hat g_1(\Theta)]$ exists, and (iii) $\Theta_1 \leq_{st} \Theta_2$, we have \cite{stochasticOrder}
\begin{align}\label{Sorder}
\mathbb E_{\Theta \sim P_{\Theta_1}} [\hat g_1(\Theta)] \leq \mathbb E_{\Theta \sim P_{\Theta_2}} [\hat g_1(\Theta)].
\end{align}

Next, we obtain:
\begin{align}
&H_L(\tilde Y_0 | \tilde X_{-\Theta_1}, \Theta_1) \nonumber\\
\overset{(a)}{=}& H_L(\tilde Y_0 | \tilde X_0)+\mathbb E_{\Theta \sim P_{\Theta_1}} [\hat g_1(\Theta)]-\mathbb E_{\Theta \sim P_{\Theta_1}} [g_2(\Theta)] \nonumber\\
\overset{(b)}{\leq}& H_L(\tilde Y_0 | \tilde X_0)+\mathbb E_{\Theta \sim P_{\Theta_2}} [\hat g_1(\Theta)]-\mathbb E_{\Theta \sim P_{\Theta_1}} [g_2(\Theta)]\nonumber\\
\overset{(c)}{=}&H_L(\tilde Y_0 | \tilde X_{-\Theta_2}, \Theta_2) \nonumber\\
&+\mathbb E_{\Theta \sim P_{\Theta_2}} [g_2(\Theta)] -\mathbb E_{\Theta \sim P_{\Theta_ 1}} [g_2(\Theta)] \nonumber\\
\overset{(d)}{=}&H_L(\tilde Y_0 | \tilde X_{-\Theta_2}, \Theta_2)+O(\epsilon),
\end{align} 
where (a) and (c) hold due to \eqref{Th2_1}, (b) is obtained using \eqref{Sorder}, and (d) follows from the fact that $\tilde Y_0 \overset{\epsilon}\leftrightarrow \tilde X_{-\mu} \overset{\epsilon}\leftrightarrow \tilde X_{-\mu-\nu}$ is an $\epsilon$-Markov chain for all $\mu, \nu \geq 0$ (see Theorem \ref{theorem1}(b)). This completes the proof. 

\section{Proof of Lemma \ref{lemma_inference}}\label{Plemma_inference}
By using condition \eqref{T3condition2} and Lemma \ref{divergenceL}(a), we obtain for all $x \in \mathcal X$:
\begin{align}\label{E1}
D_L\left(P_{Y_{t}| X_{t-\delta}=x} || P_{\tilde Y_{0}| \tilde X_{-\delta}=x}\right)=O(\beta).
\end{align}
Next, by using \eqref{L-CondCrossEntropy} and \eqref{E1}, we have
\begin{align}
&H_L(P_{Y_{t}|X_{t-\delta}}; P_{\tilde Y_{0}|X_{-\delta}} | P_{X_{t-\delta}})\nonumber\\
=& H_L(Y_{t} | X_{t-\delta}) \nonumber\\
&+\sum_{x \in \mathcal X} P_{X_{t-\delta}}(x)~D_L\left(P_{Y_{t}| X_{t-\delta}=x} || P_{\tilde Y_{0}| \tilde X_{-\delta}=x}\right)\nonumber\\
=& H_L(Y_{t} | X_{t-\delta})+O(\beta).
\end{align}  
This completes the proof.
\section{Proof of Theorem \ref{theorem3}}\label{Ptheorem3}
Part (a): 
By the definition of $L$-conditional cross entropy \eqref{L-CondCrossEntropy}, we get 
\begin{align}\label{T3_1}
&H_L(P_{Y_{t}|X_{t-\delta}}; P_{\tilde Y_{0}|X_{-\delta}} | P_{X_{t-\delta}}) \nonumber\\
=&\sum_{x \in \mathcal X} \!\!P_{X_{t-\delta}}(x)\mathbb{E}_{Y \sim P_{Y_t| X_{t- \delta}=x}} \left[ \!L\left(Y,a_{\tilde Y_0|\tilde X_{-\delta}=x}\right)\right] ,\!\!
\end{align} 
where the Bayes predictor $a_{\tilde Y_0|\tilde X_{-\delta}=x}$ is fixed in the inference phase for every time slot $t$. Because $\{(Y_t, X_t),t=0, 1, 2, \ldots\}$ is a stationary process, \eqref{T3_1} is a function of the AoI $\delta$.

Part (b): We can apply Lemma \ref{lemma_inference} since \eqref{T3condition2} holds for all $x \in \mathcal X$ and $\delta \in \mathcal Z^{+}$. This gives us:
\begin{align}
&H_L(P_{Y_{t}|X_{t-\delta_1}}; P_{\tilde Y_{0}|X_{-\delta_1}} | P_{X_{t-\delta_1}})\nonumber\\
=& H_L(Y_{t} | X_{t-\delta_1})+O(\beta) \nonumber\\
\leq& H_L( Y_{t} | X_{t-\delta_2})+O(\epsilon)+O(\beta) \nonumber\\
=&H_L(P_{Y_{t}|X_{t-\delta_2}}; P_{\tilde Y_{0}|X_{-\delta_2}} | P_{X_{t-\delta_2}})+O(\beta)+O(\epsilon)+O(\beta) \nonumber\\
=&H_L(P_{Y_{t}|X_{t-\delta_2}}; P_{\tilde Y_{0}|X_{-\delta_2}} | P_{X_{t-\delta_2}})+O(\max\{\epsilon, \beta\}),
\end{align}
where we use Lemma \ref{lemma_inference} to obtain the first and the third equality, and the second inequality holds due to the assumption that $Y_t \overset{\epsilon}\leftrightarrow X_{t-\mu} \overset{\epsilon}\leftrightarrow X_{t-\mu-\nu}$ is an $\epsilon$-Markov chain for all $\mu, \nu \geq 0$ (see Theorem \ref{theorem1}). This completes the proof.

\ignore{\section{Proof of Theorem \ref{theorem4}}\label{Ptheorem4}
By using \eqref{Decomposed_Cross_entropy}, Lemma \ref{lemma_inference}, and Theorem \ref{theorem2}, we obtain
\begin{align}
H_L(\tilde Y_{t}; Y_{t} | \tilde X_{t-\Delta_1}, \Delta_1)=&\sum_{\delta \in \mathcal D} P_{\Delta_1}(\delta)H_L(\tilde Y_{t}; Y_{t} | \tilde X_{t-\delta}) \nonumber\\
=& \sum_{\delta \in \mathcal D} P_{\Delta_1}(\delta)~H_L(\tilde Y_{t} | \tilde X_{t-\delta})+O(\beta) \nonumber\\
=& H_L(\tilde Y_{t} | \tilde X_{t-\Delta_1}, \Delta_1)+O(\beta) \nonumber\\
\leq& H_L(\tilde Y_{t} | \tilde X_{t-\Delta_2}, \Delta_2)+O(\epsilon)+O(\beta) \nonumber\\
=&H_L(\tilde Y_{t}; Y_{t} | \tilde X_{t-\Delta_2}, \Delta_2)+O(\beta)+O(\epsilon)+O(\beta) \nonumber\\
=&H_L(\tilde Y_{t}; Y_{t} | \tilde X_{t-\Delta_2}, \Delta_2)+O(\max\{\epsilon, \beta\}).
\end{align}}
\ignore{\section{Simplification of the Gittins Index in \eqref{gittins} }\label{GittinsDerivation}
For the bandit process $\{\Delta(t) : t \geq 0\}$ in \eqref{Bandit}, define the $\sigma$-field 
\begin{align}\label{sigma-field}
\mathcal F_s^t= \sigma(\Delta(t+k): k \in \{0, 1, \ldots, s\}),
\end{align} which is the set of events whose occurrence are determined by the realization of the process $\{\Delta(t+k) :  k \in \{0, 1, \ldots, s\}\}$ from time slot $t$ up to time slot $t+s$. Then, $\{ \mathcal F_k^t, k \in \{0, 1, \ldots\}\}$ is the filtration of the time shifted process $\{\Delta(t+k) : k \in \{0, 1, \ldots\} \}$. We define $\mathfrak M$ as the set of all stopping times by 
\begin{align}
\mathfrak M=\{ \nu \geq 0 : \{\nu=k\} \in \mathcal F_k^t, k \in \{0, 1, 2, \ldots\}\}.
\end{align}

The Gittins index $\gamma(\delta)$ \cite{gittins2011multi} is the value of reward $r$ for which the \textsc{stop} and \textsc{continue} actions are equally profitable at state $\Delta(t)=\delta$. Hence, $\gamma(\delta)$ is the root of the following equation of $r$:  
\begin{align}\label{gittins_subtract}
&\sup_{\nu \in \mathfrak M, \nu \neq 0} \mathbb E\left[ \sum_{k=0}^{\nu+T_1-1} [r-p(\Delta(t+k))]\bigg| \Delta(t)=\delta \right]\nonumber\\
&~~~~~~=\mathbb E\left[ \sum_{k=0}^{T_1-1} [r-p(\Delta(t+k))]\bigg| \Delta(t)=\delta \right].
\end{align}
where the left hand side of \eqref{gittins_subtract} is the maximum total expected profit under \textsc{continue} action and the right hand side of \eqref{gittins_subtract} is the total expected profit under \textsc{stop} action. By re-arranging \eqref{gittins_subtract}, it can be expressed as
\begin{align}\label{gittins_subtract1}
&\sup_{\nu \in \mathfrak M, \nu \neq 0} \mathbb E\left[ \sum_{k=0}^{\nu-1} [r-p(\Delta(t+k+T_1))]\bigg| \Delta(t)=\delta \right]=0.
\end{align}
Because the left hand side of \eqref{gittins_subtract1} is the supremum of strictly increasing and linear functions of $r$, it is convex, continuous, and strictly increasing in $r$. Thus, the fixed-point equation \eqref{gittins_subtract1} has a unique root. The root can also be expressed as
\begin{align}\label{gittins1}
&\gamma(\delta) \nonumber\\
=&\bigg\{r:\!\!\!\!\sup_{\nu \in \mathfrak M, \nu \neq 0}\!\!\!\! \mathbb E\left[ \sum_{k=0}^{\nu-1} [r-p(\Delta(t+k+T_1))]\bigg| \Delta(t)\!=\!\delta \right]\!\!=\!0\bigg\}.
\end{align}
Let $\nu^* \in \mathfrak M$ be the optimal stopping time that solves \eqref{gittins1}. Because of \eqref{Bandit} and $T_1 \geq 1$, for any $k \leq \nu^*$, $\Delta(t+k)=\Delta(t)+k$. Hence, $\{\Delta(t+k): 1 \leq k \leq \nu^*\}$ is completely determined by the initial value $\Delta(t)$ and for all $k \leq \nu^*$, the $\sigma$-field $\mathcal F_k^t$ can be simplified as $\mathcal F_k^t=\sigma(\Delta(t))$. Thus, any stopping time in $\mathfrak M$ is a deterministic time, given $\Delta(t)=\delta$. By this, \eqref{gittins1} can be simplified as
\begin{align}\label{gittins2}
\!\!&\gamma(\delta)\nonumber\\
=&\bigg\{r:\!\!\!\!\! \sup_{\tau \in \{1, 2, \ldots\}}\!\!\!\!\! \mathbb E\left[ \sum_{k=0}^{\tau-1} [r-p(\Delta(t+k+T_1))]\bigg| \Delta(t)\!=\!\delta \right]\!\!=\!0\bigg\}\nonumber\\
=&\bigg\{r:\!\!\!\!\! \inf_{\tau \in \{1, 2, \ldots\}} \mathbb E\left[ \sum_{k=0}^{\tau-1} [p(\Delta(t+k+T_1))-r]\bigg| \Delta(t)\!=\!\delta \right]\!\!=\!0\bigg\}\nonumber\\
=&\!\bigg\{r: \!\!\!\!\! \inf_{\tau \in \{1, 2, \ldots\}} \sum_{k=0}^{\tau-1}\!\mathbb E\left[p(\Delta(t+k+T_1))-r\bigg| \!\Delta(t)\!=\delta \right]\!\!=\!0\bigg\},
\end{align}
where $\tau$ is a deterministic positive integer.

Define 
\begin{align}
f(r)=\inf_{\tau \in \{1, 2, \ldots\}} \sum_{k=0}^{\tau-1}\mathbb E\left[p(\delta+k+T_1)-r\right].
\end{align}
Similar to Lemma 7 in \cite{orneeTON2021}, the following lemma holds.
\begin{lemma}\label{fraction_programming}
$f(r) \lesseqgtr 0$ if and only if
\begin{align}
\inf_{\tau \in \{1, 2, \ldots\}} \frac{1}{\tau} \sum_{k=0}^{\tau-1} \mathbb E \left [p(\delta+k+T_{1}) \right] \lesseqgtr r.
\end{align}
\end{lemma}
By \eqref{gittins1}, \eqref{gittins2}, and Lemma \ref{fraction_programming}, the root of equation $f(r) = 0$ is given by \eqref{gittins}. This completes the proof.}

\section{Proofs of Theorem \ref{theorem5}, Theorem \ref{theorem6}, Theorem \ref{decoupled}, and Theorem \ref{decoupled1}}\label{MainResult}
In this section, we prove Theorem \ref{theorem5}, Theorem \ref{theorem6}, Theorem \ref{decoupled}, and Theorem \ref{decoupled1}. These theorems provide optimal solutions for the scheduling problems \eqref{scheduling_problem}, \eqref{sub_scheduling_problem}, \eqref{decoupled_problem}, and \eqref{sub_decoupled_problem}.
We begin by deriving the optimal solution for \eqref{decoupled_problem}. Subsequently, the optimal solutions for \eqref{scheduling_problem}, \eqref{sub_scheduling_problem}, and \eqref{sub_decoupled_problem} follow directly, as these problems are special cases of \eqref{decoupled_problem}.

Since the problem \eqref{decoupled_problem} focuses solely on a single source, we simplify the notation by omitting the source index $m$ and rewrite the problem \eqref{decoupled_problem} as follows:
\begin{align}\label{decoupled_problem1}
\!\!\bar p_{opt}(\lambda)\!=\!\inf_{\pi \in \Pi} \limsup_{T\rightarrow \infty} \mathbb{E}\!\left[ \frac{1}{T} \sum_{t=0}^{T-1} w~p(\Delta(t)) + \lambda c(t) \right],
\end{align}
where $p(\Delta(t))$ is the penalty at time $t$, $\Delta(t)\in \mathbb Z^{+}$ is the AoI, $c(t) \in \{0, 1\}$ is the channel occupation status at time $t$, $\pi=((S_1, b_1), (S_2, b_2), \ldots)$ is a scheduling policy, $\Pi$ is the set of all causal and signal-agnostic scheduling policies, $w>0$ is a weight, and $\bar p_{opt}(\lambda)$ is the optimal objective value to \eqref{decoupled_problem1}.

The scheduling problem \eqref{decoupled_problem1} is an infinite-horizon average-cost semi-Markov decision process (SMDP) \cite[Chapter 5.6]{bertsekasdynamic1}. We provide a detailed description of the components of this problem:

\begin{itemize}
    \item{\bf Decision Time:} Each $i$-th feature delivery time $D_i=S_i+T_i$ is a decision time of the problem \eqref{decoupled_problem1}, where $S_i$ is the scheduling time of the $i$-th feature and the $i$-th feature takes $T_i\geq 1$ time slots to be delivered. 
     \item {\bf State:} At time slot $D_i$, the state of the system is represented by AoI $\Delta(D_i)$. 
     \item {\bf Action:} Let $\tau_{i+1}=S_{i+1}-D_i$ represent the waiting time for sending the $(i + 1)$-th feature after the $i$-th feature is delivered. As we consider $S_0=0$ and $S_i=\sum_{j=1}^{i} (T_{j-1}+\tau_{j})$ for each $i=1, 2, \ldots$. Given $(T_0, T_1, \ldots)$, the sequence $(S_1, S_2, \ldots)$ is uniquely determined by $(\tau_1, \tau_2, \ldots)$. Hence, one can also use $(\tau_1, \tau_2, \ldots)$ to represent a sequence of actions instead of $(S_1, S_2, \ldots)$. 

     At time $D_i$, the scheduler decides the waiting time $\tau_{i+1}$ and the buffer position $b_{i+1}$.

     \item {\bf State Transitions:} The AoI process $\Delta(t)$ evolves as 
\begin{align}\label{AoIProcess}
\!\! \Delta(t)=
  \begin{cases}
        T_{i}+b_i, &\text{if}~t=D_{i}, i=0, 1, \ldots, \\
        \Delta(t-1)+1, &\text{otherwise}.
    \end{cases}
    \end{align} 
\item {\bf Expected Transition Time:} The expected time difference between two decision times, $D_i$ and $D_{i+1}$, is given by
\begin{align}\label{interdeliverytime}
\mathbb E[D_{i+1}-D_i] &=\mathbb E[S_{i+1}+T_{i+1}-(S_i+T_i)]\nonumber\\&=\mathbb E[S_{i}+T_i+\tau_{i+1}+T_{i+1}-S_i-T_i]\nonumber\\&=\mathbb E[\tau_{i+1}+T_{i+1}].
\end{align}

    \item {\bf Expected Transition Cost:} The expected cumulative cost incurred during the interval between two decision times, $D_i$ and $D_{i+1}$, can be calculated as
\begin{align}\label{interdeliverycost}
&\mathbb E\left[\sum_{t=D_i}^{D_{i+1}-1}\bigg(w~p(\Delta(t))+\lambda c(t)\bigg)\right]\nonumber\\
=&\mathbb E\left[\sum_{k=0}^{\tau_{i+1}+T_{i+1}-1} w~ p(\Delta(D_i+k))\right]\!+\!\lambda \mathbb E[T_{i+1}].
\end{align}
\end{itemize}
The infinite-horizon average-cost SMDP \eqref{decoupled_problem1} can be solved by using dynamic programming \cite{bertsekasdynamic1, puterman2014markov}. Let $h: \mathbb Z^{+} \mapsto \mathbb R$ be the relative value function associated with the average-cost SMDP \eqref{decoupled_problem1}. At time $t=D_i$, the optimal action $(\tau_{i+1}, b_{i+1})$ can be determined by solving the following Bellman optimality equation \cite[P. 275]{bertsekasdynamic1}: 
\begin{align}\label{Bellman1}
&h(\Delta(D_i))\nonumber\\
&=\!\!\inf_{\overset{\tau_{i+1} \in \{0, 1, \ldots\}}{b_{i+1} \in \{0, \ldots, B-1\}}}\!\! \mathbb E \left [ \sum_{k=0}^{\tau_{i+1}+T_{i+1}-1} w p(\Delta(D_i+k)) \right]+\lambda \mathbb E[T_{i+1}]\nonumber\\
&\quad \quad \quad \quad \quad-\bar p_{opt}(\lambda)\mathbb E[\tau_{i+1}+T_{i+1}]+\mathbb E[h(\Delta(D_{i+1})]\nonumber\\
&=\!\!\inf_{\overset{\tau_{i+1} \in \{0, 1, \ldots\}}{b_{i+1} \in \{0, \ldots, B-1\}}}\!\! \mathbb E \left [ \sum_{k=0}^{\tau_{i+1}+T_{i+1}-1} \bigg(w p(\Delta(D_i+k))\!-\!\bar p_{opt}(\lambda)\bigg) \right]\nonumber\\
&\quad \quad \quad \quad \quad+\lambda \mathbb E[T_{i+1}]+\mathbb E[h(\Delta(D_{i+1})]\nonumber\\
&=\!\!\inf_{\overset{\tau_{i+1} \in \{0, 1, \ldots\}}{b_{i+1} \in \{0, \ldots, B-1\}}}\!\! \mathbb E \left [ \sum_{k=0}^{\tau_{i+1}+T_{i+1}-1} \bigg(w p(\Delta(D_i+k))\!-\!\bar p_{opt}(\lambda)\bigg) \right]\nonumber\\
&\quad \quad \quad \quad \quad+\lambda \mathbb E[T_{i+1}]+\mathbb E[h(T_{i+1}+b_{i+1})],
\end{align}
where the last equality holds because $\Delta(D_{i+1})=T_{i+1}+b_{i+1}$. 

From \eqref{Bellman1}, it is observed that the buffer position $b_{i+1}$ only depends on the term $\mathbb E[h(T_{i+1}+b_{i+1})]$, while the waiting time $\tau_{i+1}$ has no impact on $\mathbb E[h(T_{i+1}+b_{i+1})]$. Hence, the optimal buffer position $b^*_{i+1}$ is determined by
\begin{align}\label{optimization2}
b_{i+1}^*=&\arg\min_{b_{i+1} \in \{0, 1, \ldots, B-1\}} \mathbb E[h(T_{i+1}+b_{i+1})].
\end{align}
Since $T_i$'s are i.i.d., $\mathbb E[h(T_{i+1}+b)]=\mathbb E[h(T_{i}+b)]=\cdots=\mathbb E[h(T_{1}+b)]$ for all $i$ and $b$. Hence, from \eqref{optimization2}, it is evident that there exists a $b^*\in \{0, 1, \ldots, B-1\}$ such that $b^*_1=b^*_2=\cdots=b^*_{i+1}=b^*$ that satisfies
\begin{align}
b^*=&\arg\min_{b \in \{0, 1, \ldots, B-1\}} \mathbb E[h(T_{1}+b)].
\end{align}
Because the optimal buffer position is time-invariant, the problem \eqref{decoupled_problem1} can be expressed as
\begin{align}\label{rduced_scheduling_problem}
\bar p_{opt}(\lambda)
=\min_{b\in \{0, 1,\ldots, B-1\}} \bar p_{b, opt} (\lambda),
\end{align}
where $\bar p_{b, opt} (\lambda)$ is given by
\begin{align}\label{scheduling_problem_b}
\bar p_{b, opt}(\lambda)=\inf_{\pi_b\in\Pi_b}  \limsup_{T\rightarrow \infty}\frac{1}{T} \mathbb{E}_{\pi_b}\left [ \sum_{t=0}^{T-1} w~p(\Delta(t))+\lambda c(t)\right],
\end{align}
$\pi_b=((S_1, b), (S_2, b), \ldots)$, $\Pi_b$ is the set of all causal and signal-agnostic scheduling policies $\pi_b$ with fixed buffer position $b$, and $\bar p_{b, opt}(\lambda)$ is the optimal objective value to \eqref{scheduling_problem_b}.

At every $i$-th decision time $D_i$ of the average-cost SMDP \eqref{scheduling_problem_b}, the scheduler decides the waiting time $\tau_{i+1}$. The average-cost SMDP \eqref{scheduling_problem_b} can be solved by using dynamic programming \cite{bertsekasdynamic1, puterman2014markov}. Given AoI value $\delta$ at decision time $D_i$, the Bellman optimality equation of \eqref{scheduling_problem_b} is obtained by substituting $\Delta(D_i)=\delta$, $b_{i+1}=b$, and $\bar p_{opt}(\lambda)=\bar p_{b, opt}(\lambda)$ into \eqref{Bellman1}, given by
\begin{align}\label{Bellman2}
h_b(\delta)=&\inf_{\tau \in \{0, 1, 2, \ldots\}}\mathbb E \left [ \sum_{k=0}^{\tau+T_{i+1}-1} (w~p(\delta+k) -\bar p_{b, opt}(\lambda)) \right]\nonumber\\
&+\lambda\mathbb E[T_{i+1}]+\mathbb E[h_b(T_{i+1}+b)], ~\delta = 1, 2,\ldots \nonumber\\
=&\inf_{\tau \in \{0, 1, 2, \ldots\}}\mathbb E \left [ \sum_{k=0}^{\tau+T_{1}-1} (w~p(\delta+k) -\bar p_{b, opt}(\lambda)) \right]\nonumber\\
&+\lambda\mathbb E[T_{1}]+\mathbb E[h_b(T_{1}+b)], ~\delta = 1, 2,\ldots, 
\end{align}
where the last equality holds because $T_i$'s are identically distributed. Let $\tau(\delta, \bar p_{b, opt}(\lambda))$ be an optimal solution to \eqref{Bellman2}. If $\Delta(D_i)=\delta$, then an optimal waiting time $\tau_{i+1}$ of \eqref{scheduling_problem_b} for sending the $(i+1)$-th feature is $\tau(\delta, \bar p_{b, opt}(\lambda))$. 

From \eqref{Bellman2}, we can show that $\tau(\delta, \bar p_{b, opt}(\lambda))=0$ if 
\begin{align}\label{inequality1}
&\inf_{\tau \in \{1, 2, \ldots\}}\mathbb E \left [ \sum_{k=0}^{\tau+T_{1}-1} (w~p(\delta+k) -\bar p_{b, opt}(\lambda)) \right] \nonumber\\
&\geq \mathbb E \left [ \sum_{k=0}^{T_{1}-1} (w~p(\delta+k) -\bar p_{b, opt}(\lambda)) \right].
\end{align}
After some rearrangement, the inequality \eqref{inequality1} can also be expressed as 
\begin{align}\label{inequality2}
\inf_{\tau \in \{1, 2, \ldots\}}\mathbb E \left [ \sum_{k=0}^{\tau-1} (w~p(\delta+k+T_{1}) -\bar p_{b, opt}(\lambda)) \right] \geq 0.
\end{align}
Next, similar to \cite[Lemma 7]{orneeTON2021}, the following lemma holds.
\begin{lemma}\label{fraction_programming}
The inequality \eqref{inequality2} holds if and only if
\begin{align}\label{inequality3}
\inf_{\tau \in \{1, 2, \ldots\}} \frac{1}{\tau} \mathbb{E} \left[ \sum_{k=0}^{\tau-1} w~p(\delta+k+T_{1}) \right] \geq \bar p_{b, opt}(\lambda).
\end{align}
\end{lemma}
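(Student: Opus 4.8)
The plan is to recognize Lemma~\ref{fraction_programming} as a one-sided specialization of the classical fractional-programming (Dinkelbach-type) lemma and to prove it by a short elementary manipulation. Write $c := \bar p_{b, opt}(\lambda)$, a fixed real number, and for each integer $\tau \ge 1$ let $g(\tau) := \mathbb{E}[\sum_{k=0}^{\tau-1} w\,p(\delta+k+T_{1})]$. First I would note that each $g(\tau)$ is finite: by the bound $|p(\cdot)| \le M$ assumed in Theorem~\ref{theorem5}, the partial sum inside the expectation is bounded in absolute value by $\tau w M$, so $g(\tau)$ is well defined with $|g(\tau)| \le \tau w M$, and hence all infima appearing below are infima of real-valued sequences and the arithmetic with them is legitimate.

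The key observation is that, since $c$ is a deterministic constant and the sum $\sum_{k=0}^{\tau-1}$ has exactly $\tau$ terms, linearity of expectation gives $\mathbb{E}[\sum_{k=0}^{\tau-1}(w\,p(\delta+k+T_{1}) - c)] = g(\tau) - c\,\tau$ for every $\tau \ge 1$. Consequently, inequality~\eqref{inequality2} --- which reads $\inf_{\tau \ge 1}(g(\tau) - c\,\tau) \ge 0$ --- is equivalent, by the elementary fact that an infimum of a sequence is $\ge 0$ precisely when every term is $\ge 0$, to the statement that $g(\tau) \ge c\,\tau$ for all $\tau \ge 1$. Dividing through by the positive integer $\tau$, which preserves the inequality and is reversed by multiplying back, turns this into $g(\tau)/\tau \ge c$ for all $\tau \ge 1$, i.e.\ $\inf_{\tau \ge 1} g(\tau)/\tau \ge c$, which is exactly \eqref{inequality3}. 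Since every step is a two-sided equivalence, \eqref{inequality2} holds if and only if \eqref{inequality3} does, which is the claim.

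I do not expect a genuine obstacle here; the proof is only a couple of lines. The sole points deserving care are (i) confirming finiteness of the partial sums, handled by $|p(\cdot)| \le M$, so that one may freely rearrange terms and divide inside the infimum, and (ii) noting that the three-way ``$\lesseqgtr$'' version used as \cite[Lemma~7]{orneeTON2021} specializes to the single ``$\ge$'' case needed in the Bellman-equation analysis of \eqref{Bellman2}. I would therefore present Lemma~\ref{fraction_programming} with this brief self-contained argument and cite the fractional-programming lemma for the more general statement.
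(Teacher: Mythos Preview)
Your argument is correct and is precisely the standard fractional-programming manipulation underlying \cite[Lemma~7]{orneeTON2021}; the paper does not give its own proof but simply invokes that reference, so your self-contained derivation is the same approach made explicit.
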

According to \eqref{gittins}, the left-hand side of \eqref{inequality3} is equal to $\gamma(\delta)$.
Similarly, $\tau(\delta, \bar p_{b, opt}(\lambda))=1$, if $\tau(\delta, \bar p_{b, opt}(\lambda))\neq 0$ and 
\begin{align}\label{inequality4}
\inf_{\tau \in \{2, 3, \ldots\}}\mathbb E \left [ \sum_{k=0}^{\tau-1} (w~p(\delta+k+T_{1}) -\bar p_{b, opt}(\lambda)) \right] \geq 0
\end{align}
By using Lemma \ref{fraction_programming} and \eqref{gittins}, we can show that the inequality \eqref{inequality4} can be rewritten as 
\begin{align}
\gamma(\delta+1)\geq \bar p_{b, opt}(\lambda).
\end{align}
By repeating this process, we get $\tau(\delta, \bar p_{b, opt}(\lambda))=k$ is optimal, if $\tau(\delta, \bar p_{b, opt}(\lambda))\neq 0, 1, \ldots, k-1$ and 
\begin{align}
\gamma(\delta+k)\geq\bar p_{b, opt}(\lambda).
\end{align}
Hence, the optimal waiting time $\tau_{i+1}=\tau(\delta, \bar p_b(\lambda))$ is determined by
\begin{align}\label{optimalwaitingb}
\tau(\delta, \bar p_b(\lambda))=\min_{k \in \mathbb Z}\{k \geq 0: \gamma(\delta+k)\geq  \bar p_{b, opt}(\lambda)\}.
\end{align}

Now, we are ready to compute the optimal objective value $\bar p_{b, opt}(\lambda)$. Using \eqref{Bellman2}, we can determine the value of $\mathbb{E}[h_b(T_1+b)]$, which is given by
\begin{align}\label{relativeValue4}
&\mathbb{E}[h_b(T_1+b)]\nonumber\\
&= \mathbb{E} \left[ \sum_{k=0}^{\tau(T_1+b, \bar p_{b, opt}(\lambda))+T_{1}-1} (wp(T_1+b+k) -  \bar p_{b, opt}(\lambda)) \right] \nonumber \\
&+\lambda \mathbb E[T_1]+ \mathbb{E}[h_b(T_{1}+b)],
\end{align}
which yields
\begin{align}\label{optimalfinding}
&\mathbb{E} \left[ \sum_{k=0}^{\tau(T_1+b, \bar p_{b, opt}(\lambda))+T_{1}-1} \bigg(w p(T_1+b+k) -  \bar p_{b, opt}(\lambda)\bigg) \right]\nonumber\\
&+\lambda \mathbb E[T_1] = 0.
\end{align}
Rearranging \eqref{optimalfinding}, we get
\begin{align}\label{pb}
 &\bar p_{b, opt}(\lambda) \nonumber\\
 &= \frac{\mathbb{E} \left[ \sum_{k=0}^{\tau(T_1+b,  \bar p_{b, opt}(\lambda))+T_{1}-1} wp(T_1+b+k) \right]+\lambda \mathbb E[T_1]}{\mathbb{E}[\tau(T_1+b,  \bar p_{b, opt}(\lambda))+T_{1}]}\nonumber\\
&=\frac{\mathbb{E}\left[\sum_{t=D_i( \bar p_{b, opt}(\lambda))}^{D_{i+1}( \bar p_{b, opt}(\lambda))-1}  wp\big(\Delta(t)\big)\right]+\lambda E[T_1]}{\mathbb{E}\left[D_{i+1}( \bar p_{b, opt}(\lambda))-D_i( \bar p_{b, opt}(\lambda))\right]},
\end{align}
where $D_{i+1}( \bar p_{b, opt}(\lambda))=S_{i+1}( \bar p_{b, opt}(\lambda))+T_{i+1}$ and \begin{align}
&S_{i+1}(\bar p_{b, opt}(\lambda))\nonumber\\
&=\min_{t\geq 0}\{t \geq D_i(\bar p_{b, opt}(\lambda)):\gamma(\Delta(t))\geq \bar p_{b, opt}(\lambda)\},
\end{align}and the last equality holds due to \eqref{interdeliverytime} and \eqref{interdeliverycost}.

Now, by combining \eqref{rduced_scheduling_problem}, \eqref{optimalwaitingb}, and \eqref{pb} and by substituting appropriate values of $\lambda$ and $w$, we obtain optimal solution to \eqref{sub_scheduling_problem}, \eqref{scheduling_problem}, \eqref{decoupled_problem}, and \eqref{sub_decoupled_problem}.

Finally, we need to prove that
\begin{align}\label{JfunctionW}
&\mathbb{E}\left[\sum_{t=D_i(\beta)}^{D_{i+1}(\beta)-1}  wp\big(\Delta(t)\big)\right]\nonumber\\
&-\beta \mathbb{E}\left[D_{i+1}(\beta)-D_i(\beta)\right]+\lambda \mathbb E[T_1]=0.
\end{align}
has a unique root. We define 
\begin{align}\label{Jfunction}
J(\beta)&=\mathbb{E}\left[\sum_{t=D_i(\beta)}^{D_{i+1}(\beta)-1}  wp\big(\Delta(t)\big)\right]\nonumber\\
&-\beta \mathbb{E}\left[D_{i+1}(\beta)-D_i(\beta)\right]+\lambda \mathbb E[T_1].
\end{align}
\begin{lemma}\label{jB}
The function $J(\beta)$ has the following properties:
\begin{itemize}
\item[(i)] The function $J(\beta)$ is concave, continuous, and strictly decreasing in $\beta$.
\item[(ii)] 
$\lim_{\beta \to \infty} j(\beta) = -\infty$ and 
$\lim_{\beta \to -\infty} j(\beta) = \infty.$ 
\end{itemize}
\end{lemma}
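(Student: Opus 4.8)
\textbf{Proof proposal for Lemma \ref{jB}.} The plan is to analyze $J(\beta)$ through its defining expression \eqref{Jfunction} by relating it to the family of waiting-time policies parametrized by the threshold $\beta$. First I would observe that, for a fixed threshold value $\beta$, the policy $S_{i+1}(\beta) = \min\{t \geq D_i(\beta) : \gamma(\Delta(t)) \geq \beta\}$ generates a regenerative inter-delivery interval whose length and accumulated cost are both measurable functions of $\beta$. The key is to re-express $J(\beta)$ in a form that exposes the dependence on $\beta$ explicitly. Using \eqref{optimalwaitingb}, the waiting time after a delivery with AoI $\delta = T_1 + b$ is $\tau(\delta,\beta) = \min\{k \geq 0 : \gamma(\delta+k) \geq \beta\}$, so that $D_{i+1}(\beta) - D_i(\beta) = \tau(T_1+b,\beta) + T_{i+1}$ and the accumulated penalty is $\sum_{k=0}^{\tau(T_1+b,\beta)+T_{i+1}-1} w\, p(T_1+b+k)$. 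Substituting these into \eqref{Jfunction}, I would write
\begin{align}\label{Jrewrite}
J(\beta) = \mathbb{E}\!\left[\sum_{k=0}^{\tau(T_1+b,\beta)+T_1-1} \!\!\big(w\,p(T_1+b+k) - \beta\big)\right] + \lambda\,\mathbb{E}[T_1],
\end{align}
where the expectation is over the i.i.d.\ transmission time $T_1$ (with an independent copy also appearing inside the sum upper limit).

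For part (i), the main idea is that $J(\beta)$ is a pointwise infimum (equivalently, through a sign flip, a supremum) of affine functions of $\beta$. Indeed, from \eqref{Bellman2} and \eqref{inequality2}, one can show $J(\beta)$ equals $\inf_{\tau \in \{1,2,\ldots\}} \mathbb{E}\big[\sum_{k=0}^{\tau+T_1-1}(w\,p(\delta+k) - \beta)\big] + \lambda\mathbb{E}[T_1]$ evaluated with the optimal stopping time built into the definition of $S_{i+1}(\beta)$; more directly, I would argue that the optimal-threshold policy with parameter $\beta$ coincides with the optimal waiting-time policy for the SMDP \eqref{scheduling_problem_b} with $\bar p_{b,opt}(\lambda)$ replaced by $\beta$, so that $J(\beta) = \inf_{(\tau_j)} \mathbb{E}\big[\sum (w\, p(\Delta(t)) - \beta) + \lambda\mathbb{E}[T_1]\big]$ over all deterministic waiting-time sequences — an infimum over a fixed index set of functions each of which is \emph{affine and strictly decreasing} in $\beta$ (the coefficient of $-\beta$ being the expected inter-delivery length $\mathbb{E}[\tau+T_1] \geq \mathbb{E}[T_1] \geq 1 > 0$). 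An infimum of affine functions is concave; since every member of the family is strictly decreasing with slope bounded above by $-1$, the infimum is strictly decreasing; and concavity on $\mathbb{R}$ implies continuity. This mirrors the argument used for the analogous fixed-point equations in \cite{orneeTON2021} and in the Gittins-index derivation, so I would cite Lemma \ref{fraction_programming} and the structure of \eqref{Bellman2} rather than redo the stopping-time optimality from scratch.

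For part (ii), the boundedness assumption $|p(\delta)| \leq M$ does the work. When $\beta \to +\infty$, the threshold condition $\gamma(\Delta(t)) \geq \beta$ is eventually never triggered for bounded $\gamma$ — but since $\gamma(\delta) = \inf_\tau \frac{1}{\tau}\sum_{k=0}^{\tau-1}\mathbb{E}[w\,p(\delta+k+T_1)] \geq -wM$ is bounded, once $\beta > \sup_\delta \gamma(\delta) \leq wM$ the waiting time $\tau(\cdot,\beta)$ is $+\infty$; more carefully, using \eqref{Jrewrite} the term $-\beta\,\mathbb{E}[\tau+T_1]$ dominates: either $\mathbb{E}[\tau]$ stays bounded, giving $J(\beta) \leq wM\,\mathbb{E}[\tau+T_1] - \beta\,\mathbb{E}[T_1] + \lambda\mathbb{E}[T_1] \to -\infty$, or $\mathbb{E}[\tau] \to \infty$, in which case $J(\beta) \leq (wM - \beta)\mathbb{E}[\tau+T_1] + \lambda\mathbb{E}[T_1] \to -\infty$ as well since $wM - \beta < 0$. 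When $\beta \to -\infty$, the threshold is always triggered, so $\tau(\cdot,\beta) = 0$ and $D_{i+1}(\beta) - D_i(\beta) = T_{i+1}$, giving $J(\beta) = \mathbb{E}[\sum_{k=0}^{T_1-1} w\,p(\Delta(t))] - \beta\,\mathbb{E}[T_1] + \lambda\mathbb{E}[T_1] \geq -wM\,\mathbb{E}[T_1] - \beta\,\mathbb{E}[T_1] + \lambda\mathbb{E}[T_1] \to +\infty$. Combining (i) and (ii), $J$ is continuous, strictly decreasing, and changes sign, so it has a unique root, which establishes well-posedness of \eqref{JfunctionW} and hence of \eqref{bisection}, \eqref{bisectionM}.

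The step I expect to be the main obstacle is the clean identification in part (i) that the threshold-$\beta$ policy's regenerative cost $J(\beta)$ really equals an infimum over a \emph{$\beta$-independent} family of affine functions — i.e., that varying $\beta$ only changes \emph{which} waiting-time sequence is selected, not the family of candidate sequences being compared. This requires care because $\tau(\delta,\beta)$ itself depends on $\beta$; the resolution is to note (via \eqref{Bellman2} and the optimality characterization \eqref{optimalwaitingb}) that $J(\beta)$ coincides with $\inf$ over \emph{all} deterministic waiting-time policies of the $\beta$-penalized long-run cost per regeneration cycle, and each such fixed policy contributes a function of $\beta$ that is manifestly affine with slope equal to the negative expected cycle length. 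Once that reformulation is in hand, concavity, strict monotonicity, and continuity are immediate, and the limits in part (ii) follow from the uniform bound $|p| \leq M$ together with $\mathbb{E}[T_1] \geq 1$.
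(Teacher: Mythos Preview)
Your proposal is correct and follows essentially the same approach as the paper: rewrite $J(\beta)$ as $\inf_{\tau\in\{0,1,\ldots\}}\mathbb{E}\big[\sum_{k=0}^{\tau+T_1-1}(w\,p(T_1+b+k)-\beta)\big]+\lambda\,\mathbb{E}[T_1]$, observe this is a pointwise infimum of affine strictly decreasing functions of $\beta$ (hence concave, continuous, strictly decreasing), and read off the limits from the individual affine pieces. Your treatment of part~(ii) is in fact slightly more careful than the paper's, since you explicitly invoke the bound $|p|\le M$ to control the infimum as $\beta\to-\infty$, whereas the paper just asserts the limit termwise.
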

\begin{proof}
See Appendix \ref{pjB}
\end{proof}



Lemma \ref{jB} proves the uniqueness of \eqref{Jfunction1}.
Also, the uniqueness of the root of \eqref{bisection} and \eqref{bisectionM} follows immediately from Lemma \ref{jB}.

\ignore{ \section{Proof of Theorem \ref{theorem6}}\label{Ptheorem6}
We first show that our problem \eqref{scheduling_problem} exhibits an important property: there exists an optimal solution $(f^*, g^*)$ to \eqref{scheduling_problem}, where $f^* = (b^*, b^*, \ldots)$, i.e., under the optimal feature selection policy $f^*$, all features are transmitted from a fixed buffer position $b^* \in \{0, 1, \ldots, B-1\}$.

Let $Z_i = S_{i+1} - D_i \geq 0$ represent the waiting time between the delivery time $D_i$ of the $i$-th feature and the sending time $S_{i+1}$ of the $(i+1)$-th feature. Then, $S_i=\sum_{j=0}^{i-1} (T_j+Z_j)$ and $D_i=\sum_{j=0}^{i-1}(T_j+Z_j)$ for each $i=1, 2, \ldots$. Given $(T_0, T_1, \ldots)$, $(S_1, S_2, \ldots)$ is uniquely determined by $(Z_0, Z_1, \ldots)$. Hence, one can also use $g=(Z_0, Z_1, \ldots)$ to represent a sending time policy.

The infinite time-horizon average cost problem \eqref{scheduling_problem} can be represented as an average cost semi-Markov decision process (SMDP), where at time slot $D_i$, a scheduler decides the waiting the $Z_i$ and the buffer position $b_{i+1}$. At time $D_i$, the state is represented by the AoI value $\Delta(D_i)$. The AoI process $\Delta(t)$ evolves as:
\begin{align}\label{AoIProcess}
\!\! \Delta(t)=
  \begin{cases}
        T_{i}+b_i, &\text{if}~t=D_{i}, i=0, 1, \ldots, \\
        \Delta(t-1)+1, &\text{otherwise}.
    \end{cases}
    \end{align}
The expected total cost within the interval between two decision times, $D_i$ and $D_{i+1}$, is given by
\begin{align}
\mathbb E\left[\sum_{t=D_i}^{D_{i+1}-1}p(\Delta(t))\right]=\mathbb E\left[\sum_{k=0}^{Z_{i}+T_{i+1}-1} p(\Delta(D_i+k))\right].
\end{align}
The expected time different between two decision times, $D_i$ and $D_{i+1}$, is given by
\begin{align}\label{interdeliverytime}
\mathbb E[D_{i+1}-D_i] =\mathbb E[Z_i+T_{i+1}].
\end{align}
The SMDP \eqref{scheduling_problem} can be solved by using dynamic programming \cite{bertsekasdynamic1}. Let $h: \mathbb Z^{+} \mapsto \mathbb R$ be the relative value function of the average cost SMDP. If $\Delta(D_i)=\delta$, then we can find the optimal waiting time $Z_i$ and optimal buffer position $b_{i+1}$ using the following Bellman equation \cite[Section 5.6]{bertsekasdynamic1}:
\begin{align}\label{relativeValueT6}
h(\delta)=\inf_{\overset{Z_i \in \{0, 1, \ldots\}} {b_{i+1} \in \{0, 1, \ldots, B-1\}}}&\bigg\{ \mathbb E \left [ \sum_{k=0}^{Z_i+T_{i+1}-1} p(\Delta(D_i+k))\right]\nonumber\\
&-\mathbb E[Z_i+T_{i+1}]~\bar p_{opt}\nonumber\\
&+\mathbb E[h(T_{i+1}+b_{i+1})]\bigg\}.
\end{align}
From \eqref{relativeValueT6}, we can observe that $b_{i+1}$ depends on only $\mathbb E[h(T_{i+1}+b_{i+1})]$.
Therefore, from \eqref{relativeValueT6}, we can deduce that the optimal buffer position $b^*_{i+1}$ satisfies
\begin{align}\label{optimization2}
b^*_{i+1}=\argmin_{b \{0, 1, \ldots, B-1\}} \mathbb E[h(T_{i+1}+b)].
\end{align}
Since $T_i$'s are i.i.d., $\mathbb E[h(T_{1}+b)]=\mathbb E[h(T_{2}+b)]=\cdots=\mathbb E[h(T_{1}+b)]$ for all $i$. Hence, from \eqref{optimization2}, it is evident that there exists a $b^*\in \{0, 1, \ldots, B-1\}$ such that $b^*_1=b^*_2=\cdots=b^*_{i+1}=b^*$. Therefore, the problem \eqref{scheduling_problem} becomes
\begin{align}\label{rduced_scheduling_problem}
\bar p_{opt}&=\min_{b\in \{0, 1,\ldots, B-1\}}\inf_{(f_b, g)\in\Pi}  \limsup_{T\rightarrow \infty}\frac{1}{T} \mathbb{E}_{(f_b, g)}\left [ \sum_{t=0}^{T-1} p(\Delta(t))\right]\nonumber\\
&=\min_{b\in \{0, 1,\ldots, B-1\}} \bar p_b,
\end{align}
where $\bar p_b$ is the optimal objective value of \eqref{sub_scheduling_problem}. The optimal solution to \eqref{rduced_scheduling_problem} yields the buffer position $b^*$.

Given a fixed feature selection policy $f_{b^*} = (b^*, b^*, \ldots)$, the optimal sending times $g= (S_1, S_2, \ldots)$ can be determined from Theorem \ref{theorem5}. By substituting $\bar p_b$ with $\bar p_{opt}$ or $\bar p_{b^*}$ in \eqref{OptimalPolicy_Sub} of Theorem \ref{theorem5}, we obtain \eqref{Optimal_Scheduler}.}

\ignore{Theorem \ref{theorem6} consists of part (a) and part (b). If part (a) holds, then part (b) is directly obtained using Theorem \ref{theorem5}. Hence, It is enough to prove part (a).

The problem \eqref{scheduling_problem} can be cast as an MDP problem. The State and the penalty of the MDP are the same as the MDP discussed in Appendix \ref{Ptheorem5}. The action space is different: If the channel is idle, the scheduler sends $(b+1)$-th freshest feature or does not send any feature. Let $d(t)=\textsc{idle}$ means the scheduler does not send feature at time $t$ and $d(t)=b$ means the scheduler sends $(b+1)$-th freshest feature at time $t$. Then, $S_{i+1}$ and $b_{i+1}$ are determined by 
\begin{align}\label{sendingtimecal}
S_{i+1}&=\inf_{t \in \mathbb Z} \{t \geq S_i+T_i: d(t)\neq \textsc{idle}\}, \\
b_{i+1}&=d(S_{i+1}).
\end{align}

If the channel is idle and $\Delta(t)=\delta$, then the optimal decision $d(t)$ in time slot $t$ satisfies the following Bellman optimality equation 
\begin{align}\label{Bellman_optimality1}
V(\delta)=\min_{d(t) \in \{\textsc{idle}, 0, 1, \ldots, B-1\}} Q(\delta, d(t)),
\end{align}
where the function $V$ is the relative value function and $Q(\delta, d(t))$ is given by
\begin{align}
&Q(\delta, b) \nonumber\\
&= \mathbb E \left [ \sum_{k=0}^{T_{i+1}-1} [p(\delta+k)-\bar p_{\text{opt}}]\right]+\mathbb E[V(T_{i+1}+b)], \\
&Q(\delta, \textsc{idle}) \nonumber\\
&=\inf_{\tau \in \{1, 2, \ldots\}} \mathbb E \left [ \sum_{k=0}^{\tau-1} [p(\delta+k+T_{i+1})-\bar p_{\text{opt}}]\right] \nonumber\\
&+\mathbb E \left [ \sum_{k=0}^{T_{i+1}-1} [p(\delta+k)-\bar p_{\text{opt}}]\right]+\min_{b \in \mathcal B} \mathbb E[V(T_{i+1}+b)].
\end{align}
where $\mathcal B=\{0, 1, \ldots, B-1\}$ and $\bar p_{\text{opt}}$ is the optimal value of \eqref{scheduling_problem}.

By \eqref{Bellman_optimality1}, $d(t)= \textsc{idle}$ is not an optimal choice if
\begin{align}\label{proofinequality}
Q(\delta, \textsc{idle})-\min_{b \in \mathcal B} Q(\delta, b) \geq 0.
\end{align}
By using similar steps \eqref{Bellman-inequality}-\eqref{eq_fractional1}, we can get that the inequality \eqref{proofinequality} holds if and only if 
\begin{align}\label{proofinequality1}
\gamma(\delta) \geq \bar p_{\text{opt}}.
\end{align}
Then, by using \eqref{sendingtimecal}, \eqref{proofinequality}, and \eqref{proofinequality1}, we get the optimal sending time $S^*_{i+1}$ in \eqref{Optimal_Scheduler}. 

Next, we need to get the optimal $b^*_{i+1}$ and $\bar p_{\text{opt}}$. When $\Delta(S^*_{i+1})=\delta$, $b^*_{i+1}=b^*$ is optimal if
\begin{align}
b^*=&\arg\min_{b \in \mathcal B} Q(\delta, b) \nonumber\\
=&\arg\min_{b \in \mathcal B}\mathbb E \left [ \sum_{k=0}^{T_{i+1}-1} [p(\delta+k)-\bar p_{\text{opt}}]\right]+\mathbb E[V(T_{i+1}+b)] \nonumber\\
=&\arg\min_{b \in \mathcal B}\mathbb E[V(T_{i+1}+b)].
\end{align}
Observe that the optimal $b^*$ is independent of the state $\Delta(S_{i+1})=\delta$. Moreover, because $T_i$ is identically distributed, 
\begin{align}
\mathbb E[V(T_{0}+b)]=\mathbb E[V(T_{1}+b)]=\ldots= \mathbb E[V(T_{i}+b)], \forall i.
\end{align}
Thus, the optimal $b^*$ is constant for all $i$. 
If $b_{i+1}=b$ for all $i$, then $\bar p_b$ is the average inference error. Hence, the optimal choice $b^*$ satisfies 
\begin{align}
b^*=\arg\min_{b \in \mathcal B} \bar p_b,
\end{align}
where $\bar p_b$ is $\beta_b$, which is the root of \eqref{bisection}. The optimal value is 
\begin{align}
\bar p_{\text{opt}}=\min_{b \in \mathcal B} \bar p_b.
\end{align}}

\section{Proof of Theorem \ref{theorem7}}\label{ptheorem7}
\ignore{Similar to Section \ref{Ptheorem5}, the problem \eqref{decoupled_problem} can be cast as an MDP and we solve the problem by using dynamic programming. Then, we prove the indexability of the arm $(l, b_l)$.

If the channel is idle and $\Delta_l(t)=\delta$, the optimal decision $d_{l, b_l}(t)$ for the problem \eqref{decoupled_problem} in time slot $t$ satisfies the following Bellman optimality equation 
\begin{align}\label{Belmandecoupled}
V_{l, b_l}(\delta)=\min_{d_{l, b_l}(t) \in \{0, 1\}} Q_{l, b_l}(\delta, d_{l, b_l} (t)),
\end{align}
where the function $V_{l, b_l}$ is the relative value function and $ Q_{l, b_l}(\delta, d_{l, b_l} (t))$ is given by
\begin{align}
&Q_{l, b_l}(\delta, 1)\nonumber\\
&= \mathbb E \left [ \sum_{k=0}^{T_{l, i+1}-1} [w_l p_l(\delta+k)-\bar p_{l, b_l}(\lambda)]\right]\! \\
&+\!\mathbb E[V_{l, b_l}(T_{l, i+1}+b_l)]\!+\!\lambda, \nonumber\\
&Q_{l, b_l}(\delta, 0)\nonumber\\
&=\inf_{\tau \in \{1, 2, \ldots\}} \mathbb E \left [ \sum_{k=0}^{\tau-1} [w_l p_l(\delta+k+T_{l, i+1})-\bar p_{l, b_l}(\lambda)]\right] \nonumber\\
&+\mathbb E \left [ \sum_{k=0}^{T_{l, i+1}-1} [w_l p_l(\delta+k)-\bar p_{l, b_l}(\lambda)]\right]\!\nonumber\\
&+\! \mathbb E[V_{l, b_l}(T_{l, i+1}+b_l)]\!+\! \lambda.
\end{align}
where $\bar p_{l, b_l}(\lambda)$ is optimal objective value of the problem \eqref{decoupled_problem}. 

Similar to the proof of \eqref{eq_fractional1} and \eqref{getoptimalvalue} in Section \ref{Ptheorem5}, by solving \eqref{Belmandecoupled}, $d_{l, b_l}(t)=0$ is optimal if 
\begin{align}\label{solutiondecoupled}
w_l~\gamma_l(\delta) \leq \bar p_{l, b_l}(\lambda), 
\end{align} 
otherwise $d_{l, b_l}(t)=1$ is optimal, where $\bar p_{l, b_l}(\lambda)$ is given by
\begin{align}\label{getoptimalvaluearm}
\bar p_{l, b_l}(\lambda)=\frac{C^l_{i, i+1}(\lambda)}{N^l_{i, i+1}(\lambda)},
\end{align}
where $C^l_{i, i+1}(\lambda)$ is the expected penalty of source $l$ starting from $i$-th delivery time to $(i+1)$-th delivery time and $N^l_{i, i+1}(\lambda)$ is the expected number of time slots from $i$-th delivery time to $(i+1)$-th delivery time, given by
\begin{align}\label{C}
\!\!C^l_{i, i+1}(\lambda)\!\!=\!\mathbb{E}\left[\sum_{t=D_{l, i}(\bar p_{l, b_l}(\lambda))}^{D_{l, i+1}(\bar p_{l, b_l}(\lambda))-1}  \!\!\!\!\!w_l p_l\big(\Delta_l(t)\big)\right]+\lambda \mathbb E[T_{l, i+1}],
\end{align}
\begin{align}\label{N}
N^l_{i, i+1}(\lambda)=\mathbb{E}\left[D_{l, i+1}(\bar p_{l, b_l}(\lambda))-D_{l, i}(\bar p_{l, b_l}(\lambda))\right],
\end{align}
the $(i+1)$-th feature delivery time $D_{l, i+1}(\bar p_{l, b_l}(\lambda))$ from source $l$ is  
\begin{align}\label{deliverytimearm}
D_{l, i+1}(\bar p_{l, b_l}(\lambda))=S_{l, i+1}(\bar p_{l, b_l}(\lambda))+T_{l, i+1}
\end{align}
and the $(i+1)$-th sending time $S_{l, i+1}(\bar p_{l, b_l}(\lambda))$ is 
\begin{align}\label{sendingtimearm}
S_{l, i+1}(\bar p_{l, b_l}(\lambda))=\inf_{t \in \mathbb Z}\{t \geq D_{l, i}: w_l~\gamma_l(\delta) \leq \bar p_{l, b_l}(\lambda)\}.
\end{align}
The sending time $S_{l, i+1}(\bar p_{l, b_l}(\lambda))$ can also be expressed as 
\begin{align}\label{revisedsendingtime}
S_{l, i+1}(\bar p_{l, b_l}(\lambda))=D_{l, i}(\bar p_{l, b_l}(\lambda))+z(T_{l, i}, b_l, \bar p_{l, b_l}(\lambda)),
\end{align}
the waiting time $z(T_{l, i}, b_l, \bar p_{l, b_l}(\lambda))$ after the delivery time $D_{l, i}(\bar p_{l, b_l}(\lambda))$ is 
\begin{align}\label{waitingformulation}
&z(T_{l, i}, b_l, \bar p_{l, b_l}(\lambda))\nonumber\\
=&\inf_{z \in \mathbb Z}\{ z \geq 0: w_l~\gamma_l(\Delta_l(D_{l, i}(\bar p_{l, b_l}(\lambda)))+z) \geq \bar p_{l, b_l}(\lambda)\} \nonumber\\
=&\inf_{z \in \mathbb Z}\{ z \geq 0: w_l~\gamma_l(T_{l, i}+b_l+z) \geq \bar p_{l, b_l}(\lambda)\},
\end{align}
where the last equality holds because from \eqref{multi-source-Age}, we get
\begin{align}\label{MultiAoIProcess}
\Delta_l(t)=
  \begin{cases}
        T_{l, i}+b_l, &\text{if}~t=D_{l, i}(\bar p_{l, b_l}(\lambda)), i=0, 1, \ldots, \\
        \Delta_l(t-1)+1, &\text{otherwise}.
    \end{cases}
    \end{align}
By using \eqref{deliverytimearm}-\eqref{MultiAoIProcess}, \eqref{C} and \eqref{N} reduce to 
\begin{align}
C^l_{i, i+1}(\lambda)=&\mathbb{E}\left[\sum_{k=T_{l, i}}^{T_{l, i}+z(T_{l, i}, b_l, \bar p_{l, b_l}(\lambda))+T_{l, i+1}-1}  \!\!\!\!\!\!w_l~p_l(k+b_l)\right]\nonumber\\
&+\lambda \mathbb E[T_{l, i+1}],
\end{align}
\begin{align}
N^l_{i, i+1}(\lambda)=\mathbb{E}\left[ z(T_{l, i}, b_l, \bar p_{l, b_l}(\lambda))+T_{l, i+1}\right].
\end{align}}
According to \eqref{Optimal_Scheduler} and the definition of set $\Omega_m(\lambda)$, a point $(\Delta_m(t), d_m(t)) \in \Omega_m(\lambda)$ if either (i) $d_m(t)>0$ such that a feature from source $m$ is currently in service at time t, or (ii) $\gamma_m(\Delta_m(t)) < \bar p_{m, opt}(\lambda)$ such that the threshold condition in \eqref{Optimal_Scheduler} for sending a new feature is not satisfied. By this, an analytical expression of set $\Omega_m(\lambda)$ is derived as
\begin{align}\label{passive-set}
\Omega_m(\lambda)=\{(\delta, d): d>0 ~\mathrm{or}~ \gamma_m(\delta) < \bar p_{m, opt}(\lambda)\},
\end{align}
where according to Theorem \ref{decoupled} and Theorem \ref{decoupled1}, $\beta=\bar p_{m, opt}(\lambda)$ is the unique root of 
\begin{align}\label{J_mUnique}
    J_{m, 1}(\beta)+\lambda \mathbb E[T_{m,1}]=0,
\end{align} and
\begin{align}\label{Jmfunction}
J_{m, 1}(\beta)\nonumber=&\mathbb{E}\left[\sum_{t=D_{m,i}(\beta)}^{D_{m, i+1}(\beta)-1}  w_mp_m\big(\Delta_m(t)\big)\right]\nonumber\\
&-\beta \mathbb{E}\left[D_{m, i+1}(\beta)-D_{m,i}(\beta)\right].
\end{align}
Because $\lambda \mathbb E[T_{m,1}]$ does not change with $\beta$, from Lemma \ref{jB}, we can show that $J_{m, 1}(\beta)$ is a strictly decreasing function of $\beta$ with $\lim_{\beta \to \infty} j_{m,1}(\beta) = -\infty$ and 
$\lim_{\beta \to -\infty} j_{m, 1}(\beta) = \infty$. Hence, the inverse function $J_{m, 1}^{-1}(\cdot)$ exists and from \eqref{J_mUnique}, we get $J_{m, 1}^{-1}(\lambda \mathbb E[T_{m,1}])=\bar p_{m, opt}(\lambda)$. Because the inverse function of a strictly decreasing function is strictly increasing, $\bar p_{m, opt}(\lambda)$ is strictly increasing function of $\lambda$. Substituting this into \eqref{passive-set}, we get that if $\lambda_1\leq \lambda_2$, then $\Omega_m(\lambda_1) \subseteq \Omega_m(\lambda_2)$.

For dummy bandits, it is optimal in $\eqref{dummyproblem}$ to activate a dummy bandit when $\lambda \leq 0$. Hence, dummy bandits are always indexable.

\section{Proof of Theorem \ref{theorem8}}\label{ptheorem8}
By substituting \eqref{passive-set} into \eqref{defWhittle}, we obtain, if $d>0$, then $W_{m}(\delta, d)=-\infty$, and if $d=0$, then 
\begin{align}\label{eq_whittle_1}
W_{m}(\delta, 0)=\inf\{\lambda: \gamma_m(\delta) < \bar p_{m, opt}(\lambda)\}.
\end{align}
Using \eqref{optimammain} and \eqref{eq_whittle_1}, we get
\begin{align}
W_m(\delta, 0)=\max_{0 \leq b \leq B_m-1} W_{m, b}(\delta, 0),
\end{align}
where 
\begin{align}\label{eq_whittlle_new}
W_{m, b}(\delta, 0)=\inf\{\lambda: \gamma_m(\delta) < \bar p_{m,b,opt}(\lambda)\}.
\end{align}
Because $\bar p_{m,b,opt}(\lambda)$ is strictly increasing function of $\lambda$, \eqref{eq_whittlle_new} implies that $W_{m, b}(\delta, 0)$ is unique and satisfies 
\begin{align}\label{eq_whittle_2}
\bar p_{m,b,opt}(W_{m, b}(\delta, 0))=\gamma_m(\delta) .
\end{align}
By including source index $m$ into \eqref{pb}, we get
\begin{align}\label{pbm}
 &\bar p_{m, b, opt}(\lambda) \nonumber\\
&=\frac{\mathbb{E}\left[\sum_{t=D_{m,i}( \bar p_{m,b, opt}(\lambda))}^{D_{m, i+1}( \bar p_{m, b, opt}(\lambda))-1}  p_m\big(\Delta_m(t)\big)\right]+\lambda \mathbb E[T_{m,1}]}{\mathbb{E}\left[D_{i+1}( \bar p_{m, b, opt}(\lambda))-D_i( \bar p_{m,b, opt}(\lambda))\right]}.
\end{align}
By substituting $\lambda=W_{m, b}(\delta, 0)$ and \eqref{eq_whittle_2} into \eqref{pbm}. and then, re-arranging, we get \eqref{Whittle_Index}. This concludes the proof.

\section{Proof of Lemma \ref{divergenceL}}\label{PdivergenceL}
To prove Lemma \ref{divergenceL}, we will use the sub-gradient mean value theorem \cite{bertsekas2003convex}. When $H_L(Y)$ is twice differentiable in $P_Y$, we can use a second order Taylor series expansion. 

If \eqref{condition_divergenceL} holds, we can obtain 
\begin{align}\label{L2}
\sum_{y \in \mathcal Y} (P_Y(y)-Q_Y(y))^2 \leq \beta^2,
\end{align}
\begin{align}\label{L1}
\sum_{y \in \mathcal Y} |P_Y(y)-Q_Y(y)| \leq \beta,
\end{align}
and 
\begin{align}\label{maxnorm}
\max_{y \in \mathcal Y} |P_Y(y)-Q_Y(y)| \leq \beta.
\end{align}

Let us define a convex function $g: \mathbb R^{|\mathcal Y|} \mapsto \mathbb R$ as
\begin{align}\label{gfunction}
g(\mathbf z)=\sum_{i=1}^{|\mathcal Y|} z_i~L(y_i, a_{Q_Y})-\min_{a\in\mathcal A} \sum_{i=1}^{|\mathcal Y|}  z_i~ L(y_i, a),
\end{align}
where $a_{Q_Y}$ is a Bayes action associated with the distribution $Q_Y$, i.e., $a_{Q_Y}$ is the minimizer of 
\begin{align}
a_{Q_Y}=\argmin_{a \in \mathcal A} \mathbb E_{Y \sim Q_Y} [L(Y, a]]. 
\end{align}

Because $g(\mathbf z)$ is a convex function and the set of sub-gradients of $g(\mathbf z)$ is bounded \cite[Proposition 4.2.3]{bertsekas2003convex}, we can apply the sub-gradient mean value theorem \cite{bertsekas2003convex} along with \eqref{divergence}, and \eqref{L1} to obatin
\begin{align}\label{Lemma3_e1}
    g(\mathbf{p}_Y)=&D_L(P_Y||Q_Y)\nonumber\\
    =&g(\mathbf{q}_Y)+O\left(\sum_{y \in \mathcal Y} \left|P_Y(y)-Q_Y(y)\right|\right) \nonumber\\
 =&D_L(Q_Y||Q_Y)+O\left(\sum_{y \in \mathcal Y} \left|P_Y(y)-Q_Y(y)\right|\right)\nonumber\\
 =&O(\beta).
\end{align}

Now, let us consider the case where $H_L(Y)$ is assumed to be twice differentiable in $P_Y$. The function $g(\mathbf{p}_Y)$ can also be expressed in terms of $H_L(Y)$ as:
\begin{align}\label{gH}
g(\mathbf{p}_Y)=&\sum_{i=1}^{|\mathcal Y|} P_Y(y_i)~L(y_i, a_{Q_Y})-\min_{a\in\mathcal A} \sum_{i=1}^{|\mathcal Y|}  P_Y(y_i)~ L(y_i, a) \nonumber\\
=&\sum_{i=1}^{|\mathcal Y|} P_Y(y_i)~L(y_i, a_{Q_Y})-H_L(Y).
\end{align}
Because $H_L(Y)$ is assumed to be twice differentiable in $P_Y$, from \eqref{gH}, we can conclude that $g(\mathbf{p}_Y)$ is twice differentiable with respect to $\mathbf{p}_Y$. Furthermore, we have
\begin{align}
g(\mathbf{p}_Y) \geq 0, \forall \mathbf{p}_Y \in \mathbb R^{|\mathcal Y|}
\end{align} and 
\begin{align}\label{dqq}
g(\mathbf{q}_Y)=D_L(Q_Y||Q_Y)=0.
\end{align}
Using the first-order necessary condition for optimality, we find that the gradient of $g(\mathbf{p}_Y)$ at point $\mathbf{p}_Y=\mathbf{q}_Y$ is zero, i.e.,
\begin{align}\label{first_order_optimality}
\nabla g(\mathbf{q}_Y)=0.
\end{align}
Next, based on \eqref{dqq} and \eqref{first_order_optimality}, we can perform a second-order Taylor series expansion of $g(\mathbf p_Y)$ at $\mathbf p_Y=\mathbf q_Y$:
\begin{align}\label{Taylor}
g(\mathbf{p}_Y)
=&g(\mathbf{q}_Y)+(\mathbf{p}_Y-\mathbf{q}_Y)^T \nabla g(\mathbf{q}_Y)\nonumber\\
&+ \frac{1}{2} (\mathbf{p}_Y-\mathbf{q}_Y)^T \mathcal H(\mathbf{q}_Y) (\mathbf{p}_Y-\mathbf{q}_Y)\nonumber\\
&+ o\left(\sum_{y \in \mathcal Y} (P_Y(y)-Q_Y(y))^2\right) \nonumber\\
=&\frac{1}{2}\! (\mathbf{p}_Y\!-\mathbf{q}_Y)^T\! \mathcal H(\mathbf{q}_Y) (\mathbf{p}_Y-\mathbf{q}_Y)\! \nonumber\\
&+ o\left(\sum_{y \in \mathcal Y} \!(P_Y(y)\!-Q_Y(y))^2\right),
\end{align}
where $\mathcal H(\mathbf{q}_Y)$ is the Hessian matrix of $g(\mathbf{p}_Y)$ at point $\mathbf{p}_Y=\mathbf{q}_Y$. 

Because $g(\mathbf{p}_Y)$ is a convex function, we have $$(\mathbf{p}_Y-\mathbf{q}_Y)^T \mathcal H(\mathbf{q}_Y) (\mathbf{p}_Y-\mathbf{q}_Y) \geq 0.$$ Moreover, we can express
\begin{align}\label{big_O_H}
&\frac{1}{2}(\mathbf{p}_Y\!-\mathbf{q}_Y)^T\! \mathcal H(\mathbf{q}_Y) (\mathbf{p}_Y-\mathbf{q}_Y)\! \nonumber\\
&=\frac{1}{2}\sum_{y, y'} (P_{Y}(y)-Q_Y(y))\mathcal H(\mathbf{q}_Y)_{y, y'}(P_{Y}(y')-Q_Y(y'))\nonumber\\
&=O\left(\sum_{y, y'}\! (P_Y(y)-Q_Y(y))(P_{Y}(y')-Q_Y(y'))\right).
\end{align} 
Now, by substituting \eqref{big_O_H} into \eqref{Taylor}, we obtain
\begin{align}\label{last}
g(\mathbf{p}_Y)=&O\left(\sum_{y, y'}\! (P_Y(y)-Q_Y(y))(P_{Y}(y')-Q_Y(y'))\right)\nonumber\\
&+ o\left(\sum_{y \in \mathcal Y} \!(P_Y(y)\!-Q_Y(y))^2\right).
\end{align}
Using \eqref{L2} and \eqref{maxnorm}, we obtain from \eqref{last} that
\begin{align}
g(\mathbf{p}_Y)=D_L(P_Y||Q_Y)=O(\beta^2)+o(\beta^2)=O(\beta^2).
\end{align}

This completes the proof.

\ignore{\section{Proof of Equation \eqref{bounded_alpha}}\label{pbounded_alpha}
By using \eqref{alpha-discounted-problem}, we get 
\begin{align}\label{bound1}
J_{\alpha}(\delta) &\leq Q^{\alpha}_b(\delta, 1) \nonumber\\
&=\mathbb E \left [ \sum_{k=0}^{T_{i+1}-1} \alpha^k p(\delta+k)\right]+ \mathbb E[\alpha^{T_{i+1}} J_{\alpha}(T_{i+1}+b)]\nonumber\\
& \leq M~\mathbb E \left [ \sum_{k=0}^{T_{i+1}-1} \alpha^k \right]+\mathbb E[ J_{\alpha}(T_{i+1}+b)] \nonumber\\
& \leq M~\mathbb E[ T_{i+1}]+\mathbb E[ J_{\alpha}(T_{i+1}+b)].
\end{align}

Moreover, by using \eqref{alpha-discounted-problem}, we obtain
\begin{align}
J_{\alpha}(\delta) \geq \mathbb E[\alpha^{T_{i+1}} J_{\alpha}(T_{i+1}+b)],
\end{align}
which yields
\begin{align}\label{bound2}
\mathbb E[J_{\alpha}(T_{i+1}+b)]-J_{\alpha}(\delta) & \leq  \mathbb E[(1-\alpha^{T_{i+1}}) J_{\alpha}(T_{i+1}+b)] \nonumber\\
& \leq \mathbb E[(1-\alpha^{T_{i+1}})] \frac{M}{1-\alpha} \nonumber\\
& \leq (1-\alpha^{\mathbb E[T_{i+1}]}) \frac{M}{1-\alpha}\nonumber\\
&=(1+\alpha+\cdots+\alpha^{\mathbb E[T_{i+1}]-1})~M \nonumber\\
&\leq M~\mathbb E[T_{i+1}].
\end{align}
The third inequality is due to the convexity of $\alpha^{k}$ as a function of $k$ and Jensen's inequality. 

Equations \eqref{bound1} and \eqref{bound2} implies \eqref{bounded_alpha}. This concludes the proof.}

 
\section{Proof of Lemma \ref{ToyExampleLemma1}}\label{ToyExample}

Because $Y_t = f(X_{t-d})$ and $X_t$ is a Markov chain, $Y_t \leftrightarrow X_{t-\delta} \leftrightarrow X_{t-(\delta-1)}$ is a Markov chain for all $0 \leq \delta \leq d$. By the data processing inequality for $L$-conditional entropy \cite[Lemma 12.1] {Dawid1998}, one can show that  for all $0 \leq \delta \leq d$, 
\begin{align}
 H_L(Y_{t} | X_{t-\delta}) \leq H_L(Y_{t} | X_{t-(\delta-1)}).
\end{align}

Moreover, since $Y_t = f(X_{t-d})$ and $X_t$ is a Markov chain, $Y_t \leftrightarrow X_{t-\delta} \leftrightarrow X_{t-(\delta+1)}$ is a Markov chain for all $\delta \geq d$. By the data processing inequality \cite[Lemma 12.1] {Dawid1998}, one can show that for all $ \delta \geq d$, 
\begin{align}
H_L(Y_{t} | X_{t-\delta}) \leq H_L(Y_{t} | X_{t-(\delta+1)}).
\end{align}
Because $Y_t=f(X_{t-d})$ and $f(\cdot)$ is a function, we have $P_{Y_t|X_{t-d}}=P_{Y_t|f(X_{t-d})}=P_{Y_t|Y_t}$. Hence, 
\begin{align}
H_L(Y_t|X_{t-d})=H_L(Y_t|Y_t)\leq  H_L(Y_t|Z).
\end{align}
\section{Proof of Theorem \ref{asymptotic_optimal}}\label{pasymptotic_optimal}

\subsection{Notations and Definitions}
In \eqref{relaxed_multiple}-\eqref{Sconstraint3}, we have $M+N$ bandits with $M$ sources and $N$ dummy bandits, and $N$ channels. Two bandits are considered to be in the same class if they share identical penalty functions, weights, and transition probabilities. The dummy bandits belong to the same class. However, among the remaining $M$ sources, no two sources share the same combination of penalty function, weight, and transition probabilities. Therefore, we start with $M+1$ distinct classes of bandits. Then, we increase the number of sources $rM$, number of dummy bandits $rN$, and number of channels $rN$ with scaling factor $r$, while maintaining the ratio $\frac{rM}{rN}$.
 
{\blue The state of a class $m$ bandit is represented by two tuple $(\Delta_m(t), d_m(t))$, where $\Delta_m(t)$ represents the AoI and $d_m(t)$ is the amount of time spent to send the current feature from a class $m$ bandit. Let $\mu_m(t) \in \{0, 1, \ldots, B_m\}$ be the action taken for a class $m$ bandit. If $\mu_m(t) =0$, no feature is selected for transmission; otherwise, if $\mu_m(t) =b$, a feature from buffer position $(b-1) \in \{0, 1, \ldots, B_m-1\}$ is selected for transmission. Given state $(\delta, d)$ and action $\mu$ of a class $m$ bandit, we denote by $P^{m,\mu}_{(\delta',d'),(\delta, d)}$ the transition probability to a state $(\delta',d')$.}

{\blue Recall that the transmission times $T_{m, i}$ are assumed to be bounded for all $m$ and $i$. Hence, we can find a $d_{\text{bound}} \in \mathbb N$ such that $0< T_{m, i} \leq d_{\text{bound}}$ for all $m$ and $i$ and $d_m(t)\in \{0, 1, \ldots, d_{\text{bound}}\}.$ Also, we can find an AoI $\delta_{\text{bound}}$ such that $p_m(\delta)=p_m(\delta_{\text{bound}})$ for all $\delta \geq \delta_{\text{bound}}$ and for all $m$.}

We denote by $V^m_{\delta, d}(t)$ the fraction of class $m$ bandits with $\Delta_m(t)=\delta$ and $d_m(t)=d$. Let $U^m_{\delta, d, \mu}(t)$ be the fraction of class $m$ bandits with $\Delta_m(t)=\delta$, $d_m(t)=d$, and $\mu_m(t)=\mu$. We denote by $\tilde V^m_{d}(t)$ the fraction of class $m$ bandits with $d_m(t)=d$ and $\Delta_m(t)> \delta_{\text{bound}}$. Moreover, let $\tilde U^m_{d, \mu}(t)$ be the fraction of class $m$ bandits with $d_m(t)=d$, $\Delta_m(t)> \delta_{\text{bound}}$, and $\mu(t)=\mu$. We define variables $v^m_{\delta, d}$, $u^m_{\delta, d, \mu}$, $\tilde v^m_{d}$, and $\tilde u^m_{d, \mu}$ for all $\delta, d, \mu$, and $m$ as follows:
\begin{align}
 v^m_{\delta, d}:=\limsup_{T\rightarrow \infty}\sum_{t=0}^{T-1} \frac{1}{T} \mathbb E [ V^m_{\delta, d}(t)], \\
 u^m_{\delta, d, \mu}:=\limsup_{T\rightarrow \infty}\sum_{t=0}^{T-1} \frac{1}{T} \mathbb E [ U^m_{\delta, d, \mu}(t)], \\
  \tilde v^m_{d}:=\limsup_{T\rightarrow \infty}\sum_{t=0}^{T-1} \frac{1}{T} \mathbb E [ \tilde V^m_{d}(t)], \\
 \tilde u^m_{d, \mu}:=\limsup_{T\rightarrow \infty}\sum_{t=0}^{T-1} \frac{1}{T} \mathbb E [ \tilde U^m_{d, \mu}(t)].
\end{align}
A channel is occupied by a bandit, if either $d>0$ or $\mu> 0$. Then, the time-averaged expected fraction of class $m$-bandits occupied a channel is given by 
\begin{align}
c_m=&\sum_{\substack{\delta< \delta_{\text{bound}}, d>0 \\ \mu=0}} u^m_{\delta, d, \mu}+\sum_{\substack{\delta< \delta_{\text{bound}}, d=0 \\ \mu >0}} u^m_{\delta, d, \mu}\nonumber\\
&+\sum_{d>0, \mu=0} \tilde u^m_{d, \mu}+\sum_{d=0, \mu>0} \tilde u^m_{d, \mu}.
\end{align}
By using $c_m$, the time-average channel constraint \eqref{Sconstraint3} can be expressed as 
\begin{align}
\sum_{m=0}^M c_m=N.
\end{align}

\ignore{\begin{align}\label{LP1}
&\min_{(\mathbf u^m)_{m=0}^M} \sum_{m=1}^M w_m \sum_{\substack{\delta<\delta_{\text{bound}},\\ d, \mu}} p_m(\delta) u^m_{d, \mu}+\sum_{d, \mu} p_m(\delta_{\text{bound}}) \tilde u^m_{d, \mu}\\
&~~~~\text{s.t.}~~~\sum_{m=1}^M c_m=N,\\
& ~~~~~~~~~~~ \sum_{\mu} u^m_{\delta, d, \mu}=\sum_{\delta', d', \mu}u^m_{\delta', d', \mu} P^{m,\mu}_{(\delta,d),(\delta', d')}, \forall m, \delta, d,\\ 
& ~~~~~~~~~~~\sum_{\delta, d, \mu} u^m_{\delta, d, \mu}=1, \forall m, \\ 
& ~~~~~~~~~~\sum_{\delta, d>0,\mu\neq 0} u^m_{\delta, d, \mu}=0, \forall m, \\ \label{LPlast}
& ~~~~~~~~~~~ \mathbf u^m \geq 0, \forall m.
\end{align}}
\subsection{Asymptotic Optimality}
{\blue Policy $\pi$ in \eqref{relaxed_multiple}-\eqref{Sconstraint3} can be expressed as a sequence of actions $\pi=(\mu(t))_{t=0, 1, \ldots}$, with  $\mu(t)=\mu_{m}(t))_{m=0}^M$ representing actions taken at successive time slots. Let $(\mu^*(t))_{t=0, 1, \ldots}$ denote the optimal policy for solving \eqref{relaxed_multiple}-\eqref{Sconstraint3}. Given the optimal dual variable $\lambda=\lambda^*$, Theorems \ref{decoupled}-\ref{decoupled1} imply that in the optimal policy for \eqref{relaxed_multiple}-\eqref{Sconstraint3}, class $m$ bandits choose either action $\mu_m(t)=0$ or $\mu_m(t)=b^*_m(\lambda^*)+1$ at every time slot $t$. Therefore, the optimal state-action frequency satisfies: $u^{m*}_{\delta, d, \mu}=0$ for all $\mu\neq 0, b^*_m(\lambda^*)+1$. Hence, for all class $m$, only two actions can occur for every bandit in class $m$. Thus, our multiple-action RMAB problem can be reduced to a binary action RMAB problem. After that, for the truncated state space $(\delta, d) \in \{1, 2, \ldots, \delta_{\text{bound}}\} \times \{0, 1, \ldots, d_{\text{bound}}\}$, we directly use \cite[Theorem 13 and Proposition 14]{gast2021lp} to prove the asymptotic optimality of our policy. Though \cite[Theorem 13 and Proposition 14]{gast2021lp} is proved for a single class of bandits, we can use the results for multiple classes of bandits because of a similar argument provided in \cite[Section 5]{gast2023exponential}: we argue that having
multiple classes of bandits can be represented by a single class of bandits by considering a larger state space: the state of a bandit would be $(m, \delta, d)$, where $m$ is its class.}

\ignore{\subsection{Proof of Theorem \ref{asymptotic_optimal}}
We begin by introducing a set of linear programming (LP)-based priority policies that are asymptotic optimal under the uniform global attractor condition in Definition \ref{def2}. After that, we establish that $\pi_{our-dummy}$ belongs to this set of priority policies. Because the system performances under both $\pi_{our-dummy}$ and $\pi_{\mathrm{our}}$ are same, the asymptotic optimality for $\pi_{\mathrm{our}}$ follows directly.

Now, we define the following three sets:
\begin{align*}
\mathcal{S}^{m}_{+} & :=\{(\delta, d): u^m_{\delta, d, b^*_m(\lambda^*)}>0~\text{and}~u^m_{\delta, d, \mathrm{idle}}=0\},\\
\mathcal{S}^{m}_{0} & :=\{(\delta, d): u^m_{\delta, d, b^*_m(\lambda^*)}>0~\text{and}~u^m_{\delta, d, \mathrm{idle}}>0\},\\
\mathcal{S}^{m}_{-} & :=\{(\delta, d): u^m_{\delta, d, b^*_m(\lambda^*)}=0~\text{and}~u^m_{\delta, d, \mathrm{idle}}>0\}.
\end{align*}

\begin{definition}\label{LPP}
\textbf{LP-Priority Policies.}\cite{verloop2016asymptotically, gast2021lp}
We define a set $\Pi_{\mathrm{LP-Priority}}$ that consists of priority policies that satisfy the following conditions:

\begin{enumerate}
\item A class $k$ bandit in state $(\delta_k, d_k) \in \mathcal{S}^{k}_{+}$ is given higher priority than a class $j$ bandit in state $(\delta_j, d_j) \in \mathcal{S}^{j}_{0}$.
\item A class $k$ bandit in state $(\delta_k, d_k) \in \mathcal{S}^{k}_{0}$ is given higher priority than a class $j$ bandit in state $(\delta_j, d_j) \in \mathcal{S}^{j}_{-}$.
\item If a class $k$ bandit is activated, then the action $\mu=b_k(\lambda^*)$ is selected; otherwise, the action $\mu=\mathrm{idle}$ is selected.
\end{enumerate}

\end{definition}

\begin{lemma}\label{LP-Priority}
For any policy $\pi \in \Pi_{\mathrm{LP-Priority}}$, if the uniform global attractor condition in Definition \ref{def2} holds, then the policy is asymptotically optimal. 
\end{lemma}
By directly using Theorem 13 of \cite{gast2021lp}, we can prove Lemma \ref{LP-Priority}. While Theorem 13 of \cite{gast2021lp} deals with binary actions, our problem involves multiple actions for each bandit $m$. Nevertheless, using Theorems \ref{decoupled}-\ref{decoupled1}, we establish that in the optimal solution, only two actions, $\mu=\mathrm{idle}$ and $\mu=b^*_m(\lambda^*)$ for class $m$ bandit, occur when the optimal dual variable is $\lambda=\lambda^*$. Therefore, we can directly apply Theorem 12 from \cite{gast2021lp} to prove Lemma \ref{LP-Priority}.
 
Utilizing Proposition 14 from \cite{gast2021lp}, we observe the following for class $k$-bandits:
\begin{itemize}
\item[1.] For any class $k$-bandit, $(\delta_k, d_k)\in \mathcal S^k_{+}$ implies
\begin{align}\label{W1}
    W_m(\delta_k, d_k)>\lambda^*.
\end{align}
\item[2.] For any class $k$-bandit, $(\delta_k, d_k)\in \mathcal S^k_{0}$ implies
\begin{align}\label{W2}
    W_m(\delta_k, d_k)=\lambda^*.
\end{align}
\item[3.] For any class $k$-bandit, $(\delta_k, d_k)\in \mathcal S^k_{-}$ implies
\begin{align}\label{W3}
    W_m(\delta_k, d_k)<\lambda^*.
\end{align}
\end{itemize}
Because (1) $\pi_{our-dummy}$ activates exactly $N$ bandits with the highest Whittle index and (2) if a bandit $k$ is activated, $\pi_{our-dummy}$ chooses $\mu=b_k(\lambda^*)$, we can deduce from \eqref{W1}-\eqref{W3} and Definition \ref{LPP} that $\pi_{our-dummy}$ belongs to $\Pi_{\mathrm{LP-Priority}}$. This concludes the proof.

\ignore{We define $Z_m(\pi, \{\delta, d\}, \mu)$ as the state-action frequency for a class-$m$ bandit under policy $\pi$. In order to prove asymptotic optimality of $\pi_{\mathrm{our}}(\lambda^*)$, it is enough to show that  $Z_m(\pi^*(\lambda^*), \{\delta, d\}, \mu\}$ is a global attractor of $Z_m(\pi_{\mathrm{our}}(\lambda^*), \{\delta, d\}, \mu, t\}$ for any $\delta, d$, and $\mu$ \cite{verloop2016asymptotically}, where $\pi^*(\lambda^*)$ is the optimal policy of the relaxed problem \eqref{relaxed_multiple}-\eqref{Changed_constraint}.}

By using Theorem \ref{decoupled1}, we can deduce that under policy $\pi^*(\lambda^*)$, the possible actions for a class-$m$ bandit are $\mu=\text{Idle}$ and $\mu=b_m^*(\lambda^*)$, where $b_m^*(\lambda^*)$ is determined by \eqref{optimal_buffer_length_1}.

\begin{definition}[\textbf{Sufficient Condition}]\label{priority}
According to \cite[Proposition 4.14]{verloop2016asymptotically}, $Z_m(\pi^*(\lambda^*), \{\delta, d\}, \mu\}$ is a global attractor of $Z_m(\pi_{\mathrm{our}}(\lambda^*), \{\delta, d\}, \mu, t\}$ for any $\delta, d$, and $\mu$, if the following conditions hold under policy $\pi_{\mathrm{our}}(\lambda^*)$:
\begin{itemize}
\item[1.] A class-k bandit in state $(\delta, d)$ with $$Z_k(\pi^*(\lambda^*), \{\delta, d\}, b_k^*(\lambda^*))>0$$ and $$Z_k(\pi^*(\lambda^*), \{\delta, d\}, \mathrm{Idle})=0$$ is given higher priority than a class-j bandit in state $(\delta', d')$ with $$Z_j(\pi^*(\lambda^*), \{\delta', d'\}, \mathrm{Idle})>0.$$
\item[2.] A class-k bandit in state $(\delta, d)$ with $$Z_k(\pi^*(\lambda^*), \{\delta, d\}, b_k^*(\lambda^*))>0$$ and $$Z_k(\pi^*(\lambda^*), \{\delta, d\}, \mathrm{Idle})>0$$ is given higher priority than a class-j bandit in state $(\delta', d')$ with $$Z_j(\pi^*(\lambda^*), \{\delta', d'\}, \mathrm{Idle})>0$$ and $$Z_j(\pi^*(\lambda^*), \{\delta', d'\}, b_j^*(\lambda^*))=0.$$
\item[3.] if $$\sum_{\substack{\delta \in \{1, 2, \ldots\}\\ d \in \{0, 1, \ldots\}}}\sum_{k=1}^M Z_k(\pi^*(\lambda^*), \{\delta, d\}, b_k^*(\lambda^*))< N$$, then for any class-k bandit in state $(\delta, d)$ with $$Z_k(\pi^*(\lambda^*), \{\delta, d\}, b_k^*(\lambda^*))=0$$ and $$Z_k(\pi^*(\lambda^*), \{\delta, d\}, \mathrm{Idle})>0,$$
the policy always chooses the passive action $\mu=\mathrm{Idle}$. 
\end{itemize}
\end{definition}
Now, we examine each condition in detail. By utilizing the Whittle index, we can express $Z_m(\pi^*(\lambda^*), \{\delta, d\}, \mu)$ as follows:
\begin{align}\label{passiveaction}
Z_m(\pi^*(\lambda^*), \{\delta, d\}, \mathrm{Idle})&=0,~\mathrm{if}~W_m(\delta, d) \geq \lambda^*, \nonumber\\
Z_m(\pi^*(\lambda^*), \{\delta, d\}, \mathrm{Idle})&>0,~ \mathrm{otherwise}. 
\end{align}
The state action frequency under policy $\pi^*(\lambda^*)$ for class-m bandit with action $\mu=b_m^*(\lambda^*)$ satisfy
\begin{align}\label{activeaction}
Z_m(\pi^*(\lambda^*), \{\delta, d\}, b_m^*(\lambda^*))&=0,~\mathrm{if}~W_m(\delta, d) <\lambda^*, \nonumber\\
Z_m(\pi^*(\lambda^*), \{\delta, d\}, b_m^*(\lambda^*))&>0,~ \mathrm{otherwise}.
\end{align}

Our policy $\pi_{\mathrm{our}}(\lambda^*)$ gives higher priority to class-k bandit in state $(\delta, d)$ with $W_k(\delta, d)>W_j(\delta', d')$ than class-j bandit in state $(\delta', d')$. Comparing this with \eqref{passiveaction}-\eqref{activeaction}, we get that the policy $\pi_{\mathrm{our}}(\lambda^*)$ satisfies point 1 and 2 of Definition \ref{priority}.

Now, we treat point 3 of Definition \ref{priority}. If
\begin{align}\label{inequality}
\sum_{\substack{\delta \in \{1, 2, \ldots\}\\ d \in \{0, 1, \ldots\}}}\sum_{k=1}^M Z_k(\pi^*(\lambda^*), \{\delta, d\}, b_k^*(\lambda^*))< N,
\end{align}
then the remaining fraction of time the policy $\pi^*(\lambda^*)$ makes dummy bandits active. Hence, $W_0(\delta, 0)\geq \lambda^*$ for all $N$ dummy bandits. Since, by using Lemma \ref{dummyWhittle}, $W_0(\delta, 0)=0$ for all $N$ dummy bandits, we must have $\lambda^*=0$. If $\lambda^*=0$, and 
$$Z_k(\pi^*(\lambda^*), \{\delta, d\}, b_k^*(\lambda^*))=0$$ and $$Z_k(\pi^*(\lambda^*), \{\delta, d\}, \mathrm{Idle})>0,$$
then $W_k(\delta, d)<0$. Since our policy $\pi_{\mathrm{our}}(\lambda^*)$ always chooses $\mu=\mathrm{Idle}$ for bandit $k$ in state $(\delta, d)$ with $W_k(\delta, d)<0$, point 3 of Definition \ref{priority} is also satisfied. This completes the proof.}

\section{Proof of Lemma \ref{jB}}\label{pjB}

We denote $D_{i+1}(\beta)=S_{i+1}(\beta)+T_{i+1}$, $S_{i+1}(\beta)=\tau(\Delta(D_i), \beta)+D_i(\beta)$, where  $\tau(\delta, \beta)$ is defined as the optimal solution of 
\begin{align}
\inf_{\tau \in \{0, 1, 2, \ldots\}}\mathbb E \left [ \sum_{k=0}^{\tau+T_{1}-1} (wp(\delta+k) -\beta)\right].
\end{align}
Because $T_i$'s are i.i.d., and $\Delta(D_i)=T_i+b$, we can express \eqref{Jfunction} as 
\begin{align}\label{Jfunction1}
J(\beta)=&\mathbb{E}\left[\sum_{t=D_i(\beta)}^{D_{i+1}(\beta)-1}  wp\big(\Delta(t)\big)\right]\nonumber\\
&-\beta \mathbb{E}\left[D_{i+1}(\beta)-D_i(\beta)\right]+\lambda \mathbb E[T_1]\nonumber\\
=&\mathbb E \left [ \sum_{k=0}^{\tau(T_1+b, \beta)+T_{1}-1} (wp(T_1+b+k) -\beta)\right]\nonumber\\
&+\lambda \mathbb E[T_1]\nonumber\\
=&\inf_{\tau \in \{0, 1, \ldots}\mathbb E \left [ \sum_{k=0}^{\tau+T_{1}-1} (wp(T_1+b+k) -\beta)\right]\nonumber\\
&+\lambda \mathbb E[T_1].
\end{align}
Since the right-hand side of \eqref{Jfunction1} is the pointwise infimum of the linear decreasing functions of $\beta$, $J(\beta)$ is concave, continuous, and strictly decreasing in $\beta$. This completes the proof of part (i) of Lemma \ref{jB}. 

Part (ii) of Lemma \ref{jB} holds because for any $\tau\geq0$
\begin{align}
\lim_{\beta \to \infty} \mathbb E \left [ \sum_{k=0}^{\tau+T_{1}-1} (wp(T_1+b+k) -\beta)\right]=-\infty
\end{align}
and 
\begin{align}
\lim_{\beta \to -\infty} \mathbb E \left [ \sum_{k=0}^{\tau+T_{1}-1} (wp(T_1+b+k) -\beta)\right]=\infty.
\end{align}
This completes the proof.
\end{appendices}
\end{document}